\documentclass[a4paper]{article}
\usepackage[utf8]{inputenc}
\usepackage{amsmath}
\usepackage{amsfonts}
\usepackage{amssymb}
\usepackage{amsthm}
\usepackage{hyperref}
\usepackage{bm}
\usepackage{bbm}
\usepackage{xcolor}
\usepackage{graphicx}
\usepackage[top=1.2in,
  bottom=1.2in,
  left=1in,
  right=1in]{geometry}

\usepackage{caption}
\usepackage{subcaption}

\usepackage{multirow}
\allowdisplaybreaks

\usepackage{caption}
\captionsetup{font=footnotesize}

\numberwithin{equation}{section}
\newtheorem{theorem}{Theorem}[section]
\newtheorem{remark}[theorem]{Remark}
\newtheorem{lemma}[theorem]{Lemma}

\newtheorem{proposition}[theorem]{Proposition}
\newtheorem{corollary}[theorem]{Corollary}
\newtheorem{notation}[theorem]{Notation}

\theoremstyle{definition}
\newtheorem{example}[theorem]{Example}

\newtheorem{assumption}{Assumption}

\DeclareMathOperator{\E}{\mathbb E}
\DeclareMathOperator{\K}{\mathcal K}
\DeclareMathOperator{\Lc}{\mathcal L}
\DeclareMathOperator\supp{supp}

\title{Option pricing in Sandwiched Volterra Volatility model}

\author{ Giulia Di Nunno$^{1,2}$\\\href{mailto:giulian@math.uio.no}{giulian@math.uio.no}
   \and  Yuliya Mishura$^{3,4}$\\\href{mailto:yuliyamishura@knu.ua}{yuliyamishura@knu.ua}
   \and Anton Yurchenko-Tytarenko$^{1,}$\footnote{Corresponding author.} \\ \href{mailto:antony@math.uio.no}{antony@math.uio.no}}

\date{%
    $^1$Department of Mathematics, University of Oslo\\[2ex]%
    $^2$Department of Business and Management Science, NHH Norwegian School of Economics, Bergen\\[2ex]%
    $^3$Department of Probability Theory, Statistics and Actuarial Mathematics, Taras Shevchenko National University of Kyiv\\[2ex]
    $^4$Division of Mathematics and Physics, Mälardalen University\\[2ex]
    July 9, 2024
}

\begin{document}
\maketitle

\begin{abstract}
    We introduce a new model of financial market with stochastic volatility driven by an arbitrary Hölder continuous Gaussian Volterra process. The distinguishing feature of the model is the form of the volatility equation which ensures the solution to be ``sandwiched'' between two arbitrary Hölder continuous functions chosen in advance. We discuss the structure of local martingale measures on this market, investigate integrability and Malliavin differentiability of prices and volatilities as well as study absolute continuity of the corresponding probability laws. Additionally, we utilize Malliavin calculus to develop an algorithm of pricing options with discontinuous payoffs.   
\end{abstract}

\noindent\textbf{Keywords:} stochastic volatility, sandwiched process, H\"older continuous noise, option pricing, Malliavin calculus\\
\textbf{MSC 2020:} 91G30; 60H10; 60H35; 60G22 \\[9pt]
\textbf{Funding.} The present research is carried out within the frame and support of the ToppForsk project nr. 274410 of the Research Council of Norway with the title STORM: Stochastics for Time-Space Risk Models. The second author is supported by The Swedish Foundation for Strategic Research, grant Nr. UKR22-0017, and by Japan Science and Technology Agency CREST, project reference number JPMJCR2115.\\
\textbf{Acknowledgements.} We thank Dilip Madan for his recommendations on choosing the bounds for volatility and Åsmund Hausken Sande for the nice discussion on the numerical part.

\section{Introduction}\label{sec: introduction}

\subsection{Motivation and background}

Despite its fundamental role in the development of mathematical finance, the classic Black-Scholes approach does not reflect numerous phenomena observed in real financial markets. In particular, numerous empirical arguments clearly indicate that the volatility is far from being constant: for example, it is well known that the amplitude of variation in stock prices is prone to clustering \cite{Cont2001, Cont2006} and there is a notable negative correlation between variance and returns of an asset \cite{Black_1976, Cheung_Ng_1992, Christie_1982,  Duffee_1995}. However, the most prominent argument against the constant volatility is the behavior of the empirically observed Black-Scholes implied volatility surface $(T, \kappa) \mapsto \widehat \sigma_{\text{emp}} (T, \kappa)$ with $T$ denoting the time to maturity, $\kappa := \log\frac{K}{e^{rT} S_0}$ being the \textit{log-moneyness}, $K$ the strike, $S_0$ the current price of an underlying asset and $r$ the interest rate. Contrary to the prediction of the classical Black-Scholes framework, $\widehat \sigma_{\text{emp}} (T, \kappa)$ is not flat, heavily varies in both variables and, in addition, $\kappa \mapsto \widehat  \sigma_{\text{emp}} (T, \kappa)$ produces convex ``\textit{smiley}'' patterns for any fixed $T$ \cite{ContFonseca2002-1, ContFonseca2002-2, Das_Sundaram_1999, Derman_Miller_Park_2016, Gatheral2006}. 

One of the common ways to modify the vanilla log-normal framework so to take such effects into account is to model the instantaneous volatility as a separate stochastic process itself. It is safe to say that this idea has evolved into a full-fledged area of research on its own, see e.g. books \cite{Andersen_Davis_Kreiss_Mikosch_2009, Fouque2000, Kahl2008, KnightSatchell2007} as well as review articles \cite{Di_Nunno_Kubilius_Mishura_Yurchenko-Tytarenko_2023, Duong_Swanson_2011, Shephard_Andersen_2009, Skiadopoulos_2001}. The first models in continuous time \cite{Heston1993, HullWhite1987, Wiggins_1987} based on Brownian diffusions indeed had the ability to reproduce the ``smiley'' patterns of implied volatility (see e.g. \cite{Renault_Touzi_1996} or \cite[Section 2.8.2]{Fouque2000}) but, as indicated in \cite{Gatheral2006}, the accuracy of reproduction was imperfect. For example, as noted in \cite{ComteRenault1998}, the smile generated by most classical Brownian models levels out too quickly in comparison to actual implied volatilities. To address this issue, Comte and Renault \cite{ComteRenault1998} proposed to model volatility via fractional Brownian motion (fBm) with Hurst index $H>1/2$. This process allows for mimicking the behavior of the volatility smile amplitude for longer maturities (see also a simulation study \cite{Funahashi_Kijima_2017}) and, furthermore, it is consistent with the stylized fact stating the presence of long memory in volatility \cite{AndersenBollerslev1997, AndersenBollerslevDieboldLabys2001, BollerslevMikkelsen1996, DingGrangerEngle1993}. We also mention \cite{ChronopoulouViens2012, ComteCoutinRenault2012, Rosenbaum_2008} among other prominent works utilizing fBm with $H>1/2$.

In addition to inconsistencies for large $T$, Brownian diffusion models also have problems with shorter maturities. As reported in \cite{Delemotte_Marco_Segonne_2023, Fouque_Papanicolaou_Sircar_Solna_2004}, the empirically observed smile at-the-money becomes steeper as the time to maturity $T\to 0$ with the rule-of-thumb behavior 
\begin{equation}\label{intro: emp power law}
    \left|\frac{\widehat \sigma_{\text{emp}} (T, \kappa) - \widehat \sigma_{\text{emp}} (T, \kappa')}{\kappa - \kappa'}\right| \propto T^{-\frac{1}{2} + H}, \quad \kappa,\kappa' \approx 0, \quad H \in \left(0, \frac 1 2\right).
\end{equation}
Actually, it turns out that the \textit{power law} \eqref{intro: emp power law} is hard to reproduce by either Brownian diffusions or fractional models with $H>1/2$ (see e.g. the analysis in \cite[Section 7]{Alos_Leon_Vives_2007}). In order to replicate this behavior of implied volatility, \cite[Section 7.2.2]{Alos_Leon_Vives_2007} suggested using fBm with $H<1/2$. This methodology, further popularized by \cite{GatheralJaissonRosenbaum2018}, is known now as \textit{rough volatility}, and its efficiency in reproducing \eqref{intro: emp power law} can be explained from the two perspectives:
\begin{itemize}
    \item on the one hand, the theoretical result of \cite{Fukasawa_2021} shows that continuous semimartingale price on a market with power law \eqref{intro: emp power law} and no arbitrage implies that the paths of the market volatility \textit{must} have low H\"older regularity -- just like fBms with $H<1/2$;

    \item on the other hand, \cite{Alos_Leon_Vives_2007} proves that, in Malliavin differentiable models, the explosive behavior \eqref{intro: emp power law} in fact comes from the explosion in the Malliavin derivative of the volatility process. Conveniently, fBm with $H<1/2$ exhibits such a property.
\end{itemize}
The field of rough volatility has since evolved into a vast area of research, with a plethora of papers having been published over the years. A comprehensive and regularly updated list of literature on the subject can be found in \cite{Rough_volatility_literature} but we mention separately rough Bergomi \cite{Bayer_Friz_Gatheral_2016}, SABR \cite{Fukasawa_Gatheral_2022}, Stein-Stein \cite{Abi_Jaber_2022, Harms_Stefanovits_2019} and Heston \cite{EuchGatheralRosenbaum2018, Euch_Rosenbaum_2018, El_Euch_Rosenbaum_2019} models. However, rough volatility models are also not perfect and tend to have the following issues.
\begin{itemize}
    \item[(P1)] ``Roughness'' is usually introduced by incorporating fBm-like structures with $H<1/2$. However, as noted in \cite{Funahashi_Kijima_2017, Funahashi_Kijima_2017-1}, long memory as well as the shape of $\widehat \sigma(T,\kappa)$ for large $T$ hint that the preferable choice should be $H>1/2$. While there are stochastic models that can simultaneously be rough and very persistent (see e.g. \cite{Bennedsen_Lunde_Pakkanen_2022}), in the specific context of modeling volatility with fBm, there seems to be a peculiar paradox with the choice of $H$.

    \item[(P2)] In the stochastic volatility framework, the transition between the physical measure $\mathbb P$ and the pricing measure $\mathbb Q$ usually involves the term of the type $\int_0^T \frac{1}{\sigma^2(s)}ds$, where $\sigma = \{\sigma(t),~t\in[0,T]\}$ is the corresponding volatility process \cite{BGP2000}. However, in several important rough volatility models (e.g. rough Stein-Stein \cite{Abi_Jaber_2022, Harms_Stefanovits_2019} or rough Heston \cite{EuchGatheralRosenbaum2018, Euch_Rosenbaum_2018, El_Euch_Rosenbaum_2019}), it is not clear whether such integral is well-defined. Therefore, in such models, there is no transparent procedure for the change of measure.
    
    \item[(P3)] As shown in e.g. \cite{Gerhold_Gerstenecker_Pinter_2019}, rough volatility may produce moment explosions in price $S = \{S(t),~t\ge 0\}$. In fact, this phenomenon is not exclusive to the rough models only and is a prevalent feature within the stochastic volatility framework in general. In particular, as demonstrated in \cite[Section 3]{Andersen_Piterbarg_2005}, an ``unlucky'' choice of coefficients may cause $\mathbb E_{\mathbb P}[S^2(T)] = \infty$ and $\mathbb E_{\mathbb Q}[S^2(T)] = \infty$ for all big enough $T$ even in the standard Heston model. For more details, we refer the reader to \cite{Andersen_Piterbarg_2005}, \cite{Keller-Ressel_2011} or the article ``\textit{Moment Explosions}'' in \cite{Cont_2010}. 
    
    Models with moment explosions tend to have a number of issues of both technical and modeling nature\footnote{As a side note, we acknowledge that moment explosions may have a meaningful interpretation from the modeling point of view. More precisely, the celebrated moment formula of Lee \cite{Lee_2004} (see also its refinement \cite{Benaim_Friz_2009}) connects the asymptotic behavior of implied volatility surface when $\kappa \to \pm \infty$ with the values $$\widetilde p:= \sup\{p>0,~\mathbb E_{\mathbb Q}[S^{1+p}(T)] < \infty\}, \quad \widetilde q:= \sup\{q>0,~\mathbb E_{\mathbb Q}[S^{-q}(T)] < \infty\}.$$
    However, we emphasize that it is very hard to use this relationship as an argument either in favor of or against models with moment explosions. In practice, the observed range of strikes is always finite and hence it is not possible to determine the ``\textit{correct}'' asymptotics of empirical implied volatility $\widehat \sigma_{\text{emp}}(T,\kappa)$ as $\kappa \to \pm \infty$.}.
    \begin{itemize}
        \item[--]  As noted in \cite[Section 8]{Andersen_Piterbarg_2005}, ``\textit{several actively traded fixed-income derivatives require at least $L^2$ solutions to avoid infinite model prices}''. As examples of such assets, \cite[Appendix A]{Andersen_Piterbarg_2005} mentions CMS swaps and Eurodollar futures, but, in principle, problems may occur even for the standard European call options. Indeed, $S(T) \notin L^2(\mathbb P)$ normally implies that $(S(T) - K)_+ \notin L^2(\mathbb P)$ and hence it is generally not clear whether $\mathbb E_{\mathbb P}\left[\frac{d\mathbb Q}{d\mathbb P}(S(T) - K)_+\right] < \infty$. 
        
        \item[--] Moment explosions pose significant challenges for the related stochastic optimization problems. For instance, the infinite second moment of $(S(T) - K)_+$ entirely precludes the application of quadratic hedging techniques. This limitation significantly impacts stochastic volatility models that usually produce incomplete markets \cite{BGP2000}. In addition, in utility maximization problems, moment explosions can result in infinite expected utility, see e.g. \cite{Kallsen_Muhle-Karbe_2010}.

        \item[--] Finally, moment explosions lead to some issues with numerical procedures: for example, as discussed in \cite[Section 4.2]{Alfonsi_2010}, they may invalidate error estimates in discretization schemes for the corresponding models. The Malliavin integration-by-parts technique for numerical option pricing from \cite{AN2015} also implicitly relies on the existence of the second moment of the payoff.
    \end{itemize}
\end{itemize}

\subsection{The SVV model}

In this paper, we introduce a modeling framework, which we call the \textit{Sandwiched Volterra Volatility} (\textit{SVV}) model, that accounts for the problems (P1)--(P3) mentioned above. Namely, we consider 
\begin{gather}\label{intro: price}
    S_i(t) = S_i(0) + \int_0^t \mu_i(s) S_i(s)ds + \int_0^t Y_i(s) S_i(s) dB^S_i(s),
\end{gather}
\begin{equation}\label{intro: volatility}
    Y_i(t) = Y_i(0) + \int_0^t b_i(s, Y_i(s))ds + \int_0^t \mathcal K_i(t,s) dB_i^Y(s),
\end{equation}
$i=1,...,d$, where $S_i$ and $Y_i$ are price and volatility processes respectively, $\mu_i$ are deterministic continuous functions, $B^S_i$, $B^Y_i$ are correlated standard Brownian motions, $\mathcal K_i$ are arbitrary square integrable Volterra kernels such that each process $Z_i(t) := \int_0^t \mathcal K_i(t,s) dB_i^Y(s)$ is H\"older continuous up to the order $H_i \in (0,1)$. The main distinguishing property of the model is related to the drift functions $b_i$: for each $i=1,...,d$ there exist two deterministic continuous functions $0< \varphi_i < \psi_i$ and constants $c > 0$, $\gamma > \frac{1}{H_i} - 1$ and $y_*>0$ such that
\begin{equation}\label{intro: explosion}
\begin{gathered}
    b_i(t,y) \ge \frac{c}{(y-\varphi_i(t))^\gamma}, \quad y\in(\varphi_i(t), \varphi_i(t) + y_*),
    \\
    b_i(t,y) \le -\frac{c}{(\psi_i(t) - y)^\gamma}, \quad y\in(\psi_i(t) - y_*, \psi_i(t)).
\end{gathered}
\end{equation}
Processes of the type \eqref{intro: volatility} were extensively studied in \cite{DNMYT2020} and will be referred to as \emph{sandwiched processes} due to the fact that the above-mentioned shape of the drift ensures
\begin{equation}\label{intro: sandwich}
    \varphi_i(t) < Y_i(t) < \psi_i(t),
\end{equation}
i.e. the volatility process $Y_i$ is ``sandwiched'' between $\varphi_i$ and $\psi_i$, which, to some extent, links our model to the uncertain volatility approaches \cite{Avellaneda_Levy_Paras_1995, Karoui_Jeanblanc-Picque_Shreve_1998}.

Note that the choice of the drift with \eqref{intro: explosion} is not as exotic as it may seem at first glance. For example, as described in \cite[Section 4]{DNMYT2020}, sandwiched processes can be used to define generalizations of some common for stochastic volatility models such as Cox-Ingersoll-Ross (CIR) or Chan–Karolyi–Longstaff–Sanders (CKLS) processes. Here we also mention \cite{Hu2008} as well as the series of papers \cite{MYuT2018, MYuT2019, MYT2020}, where a process of the sandwiched type with $\varphi_i \equiv 0$, $\psi_i \equiv \infty$ was used in the context of CIR model driven by fractional Brownian motion with Hurst index $H>1/2$.

The SVV model \eqref{intro: price}--\eqref{intro: volatility} has several substantial advantages addressing the problems (P1)--(P3) mentioned above.
\begin{itemize}
    \item In line with the recent literature (see e.g. \cite{Jaber_Illand_Shaun_Li_2022} or \cite{Merino_Pospisil_Sobotka_Sottinen_Vives_2021}), we choose H\"older continuous Gaussian Volterra noises as drivers for our volatility processes. Such a choice is aimed at addressing the problem (P1): as discussed in \cite{Funahashi_Kijima_2017-1} (see also \cite[Section 7.7]{Alos_Garcia_Lorite_2021}), a simple linear combination of two fractional Brownian motions $B^{H_1}$ and $B^{H_2}$ with $H_1 > 1/2$ and $H_2<1/2$ can be sufficient to get both roughness/power law and long memory simultaneously. And indeed, as we prove in a separate paper \cite{DN_YT_power_law_2023}, taking
    \begin{equation}\label{intro: linear combination}
        \mathcal K_i(t,s) := \left(\theta_{1,i}(t-s)^{H_{1,i} - \frac{1}{2}} + \theta_{2,i} (t-s)^{H_{2,i} - \frac{1}{2}}\right)\mathbbm 1_{s<t}
    \end{equation}
    in the SVV model \eqref{intro: price}--\eqref{intro: volatility} with $H_{1,i} > 1/2$ and $H_{2,i} <1/2$ does give the power law of the implied volatility skew, despite the seemingly exotic choice of the drift and presence of the long memory component.
    
    In addition to a fairly straightforward kernel \eqref{intro: linear combination}, the SVV model also covers more involved covariance structures of the volatility noise. For example, one may choose
    \[
        \mathcal K_i(t,s) := (t-s)^{h_i(t)-\frac{1}{2}}\mathbbm 1_{s<t},
    \]
    where $h_i$: $[0,T] \to (0,1)$ is some H\"older continuous function, i.e. the process
    \[
        Z_i(t) := \int_0^t (t-s)^{h_i(t)-\frac{1}{2}} dB^Y_i(s)
    \]
    is a multifractional Brownian motion (see e.g. \cite{Peltier_Vehel_1995}). Such a choice can be supported by econometric evidence: in \cite[Section 2.2]{CLV_2014}, the regularity of SPX volatility is found to vary over time between $0.1$ and $0.9$. For more details on multifractional volatility, see also \cite{Ayache_Peng_2012}.

    \item In this paper, we choose $\varphi_i > 0$, so \eqref{intro: sandwich} implies that each process $Y_i$ is bounded away from zero and the integrals $\int_0^T \frac{1}{Y^2_i(s)}ds$ are well-defined. This allows us to provide a clear description of equivalent local martingale measures on the market completely solving the problem (P2). 

    \item The boundedness of $Y_i$ from above guaranteed by \eqref{intro: sandwich} ensures that, for any $r\in\mathbb R$,
    \begin{equation}\label{intro: existence of moments}
        \mathbb E\left[\sup_{t\in[0,T]} S^r_i(t)\right] < \infty,
    \end{equation}
    which eliminates the moment explosion problem described in (P3). This allows us to avoid any issues with infinite model prices as well as aids in numerical algorithms: in particular, we provide $L^r$-discretization schemes for price processes $S_i$, $r\ge 1$, as well as utilize the methodology of \cite{AN2015} for Monte Carlo pricing of options with discontinuous payoffs. Finally, \eqref{intro: existence of moments} for $r=2$ allows to perform mean-variance hedging within the SVV model which is analyzed in a separate paper \cite{Di_Nunno_Yurchenko-Tytarenko_hedging_2022}.  
\end{itemize}

In addition to the advantages mentioned above, we also note that \eqref{intro: sandwich} is a very convenient technical property that is often present in the literature (see e.g. \cite{Alos_Leon_2017, Fouque_Papanicolaou_Sircar_Solna_2003, OcK1991, Rosenbaum_Zhang_2021}). Moreover, the functions $\varphi_i$, $\psi_i$ can be regarded as another calibration parameters similar to the \textit{minimal instantaneous variance} in recent papers \cite{Gatheral_Jusselin_Rosenbaum_2020, Rosenbaum_Zhang_2021}.

\subsection{Structure of the paper and main results}

The results of this paper can be roughly divided into three parts.
\begin{itemize}
    \item[I.] \textbf{Description of the model}. In the first part contained in  Section \ref{sec: model description}, we provide detailed specifications of the SVV market model as well as characterize its properties from the financial viewpoint, Namely, 
    \begin{itemize}
        \item in Theorem \ref{th: properties of S}, we give the moment bounds for price processes $S_i$, $i=1,...,d$;

        \item in Subsection \ref{ssec: martingale measures}, we give the full description of the set of equivalent local martingale measures and prove that the market generated by \eqref{intro: price}--\eqref{intro: volatility} is arbitrage-free and incomplete;

        \item in Subsection \ref{ssec: empirical performance}, we provide some simulations to illustrate implied volatility surfaces generated by the SVV model.
    \end{itemize}

   \item[II.] \textbf{Malliavin differentiability.} The second part detailed in Section \ref{sec: Malliavin} is fully devoted to the problem of Malliavin differentiability of the SVV model \eqref{intro: price}--\eqref{intro: volatility}. The goal of our analysis is twofold. 
   \begin{itemize}
       \item On the one hand, the results of \cite{Alos_Leon_Vives_2007} allow to connect the Malliavin derivative of volatility with the power law of the corresponding implied volatility skew. In other words, Malliavin differentiability of \eqref{intro: volatility} can be used to analytically prove the power law within the SVV model. We perform this analysis in a separate paper \cite{DN_YT_power_law_2023}; it turns out that, with the right choice of the Volterra kernel, the SVV model indeed reproduces the power law \eqref{intro: emp power law}.

       \item On the other hand, Malliavin techniques are useful for numerical pricing of options with discontinuous payoffs in the spirit of \cite{AN2015, BMdP2018, MYT2020}.
   \end{itemize}
    In Section \ref{sec: Malliavin}, we prove the Malliavin differentiability of both volatility \eqref{intro: volatility} and price \eqref{intro: price} using the method similar to \cite[Theorem 3.3]{Hu2008} based on characterization of Sobolev spaces over an abstract Wiener space from \cite{Sugita1985}. As a consequence, we obtain absolute continuity of the law of the price processes \eqref{intro: price}.  

   \item[III.] \textbf{Malliavin integration-by-parts pricing of options with discontinuous payoffs}. In the last part, contained in Section \ref{sec: Malliavin duality approach}--\ref{sec: simulations}, we suggest an immediate numerical application of the Malliavin differentiability results from Section \ref{sec: Malliavin}. Namely, we consider the computation of $\mathbb E f(S(T))$, where $f$ is a discontinuous function. Usually, it is impossible to compute $\mathbb E f(S(T))$ analytically, so one must apply numerical methods for that; for instance, one can take an approximation $\widehat S(T)$ of $S(T)$ using some numerical scheme and perform some Monte Carlo-type simulation. This works well for Lipschitz payoffs $f$; however, according to \cite{Avikainen_2009}, any discontinuities in $f$ lead to a substantial deterioration in convergence speed. For instance, the rate of convergence of $\mathbb E[|f({S}(T)) - f(\widehat{S}(T))|^2]$ to zero is halved in comparison to the order of the scheme for $\widehat S$. This aggravation is additionally worsened by typically low convergence rates of numerical schemes for models with rough volatility and it significantly limits the application of some advanced Monte Carlo methods such as multi-level Monte Carlo. In order to overcome this problem, \cite{AN2015} suggested using Malliavin integration-by-parts to replace the discontinuous $f$ under the expectation with some Lipschitz functional, and, as a result, the initial convergence rate was preserved. A similar technique was applied in \cite{BMdP2018, MYT2020} for models driven by a fractional Brownian motion. In the present paper, we utilize the same approach for the SVV model \eqref{intro: price}--\eqref{intro: volatility} providing some useful extensions. In particular, unlike \cite{MYT2020}, we utilize the pathwise bounds for sandwiched processes derived in \cite{DNMYT2020} which allows to obtain the quadrature formulas without any limitations on the time horizon or regularity of the noise driving the volatility. In addition, all \cite{AN2015, BMdP2018, MYT2020} take $\mathbb E f(\widehat{S}(T))$ with respect to the physical measure (i.e. it is the \textit{expected payoff} rather than the \textit{price}) whereas we provide a quadrature formula under the change of measure as well. The analysis is performed both in 1-dimensional setting (e.g. for digital options) and for multidimensional basket options with discontinuous payoffs. This third part of our paper is organized as follows:
   \begin{itemize}
       \item Section \ref{sec: Malliavin duality approach} adapts the Malliavin integration-by-parts quadrature method from \cite{AN2015} to the SVV setting \eqref{intro: price}--\eqref{intro: volatility};
       \item in Section \ref{sec: numerics}, we give error estimates the mentioned method provided that the volatility is discretized using the drift-implicit Euler scheme from \cite{DNMYT2022};
       \item Section \ref{sec: simulations} contains simulation results.
   \end{itemize}
\end{itemize}

\section{Model description}\label{sec: model description}

In this Section, we define the Sandwiched Volterra Volatility (SVV) model as well as provide some basic results regarding its properties.

\begin{notation}
    Throughout this paper, $C$ denotes any positive deterministic constant the exact value of which is not relevant. Note that $C$ may change from line to line (or even within one line).
\end{notation}

\subsection{Preliminaries and assumptions}\label{subsec: Market description}

\paragraph{Probability space, filtration and correlation structure of Brownian motions.} 

Let $(\Omega, \mathcal F, \mathbb P)$ be the canonical $2d$-dimensional Wiener space, i.e. $\Omega := C_0([0,T]; \mathbb R^{2d})$, $\mathbb P$ is the classical Wiener measure, $\mathcal F$ is the $\mathbb P$-augmented Borel sigma-algebra. Let also $W$ be the corresponding $2d$-dimensional Brownian motion, i.e. $W$: $[0,T]\times\Omega \to \mathbb R^{2d}$, $W(t,\omega) = \omega(t)$, and $\mathbb F = \{\mathcal F_t,~t\in[0,T]\}$ be the $\mathbb P$-augmented filtration generated by $W$.

Now, let $\Sigma = (\sigma_{i,j})_{i,j=1}^{2d}$ be an arbitrary non-degenerate $2d \times 2d$ real correlation matrix, i.e. $\Sigma$ is symmetric, positive definite, $\sigma_{i,j} \in [-1,1]$ and $\sigma_{i,i} = 1$ for all $i=1,...,2d$. Since $\Sigma$ is symmetric and positive definite, it admits a Cholesky decomposition of the form $\Sigma = \Lc\Lc^T$, where $\Lc = (\ell_{i,j})_{i,j=1}^{2d}$ is a lower triangular matrix with strictly positive diagonal entries. Define $B(t) = ( B_1(t), ...,  B_{2d}(t))^T := \Lc W(t)$ and note that components of the $2d$-dimensional stochastic process $\{ B(t),~t\in[0,T]\}$ are $1$-dimensional Brownian motions with
\[
    \E\left[  B_i(t)  B_j(s) \right] =  (t \wedge s)\sigma_{i,j}, \quad t,s\in[0,T],~i,j=1,...,2d.
\]

In what follows, we will need to distinguish the first $d$ components of $B$ (that will generate the ``outer'' noise for the price) from the other $d$ components (that will generate the noise for the volatility) and thus we introduce the notation
\begin{gather*}
    B^S(t) = \left(B^S_1(t),..., B^S_d(t)\right)^T := (B_1(t), ..., B_{d}(t))^T,
    \\
    B^Y(t) = \left(B^Y_1(t),..., B^Y_d(t)\right)^T := (B_{d+1}(t), ..., B_{2d}(t))^T.
\end{gather*}
Clearly, since the matrix $\Lc$ is not degenerate, the filtration generated by the $2d$-dimensional process $B$ coincides with $\mathbb F$.

\paragraph{Stochastic volatility process.}
Consider now $d$ measurable functions $\mathcal K_i$: $[0,T]^2 \to \mathbb R$, $i=1,...,d$, that satisfy the following assumption.
\begin{assumption}\label{assum: assumptions on kernels} For any $i = 1,..., d$
    \begin{itemize}
        \item[(i)] $\mathcal K_i$ is a square integrable Volterra kernel, i.e. $\mathcal K_i(t,s) = 0$ whenever $t < s$,
        \[
            \int_0^T \int_0^T \mathcal K^2_i(t,s)dsdt = \int_0^T \int_0^t \mathcal K^2_i(t,s)dsdt < \infty
        \] 
        and
        \[
            \int_0^T \mathcal K^2_i(t,s)ds < \infty, \quad \int_0^T \mathcal K^2_i(t,s)dt < \infty, \quad \forall  s, t\in[0,T];
        \]
        \item[(ii)] there exists a constant $H_i \in (0,1)$ such that for any $\lambda \in (0,H_i)$
        \begin{equation}\label{eq: condition on kernel for Holder continuity}
            \int_0^t (\mathcal K_i(t,u) - \mathcal K_i(s, u))^2 du \le C_\lambda|t-s|^{2\lambda}, \quad 0\le s\le t\le T,
        \end{equation}
        where $C_\lambda > 0$ is a constant, possibly dependent on $\lambda$.
    \end{itemize}
\end{assumption}

Assumption \ref{assum: assumptions on kernels}(i) allows to define a $d$-dimensional Gaussian Volterra process $Z = \left(Z_1,..., Z_d\right)$ such that
\begin{equation}\label{eq: definition of the Volterra noise}
    Z_i(t) := \int_0^t \mathcal K_i (t, s) dB^Y_i(s), \quad t\in[0,T].
\end{equation}
Moreover, Assumption \ref{assum: assumptions on kernels}(ii) ensures that the paths of each $Z_i$ are regular a.s., which is summarized in the following proposition.

\begin{proposition}[\cite{ASVY2014}, Theorem 1 and Corollary 4]\label{prop: Holder continuity of noise}
    The Gaussian process $\{Z_i(t),~t\in[0,T]\}$ defined by \eqref{eq: definition of the Volterra noise} has a modification that is a.s. H\"older continuous up to the order $H_i \in (0,1)$ if and only if \eqref{eq: condition on kernel for Holder continuity} holds for any $\lambda \in (0,H_i)$. Moreover, for any $\lambda \in (0,H_i)$, the random variable
    \begin{equation}\label{eq: definition of Lambda}
        \Lambda_{\lambda, i} := \sup_{0\le s < t \le T} \frac{|Z_i(t) - Z_i(s)|}{|t-s|^\lambda}
    \end{equation}
    has moments of all orders.
\end{proposition}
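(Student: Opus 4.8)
The plan is to deduce both implications together with the moment statement from the Kolmogorov continuity criterion and a Fernique-type argument, exploiting that $Z_i$ is a centered Gaussian process with explicitly controlled increments. The starting observation is that, since $\mathcal K_i$ is a Volterra kernel (so $\mathcal K_i(s,u)=0$ for $u>s$), for $0\le s\le t\le T$ the increment
\[
    Z_i(t)-Z_i(s)=\int_0^t\bigl(\mathcal K_i(t,u)-\mathcal K_i(s,u)\bigr)\,dB^Y_i(u)
\]
is a centered Gaussian random variable whose variance equals $\int_0^t(\mathcal K_i(t,u)-\mathcal K_i(s,u))^2\,du$. By the Gaussian moment formula, for every $m\in\mathbb N$ there is a constant $c_m>0$ with
\[
    \E\bigl|Z_i(t)-Z_i(s)\bigr|^{2m}=c_m\left(\int_0^t\bigl(\mathcal K_i(t,u)-\mathcal K_i(s,u)\bigr)^2\,du\right)^{m}.
\]

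For the ``if'' direction I would assume \eqref{eq: condition on kernel for Holder continuity} for all $\lambda\in(0,H_i)$; the identity above then gives, for each fixed $\lambda\in(0,H_i)$ and $m\in\mathbb N$, the bound $\E|Z_i(t)-Z_i(s)|^{2m}\le c_m C_\lambda^m|t-s|^{2m\lambda}$. The Kolmogorov--Chentsov theorem produces, for each such pair $(\lambda,m)$, a modification of $Z_i$ that is a.s. Hölder continuous of every order $\beta<\lambda-\tfrac1{2m}$. Letting $\lambda\uparrow H_i$ and $m\to\infty$ along countable sequences and identifying the pairwise a.s.\ equal modifications, one obtains a single modification that is a.s.\ $\beta$-Hölder continuous for every $\beta<H_i$, i.e.\ Hölder continuous ``up to the order $H_i$''.

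For the ``only if'' direction and the moment statement I would invoke Fernique's theorem. Once a Hölder continuous modification is available (equivalently, by the ``if'' part, once \eqref{eq: condition on kernel for Holder continuity} holds), for each $\lambda<H_i$ the quantity $\Lambda_{\lambda,i}$ from \eqref{eq: definition of Lambda} is an a.s.\ finite measurable seminorm on the path space $C_0([0,T];\mathbb R)$ carrying the centered Gaussian law of $Z_i$; Fernique's theorem then yields $\E\exp(\varepsilon\Lambda_{\lambda,i}^2)<\infty$ for some $\varepsilon>0$, hence moments of all orders, which is the last claim. Conversely, from $|Z_i(t)-Z_i(s)|\le\Lambda_{\lambda,i}|t-s|^\lambda$ a.s.\ and the variance identity above, taking second moments gives $\int_0^t(\mathcal K_i(t,u)-\mathcal K_i(s,u))^2\,du=\E|Z_i(t)-Z_i(s)|^2\le\E[\Lambda_{\lambda,i}^2]\,|t-s|^{2\lambda}$, i.e.\ \eqref{eq: condition on kernel for Holder continuity} with $C_\lambda=\E[\Lambda_{\lambda,i}^2]<\infty$. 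As an alternative to Fernique, the moment bound can be obtained self-containedly from the Garsia--Rodemich--Rumsey inequality with $\Psi(x)=x^{2m}$ and $p(u)=u^{\lambda'+1/(2m)}$ for suitable $\lambda'\in(\lambda,H_i)$ and $m$ large, followed by Fubini and the increment estimate.

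The main obstacle is precisely the upgrade from ``$\Lambda_{\lambda,i}<\infty$ a.s.'' to ``$\E[\Lambda_{\lambda,i}^2]<\infty$'', which is what Fernique's theorem (or the GRR route) supplies; the remaining ingredients—the Volterra bookkeeping for the increment variance, the Gaussian moment formula, and the limiting argument that pushes the Hölder exponent up to the critical value $H_i$—are routine.
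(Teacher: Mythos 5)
Your argument is correct. Note that the paper itself offers no proof of this statement: it is quoted verbatim from \cite{ASVY2014} (Theorem 1 and Corollary 4), so there is no internal proof to compare against; your route is essentially the standard one used in that reference. The sufficiency part (It\^o isometry for the increment variance using the Volterra property, the Gaussian moment identity, Kolmogorov--Chentsov, and the countable limiting argument pushing the exponent up to $H_i$) is sound, and the necessity together with the moment claim is exactly what Fernique's theorem (or, alternatively, Garsia--Rodemich--Rumsey) delivers: $\Lambda_{\lambda,i}$ is a lower semicontinuous, hence Borel measurable, a.s.\ finite seminorm on $C([0,T])$ under the Gaussian law of the continuous modification, so $\E\exp(\varepsilon\Lambda_{\lambda,i}^2)<\infty$, which gives all moments and, via $\E|Z_i(t)-Z_i(s)|^2\le\E[\Lambda_{\lambda,i}^2]|t-s|^{2\lambda}$, the kernel condition \eqref{eq: condition on kernel for Holder continuity} with $C_\lambda=\E[\Lambda_{\lambda,i}^2]$. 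The only point worth spelling out in a written version is the measurability of the H\"older seminorm (a sup over rational pairs suffices by continuity of paths) and the fact that the variance identity transfers to the modification since it has the same finite-dimensional laws.
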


\begin{remark}\label{rem: choice of modification of Z}
    By Proposition \ref{prop: Holder continuity of noise}, it is evident that one can choose the modification of $Z = (Z_1,...,Z_d)$ such that for \emph{all} $\omega \in \Omega$ 
    \[
        |Z_i(\omega, t) - Z_i(\omega, s)| \le \Lambda_{\lambda, i}(\omega)|t-s|^\lambda,
    \]
    where $\lambda \in (0,H_i)$. In what follows, this H\"older continuous modification will be used.   
\end{remark}

Now, let $\varphi_i$, $\psi_i$: $[0,T] \to \mathbb R$, $i=1,...,d$, be $d$ pairs of functions such that
\begin{enumerate}
    \item[1)] $\varphi_i$, $\psi_i$ are H\"older continuous up to the order $H_i$, where $H_i$ is from Assumption \ref{assum: assumptions on kernels}(ii);
    \item[2)] for each $i=1,...,d$ and $t\in[0,T]$:
    \begin{equation}\label{eq: psi greater phi, phi greater zero}
        0 < \varphi_i(t) < \psi_i(t).
    \end{equation}
\end{enumerate}
For any $a_1, a_2 \in \left[0, \min_{i=1,...,d}\frac{1}{2}\lVert \psi_i - \varphi_i \rVert_\infty\right)$, where $\lVert\cdot\rVert_\infty$ is the standard $\sup$-norm, denote 
\begin{equation}\label{eq: definition of the set D}
    \mathcal D^i_{a_1, a_2} := \{(t,y)\in[0,T]\times\mathbb R_+,~y\in(\varphi_i(t) + a_1, \psi_i(t) - a_2)\}
\end{equation}
and consider $d$ functions $b_i$: $\mathcal D^i_{0,0} \to \mathbb R$, $i=1,...,d$, that satisfy the following assumptions.

\begin{assumption}\label{assum: assumption on sandwiched drift} \hspace{10cm}
    \begin{itemize}
        \item[(i)]  The functions $b_i\in C(\mathcal D^i_{0,0})$, $i=1,...,d$;
    \end{itemize}
    and there exist constants $c>0$, $p > 1$, $y_* \in \left(0, \frac{1}{2} \min_{i=1,...,d} \lVert \psi_i - \varphi_i\rVert_\infty\right)$ such that for any $i=1,...,d$ the following conditions hold: 
    \begin{itemize}
        \item[(ii)] for any $\varepsilon \in \left(0, \min\left\{1, \min_{i=1,...,d}\frac{1}{2}\lVert \psi_i - \varphi_i \rVert_\infty\right\}\right)$
        \[
            |b_i(t_1,y_1) - b_i(t_2, y_2)| \le \frac{c}{\varepsilon^{p}} \left(|y_1 - y_2| + |t_1 - t_2|^\lambda \right), \quad (t_1, y_1), (t_2,y_2) \in \mathcal D^i_{\varepsilon, \varepsilon};
        \]
        \item[(iii)] for some constant $\gamma_i > \frac{1}{H_i} - 1$, where $H_i$ is from Assumption \ref{assum: assumptions on kernels}(ii),
        \[
            b_i(t, y) \ge \frac{c}{(y - \varphi_i(t))^{\gamma_i}}, \quad (t,y) \in \mathcal D^i_{0,0}\setminus \mathcal D^i_{y_*, 0},
        \]
        \[
            b_i(t, y) \le -\frac{c}{(\psi_i(t) - y)^{\gamma_i}}, \quad (t,y) \in \mathcal D^i_{0,0}\setminus \mathcal D^i_{0, y_*};
        \]
        \item[(iv)] there exists a continuous partial derivative $\frac{\partial b_i}{\partial y}$ with respect to the spatial variable and
        \[
            \frac{\partial b_i}{\partial y}(t, y) < c, \quad (t,y) \in \mathcal D^i_{0,0}.
        \]
    \end{itemize}
\end{assumption}

Next, consider the $d$-dimensional \emph{sandwiched process} $(Y_1,..., Y_d)$ defined by
\begin{equation}\label{eq: volatility process equations}
     Y_i(t) = Y_i(0) + \int_0^t b_i(s, Y_i(s)) ds + Z_i(t), \quad t\in[0,T], \quad i=1,...,d,
\end{equation}
where 
\begin{itemize}
    \item[--] $Y_i(0) \in \left(\varphi_i(0), \psi_i(0)\right)$ are deterministic constants;
    \item[--] $Z_i$, $i=1,...,d$, are defined by \eqref{eq: definition of the Volterra noise} with $\K_i$ satisfying Assumption \ref{assum: assumptions on kernels};
    \item[--] $b_i$, $i=1,...,d$, satisfy Assumption \ref{assum: assumption on sandwiched drift}.
\end{itemize}

\begin{remark}\label{rem: properties of Y}
    Each equation in \eqref{eq: volatility process equations} is treated pathwisely and, according to \cite{DNMYT2020}, has a unique solution for the given $\omega \in \Omega$ provided that the corresponding path $Z_i(\cdot, \omega)$ of the noise is H\"older continuous up to the order $H_i$ (and this happens a.s. by Remark \ref{rem: choice of modification of Z}). Moreover, in this case, according to \cite[Theorem 3.2]{DNMYT2020}, for each $i=1,...,d$ and $\lambda \in \left(\frac{1}{\gamma_i +1},H_i\right)$, where $\gamma_i$ is from Assumption \ref{assum: assumption on sandwiched drift}(iii), one can find deterministic constants $L_{1,i}$, $L_{2,i} > 0$ and $\alpha_i > 0$ that depend only on the shape of $b_i$ and $\lambda$ such that for all $t\in [0,T]$:
    \begin{equation}\label{eq: upper and lower bounds for sandwiched volatility}
        \varphi(t) + \frac{L_{1,i}}{ ( L_{2,i} + \Lambda_{\lambda, i} )^{\alpha_i} } \le Y_i(t) \le \psi(t) - \frac{L^1_i}{ ( L^{2,i} + \Lambda_{\lambda, i} )^{\alpha_i} },
    \end{equation}
    where $\Lambda_{\lambda, i}$ is a random variable defined by \eqref{eq: definition of Lambda}. In particular, this means that $0 < \varphi_i(t) < Y_i(t) < \psi_i(t)$ a.s., $t\in[0,T]$, $i=1,...,d$ (which justifies the name ``\emph{sandwiched}''). Note that $Z_i$ does not have to be Gaussian for \eqref{eq: upper and lower bounds for sandwiched volatility} to hold: it is sufficient to assume $\lambda$-H\"older continuity with $\lambda > \frac{1}{\gamma_i + 1}$.
    
    Since $\varphi_i$, $i=1,...,d$, are continuous positive functions, it is clear that for any $r>0$:
    \[
        \E \left[ \sup_{t\in[0,T]} \frac{1}{Y^r_i(t)} \right] \le \frac{1}{\min_{t\in[0,T]} \varphi_i^r(t)} < \infty.    
    \]
    Furthermore, since $\Lambda_{\lambda,i}$ has moments of all orders, it is easy to see that for any $r > 0$:
    \[
        \E \left[ \sup_{t\in[0,T]} \frac{1}{\left(Y_i(t) - \varphi_i(t)\right)^r} \right] < \infty, \quad \E \left[ \sup_{t\in[0,T]} \frac{1}{\left(\psi_i(t) - Y_i(t) \right)^r} \right] < \infty.
    \]
\end{remark}

\begin{remark}
    Note that item (iv) of Assumption \ref{assum: assumption on sandwiched drift} are not required for existence and uniqueness of the solution to \eqref{eq: volatility process equations}. It will be exploited later for the numerical scheme.
\end{remark}

\paragraph{Price process.} The components of the $d$-dimensional price process $S = (S_1,...,S_d)$ will be defined as solutions to the SDEs of the form
\begin{equation}\label{eq: definition of price process}
    S_i(t) = S_i(0) + \int_0^t \mu_i(s) S_i(s) ds + \int_0^t Y_i(s) S_i(s) d B^S_i(s), \quad t \in[0,T], \quad i=1,...,d,
\end{equation}
where 
\begin{itemize}
    \item[--] $S_i(0) > 0$, $i=1,...,d$, are deterministic constants;
    \item[--] $\mu_i$: $[0,T] \to \mathbb R$, $i=1,..., d$, are $H_i$-H\"older continuous functions, where each $H_i$ is from Assumption \ref{assum: assumptions on kernels}(ii);
    \item[--] $Y_i$, $i=1,..., d$ are sandwiched volatility processes defined above.
\end{itemize}  

We now list some simple properties of processes $S_i$, $i=1,...,d$.

\begin{theorem}\label{th: properties of S}
    For any $i=1,...,d$, equation \eqref{eq: definition of price process} has a unique solution of the form
    \begin{equation}\label{eq: solution to the price equation}
        S_i(t) = S_i(0) \exp\left\{ \int_0^t \left( \mu_i(s) - \frac{ Y_i^2(s)}{2}\right)ds + \int_0^t Y_i(s) d B^S_i(s)  \right\}.
    \end{equation}
    Furthermore, for any $r\in\mathbb R$:
    \[
        \mathbb E \left[\sup_{t\in[0,T]}S_i^r(t)\right] < \infty.
    \]
\end{theorem}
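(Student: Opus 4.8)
The plan is to split the statement into two parts: first establishing that \eqref{eq: solution to the price equation} is the unique solution to \eqref{eq: definition of price process}, and then proving the moment bound. For the first part, the SDE \eqref{eq: definition of price process} is a linear SDE with random but adapted and (pathwise) continuous coefficients $\mu_i$ and $Y_i S_i$. Since $Y_i$ is adapted to $\mathbb F$ and, by Remark~\ref{rem: properties of Y}, has continuous paths that are bounded above and below by deterministic constants on $[0,T]$ (indeed $0 < \min_t \varphi_i(t) \le Y_i(t) \le \max_t \psi_i(t) < \infty$ a.s.), the process $Y_i$ is in particular square-integrable on $[0,T]$ a.s., so the stochastic integral $\int_0^t Y_i(s)\,dB_i^S(s)$ is well defined. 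I would then verify directly via Itô's formula applied to \eqref{eq: solution to the price equation} that the right-hand side solves \eqref{eq: definition of price process}: writing $X_i(t) = \int_0^t(\mu_i(s) - Y_i^2(s)/2)\,ds + \int_0^t Y_i(s)\,dB_i^S(s)$ and applying Itô to $S_i(0)e^{X_i(t)}$ produces exactly the drift $\mu_i(s)S_i(s)$ and diffusion $Y_i(s)S_i(s)$ after the $\frac12 Y_i^2$ terms cancel. Uniqueness follows from a standard localization/Gronwall argument for linear SDEs, or — more cleanly — by noting that any solution $\widetilde S_i$ must be strictly positive (being an exponential-type process started from $S_i(0)>0$), applying Itô to $\log \widetilde S_i(t)$, and observing this forces $\widetilde S_i = S_i$.

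For the moment bound, fix $r\in\mathbb R$. From \eqref{eq: solution to the price equation},
\[
    \sup_{t\in[0,T]} S_i^r(t) = S_i^r(0)\sup_{t\in[0,T]} \exp\left\{ r\int_0^t\left(\mu_i(s) - \frac{Y_i^2(s)}{2}\right)ds + r\int_0^t Y_i(s)\,dB_i^S(s)\right\}.
\]
Since $\mu_i$ is continuous and $0 \le Y_i^2(s) \le \max_t \psi_i^2(t) =: \psi_\infty^2$ deterministically, the drift term is bounded: $\left|r\int_0^t(\mu_i(s) - Y_i^2(s)/2)\,ds\right| \le |r|\,T\left(\lVert\mu_i\rVert_\infty + \psi_\infty^2/2\right) =: C_1$, a deterministic constant. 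Hence it suffices to bound $\mathbb E\left[\sup_{t\in[0,T]} \exp\left\{r\int_0^t Y_i(s)\,dB_i^S(s)\right\}\right]$. The key observation is that $M_i(t) := \int_0^t Y_i(s)\,dB_i^S(s)$ is a continuous local martingale with quadrature variation $\langle M_i\rangle_t = \int_0^t Y_i^2(s)\,ds \le \psi_\infty^2\, T$ bounded by a deterministic constant. This is where the upper sandwich bound is essential.

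The main (and only mildly delicate) step is then the exponential moment estimate. I would apply the standard trick: for any $\theta > 0$, $\exp\{r M_i(t)\} = \mathcal E(\theta^{-1} r M_i)(t)^{\theta}\exp\{\tfrac12 \theta^{-1} r^2 \langle M_i\rangle_t\}$ where $\mathcal E$ denotes the stochastic exponential; choosing $\theta$ appropriately and using Doob's maximal inequality together with the fact that $\mathcal E(\theta^{-1}rM_i)$ is a genuine martingale (which holds since $\langle M_i\rangle$ is bounded, via Novikov), one gets
\[
    \mathbb E\left[\sup_{t\in[0,T]}\exp\{r M_i(t)\}\right] \le C_{\theta}\,\mathbb E\left[\exp\left\{\tfrac12 \theta^{-1} r^2 \langle M_i\rangle_T\right\}\right] \le C_{\theta}\exp\left\{\tfrac12 \theta^{-1} r^2 \psi_\infty^2 T\right\} < \infty.
\]
Combining with the deterministic bound $e^{C_1}$ on the drift contribution and the constant $S_i^r(0)$ yields $\mathbb E[\sup_t S_i^r(t)] < \infty$. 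The whole argument hinges on the deterministic boundedness of $\langle M_i\rangle$, so I expect no real obstacle beyond carefully writing out the exponential-martingale inequality; the sandwich property from \eqref{eq: upper and lower bounds for sandwiched volatility} (specifically $Y_i \le \psi_i \le \psi_\infty$) does all the heavy lifting, which is precisely the point emphasized around \eqref{intro: existence of moments} in the introduction.
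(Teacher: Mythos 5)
Your proposal is correct and follows the same skeleton as the paper: verify the explicit exponential form \eqref{eq: solution to the price equation} via It\^o's formula (the paper handles existence/uniqueness by citing a standard result for SDEs with bounded coefficients, which your Gronwall/log argument replaces with an equally valid direct proof), and then reduce the moment bound to controlling $\mathbb E\bigl[\sup_{t\in[0,T]}\exp\{r\int_0^t Y_i\,dB_i^S\}\bigr]$ using the deterministic bound $Y_i\le\psi_i$, exactly the point the paper emphasizes. The only genuine difference is the final maximal-inequality device: the paper writes $\exp\{rM_i(t)\}\le C\,\mathcal E_t\{rY_i\cdot B_i^S\}$, represents the Dol\'eans exponential as a stochastic integral, and bounds $\mathbb E[\sup_t\mathcal E_t]$ via the Burkholder--Davis--Gundy inequality together with $\mathbb E[\mathcal E_s^2\{rY_i\cdot B_i^S\}]\le C$, whereas you use the power trick $\exp\{rM_i\}=\mathcal E(\theta^{-1}rM_i)^\theta\exp\{\tfrac12\theta^{-1}r^2\langle M_i\rangle\}$ combined with Doob's $L^\theta$ maximal inequality. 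Both work for the same underlying reason (the quadratic variation $\int_0^\cdot Y_i^2\,ds$ is deterministically bounded, so Novikov applies to any multiple of $M_i$); the one step you gloss over is that Doob's inequality with exponent $\theta>1$ requires $\mathbb E[\mathcal E(\theta^{-1}rM_i)(T)^\theta]<\infty$, which is not automatic from the martingale property alone but follows by the same rewriting $\mathcal E(\theta^{-1}rM_i)^\theta=\mathcal E(rM_i)\exp\{\tfrac{r^2}{2}(1-\theta^{-1})\langle M_i\rangle\}$ and the boundedness of $\langle M_i\rangle_T$ — essentially the same computation the paper performs for $\mathbb E[\mathcal E_s^2]$. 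So the argument is sound, with that small step made explicit.
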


\begin{proof}
    Each $Y_i$, $i=1,...,d$, is a bounded process and thus existence and uniqueness of solution to the corresponding equation follows from e.g. \cite[Theorem 16.1.2]{CE2015} while the explicit form of the solution can be checked straightforwardly via It\^o's formula.
    
    Next, using the explicit form of $S_i$ as well as boundedness of $Y_i$ one can see that there exists a constant $C>0$ such that
    \begin{equation}\label{proofeq: bounds for moments of prices 1}
    \begin{aligned}
        S_i^r(t) &= S_i^r(0) \exp\left\{ r\int_0^t \mu_i(s) ds - \frac{r}{2} \int_0^t Y_i^2(s) ds + r\int_0^t Y_i(s) dB^S_i(s) \right\}
        \\
        &\le S_i^r(0)\exp\left\{|r|T\max_{s\in[0,T]}|\mu_i(s)| + \frac{|r|T}{2}\max_{s\in[0,T]}\psi_i^2(s)\right\} \exp\left\{ r\int_0^t Y_i(s) dB^S_i(s) \right\}
        \\
        &= S_i^r(0)\exp\left\{|r|T\max_{s\in[0,T]}|\mu_i(s)| + \frac{|r|T}{2}\max_{s\in[0,T]}\psi_i^2(s)\right\} \exp\left\{\frac{r^2}{2}\int_0^t Y_i^2(s) ds\right\}\times
        \\
        &\qquad \times \exp\left\{ \int_0^t rY(s) dB_i^S(s) - \frac{1}{2}\int_0^t (rY(s))^2ds \right\}
        \\
        & \le c_i \exp\left\{ \int_0^t rY_i(s) dB_i^S(s) - \frac{1}{2}\int_0^t (rY_i(s))^2ds \right\}
        \\
        &:= c_i \mathcal E_t\{rY_i\cdot B^S_i\},
    \end{aligned}
    \end{equation}
    where
    \[
        c_i := S_i^r(0)\exp\left\{|r|T\max_{s\in[0,T]}|\mu_i(s)| + \frac{|r|(|r|+1)T}{2}\max_{s\in[0,T]}\psi_i^2(s)\right\}
    \]
    and
    $$
        \mathcal E_t\{rY_i\cdot B^S_i\} := \exp\left\{ \int_0^t rY_i(s) dB^S_i(s) - \frac{1}{2}\int_0^t (rY_i(s))^2ds \right\}, \quad t\in[0,T].
    $$
    Note that the Novikov's criterion immediately yields that the process $\mathcal E_t\{rY_i\cdot B^S_i\}$, $t\in[0,T]$, is a uniformly integrable martingale such that
    \[
        \mathbb E \left[ \mathcal E_t\{rY_i\cdot B^S_i\} \right] = 1, \quad t\in [0,T],
    \]
    and, moreover,
    \[
        \mathcal E_t\{rY_i\cdot B^S_i\} = 1 + \int_0^t \mathcal E_s\{rY_i\cdot B^S_i\} r Y_i(s) dB^S_i(s), \quad t\in[0,T].
    \]
    By the Burkholder--Davis--Gundy inequality, there exists a constant $C_1 > 0$ such that
    \begin{equation}\label{proofeq: bounds for moaments of prices 3}
    \begin{aligned}
        \mathbb E \left[\sup_{t\in[0,T]}\mathcal E_t\{rY_i\cdot B^S_i\}\right] &= 1 + \mathbb E \left[\sup_{t\in[0,T]} \int_0^t \mathcal E_s\{rY_i\cdot B^S_i\} r Y_i(s) dB^S_i(s)\right]
        \\
        &\le 1 + C_1 \mathbb E\left[\left( \int_0^T \mathcal E^2_s\{rY_i\cdot B^S_i\} Y_i^2(s) ds \right)^{\frac{1}{2}}\right]
        \\
        &\le 1 + C_1 \sup_{s\in[0,T]} \psi_i(s) \left( \int_0^T \mathbb E\left[\mathcal E^2_s\{rY_i\cdot B^S_i\}\right] ds \right)^{\frac{1}{2}}.
    \end{aligned}   
    \end{equation}
    By Novikov's criterion, $\mathbb E\left[\mathcal E_s(2r Y_i\cdot B^S_i) \right] = 1$, so
    \begin{align*}
        \mathbb E\left[\mathcal E^2_s\{rY_i\cdot B^S_i\}\right] &= \mathbb E\left[\mathcal E_s\{2r Y_i\cdot B^S_i\} \exp\left\{ \int_0^s (rY_i(u))^2 du \right\}\right]
        \\
        &\le \exp\left\{ r^2T\max_{s\in[0,T]}\psi_i^2(s) \right\} \mathbb E\left[\mathcal E_s\{2r Y_i\cdot B^S_i\} \right] 
        \\
        & = \exp\left\{ r^2T\max_{s\in[0,T]}\psi_i^2(s) \right\} < \infty
    \end{align*}
    and hence, taking into account \eqref{proofeq: bounds for moments of prices 1}, we obtain that
    \begin{align*}
        \mathbb E\left[\sup_{t\in[0,T]}S_i^r(t)\right] &\le c_i\left( 1+C_1 \sqrt{T}\sup_{s\in[0,T]}\psi_i(s) \exp\left\{ \frac{r^2T}{2}\max_{s\in[0,T]}\psi_i^2(s) \right\}\right),
    \end{align*}
    which yields the required result.
\end{proof}

\paragraph{Num\'eraire and discounted price process.} As a num\'eraire, we will use the function $e^{\int_0^t \nu(s) ds}$, $t\in[0,T]$, where $\nu$: $[0,T]\to\mathbb R_+$ denotes a H\"older continuous function of order $\min_{i=1,...,d} H_i$ representing an instantaneous interest rate. The discounted price process denoted as
\[
    \widetilde S_i(t) := e^{-\int_0^t \nu(s) ds} S_i(t), \quad t\in[0,T], \quad i=1,...,d,
\]
has thus dynamics of the form
\[
   \widetilde S_i(t) = S_i(0) + \int_0^t \widetilde\mu_i(s) \widetilde S_i(s) ds + \int_0^t Y_i(s) \widetilde S_i(s) d B^S_i(s), \quad t \in[0,T], \quad i=1,...,d, 
\]
where $\widetilde\mu_i := \mu_i - \nu$.

\subsection{Incompleteness of the market and martingale measures}\label{ssec: martingale measures}

Recall that our model is driven by the $2d$-dimensional Gaussian process 
\[
    \left( \begin{matrix} B^S(t) \\ B^Y(t) \end{matrix} \right) = \Lc W(t), \quad t\in[0,T],
\]
where $W = \{W(t),~t\in[0,T]\}$ is a standard $2d$-dimensional Brownian motion and $\Lc = (\ell_{i,j})_{i,j=1}^{2d}$ is nondegenerate lower triangular $2d \times 2d$ real matrix with positive entries on a diagonal such that $\Lc \Lc^T = \Sigma$. Let also
\[
    \Lc = \left(\begin{matrix} 
                    \Lc_{11} &\Lc_{12}\\
                    \Lc_{21} &\Lc_{22}
                \end{matrix}\right),
\]
where $\Lc_{ij}$ we denote the $d \times d$ blocks of matrix $\Lc$. Note that $\Lc_{12}$ is a zero matrix and $\Lc_{11}$ is a lower triangular matrix with positive elements on the diagonal. In particular, $\Lc_{11}$ is invertible, so we can define its inverse $\Lc_{11}^{-1} = \left(\ell^{(-1)}_{j,k}\right)_{j,k=1}^d$. Finally, for a given semimartingale $X$, we denote the corresponding Dol\'eans-Dade exponential by
\[
    \mathcal E_t\left\{ X \right\} := \exp\left\{ X_t - X_0 - \frac{1}{2} [X]_t \right\}.
\]

The next lemma is in the spirit of Lemma~2.10 in \cite{BGP2000}.

\begin{lemma}\label{lemma: local martingality conditions}
    Let $M$ be a positive local $\mathbb P$-martingale w.r.t. $\mathbb F$ with $M(0) = 1$ a.s. Then $M\widetilde S_i$, $i=1,...,d$, are all local martingales if and only if
    \begin{equation}\label{eq: representation of M as a stochastic exponential}
    \begin{aligned}
        M(t) &= \mathcal E_t \left\{ -\sum_{j=1}^d \int_0^\cdot \left(\sum_{k=1}^d \ell^{(-1)}_{j,k} \frac{\widetilde \mu_k(s)}{Y_k(s)} \right)dW_j(s) \right\} \mathcal E_t \left\{ \sum_{l=d+1}^{2d}  \int_0^\cdot \xi_l(s) dW_l(s)\right\}
        \\
        &= \mathcal E_t \left\{ -\sum_{j=1}^d \int_0^\cdot \left(\sum_{k=1}^d \ell^{(-1)}_{j,k} \frac{\widetilde \mu_k(s)}{Y_k(s)}  \right)dW_j(s) + \sum_{l=d+1}^{2d}  \int_0^\cdot \xi_l(s) dW_l(s)\right\},
    \end{aligned}
    \end{equation}
    where $\left(\ell^{(-1)}_{j,k}\right)_{j,k=1}^d$ are the elements of $\Lc_{11}^{-1}$ and $\xi_l$, $l=d+1,...,2d$, are {predictable} processes such that $\int_0^\cdot \xi_l(s) dW_l(s)$ are well-defined local martingales.
\end{lemma}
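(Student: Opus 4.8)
The plan is to combine the martingale representation theorem for the Brownian filtration with It\^o's product rule. Since $\mathbb F$ is generated by $W$, every local $\mathbb P$-martingale is continuous, and a positive one with $M(0)=1$ admits a representation $M=\mathcal E\{N\}$ with $N(t):=\sum_{j=1}^{2d}\int_0^t\eta_j(s)\,dW_j(s)$ for some predictable $2d$-dimensional process $\eta=(\eta_1,\dots,\eta_{2d})$: indeed, by the representation theorem $M=1+\int_0^\cdot\zeta(s)\cdot dW(s)$ for a predictable $\zeta$, and since $M$ is continuous and strictly positive (so $\inf_{[0,T]}M>0$ a.s.) one sets $\eta:=\zeta/M$ and checks via It\^o's formula that $M=\mathcal E\{N\}$, i.e. $dM=M\sum_{j=1}^{2d}\eta_j\,dW_j$. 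Both implications then reduce to identifying which $\eta$ make every $M\widetilde S_i$ a local martingale.

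Next I would rewrite the noise driving $\widetilde S_i$ in terms of $W$. Since $B=\Lc W$ with $\Lc$ lower triangular and $\Lc_{12}=0$, for $i=1,\dots,d$ we have $B^S_i=\sum_{k=1}^d\ell_{i,k}W_k$, hence
\[
    d\widetilde S_i(t)=\widetilde\mu_i(t)\widetilde S_i(t)\,dt+Y_i(t)\widetilde S_i(t)\sum_{k=1}^d\ell_{i,k}\,dW_k(t).
\]
Applying It\^o's product rule to $M\widetilde S_i$ and using $d[W_j,W_k]_t=\delta_{jk}\,dt$, the finite-variation part of $d(M\widetilde S_i)$ equals
\[
    M(t)\widetilde S_i(t)\Bigl(\widetilde\mu_i(t)+Y_i(t)\sum_{k=1}^d\ell_{i,k}\eta_k(t)\Bigr)\,dt.
\]
Because $M$ and $\widetilde S_i$ are continuous and a.s. strictly positive, $M\widetilde S_i$ is a (continuous) local martingale if and only if this drift vanishes $dt\otimes d\mathbb P$-a.e., i.e.
\[
    \sum_{k=1}^d\ell_{i,k}\eta_k(t)=-\frac{\widetilde\mu_i(t)}{Y_i(t)},\qquad i=1,\dots,d.
\]
In matrix form this reads $\Lc_{11}(\eta_1,\dots,\eta_d)^T=-\bigl(\widetilde\mu_1/Y_1,\dots,\widetilde\mu_d/Y_d\bigr)^T$; since $\Lc_{11}$ is invertible this forces $\eta_j=-\sum_{k=1}^d\ell^{(-1)}_{j,k}\widetilde\mu_k/Y_k$ for $j=1,\dots,d$, whereas $\eta_l=:\xi_l$ for $l=d+1,\dots,2d$ stays an arbitrary predictable process for which $\int_0^\cdot\xi_l\,dW_l$ is a well-defined local martingale. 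Plugging this $\eta$ into $M=\mathcal E\{N\}$ and noting that the index groups $\{1,\dots,d\}$ and $\{d+1,\dots,2d\}$ drive orthogonal martingales (so $\mathcal E\{A+B\}=\mathcal E\{A\}\mathcal E\{B\}$ whenever $[A,B]=0$) yields exactly \eqref{eq: representation of M as a stochastic exponential}. The converse implication follows by running the same product-rule computation with this explicit $\eta$ and observing that the drift is identically zero.

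The computation is essentially routine; the points requiring care are the passage $M\mapsto\eta$ (continuity and strict positivity of $M$ together with the representation theorem), the linear-algebra bookkeeping with the Cholesky blocks ($\Lc_{12}=0$, $\Lc_{11}$ lower triangular and invertible) that turns the coordinatewise identities into the stated matrix inversion, and checking that all stochastic integrals appearing are genuinely well-defined. This last point is where the sandwich structure pays off: since $Y_i(t)\ge\varphi_i(t)\ge\min_{[0,T]}\varphi_i>0$ and $\widetilde\mu_i$ is continuous, each $\eta_j$ with $j\le d$ is a bounded process, so $\int_0^\cdot\eta_j\,dW_j$ is a genuine square-integrable martingale and $M\widetilde S_i$ is indeed a local martingale whenever $M$ has the asserted form.
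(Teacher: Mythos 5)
Your proof is correct and follows essentially the same route as the paper: martingale representation in the Brownian filtration, It\^o's product rule applied to $M\widetilde S_i$, vanishing of the drift, and inversion of $\Lc_{11}$ using the block structure of $\Lc$. The only (minor) difference is that you use strict positivity of $M$ at the outset to write $M=\mathcal E\{N\}$ with $\eta=\zeta/M$, whereas the paper first solves the resulting linear SDE for $M$ and invokes positivity afterwards to recast the remaining factor as a stochastic exponential; both are equivalent bookkeeping.
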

   
\begin{proof} 
    By the virtue of the local martingale representation theorem, $M$ is a local $\mathbb P$-martingale w.r.t. $\mathbb F$ if and only if there exist {predictable} processes $\zeta_j = \{\zeta_j(t),~t\in[0,T]\}$, $j=1,...,2d$, such that
    \[
        \mathbb P \left(\int_0^T \zeta_j^2(s) ds < \infty\right) = 1, \quad j = 1,...,2d,
    \]
    and
    \begin{equation}\label{eq: Martingale representation of M}
        M(t) = 1 + \sum_{j=1}^{2d} \int_0^t \zeta_j (s) d W_j(s).
    \end{equation}
    By It\^o's formula, 
    \begin{align*}
        d(M(t) \widetilde S_i(t)) &= \left( \tilde\mu_i(t) M(t) + Y_i(t) \sum_{j=1}^{2d} \ell_{i,j} \zeta_j(t) \right) \widetilde S_i(t) dt 
        \\
        &\quad+ \sum_{j=1}^{2d} \left( \ell_{i,j} M(t) Y_i(t) + \zeta_j(t) \right)\widetilde S_i(t) dW_j(t),
    \end{align*}
    and thus $M\widetilde S_i$ is a local martingale if and only if
    \[
        \int_0^t \left( \tilde\mu_i(s) M(s) + Y_i(s) \sum_{j=1}^{2d} \ell_{i,j} \zeta_j(s) \right) \widetilde S_i(s)ds = 0, \quad t\in[0,T]. 
    \]
    This, in turn, implies that
    \[
          \sum_{j=1}^{2d} \ell_{i,j} \zeta_j(t) = - M(t)\frac{\tilde\mu_i(t)}{Y_i(t)}, \quad i=1,...,d,
    \]
    almost everywhere w.r.t. $dt\otimes d\mathbb P$. Since $\Lc$ is lower-triangular and each $Y_i(t)$ is positive a.s., we have that
    \begin{equation}\label{eq: system of equations to find martingale measure}
        \Lc_{11} \left(\begin{matrix} \zeta_1(t)\\ \vdots \\ \zeta_d(t) \end{matrix}\right) = -M(t) \left(\begin{matrix} \frac{\tilde\mu_1(t)}{Y_1(t)} \\ \vdots \\ \frac{\tilde\mu_d(t)}{Y_d(t)} \end{matrix}\right)
    \end{equation}
    almost everywhere w.r.t. $dt\otimes d\mathbb P$ and hence
    \begin{equation}\label{eq: system of equations to find martingale measure solution}
        \left(\begin{matrix} \zeta_1(t)\\ \vdots \\ \zeta_d(t) \end{matrix}\right) = \Lc^{-1}_{11}  \left(\begin{matrix} -M(t)\frac{\tilde\mu_1(t)}{Y_1(t)} \\ \vdots \\ -M(t)\frac{\tilde\mu_d(t)}{Y_d(t)} \end{matrix}\right)
    \end{equation}
    almost everywhere w.r.t. $dt\otimes d\mathbb P$. Using notation $\Lc_{11}^{-1} = \left(\ell^{(-1)}_{j,k}\right)_{j,k=1}^d$ and substituting \eqref{eq: system of equations to find martingale measure solution} to \eqref{eq: Martingale representation of M}, we see that the process $M$ should satisfy the SDE of the form
    \[
        dM(t) = -M(t) \sum_{j=1}^d \sum_{k=1}^d \ell^{(-1)}_{j,k} \frac{\widetilde \mu_k(t)}{Y_k(t)} dW_j(t) + \sum_{l=d+1}^{2d} \zeta_l(t) dW_l(t).
    \]
    This SDE has a unique solution (see e.g. \cite{CE2015}) of the form
    \begin{equation}\label{eq: representation of M as stochastic exponential first piece}
    \begin{aligned}
        M(t) &= \mathcal E_t \left\{ -\sum_{j=1}^d \int_0^\cdot \left(\sum_{k=1}^d \ell^{(-1)}_{j,k} \frac{\widetilde \mu_k(s)}{Y_k(s)} \right)dW_j(s) \right\} \times
        \\
        &\qquad \times \left( 1 + \sum_{l=d+1}^{2d}\int_0^t  \mathcal E^{-1}_u \left\{ -\sum_{j=1}^d \int_0^\cdot \left(\sum_{k=1}^d \ell^{(-1)}_{j,k} \frac{\widetilde \mu_k(s)}{Y_k(s)} \right)dW_j(s) \right\}   \zeta_l(u) dW_l(u) \right)
        \\
        &=: \mathcal E_t \left\{ -\sum_{j=1}^d \int_0^\cdot \left(\sum_{k=1}^d \ell^{(-1)}_{j,k} \frac{\widetilde \mu_k(s)}{Y_k(s)} \right)dW_j(s) \right\} \left( 1 + \sum_{l=d+1}^{2d}\int_0^t \widetilde \zeta_l(u)  dW_l(u) \right).
    \end{aligned}
    \end{equation}
    Furthermore, since the process $M$ is positive, the continuous process $\widetilde M$ defined by
    \[
        \widetilde M(t) := 1 + \sum_{l=d+1}^{2d} \int_0^t \widetilde \zeta_l(s) dW_l(s)
    \]
    must also be positive, thus it can be represented as
    \begin{equation}\label{eq: representation of M as stochastic exponential second piece}
    \begin{aligned}
        \widetilde M(t) = \mathcal E_t\left( \sum_{l=d+1}^{2d}\int_0^\cdot \xi_l(s) dW_l(s) \right)
    \end{aligned}
    \end{equation}
    with $\xi_l$, $l=d+1,...,2d$, being some {predictable} processes such that the corresponding stochastic integrals are well-defined. It remains to notice that $\widetilde \zeta_l$, $l=d+1,...,2d$, can be arbitrary {predictable} stochastic processes such that the corresponding stochastic integrals are well-defined, thus $\xi_l$, $l=d+1,...,2d$, can also be chosen arbitrarily. Taking into account \eqref{eq: representation of M as stochastic exponential first piece} and \eqref{eq: representation of M as stochastic exponential second piece}, we get the representation \eqref{eq: representation of M as a stochastic exponential}.
\end{proof}

In the literature, positive densities $M(T)$ associated with local $\mathbb P$-martingales $M$ such that $M\widetilde S_i$ all become local martingales are commonly called \emph{strict local martingale densities} (see \cite{Schweizer1992} for more detail) and Lemma \ref{lemma: local martingality conditions} thus gives the description of the set of all strict local martingale densities on the market. However, note that the measure defined by $d \mathbb Q := M(T)d\mathbb P$ is not necessarily a probability measure and one has to check that $\mathbb Q(\Omega) = \mathbb E [M(T)] = 1$ (or, equivalently, that $M$ is a uniformly integrable martingale) separately. Moreover, exploiting the Bayes' theorem for conditional expectations, it is easy to see that if $\mathbb Q$ defined above is a probability measure, then $\widetilde S_i$, $i=1,...,d$, are local $\mathbb Q$-martingales, which would imply the no-arbitrage property of the market. It turns out that the market model in consideration indeed has this property which is formulated in the next theorem.

\begin{theorem}
    The market defined in Subsection \ref{subsec: Market description} is arbitrage-free and incomplete.
\end{theorem}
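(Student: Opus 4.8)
The plan is to establish the two properties separately, exploiting the structure already developed. For the \emph{arbitrage-free} part, I would produce an explicit equivalent local martingale measure by specializing the family from Lemma~\ref{lemma: local martingality conditions}. The natural candidate is obtained by taking $\xi_l \equiv 0$ for all $l = d+1, \dots, 2d$, so that
\[
    M(t) = \mathcal E_t \left\{ -\sum_{j=1}^d \int_0^\cdot \left(\sum_{k=1}^d \ell^{(-1)}_{j,k} \frac{\widetilde \mu_k(s)}{Y_k(s)} \right)dW_j(s) \right\}.
\]
The first step is to check that $M$ is a genuine (uniformly integrable) martingale, not merely a local one, so that $d\mathbb Q := M(T)\, d\mathbb P$ defines a probability measure equivalent to $\mathbb P$. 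Here is where the sandwich property pays off: since each $Y_k$ is bounded below by $\min_t \varphi_k(t) > 0$ (Remark~\ref{rem: properties of Y}) and each $\widetilde\mu_k = \mu_k - \nu$ is continuous hence bounded on $[0,T]$, the integrand $\sum_k \ell^{(-1)}_{j,k}\widetilde\mu_k(s)/Y_k(s)$ is \emph{uniformly bounded} in $(s,\omega)$. Therefore Novikov's criterion applies immediately (indeed the exponent $\frac12 \int_0^T (\cdot)^2\,ds$ is bounded), giving $\mathbb E[M(T)] = 1$. By Lemma~\ref{lemma: local martingality conditions}, under $\mathbb Q$ each $\widetilde S_i$ is a local martingale; since the discounted prices are positive, they are $\mathbb Q$-supermartingales, and the existence of such an equivalent measure yields no arbitrage (NFLVR) by the fundamental theorem of asset pricing.

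For the \emph{incompleteness} part, the idea is to exhibit infinitely many distinct equivalent local martingale measures — uniqueness of the ELMM fails, which is the hallmark of an incomplete market. Again Lemma~\ref{lemma: local martingality conditions} does the heavy lifting: for \emph{any} bounded predictable processes $\xi_l$, $l = d+1, \dots, 2d$, the corresponding $M$ is a local martingale, and by the same Novikov argument (boundedness of the drift part plus boundedness of $\xi_l$) it is in fact a true martingale defining an equivalent probability measure under which all $\widetilde S_i$ are local martingales. Choosing, say, $\xi_l \equiv \varepsilon$ for varying $\varepsilon \in \mathbb R$ produces a continuum of pairwise distinct such measures (distinctness follows because the second Dol\'eans-Dade factor genuinely depends on $\varepsilon$, as these $\xi_l$ drive the noise components $W_{d+1},\dots,W_{2d}$ that do not appear in the price dynamics). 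Hence the ELMM is non-unique and the market is incomplete.

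The main obstacle — really the only non-routine point — is the justification that the candidate densities are true martingales rather than strict local martingales, i.e. the verification that $\mathbb Q(\Omega) = 1$, which the paragraph preceding the theorem explicitly flags as something to be checked separately. The resolution hinges entirely on the two-sided sandwich bound $0 < \varphi_i(t) < Y_i(t) < \psi_i(t)$: the lower bound $Y_i \ge \min_t\varphi_i(t) > 0$ makes $1/Y_i$ bounded, which is exactly what is needed for Novikov. (It is worth remarking in the proof that it is precisely this feature — a clean description of equivalent martingale measures — that is typically unavailable in rough volatility models where the volatility may hit zero, as advertised in the introduction.) Everything else is a direct invocation of Lemma~\ref{lemma: local martingality conditions}, Novikov's criterion, and the fundamental theorem of asset pricing.
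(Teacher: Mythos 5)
Your proposal is correct and follows essentially the same route as the paper: take $\xi_l \equiv 0$ in Lemma \ref{lemma: local martingality conditions}, use the lower sandwich bound $Y_k \ge \min_t \varphi_k(t) > 0$ to apply Novikov's criterion and obtain a genuine equivalent martingale measure (hence no arbitrage), and then note that any sufficiently integrable choice of $\xi_l$ yields further equivalent local martingale measures (hence incompleteness). Your additions — the explicit supermartingale/NFLVR remark and the $\xi_l \equiv \varepsilon$ argument for pairwise distinctness of the measures — are only minor elaborations of the paper's argument, not a different method.
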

\begin{proof}
    Each $Y_k$, $k=1,...,d$, is bounded away from zero by the corresponding $\min_{t\in[0,T]} \varphi_k(t) > 0$, thus, by Novikov's condition,
    \[
        M(t) = \mathcal E_t \left\{ -\sum_{j=1}^d \int_0^\cdot \left(\sum_{k=1}^d \ell^{(-1)}_{j,k} \frac{\widetilde \mu_k(s)}{Y_k(s)} \right)dW_j(s) \right\}
    \]
    is a uniformly integrable martingale (this case corresponds to $\xi_l \equiv 0$ in \eqref{eq: representation of M as a stochastic exponential}, $l=d+1,...,2d$). Therefore the associated $\mathbb Q$ is a probability measure, i.e. the market is arbitrage free. 
    
    Furthermore, again by Novikov's condition, any $\xi_l$ such that
    \[
        \E \left[ \exp\left\{ \frac{1}{2} \sum_{l=d+1}^{2d} \int_0^T \xi_l^2(t) dt \right\} \right] < \infty
    \]
    leads to $M$ being a uniformly integrable martingale, thus the market is incomplete.
\end{proof}

\subsection{Implied volatility surface generated by the SVV model}\label{ssec: empirical performance}

One of the accepted benchmarks for the quality of a financial model is the shape of the implied volatility generated by it (see e.g. the discussion in \cite{Di_Nunno_Kubilius_Mishura_Yurchenko-Tytarenko_2023}). In this subsection, we provide some results regarding the shape of the implied volatility surfaces generated by the SVV model as well as illustrate these shapes with simulations.

Assume, for simplicity, that the instantaneous interest rate $\nu$ is constant and consider the SVV model of the form
\begin{equation}\label{eq: SVV for simulation of IV theory}
\begin{aligned}
     S(t) &= S(0) + \nu\int_0^t S(s)ds + \int_0^t Y(s) S(s) \left(\sqrt{1 - \rho^2} dB_1(s) + \rho dB_2(s)\right),
     \\
     Y(t) &= Y(0) + \int_0^t b(s,Y(s))ds + \int_0^t \mathcal K(t,s) dB_2(s),
\end{aligned}
\end{equation}
where $B_1, B_2$ are independent Brownian motions, $\rho \in (-1,1)$, $\mathcal K$ satisfies Assumption \ref{assum: assumptions on kernels} and $b$ satisfies Assumption \ref{assum: assumption on sandwiched drift} so that
\begin{equation}\label{eq: SVV IV bounds}
    0 < \min_{t\in[0,T]} \varphi(t) < \varphi(t) < Y(t) < \psi(t) < \max_{t\in[0,T]} \psi(t) < \infty.
\end{equation}

\begin{remark}
    It is easy to check that, in the particular case of the SVV model \eqref{eq: SVV for simulation of IV theory}, the discounted price process $\widetilde S = \{\widetilde S(t),~t\in[0,T]\} = \{e^{-\nu t} S(t),~t\in[0,T]\}$ satisfies the SDE
    \begin{equation*}
        \widetilde S(t) = S(0) + \int_0^t Y(s) \widetilde S(s) \left(\sqrt{1 - \rho^2} dB_1(s) + \rho dB_2(s)\right),
    \end{equation*}
    and hence is a martingale. This means that, under the model \eqref{eq: SVV for simulation of IV theory}, the non-arbitrage price $\Pi$ of a European call-option with maturity $T$ and payoff $K$ is equal to
    \[
        \Pi_{SVV}(T,K) = e^{-\nu T} \mathbb E\left[ (S(T) - K)_+ \right].
    \]
    Such a choice allows to simplify the arguments in this section and increase the numerical stability of the simulations.
\end{remark}

Consider also the standard Black-Scholes price of a European call-option with payoff $K$ and maturity $T$ at a given volatility level $\sigma>0$:
\[
    \Pi_{BS}(T,K; \sigma) := S(0) \Phi\left( \frac{\log\frac{e^{\nu T}S(0)}{K} + \frac{\sigma^2}{2}T}{\sigma\sqrt{T}} \right) - Ke^{-\nu T} \Phi\left( \frac{\log\frac{e^{\nu T}S(0)}{K} - \frac{\sigma^2}{2}T}{\sigma\sqrt{T}} \right),
\]
where $\Phi$ is the CDF of the standard normal distribution. We are interested in the behavior of the implied volatility surface $(T,K) \mapsto \widehat \sigma(T,K)$ generated by the SVV model defined as
\[
    \Pi_{BS}(T,K; \widehat \sigma(T,K)) = \Pi_{SVV}(T,K).
\]
\begin{remark}
    It is easy to check that $\widehat \sigma(T,K)$ exists and is uniquely defined for all $T$, $K>0$ since the function $\sigma \mapsto \Pi_{BS}(T,K; \sigma)$ is strictly increasing for all $T$, $K>0$, and, moreover,
    \begin{align*}
        \lim_{\sigma \to 0+} \Pi_{BS}(T,K; \sigma) &= (S(0) - e^{-\nu T} K)_+ 
        \\
        &=  (\mathbb E\left[e^{-\nu T} S(T)\right] - e^{-\nu T} K)_+ 
        \\
        &\le \mathbb E\left[(e^{-\nu T} S(T) - e^{-\nu T} K)_+\right] = \Pi_{SVV}(T,K).
    \end{align*}
\end{remark}

\paragraph{Boundedness of the implied volatility.} First of all, let us prove that the SVV implied volatility $(T,K) \mapsto \widehat \sigma(T,K)$ is bounded.

\begin{theorem}\label{eq: IV is bounded}
    For all $T$, $K>0$,
    \[
        \min_{t\in[0,T]} \varphi(t) \le \widehat \sigma(T,K) \le \max_{t\in[0,T]} \psi(t).
    \]
\end{theorem}
\begin{proof}
    By the seminal result \cite{Avellaneda_Levy_Paras_1995} of Avellaneda, Levy and Parás, the bounds \eqref{eq: SVV IV bounds} imply that 
    \[
        \Pi_{BS}(T,K; \min_{t\in[0,T]} \varphi(t)) \le \Pi_{SVV}(T,K) \le \Pi_{BS}(T,K; \max_{t\in[0,T]} \psi(t)),
    \]
    which immediately yields the claim.
\end{proof}

\begin{remark}
    Theorem \ref{eq: IV is bounded} indicates that one should choose the upper bound function $\psi$ such that $\max_{t\in[0,T]} \psi(t)$ exceeds the maximum value of the realized implied volatility observed on the market. Under the normal market conditions, it is usually sufficient to take $\max_{t\in[0,T]} \psi(t) = 1$.
\end{remark}

\paragraph{At-the-money skew.} For the notational convenience, denote $\kappa := \log\frac{K}{e^{\nu T}S(0)}$ and consider a reparametrization $\widehat \sigma_{\text{log-m}} = \widehat \sigma_{\text{log-m}}(T,\kappa)$ of $\widehat \sigma = \widehat \sigma(T,K)$ defined as 
\[
    \widehat \sigma_{\text{log-m}}(T,\kappa) := \widehat \sigma(T, S(0)e^{\kappa + \nu T}).
\]
It is well-known (see e.g. a detailed discussion in \cite[Section 2.2]{Di_Nunno_Kubilius_Mishura_Yurchenko-Tytarenko_2023} or \cite{Delemotte_Marco_Segonne_2023}), that smiles at-the-money of empirically observed implied volatilities $\widehat \sigma_{\text{emp}}(T,\kappa)$ become progressively steeper as $T\to 0$ with a rule-of-thumb approximation
\begin{equation}\label{eq: empirical power law}
    \left|\frac{\widehat \sigma_{\text{emp}} (T, \kappa) - \widehat \sigma_{\text{emp}} (T, \kappa')}{\kappa - \kappa'}\right| \propto T^{-\frac{1}{2} + H}, \quad \kappa,\kappa' \approx 0, \quad H \in \left(0, \frac 1 2\right).
\end{equation}
In order to replicate the empirical behavior \eqref{eq: empirical power law}, one may want a model generating the implied volatility $\widehat \sigma_{\text{log-m}} = \widehat \sigma_{\text{log-m}}(T,\kappa)$ with
\begin{equation}\label{eq: theoretical power law}
    \left|\frac{\partial \widehat \sigma_{\text{log-m}}}{\partial \kappa} (T, \kappa)\right|_{\kappa = 0} = O(T^{-\frac{1}{2} + H}), \quad T \to 0.
\end{equation}
Luckily, the SVV model \eqref{eq: SVV for simulation of IV theory} does possess the property \eqref{eq: theoretical power law} with the right choice of the kernel $\mathcal K$.

\begin{theorem}\label{th: SVV power law}
    Let $H\in\left(\frac{1}{6}, \frac{1}{2}\right)$ and $\rho < 0$ in \eqref{eq: SVV for simulation of IV theory}. Take 
    \begin{equation}\label{eq: explosive drift for power law}
        b(t,y) = \frac{\theta_1(t)}{(y - \varphi(t))^{\gamma_1}} - \frac{\theta_2(t)}{(\psi(t) - y)^{\gamma_2}} + a(t,y),
    \end{equation}
    where
    \begin{itemize}
        \item $\gamma_1 > \frac{1}{H} - 1$, $\gamma_2 > \frac{1}{H} - 1$;

        \item the functions $\theta_1$, $\theta_2$: $[0,T]\to\mathbb R$ are strictly positive and continuous;

        \item the function $a$: $[0,T]\times\mathbb R\to\mathbb R$ is locally Lipschitz in $y$ uniformly in $t$, i.e. for any $N > 0$ there exists a constant $C_N>0$ that does not depend on $t$ such that
        \[
            |a(t,y_2) - a(t, y_1)| \le C_N|y_2 - y_1|, \quad t\in [0,T], \quad y_1, y_2 \in [-N,N];
        \]
        
        \item $a$: $[0,T]\times\mathbb R\to\mathbb R$ is two times differentiable w.r.t. the spatial variable $y$ with $a$, $a'_y$, $a''_{yy}$ all being continuous on $[0,T]\times\mathbb R$
    \end{itemize}
    and assume that the kernel $\mathcal K$ is such that
    \begin{itemize}
        \item for any $0 \le s < t \le T$, there is a constant $C>0$ such that
        \begin{equation*}
            |\mathcal K(t,s)| \le C|t-s|^{-\frac{1}{2}+H},
        \end{equation*}
        \item there exists a constant $K_Y > 0$ such that
        \begin{equation}\label{eq: Kernel explosion}
            \frac{1}{\tau^{\frac{3}{2} + H}} \int_{0}^\tau \int_s^\tau \mathcal K(t,s) dtds \to K_Y, \quad \tau \to 0+.
        \end{equation}
    \end{itemize}
    Then the power law \eqref{eq: theoretical power law} holds for the SVV model \eqref{eq: SVV for simulation of IV theory}.
\end{theorem}

\begin{proof}
    The proof of this claim can be found in the dedicated paper \cite{DN_YT_power_law_2023}.
\end{proof}

\begin{remark}
    The claim of Theorem \ref{th: SVV power law} holds for drifts that are more general than \eqref{eq: explosive drift for power law}. For more details, see Remark \cite[Remark 2.6]{DN_YT_power_law_2023}.
\end{remark}

\begin{remark}
    The condition $H>\frac{1}{6}$ in Theorem \ref{th: SVV power law} is consistent with the recent empirical estimate $H \approx 0.19$ for the SPX implied volatility obtained in \cite{Delemotte_Marco_Segonne_2023}.
\end{remark}

\paragraph{Simulations.} For simulations, we take $\nu = 0$ and the SVV model of the form
\begin{equation}\label{eq: SVV for simulation of IV}
\begin{aligned}
     S(t) &= 1 + \int_0^t Y(s) S(s) \left(\sqrt{0.75}dB_1(s)-0.5 dB_2(s)\right),
     \\
     Y(t) &= 0.5 + \int_0^t \left(\frac{0.005}{(Y(s) - 0.05)^5} - \frac{0.005}{(1 - Y(s))^5} + 0.05(0.5 - Y(s))\right)ds + 0.3\int_0^t \mathcal K(t,s) dB_2(s),
\end{aligned}
\end{equation}
where the choice of $\mathcal K$ is varied. In order to plot the implied volatility surface, we 
\begin{itemize}
    \item estimate the call option prices $\mathbb E\left[(S(T) - K)_+\right]$ under the model \eqref{eq: SVV for simulation of IV} for $T=\frac{n}{200}$, $n=1,...,200$, and $K=0.5 + m/100$, $m=0,...,150$, using the standard Monte Carlo method, i.e. average over 1500000 realizations of the payoff $(S(T) - K)_+$ (the algorithm for the simulation of the random variable $S(T)$ is described in full detail further in Subsection \ref{subsec: numerical scheme});
    \item calculate the corresponding Black-Scholes implied volatility $(T,\kappa) \mapsto \widehat \sigma(T,\kappa)$, where $\kappa$ denotes the log-moneyness, using the standard procedure (see e.g. \cite[Section 2.2]{Di_Nunno_Kubilius_Mishura_Yurchenko-Tytarenko_2023} for more details).
\end{itemize}

\begin{remark}
    Note that the parameters in \eqref{eq: SVV for simulation of IV} are illustrational and are not calibrated to the real market data.
\end{remark}


\begin{example}\label{ex: purely rough IV}
    Let us consider the rough fractional kernel
    \begin{equation}\label{eq: IV rough kernel}
        \mathcal K(t,s) := \frac{1}{\Gamma(0.7)}(t-s)^{-0.3} \mathbbm 1_{s<t}
    \end{equation}
    in \eqref{eq: SVV for simulation of IV}. In this case, the implied volatility surface produced by the SVV model \eqref{eq: SVV for simulation of IV} is depicted on Figure~\ref{fig: rough IV surface}.

    Note that Figure~\ref{fig: rough IV surface}(b) visually demonstrates how that the absolute at-the-money implied volatility skew 
    \begin{equation}\label{eq: SV skew}
        \left|\frac{\partial \widehat \sigma_{\text{log-m}}}{\partial \kappa}(T,0)\right|
    \end{equation}    
    increases as $T\to 0$. Figure \ref{fig: rough IV skew} contains the variation of \eqref{eq: SV skew} with respect to $T$ on the standard (Fig.~\ref{fig: rough IV skew}(a)) and logarithmic (Fig.~\ref{fig: rough IV skew}(b)) scales and shows that \eqref{eq: SV skew} indeed follows the power law with
    \[
        \left|\frac{\partial \widehat \sigma_{\text{log-m}}}{\partial \kappa}(T,0)\right| = O(T^{-0.3}), \quad T\to 0.
    \]

    \begin{figure}[h!]
     \centering
     \begin{subfigure}[b]{0.48\textwidth}
         \centering
         \includegraphics[width=\textwidth]{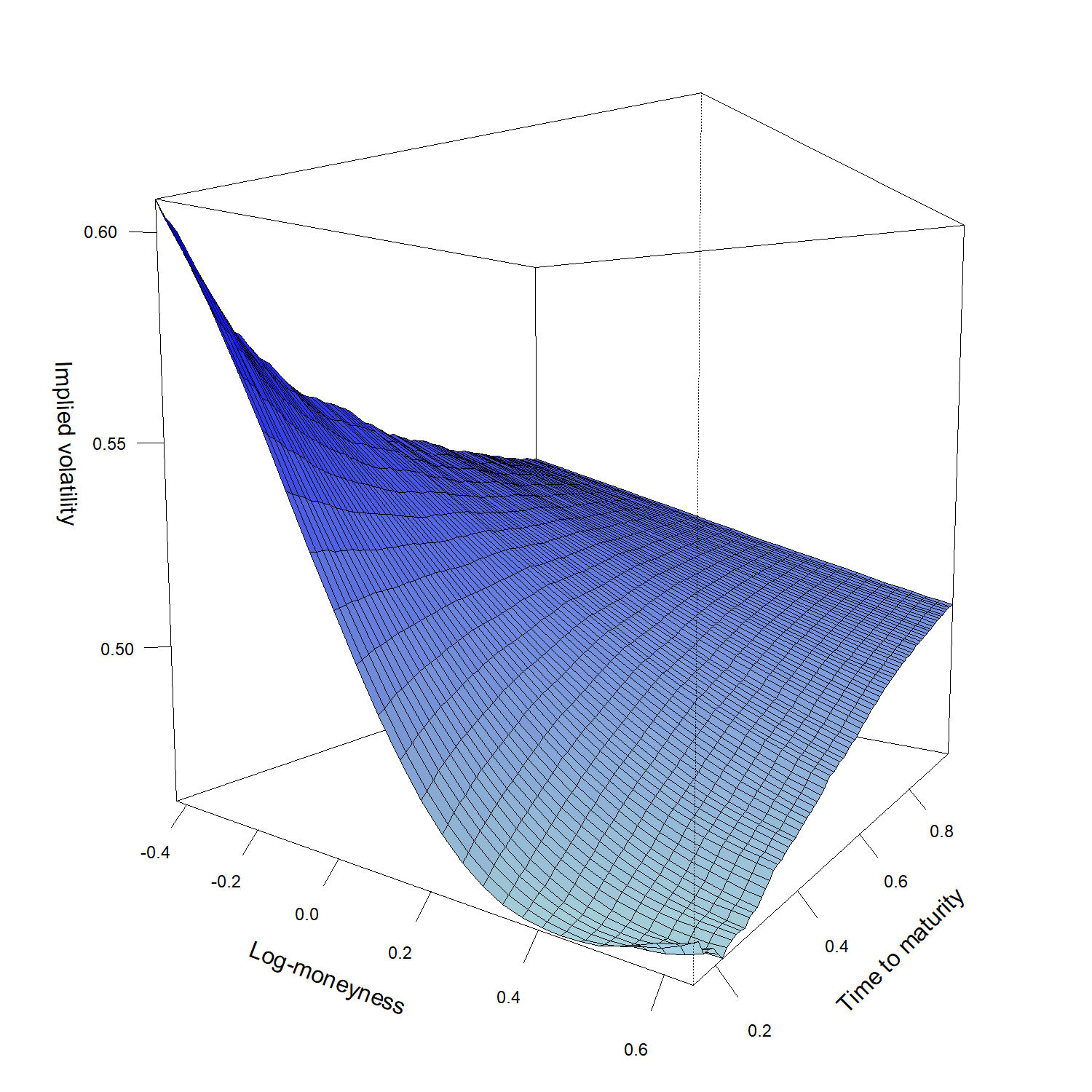}
         \caption{Implied volatility surface}
     \end{subfigure}
     \hfill
     \begin{subfigure}[b]{0.48\textwidth}
         \centering
         \includegraphics[width=\textwidth]{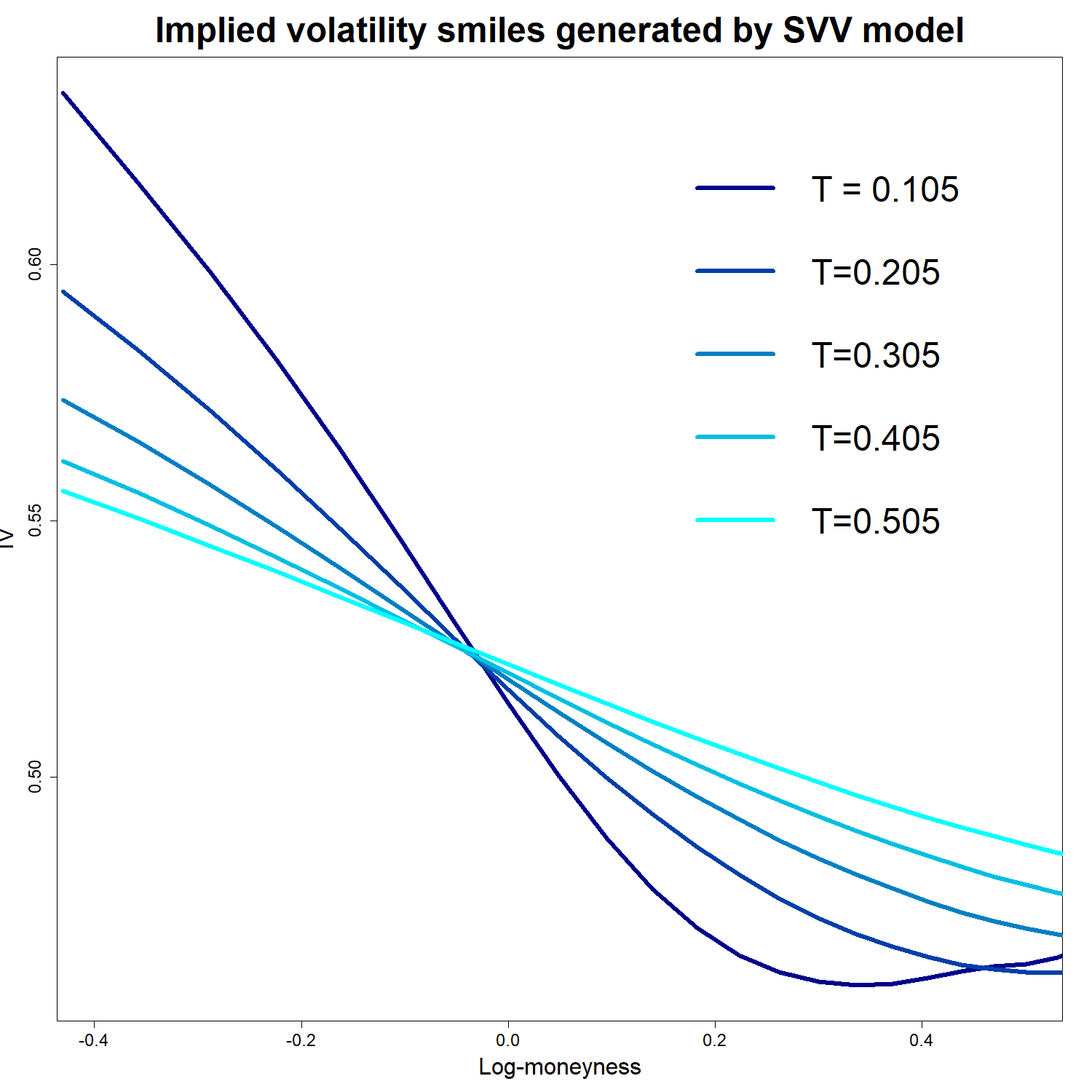}
         \caption{Implied volatility smiles for different maturities}
     \end{subfigure}
    \caption{Implied volatility surface (a) and implied volatility smiles (b) generated by the SVV model \eqref{eq: SVV for simulation of IV} with the rough kernel \eqref{eq: IV rough kernel}. Note that the smile becomes steeper as the time to maturity $T \to 0$, which reproduces a similar effect happening on real markets (for more details, see e.g. \cite{Delemotte_Marco_Segonne_2023, Fouque_Papanicolaou_Sircar_Solna_2004} or \cite[Section 2.2]{Di_Nunno_Kubilius_Mishura_Yurchenko-Tytarenko_2023}).}
    \label{fig: rough IV surface}
    \end{figure}

    \begin{figure}[h!]
     \centering
     \begin{subfigure}[b]{0.48\textwidth}
         \centering
         \includegraphics[width=\textwidth]{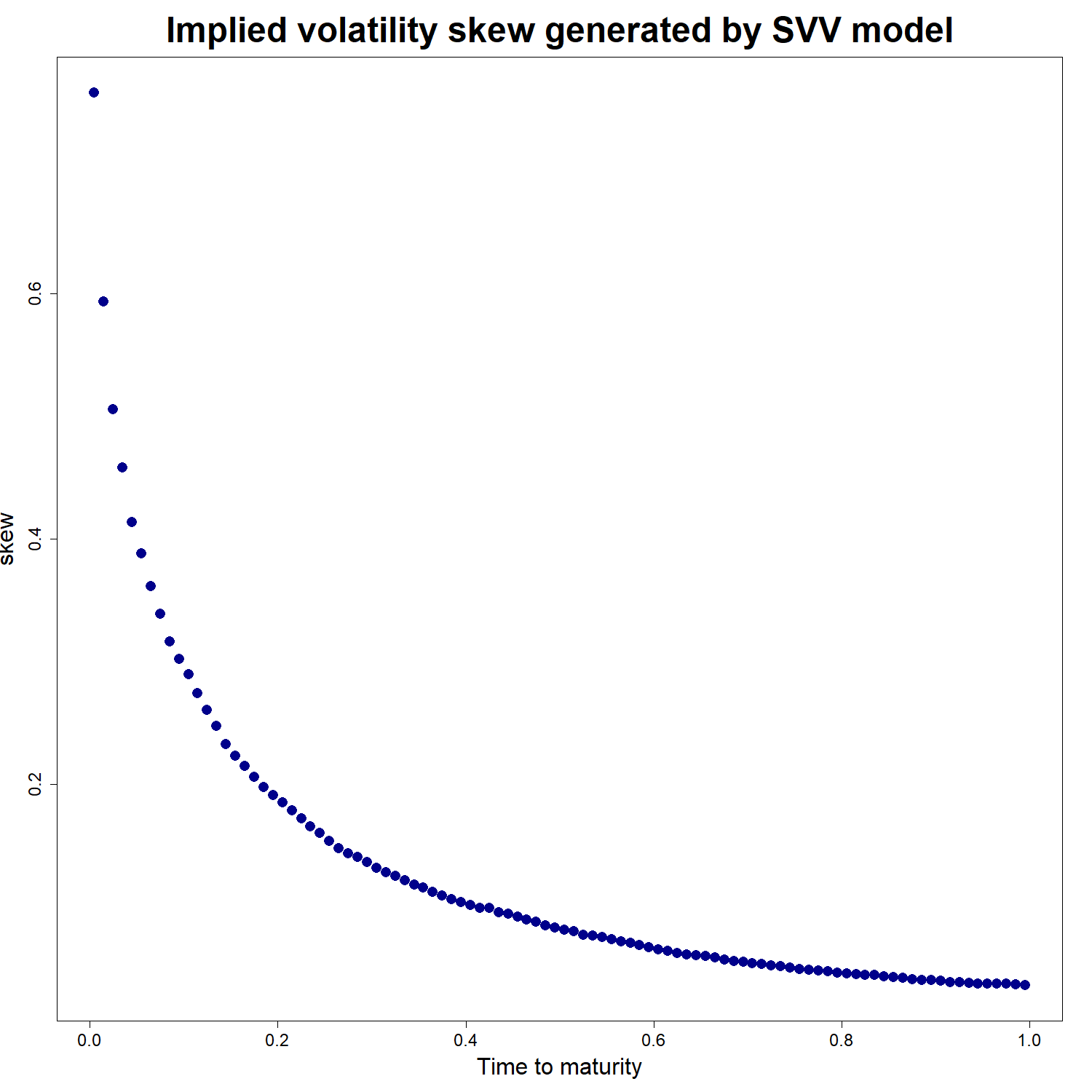}
         \caption{Standard scale}
     \end{subfigure}
     \hfill
     \begin{subfigure}[b]{0.48\textwidth}
         \centering
         \includegraphics[width=\textwidth]{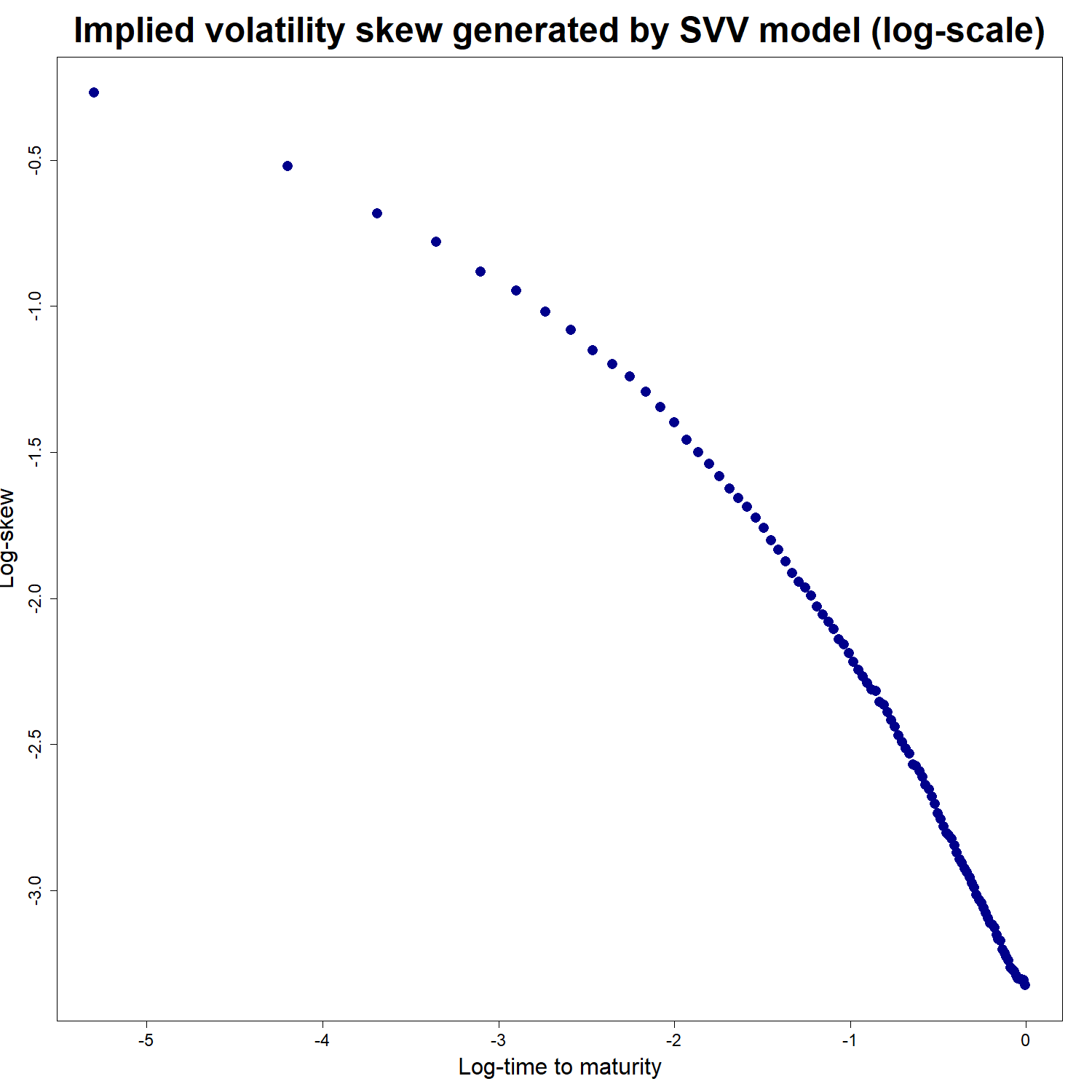}
         \caption{Logarithmic scale}
     \end{subfigure}
    \caption{Absolute at-the-money implied volatility skew \eqref{eq: SV skew} generated by the SVV model \eqref{eq: SVV for simulation of IV} with the rough kernel \eqref{eq: IV rough kernel} on standard (a) and logarithmic (b) scales.}
    \label{fig: rough IV skew}
    \end{figure}
    
    As a final remark, observe that the points on Fig.~\ref{fig: rough IV skew}(b) are not located on a single line; in fact, the linear fit performed over the short maturities substantially differs from the fit over long maturities (see Fig.~\ref{fig: rough change of power law}). Such behavior agrees with the recent empirical study \cite{Delemotte_Marco_Segonne_2023} reporting the same phenomenon for SPX implied volatility skews (c.f. \cite[Figure 2]{Delemotte_Marco_Segonne_2023}).  
    \begin{figure}[h!]
        \centering
        \includegraphics[width = \textwidth]{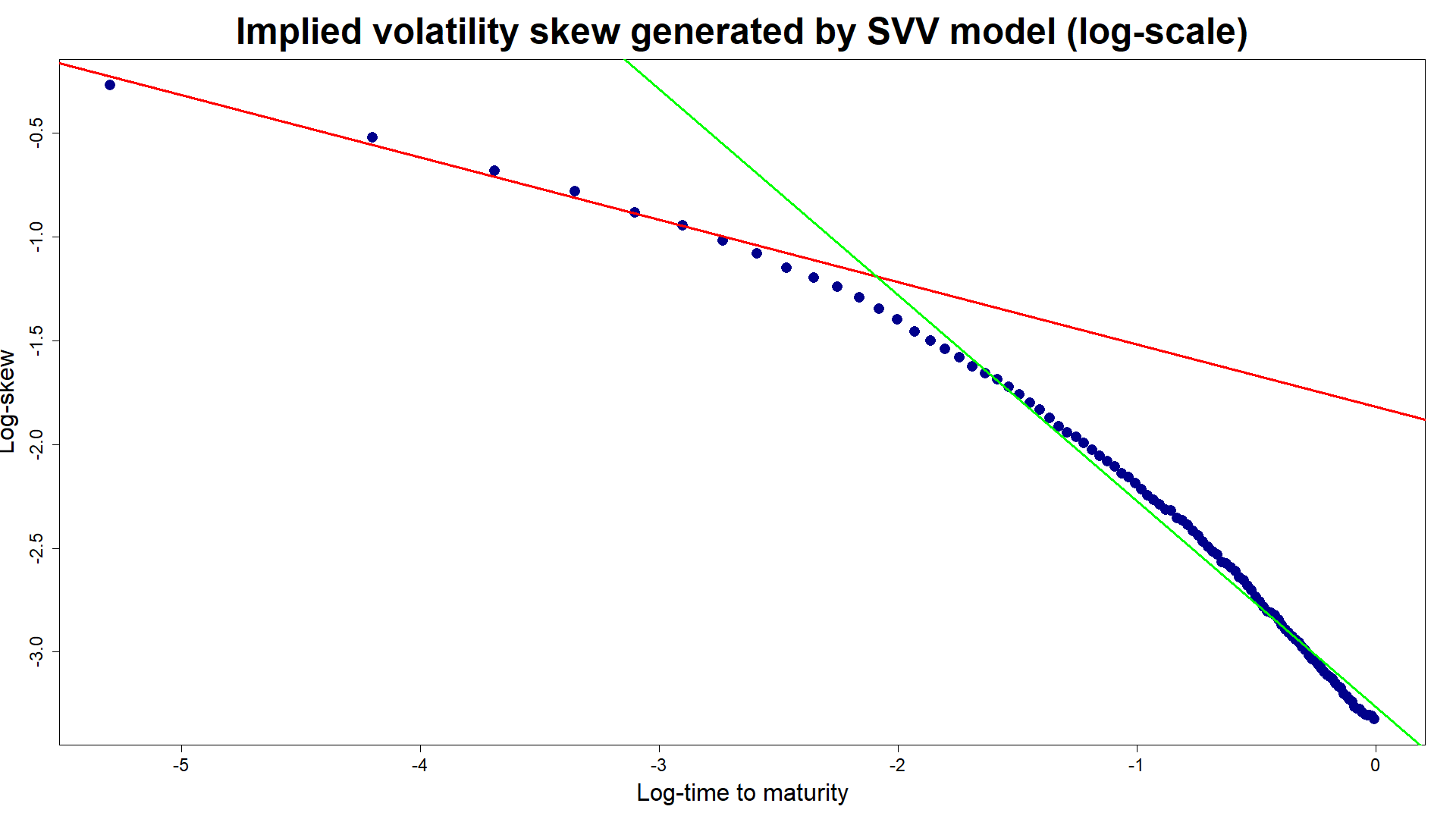}
        \caption{At-the-money implied volatility skew \eqref{eq: SV skew} generated by the SVV model \eqref{eq: SVV for simulation of IV} with the rough kernel \eqref{eq: IV rough kernel} on a logarithmic scale. The lines depict two linear fits: over the short maturities (red line, the slope is -0.3006433 which is consistent with \cite[Theorem 4.8]{DN_YT_power_law_2023}) and over the long maturities (green line, the slope is -0.9920939). Such a behavior is consistent with the recent paper \cite{Delemotte_Marco_Segonne_2023}}
        \label{fig: rough change of power law}
    \end{figure}
\end{example}

\begin{example}
    Let us consider the mixed fractional kernel
    \begin{equation}\label{eq: IV mixed kernel}
        \mathcal K(t,s) = \left( \frac{\sqrt{0.4}}{\Gamma(0.7)} (t-s)^{-0.3} + \frac{\sqrt{1.8}}{\Gamma(1.4)} (t-s)^{0.4}\right)\mathbbm 1_{s<t}
    \end{equation}
    in \eqref{eq: SVV for simulation of IV}. The corresponding implied volatility surface and implied volatility smiles are presented on Figure~\ref{fig: mixed IV surface}. Comparing the results with the purely rough case from Example \ref{ex: purely rough IV}, we note that the the presence of a ``long memory'' component $(t-s)^{0.4}$ in \eqref{eq: IV mixed kernel} keeps the smiles ``less flat'' and more pronounced for larger maturities, which is consistent with the results of \cite{Funahashi_Kijima_2017, Funahashi_Kijima_2017-1}. 
    \begin{figure}
     \centering
     \begin{subfigure}[b]{0.48\textwidth}
         \centering
         \includegraphics[width=\textwidth]{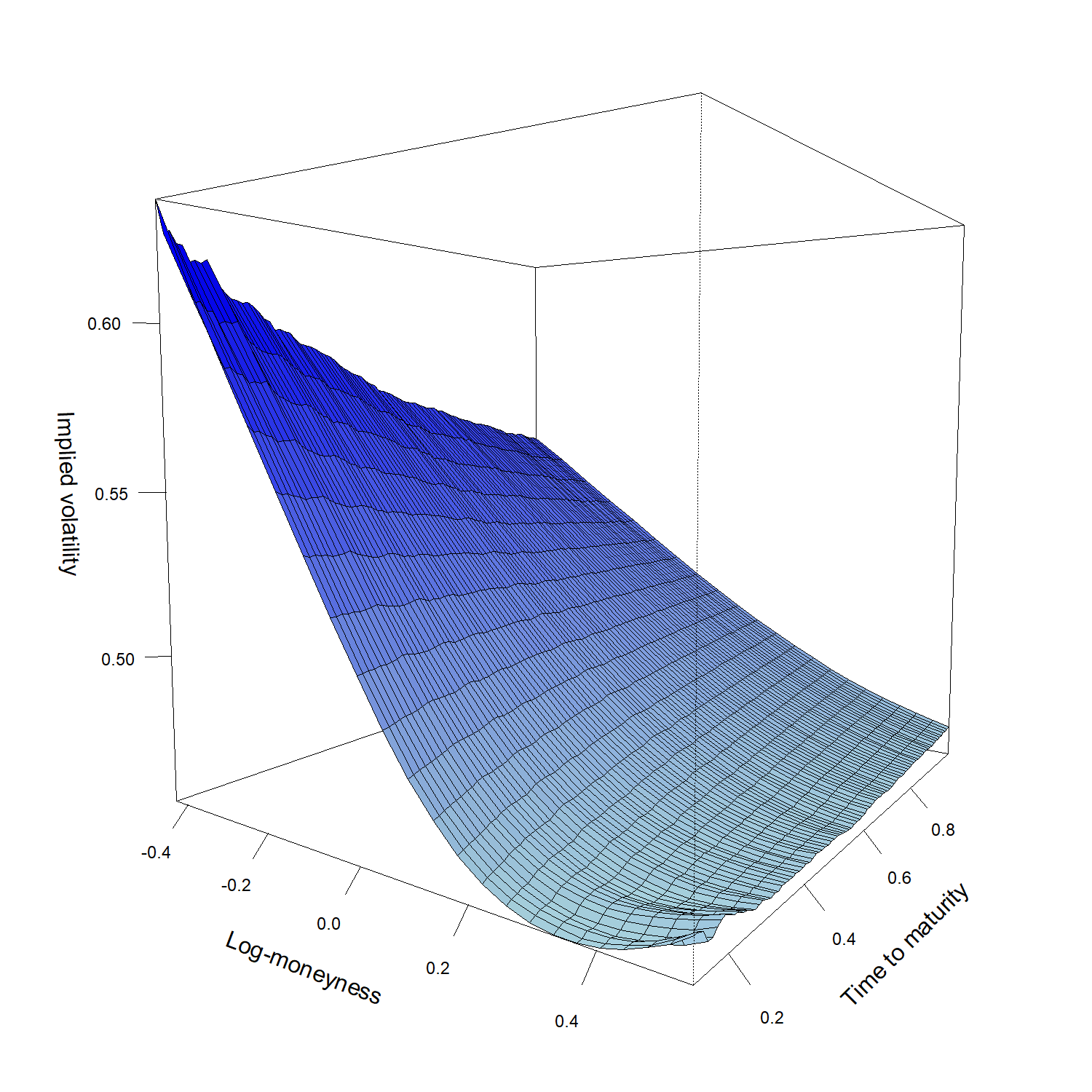}
         \caption{Implied volatility surface}
     \end{subfigure}
     \hfill
     \begin{subfigure}[b]{0.48\textwidth}
         \centering
         \includegraphics[width=\textwidth]{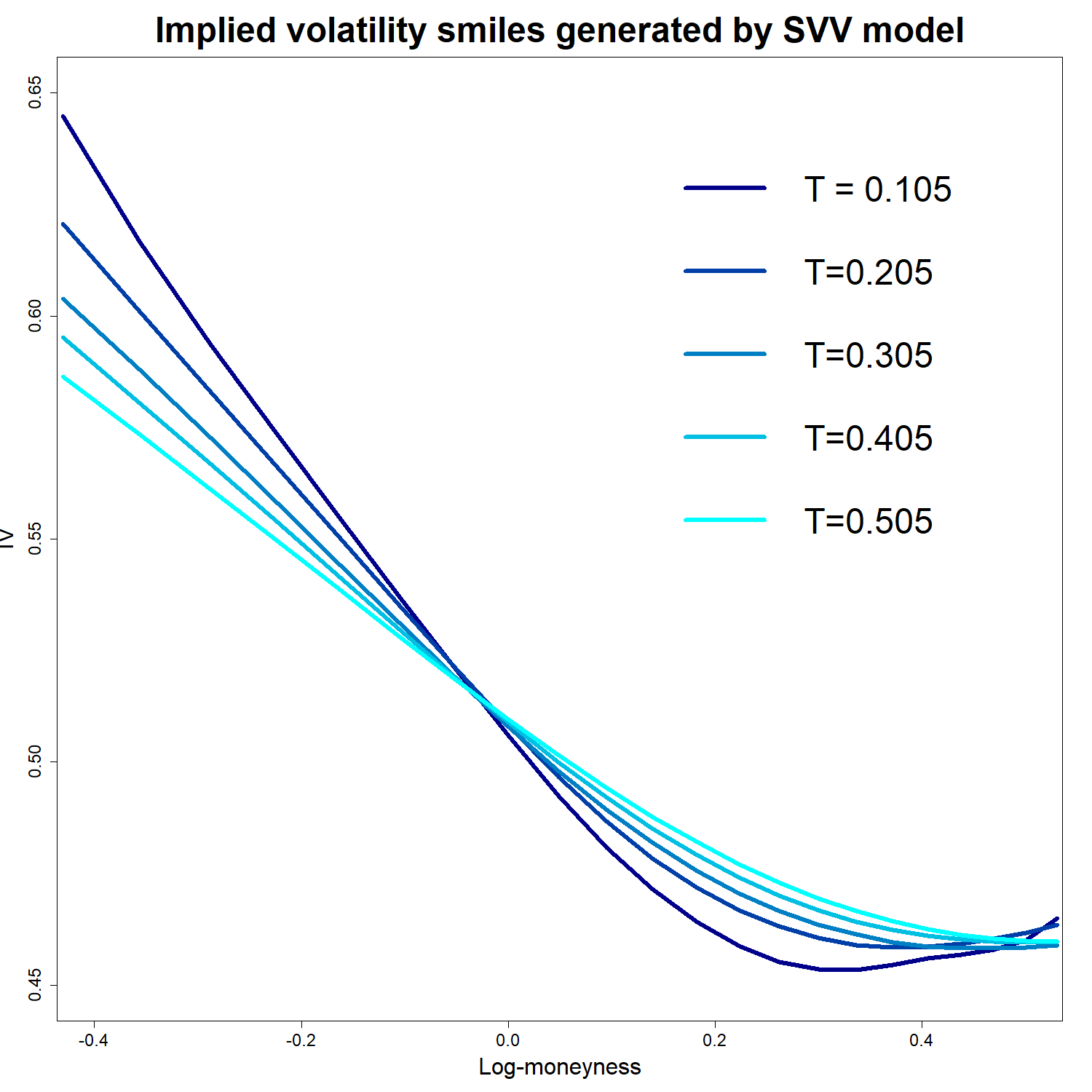}
         \caption{Absolute skews, standard scale}
     \end{subfigure}
    \caption{Implied volatility surface (a) and implied volatility smiles (b) generated by the SVV model \eqref{eq: SVV for simulation of IV} with the mixed fractional kernel \eqref{eq: IV mixed kernel}. Just as in the purely rough case, the absolute at-the-money skew increases for small maturities, which is consistent with the real-life market behavior. Note that the presence of the term $(t-s)^{0.4}$ in the kernel ensures that the smile ``flattens out'' for larger maturities much slower than in the purely rough case, which agrees with the results of \cite{Funahashi_Kijima_2017, Funahashi_Kijima_2017-1}}\label{fig: mixed IV surface}
    \end{figure}
    In addition, just like in the purely rough case from Example \ref{ex: purely rough IV}, the explosive behavior of the skew as $T\to 0$ is guaranteed by Theorem \ref{th: SVV power law} which implies
    \[
        \left|\frac{\partial \widehat \sigma_{\text{log-m}}}{\partial \kappa}(T,0)\right| = O(T^{-0.3}), \quad T\to 0,
    \] 
    see Figure~\ref{fig: mixed IV skew} below. In other words, the mixed kernel indeed allows to keep the power law of the at-the-money skew as $T\to 0$, and increases the curvature of the smile for larger values of $T$.
    
    \begin{figure}[h!]
     \centering
     \begin{subfigure}[b]{0.48\textwidth}
         \centering
         \includegraphics[width=\textwidth]{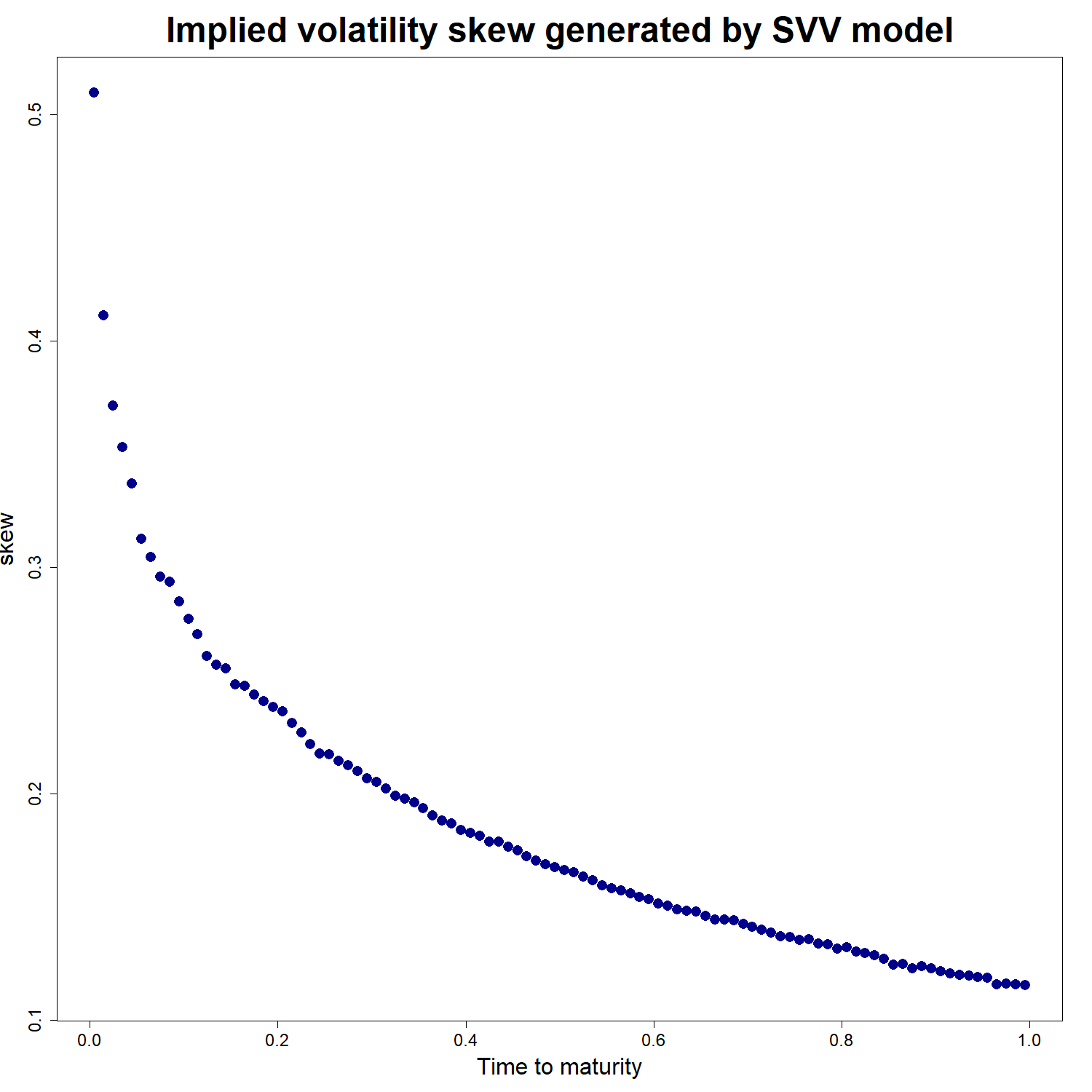}
         \caption{Standard scale}
     \end{subfigure}
     \hfill
     \begin{subfigure}[b]{0.48\textwidth}
         \centering
         \includegraphics[width=\textwidth]{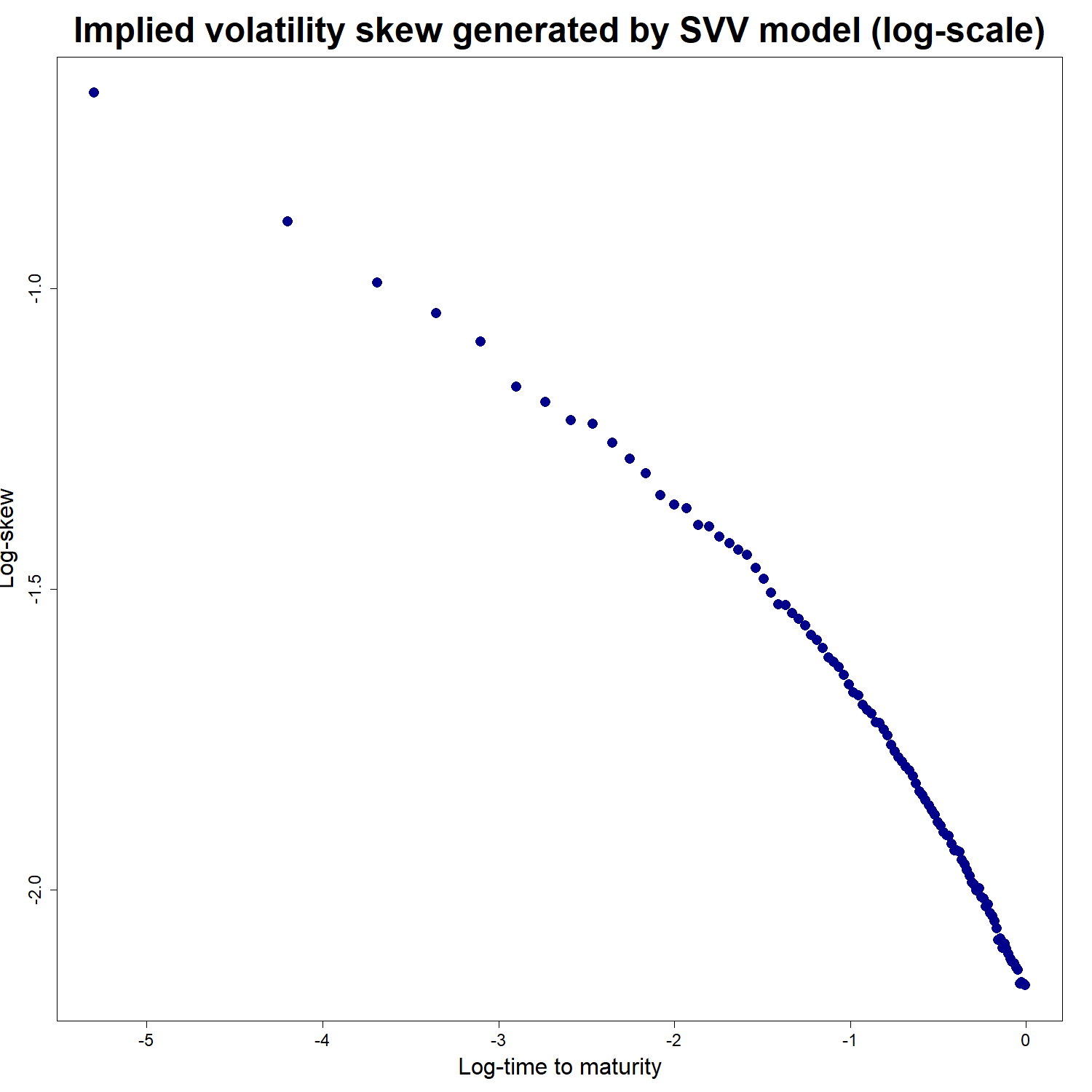}
         \caption{Logarithmic scale}
     \end{subfigure}
    \caption{Absolute at-the-money implied volatility skew \eqref{eq: SV skew} generated by the SVV model \eqref{eq: SVV for simulation of IV} with the mixed kernel \eqref{eq: IV mixed kernel} on standard (a) and logarithmic (b) scales.}
    \label{fig: mixed IV skew}
    \end{figure}
\end{example}

\section{Malliavin differentiability of volatility and price processes and absolute continuity of their laws}\label{sec: Malliavin}

\subsection{Malliavin differentiability of volatility and price w.r.t. the initial Brownian motion $W$}\label{ss: Malliavin w.r.t. original BM}

This Subsection is dedicated to the Malliavin differentiability of the volatility and price processes within the SVV model. Our strategy is as follows: first of all, we use the classical characterization of the Malliavin derivative in terms of the stochastic Gateaux derivative and prove the differentiability of sandwiched processes $Y_i$, $i=1,...,d$. Then, we prove an appropriate Malliavin chain rule and apply it to the representation \eqref{eq: solution to the price equation} to obtain the differentiability of $S_i$, $i=1,...,d$.

\paragraph{Malliavin differentiability of volatility.} In order to introduce the concept of the Malliavin differentiability, one must first fix the underlying isonormal Gaussian process and an obvious choice is the initial Brownian motion $W$. Recall that $(\Omega, \mathcal F, \mathbb P)$ is the canonical $2d$-dimensional Wiener space, $W = (W_1,...,W_{2d})$ is the associated $2d$-dimensional Brownian motion and stochastic volatility processes $Y_i = \{Y_i(t),~t\in[0,T]\}$, $i=1,...,d$, are given by
\[
    Y_i(t) = Y_i(0) + \int_0^t b_i(s, Y_i(s)) ds + Z_i(t), \quad t\in[0,T],    
\]
where $Y_i(0) \in (\varphi_i(0), \psi_i(0))$, each $b_i$ satisfies Assumption \ref{assum: assumption on sandwiched drift}, $Z_i(t) = \int_0^t \mathcal K_i(t,s) dB^Y_i$ is a Gaussian Volterra process with the kernel $\mathcal K_i$ satisfying Assumption \ref{assum: assumptions on kernels} and
\[
    B^Y_i := \sum_{j=1}^{2d} \ell_{d+i, j} W_j.
\]

Denote also by $\mathcal H$ the associated Cameron-Martin space, i.e. the space of all continuous functions $F=(F_1,...,F_{2d})\in C_0([0,T];\mathbb R^{2d})$ such that each $F_i$, $i=1,...,2d$, can be represented as
\[
    F_i(\cdot) = \int_0^\cdot f_i(s)ds,
\]
where $f=(f_1,...,f_{2d})\in L^2([0,T];\mathbb R^{2d})$. For the relation between $F$ and $f$ specified above, we will use the notation $F = \int_0^\cdot f(s)ds$. By ${\mathbb D}^{1,2}$, we denote the space of Malliavin differentiable functions (w.r.t. the $L^2$-closure).

Before proceeding to the main results of this Subsection, let us give an important auxiliary theorem.

\begin{theorem}[\cite{Sugita1985}, Theorem 3.1]\label{th: Sugita}
    A random variable $\eta \in L^2(\Omega, \mathcal F_T, \mathbb P)$ belongs to ${\mathbb D}^{1,2}$ w.r.t. $W$ if and only if the following two conditions are satisfied:
    \begin{enumerate}
        \item[(i)] $\eta$ is \emph{ray absolutely continuous}, i.e. for any $F \in \mathcal H$ there exists a version of the process $\{\eta(\omega + \varepsilon F),~\varepsilon \ge 0\}$ that is absolutely continuous;
        \item[(ii)] $\eta$ is \emph{stochastically Gateaux differentiable}, i.e. there exists a random vector 
        \[
            D\eta \in L^2\left(\Omega; L^2([0,T];\mathbb R^{2d})\right)
        \]
        such that for any $F = \int_0^\cdot f(s)ds \in \mathcal H$, $f\in L^2([0,T];\mathbb R^{2d})$,
        \[
            \frac{1}{\varepsilon} \left( \eta\left(\omega + \varepsilon F \right) - \eta(\omega) \right) \xrightarrow{\mathbb P} \langle D\eta, f \rangle_{L^2\left([0,T];\mathbb R^{2d}\right)}, \quad \varepsilon \to 0.
        \]
    \end{enumerate}
    In this case the random vector $D\eta$ from (ii) is the Malliavin derivative of $\eta$.
\end{theorem}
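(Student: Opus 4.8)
The plan is to argue directly from the definition of $\mathbb D^{1,2}$ as the $L^2$-completion of the class $\mathcal S$ of smooth cylindrical functionals $\eta = g\big(W(h_1),\dots,W(h_n)\big)$, $g\in C_b^\infty(\mathbb R^n)$, $h_j\in L^2([0,T];\mathbb R^{2d})$, under the norm $\lVert\eta\rVert_{1,2}^2 := \mathbb E[\eta^2] + \mathbb E\big[\lVert D\eta\rVert_{L^2([0,T];\mathbb R^{2d})}^2\big]$, where $D\eta := \sum_{j=1}^n \partial_j g\big(W(h_1),\dots,W(h_n)\big)\,h_j$; the statement then follows the classical line of \cite{Sugita1985}. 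The ``only if'' direction is the soft one: every $\eta\in\mathcal S$ is trivially ray absolutely continuous (indeed $\varepsilon\mapsto\eta(\omega+\varepsilon F)$ is $C^\infty$) and stochastically Gateaux differentiable with Gateaux derivative $\langle D\eta,f\rangle$, and for general $\eta\in\mathbb D^{1,2}$ I would take $\eta_k\in\mathcal S$ with $\eta_k\to\eta$ in $L^2$ and $D\eta_k\to D\eta$ in $L^2(\Omega;L^2)$, rewrite the difference quotients $\tfrac{1}{\varepsilon}(\eta_k(\cdot+\varepsilon F)-\eta_k)$ via the Cameron--Martin/Girsanov density so that $L^2$-convergence in $k$ is uniform for small $\varepsilon$, and pass to the limit to recover (i) (by a Fubini argument along the single direction $F$) and (ii).

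The ``if'' direction carries the weight. Assuming $\eta$ satisfies (i)--(ii) with associated $D\eta\in L^2(\Omega;L^2([0,T];\mathbb R^{2d}))$, the first step is to establish an integration-by-parts (duality) identity: for every $F=\int_0^\cdot f(s)\,ds\in\mathcal H$ and every bounded $G\in\mathcal S$,
\[
    \mathbb E\big[\langle D\eta,f\rangle\, G\big] = \mathbb E\Big[\eta\,\Big(G\int_0^T f(s)\,dW(s) - \langle DG,f\rangle\Big)\Big].
\]
I would obtain this by differentiating at $\varepsilon=0$ the Cameron--Martin relation
\[
    \mathbb E\big[\eta(\omega+\varepsilon F)G(\omega)\big] = \mathbb E\Big[\eta(\omega)\,G(\omega-\varepsilon F)\,\exp\Big\{\varepsilon\int_0^T f(s)\,dW(s) - \tfrac{\varepsilon^2}{2}\lVert f\rVert_{L^2}^2\Big\}\Big],
\]
differentiating the left side via ray absolute continuity (which furnishes an a.e.-in-$\varepsilon$ pointwise derivative) together with stochastic Gateaux differentiability to identify that derivative with $\langle D\eta,f\rangle G$, and the right side directly using smoothness of $G$ and of the exponential density. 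The second step is approximation: fix an orthonormal basis $(e_k)_{k\ge1}$ of $L^2([0,T];\mathbb R^{2d})$, let $\mathcal G_n := \sigma\big(W(e_1),\dots,W(e_n)\big)$, set $\eta_n := \mathbb E[\eta\mid\mathcal G_n]$ and mollify it in the $n$ underlying Gaussian variables so that $\eta_n\in\mathcal S$; then $\eta_n\to\eta$ in $L^2$, and the duality identity pins $D\eta_n$ to a finite-dimensional conditional projection of $D\eta$ (up to mollification error), so $(D\eta_n)_n$ is Cauchy in $L^2(\Omega;L^2)$ with limit $D\eta$. Closedness of $D$ on $\mathbb D^{1,2}$ — itself a consequence of the duality identity — then gives $\eta\in\mathbb D^{1,2}$ with Malliavin derivative equal to the $D\eta$ from (ii).

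The hard part will be the rigorous execution of the ``if'' direction, at two points. First, upgrading the a.e.-in-$\varepsilon$ derivative provided by ray absolute continuity to a genuine $L^1(\Omega)$-derivative coinciding with $\langle D\eta,f\rangle$: this needs a uniform-integrability argument, which I would run through Girsanov's theorem so as to absorb the shift $\omega\mapsto\omega+\varepsilon F$ into a density that is $L^p$-bounded and continuous in $\varepsilon$. Second, verifying that the mollified finite-dimensional approximations converge to $\eta$ not just in $L^2$ but in $\mathbb D^{1,2}$-norm; the cleanest way to organise this is to first prove closedness of $D$ on the class of functionals obeying (i)--(ii) using the duality identity, and only then run the approximation, so that the limit is automatically identified as the Malliavin derivative.
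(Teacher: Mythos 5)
First, a point of comparison: the paper does not prove this statement at all — it is quoted directly as \cite{Sugita1985}, Theorem 3.1, and used as a black box in Section \ref{sec: Malliavin} — so there is no proof of the authors' to measure yours against; I can only judge your sketch on its own terms. Your outline follows the by-now-standard alternative route to Sugita's characterization (derive an integration-by-parts/duality identity from (i)--(ii), then approximate by conditional expectations onto $\sigma(W(e_1),\dots,W(e_n))$ and invoke closedness of $D$), which is a legitimate proof but is genuinely different from Sugita's original argument, which runs through the Ornstein--Uhlenbeck semigroup $T_t$: one shows $T_t\eta\in\mathbb D^{1,2}$ with a derivative expressed through the stochastic Gateaux derivative and lets $t\downarrow 0$. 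The ``only if'' part of your plan and the finite-dimensional approximation step are fine in outline (though the mollification you add is unnecessary: the duality identity already forces the kernel $g_n$ of $\mathbb E[\eta\mid\mathcal G_n]$ into the finite-dimensional Gaussian Sobolev space, and $D\eta_n=\mathbb E[P_nD\eta\mid\mathcal G_n]\to D\eta$ by martingale convergence).

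The genuine soft spot is the derivation of the duality identity, which is exactly where the theorem's content sits. Condition (ii) only gives convergence \emph{in probability} of the difference quotients at $\varepsilon=0$, and your proposed fix — ``uniform integrability run through Girsanov'' — does not work as stated: the Cameron--Martin relation merely rewrites $\mathbb E[\eta(\omega+\varepsilon F)G(\omega)]$, it does not dominate the quotients $\tfrac1\varepsilon(\eta(\cdot+\varepsilon F)-\eta)$, and in general convergence in probability plus equivalence of the shifted measures yields no $L^1$-differentiability. The standard repair is precisely where (i) must enter: take the absolutely continuous version along the ray and write $\eta(\omega+\varepsilon F)-\eta(\omega)=\int_0^\varepsilon \partial_s\eta(\omega+sF)\,ds$; then identify the a.e.-$s$ derivative with $\langle D\eta,f\rangle(\omega+sF)$ by transporting (ii) to shifted points via quasi-invariance of $\mathbb P$ under Cameron--Martin shifts and a Fubini argument in $(s,\omega)$; finally divide by $\varepsilon$, multiply by $G$, and pass to the limit using dominated convergence, with the dominating bound coming from $D\eta\in L^2(\Omega;L^2([0,T];\mathbb R^{2d}))$ and the local $L^p$-boundedness of the Cameron--Martin densities in $\varepsilon$. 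With that step spelled out your argument closes; without it, the duality identity — and hence the whole ``if'' direction — is not yet established.
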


In order to apply Theorem \ref{th: Sugita} to obtain the Malliavin differentiability of $Y_i(t)$, $i=1,...,d$, $t\in[0,T]$, we first need to formally identify 
\[
    Y_i\left(\omega + \varepsilon F , t\right), \quad i=1,...,d, \quad t\in[0,T],
\]
for any $F= \int_0^\cdot f(s)ds \in \mathcal H$. For the given $\omega$, it is reasonable to associate these expressions with random variables $Y^{\varepsilon, F}_i(\omega, t)$ given by
\[
    Y^{\varepsilon, F}_i(\omega, t) = Y(0) + \int_0^t b_i\left(s, Y^{\varepsilon, F}_i(\omega, s)\right)ds + Z_i^{\varepsilon, F}\left(\omega, t\right), \quad t\in [0,T],
\]
where $Z_i^{\varepsilon, F}\left(\omega, t\right) := Z_i\left(\omega + \varepsilon F, t\right)$. Indeed, equations \eqref{eq: volatility process equations} are treated pathwisely and each $Z_i^{\varepsilon, F}\left(\omega, t\right)$ is H\"older continuous w.r.t. $t$ for all $\omega\in\Omega$. This means that the equation above has a unique solution for all $\omega\in\Omega$ and
\[
    Y^{\varepsilon, F}_i(\omega , t) = Y_i\left(\omega + \varepsilon F , t\right), \quad \omega \in \Omega, \quad t\in [0,T].
\]

However, in order to apply Theorem \ref{th: Sugita}, one needs to get a more convenient representation of $Z_i^{\varepsilon, F}\left(\omega, t \right)$ that ``splits'' the pieces corresponding to $\omega$ and $F$. Intuitively, it is clear that a natural candidate for such a representation is
\[
    Z_i^{\varepsilon, F}\left(\cdot, t\right) = Z_i(\cdot, t) + \varepsilon \sum_{j=1}^{2d} \ell_{d+i, j} \int_0^t \mathcal K_i (t,s) f_j(s) ds, \quad t\in[0,T],
\]
and this guess is formalized in the next simple proposition.

\begin{proposition}\label{prop: modification of shifted noise}
    Let $\varepsilon \ge 0$ and $F = \int_0^\cdot f(s)ds \in \mathcal H$. Then there exists $\Omega_{\varepsilon, F}$, $\mathbb P(\Omega_{\varepsilon, F}) = 1$, such that for any $i=1,...,d$ and $\omega\in\Omega_{\varepsilon, F}$
    \[
        Z_i^{\varepsilon, F}\left(\omega, t\right) = Z_i(\omega, t) + \varepsilon \sum_{j=1}^{2d} \ell_{d+i, j} \int_0^t \mathcal K_i (t,s) f_j(s) ds, \quad t\in[0,T].
    \]
\end{proposition}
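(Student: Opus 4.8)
The plan is to establish the identity first as an $L^2(\Omega)$-equality for each fixed $t$, and then upgrade it to a statement holding simultaneously for all $t$ on a single full-measure set by invoking continuity of both sides in $t$. First I would recall that $B^Y_i = \sum_{j=1}^{2d}\ell_{d+i,j}W_j$, so that under the shift $\omega \mapsto \omega + \varepsilon F$ with $F = \int_0^\cdot f(s)\,ds$, the driving Brownian motion is transformed into $B^Y_i(\omega,\cdot) + \varepsilon\sum_{j=1}^{2d}\ell_{d+i,j}F_j(\cdot)$, where $F_j(\cdot) = \int_0^\cdot f_j(s)\,ds$. Since $Z_i(t) = \int_0^t \mathcal K_i(t,s)\,dB^Y_i(s)$ is, for fixed $t$, an element of the first Wiener chaos (a Wiener integral of the deterministic kernel $s\mapsto\mathcal K_i(t,s)\in L^2([0,T])$ by Assumption \ref{assum: assumptions on kernels}(i)), the Cameron--Martin shift acts on it by adding the pairing of the integrand with $f$: namely, for each fixed $t$,
\[
    Z_i^{\varepsilon,F}(\omega,t) = Z_i(\omega,t) + \varepsilon \sum_{j=1}^{2d}\ell_{d+i,j}\int_0^t \mathcal K_i(t,s) f_j(s)\,ds
\]
for $\mathbb P$-a.e. $\omega$. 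This is the standard behaviour of a Wiener integral under a translation of the path by a Cameron--Martin element; formally one can verify it by approximating $\mathcal K_i(t,\cdot)$ by step functions $\mathcal K^{(n)}_i(t,\cdot)$ in $L^2([0,T])$, for which the identity is immediate from the definition of the stochastic integral of a step function as a finite linear combination of increments of $B^Y_i$ (and hence of $W_j$), and then passing to the limit: the left-hand sides converge in $L^2(\Omega)$ because $\int_0^t(\mathcal K_i(t,s)-\mathcal K^{(n)}_i(t,s))^2\,ds\to 0$ and the shift is absolutely continuous with an $L^2$ density (Girsanov), while the correction terms converge by Cauchy--Schwarz.

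Having the identity for each fixed $t$ up to a $t$-dependent null set, I would then fix a countable dense set $\mathcal T\subset[0,T]$ (e.g. the rationals together with $0$ and $T$) and take $\Omega_{\varepsilon,F}$ to be the intersection over $t\in\mathcal T$ and $i=1,\dots,d$ of the corresponding full-measure sets, further intersected with the full-measure set from Remark \ref{rem: choice of modification of Z} on which all the relevant paths $Z_i(\omega,\cdot)$ and $Z_i(\omega+\varepsilon F,\cdot)$ are $\lambda$-Hölder continuous. On $\Omega_{\varepsilon,F}$ the two sides of the claimed identity agree for all $t\in\mathcal T$; since $t\mapsto Z_i^{\varepsilon,F}(\omega,t)$ and $t\mapsto Z_i(\omega,t)$ are continuous by the choice of modification, and $t\mapsto \int_0^t\mathcal K_i(t,s)f_j(s)\,ds$ is continuous (this follows from Assumption \ref{assum: assumptions on kernels}, Cauchy--Schwarz, and dominated convergence using $\int_0^T\mathcal K_i(t,s)^2\,ds<\infty$ together with \eqref{eq: condition on kernel for Holder continuity}), both sides are continuous in $t$ and therefore coincide for all $t\in[0,T]$ on $\Omega_{\varepsilon,F}$. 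The case $\varepsilon\in\mathbb R_+$ with $\varepsilon$ ranging over all nonnegative reals is handled by noting that the argument above, being linear in $\varepsilon$, can be run once for, say, $\varepsilon=1$ and $F$ replaced by the generic Cameron--Martin element; alternatively one intersects over rational $\varepsilon\ge 0$ and uses linearity of both sides in $\varepsilon$ to extend to all real $\varepsilon\ge0$.

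I expect the only genuinely delicate point to be the bookkeeping with null sets: the Cameron--Martin shift identity is, for each $t$, an a.s. statement, so the content of the proposition is precisely that these a.s. statements can be glued into one holding for all $(t,\varepsilon)$ simultaneously, and the mechanism for this is the joint continuity in $t$ (and linearity in $\varepsilon$) that lets a countable dense family of identities propagate to the whole parameter range. The underlying probabilistic fact --- that translating the path of $W$ by a Cameron--Martin element shifts a Wiener integral $\int_0^t\mathcal K_i(t,s)\,dB^Y_i(s)$ by the deterministic quantity $\int_0^t\mathcal K_i(t,s)\,dF^Y_i(s)$ with $F^Y_i := \sum_j\ell_{d+i,j}F_j$ --- is routine and can be cited or proved by the step-function approximation sketched above.
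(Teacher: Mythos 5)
Your plan matches the paper's proof in essence: approximate the Wiener integral by step functions (for which the shift identity is immediate pathwise on the canonical space), pass to the limit using the absolute continuity of the Cameron--Martin shift to handle the $t$-dependent null sets, and then glue over a countable dense set of times via continuity of both sides in $t$. The only cosmetic difference is that the paper extracts an a.s.-convergent subsequence (Riesz) and applies the Cameron--Martin theorem to keep the shifted path in the good set, whereas you phrase the same quasi-invariance step as transferring $L^2$/in-probability convergence through the Girsanov density; this is the same mechanism.
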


\begin{proof}
    By definition, for any $t\in[0,T]$
    \begin{align*}
        Z_i(\cdot, t) & = L^2\text{-}\lim_{n\to\infty} \sum_{l=0}^{k_n -1} K_i^{n, l}(t) \left(B^Y_i(\cdot, t_{n, l+1}) - B^Y_i(\cdot, t_{n,l})\right) 
        \\
        &= L^2\text{-}\lim_{n\to\infty}  \sum_{j=1}^{2d} \ell_{d+i,j} \sum_{l=0}^{k_n -1} K_i^{n, l}(t) \left(W_j(\cdot, t_{n, l+1}) - W_j(\cdot, t_{n,l})\right) 
        \\
        &=: L^2\text{-}\lim_{n\to\infty} Z^n_i(\cdot, t),
    \end{align*}
    where $\pi^n = \{0 = t_{n,0} < t_{n, 1}<...<t_{n, k_n} = t\}$, $n\ge 1$, is a system of partitions such that $\sup_{l=0,...,k_n-1} (t_{n,l+1} - t_{n,l}) \to 0$ as $n\to\infty$ and $\mathcal K^n_i(t, s) :=  \sum_{l=0}^{k_n -1} K^{n, l}_i(t) \mathbbm 1_{[t_{n,l}, t_{n,l+1})}(s)$ is such that $\int_0^t \left(\mathcal K_i(t, s) - \mathcal K^n_i(t, s) \right)^2 ds \to 0$, $n\to\infty$. 
    
    By the Riesz theorem, there exist $\Omega'_{t} \subset \Omega$, $\mathbb P(\Omega'_{t}) = 1$, and a subsequence of partitions that will again be denoted by $\{\pi^n,~n\ge 1\}$ such that for all $i=1,...,d$ and $\omega\in\Omega'_t$
    \begin{equation}\label{proofeq: as. conv of approx noise}
        Z_i(\omega, t) = \lim_{n\to\infty} Z^n_i(\omega, t).
    \end{equation}
    Moreover, note that, by the definition of the probability space and Brownian motion $W$, for any $\omega = (\omega_1,...,\omega_{2d}) \in \Omega = C_0([0,T];\mathbb R^{2d})$
    \begin{align*}
        Z^n_i(\omega, t) &= \sum_{j=1}^{2d} \ell_{d+i,j} \sum_{l=0}^{k_n -1} K_i^{n, l}(t) \left(W_j(\omega, t_{n, l+1}) - W_j(\omega, t_{n,l})\right) 
        \\
        &=\sum_{j=1}^{2d} \ell_{d+i,j} \sum_{l=0}^{k_n -1} K_i^{n, l}(t) \left(\omega_j(t_{n, l+1}) - \omega_j(t_{n,l})\right),
    \end{align*}
    whence, for all $\omega \in \Omega$,
    \[
        Z^n_i(\omega + \varepsilon F, t) = Z^n_i(\omega) + \varepsilon \sum_{j=1}^{2d} \ell_{d+i,j} \sum_{l=0}^{k_n -1} K_i^{n, l}(t) \left(F_j(t_{n, l+1}) - F_j(t_{n,l})\right)
    \]
    and thus, for all $\omega \in \Omega'_t$,
    \begin{equation}\label{proofeq: representation of the sh noise approx}
    \begin{aligned}
        \lim_{n\to\infty} Z^n_i(\omega + \varepsilon F, t) &= \lim_{n\to\infty} Z^n_i(\omega) + \lim_{n\to\infty} \varepsilon \sum_{j=1}^{2d} \ell_{d+i,j} \sum_{l=0}^{k_n -1} K_i^{n, l}(t) \left(F_j(t_{n, l+1}) - F_j(t_{n,l})\right)
        \\
        &=Z_i(\omega, t) + \varepsilon \sum_{j=1}^{2d} \ell_{d+i, j} \int_0^t \mathcal K_i (t,s) f_j(s) ds.
    \end{aligned}
    \end{equation}
    Now, denote $\Omega_{\varepsilon, F, t} := \{\omega\in\Omega'_t:~\omega + \varepsilon F \in \Omega'_t\}$. Then $\mathbb P(\Omega_{\varepsilon, F, t}) = 1$ since, by the Cameron-Martin theorem,
    \begin{align*}
        \mathbb P(\Omega_{\varepsilon, F, t}) & = \int_{\Omega} \mathbbm 1_{\Omega'_t}( \omega + \varepsilon F) d\mathbb P(\omega) = \E\left[ \mathbbm 1_{\Omega'_t} \exp\left\{\varepsilon \mathcal I_f - \frac{\varepsilon^2}{2} \lVert \mathcal I_f \rVert^2_{L^2(\Omega)}\right\} \right] = 1,
    \end{align*}
    where $\mathcal I_f := \sum_{j=1}^{2d} \int_0^T f_j(s) dW_j(s)$, and therefore, taking into account \eqref{proofeq: as. conv of approx noise} and \eqref{proofeq: representation of the sh noise approx}, for any $\omega \in \Omega_{\varepsilon, F, t}$
    \begin{equation}\label{proofeq: pointwise representation of Z}
        Z_i(\omega + \varepsilon F, t) = Z_i(\omega, t) + \varepsilon \sum_{j=1}^{2d} \ell_{d+i, j} \int_0^t \mathcal K_i (t,s) f_j(s) ds.
    \end{equation}
    Finally, it remains to notice that, for the fixed $\varepsilon \ge 0$ and $F \in \mathcal H$, both left- and right-hand sides of \eqref{proofeq: pointwise representation of Z} are continuous w.r.t. $t$, whence we can put
    \[
        \Omega_{\varepsilon, F} := \bigcap_{t\in \mathbb Q\cap [0,T]} \Omega_{\varepsilon, F, t}.
    \]
\end{proof}

Now we are ready to prove the key result of the Subsection. We will use the same approach as in \cite[Theorem 3.3]{Hu2008} and check the conditions of Theorem \ref{th: Sugita}.

\begin{theorem}\label{th: Malliavin differentiability}
    Let Assumptions \ref{assum: assumptions on kernels} and \ref{assum: assumption on sandwiched drift} hold. Then, for all $i=1,...,d$ and $t\in[0,T]$, $Y_i(t) \in {\mathbb D}^{1,2}$ and
    \begin{equation}\label{eq: Malliavin derivative of Y}
    \begin{gathered}
        D Y_i(t) = ( D^1 Y_i(t), ..., D^{2d} Y_i(t) ) \in L^2\left(\Omega\times [0,T]; \mathbb R^{2d}\right),
        \\
        D^j_u Y_{i}(t) := \ell_{d+i, j}\left( \mathcal K_i (t,u) + \int_u^t\mathcal K_i (s,u) \frac{\partial b_i}{\partial y}(s, Y_i(s))e^{ \int_s^t \frac{\partial b_i}{\partial y}(v, Y_i(v)) dv }   ds     \right) \mathbbm 1_{[0,t]}(u),
    \end{gathered}
    \end{equation}
    with $\ell_{d+i, j}$, $i=1,...,d$, $j=1,...,2d$ being the corresponding elements of the matrix $\Lc$. 
\end{theorem}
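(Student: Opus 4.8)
The plan is to verify the two conditions of Theorem~\ref{th: Sugita} for $\eta := Y_i(t)\in L^2(\Omega,\mathcal F_T,\mathbb P)$ (square-integrability is clear, since $Y_i$ is bounded), with the right-hand side of \eqref{eq: Malliavin derivative of Y} as the candidate for $DY_i(t)$. Fix $F=\int_0^\cdot f(s)\,ds\in\mathcal H$ and set $g_i(t):=\sum_{j=1}^{2d}\ell_{d+i,j}\int_0^t\mathcal K_i(t,s)f_j(s)\,ds$. By Proposition~\ref{prop: modification of shifted noise}, for each $\varepsilon\ge 0$ the shifted noise satisfies $Z_i^{\varepsilon,F}(\omega,t)=Z_i(\omega,t)+\varepsilon g_i(t)$ a.s.; moreover, a Cauchy--Schwarz estimate combined with \eqref{eq: condition on kernel for Holder continuity} and the identity $\int_s^t\mathcal K_i^2(t,u)\,du\le\int_0^t(\mathcal K_i(t,u)-\mathcal K_i(s,u))^2\,du$ (which holds because $\mathcal K_i$ is Volterra) shows that $g_i$ is H\"older continuous up to order $H_i$. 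Hence, for every $\omega$ in the full-measure set on which the chosen modification of $Z_i$ is $H_i$-H\"older, the equation
\[
    \widetilde Y_i^{\varepsilon}(\omega,t)=Y_i(0)+\int_0^t b_i\bigl(s,\widetilde Y_i^{\varepsilon}(\omega,s)\bigr)\,ds+Z_i(\omega,t)+\varepsilon g_i(t),\quad t\in[0,T],
\]
has a unique solution in the sense of \cite{DNMYT2020}, $\widetilde Y_i^{\varepsilon}(\cdot,t)$ is a version of $Y_i(\cdot+\varepsilon F,t)$, and, by \eqref{eq: upper and lower bounds for sandwiched volatility} applied with the $\lambda$-H\"older seminorm of $Z_i(\omega,\cdot)+\varepsilon g_i$ (at most $\Lambda_{\lambda,i}(\omega)+\varepsilon\lVert g_i\rVert_\lambda$) in place of $\Lambda_{\lambda,i}$, for every $\varepsilon_0>0$ there is a $\rho=\rho(\omega,\varepsilon_0)>0$ with $\varphi_i(t)+\rho\le\widetilde Y_i^{\varepsilon}(\omega,t)\le\psi_i(t)-\rho$ for all $t\in[0,T]$ and all $\varepsilon\in[0,\varepsilon_0]$. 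On the region $\{(s,y):s\in[0,T],\ \varphi_i(s)+\rho\le y\le\psi_i(s)-\rho\}$ the function $b_i$ is Lipschitz in $y$ and $\partial_y b_i$ is uniformly continuous, with constants depending only on $\rho$.

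\emph{Ray absolute continuity.} I would show that, for each such $\omega$, the map $\varepsilon\mapsto\widetilde Y_i^{\varepsilon}(\omega,t)$ is continuously differentiable on $[0,\varepsilon_0]$, with derivative $h^{\varepsilon}(\omega,t)$ solving the linear Volterra equation $h^{\varepsilon}(t)=g_i(t)+\int_0^t\partial_y b_i(s,\widetilde Y_i^{\varepsilon}(s))\,h^{\varepsilon}(s)\,ds$, i.e.
\[
    h^{\varepsilon}(t)=g_i(t)+\int_0^t\frac{\partial b_i}{\partial y}\bigl(s,\widetilde Y_i^{\varepsilon}(s)\bigr)\,g_i(s)\,\exp\left\{\int_s^t\frac{\partial b_i}{\partial y}\bigl(v,\widetilde Y_i^{\varepsilon}(v)\bigr)\,dv\right\}ds.
\]
The argument is standard once the solution is confined to the good region: a Gr\"onwall estimate using the Lipschitz bound on $b_i$ gives that $\varepsilon\mapsto\widetilde Y_i^{\varepsilon}(\omega,\cdot)$ is Lipschitz in the sup-norm; then the difference quotient $\Delta^{\varepsilon,\delta}(t):=\delta^{-1}(\widetilde Y_i^{\varepsilon+\delta}(t)-\widetilde Y_i^{\varepsilon}(t))$ satisfies $\Delta^{\varepsilon,\delta}(t)=g_i(t)+\int_0^t\beta^{\varepsilon,\delta}(s)\Delta^{\varepsilon,\delta}(s)\,ds$ with $\beta^{\varepsilon,\delta}(s)=\int_0^1\partial_y b_i(s,\widetilde Y_i^{\varepsilon}(s)+\theta(\widetilde Y_i^{\varepsilon+\delta}(s)-\widetilde Y_i^{\varepsilon}(s)))\,d\theta$; letting $\delta\to0$ and using the uniform continuity of $\partial_y b_i$ on the good region together with one more Gr\"onwall argument yields $\Delta^{\varepsilon,\delta}\to h^{\varepsilon}$ uniformly. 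In particular $\varepsilon\mapsto\widetilde Y_i^{\varepsilon}(\omega,t)$ is absolutely continuous, so condition (i) of Theorem~\ref{th: Sugita} holds.

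\emph{Stochastic Gateaux differentiability and identification of $DY_i(t)$.} Taking $\varepsilon=0$ in the previous step gives, a.s. and hence in probability, $\varepsilon^{-1}(Y_i(\omega+\varepsilon F,t)-Y_i(\omega,t))\to h^0(\omega,t)$ as $\varepsilon\to0^+$. Plugging in the definition of $g_i$ and applying Fubini's theorem on the simplex $\{0\le u\le s\le t\}$ identifies $h^0(t)$ with $\langle\widehat D Y_i(t),f\rangle_{L^2([0,T];\mathbb R^{2d})}$, where $\widehat DY_i(t)$ is the vector in \eqref{eq: Malliavin derivative of Y}. It remains to check $\widehat DY_i(t)\in L^2(\Omega;L^2([0,T];\mathbb R^{2d}))$: Assumption~\ref{assum: assumption on sandwiched drift}(iv) bounds $\exp\{\int_s^t\partial_y b_i(v,Y_i(v))\,dv\}$ by $e^{cT}$, Assumption~\ref{assum: assumption on sandwiched drift}(ii) gives $|\partial_y b_i(s,Y_i(s))|\le C(1+(Y_i(s)-\varphi_i(s))^{-p}+(\psi_i(s)-Y_i(s))^{-p})$, which has finite moments of all orders uniformly in $s$ by the negative-moment bounds in Remark~\ref{rem: properties of Y}, and $\int_0^T\int_0^T\mathcal K_i^2(t,u)\,ds\,du<\infty$ by Assumption~\ref{assum: assumptions on kernels}(i); applying Cauchy--Schwarz to the inner $ds$-integral in \eqref{eq: Malliavin derivative of Y} and taking expectations then gives the claim. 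Theorem~\ref{th: Sugita} now yields $Y_i(t)\in\mathbb D^{1,2}$ with $DY_i(t)=\widehat DY_i(t)$, i.e. \eqref{eq: Malliavin derivative of Y}.

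The main obstacle is the pathwise sensitivity analysis underlying ray absolute continuity: differentiating the solution of the singular sandwiched equation in the perturbation parameter $\varepsilon$ and passing to the limit in the difference quotients. The delicate point is that $b_i$ and $\partial_y b_i$ are only controlled locally near the barriers $\varphi_i$ and $\psi_i$, so everything hinges on the \emph{uniform-in-$\varepsilon$} sandwich bound \eqref{eq: upper and lower bounds for sandwiched volatility} to keep $\widetilde Y_i^{\varepsilon}$ inside a region where $b_i$ is genuinely Lipschitz and $\partial_y b_i$ uniformly continuous; the same boundary blow-up resurfaces, in a milder integrated form, when verifying the $L^2(\Omega)$-integrability of the candidate derivative, where it is absorbed by the negative moments of $Y_i-\varphi_i$ and $\psi_i-Y_i$ from Remark~\ref{rem: properties of Y}.
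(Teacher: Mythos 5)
Your proposal is correct and follows essentially the same route as the paper: it verifies the two conditions of Theorem \ref{th: Sugita} via Proposition \ref{prop: modification of shifted noise}, uses the uniform-in-$\varepsilon$ sandwich bound to confine $Y_i^{\varepsilon,F}$ to a region where $b_i$ is Lipschitz and $\partial_y b_i$ is controlled, gets ray absolute continuity by Gr\"onwall, identifies the Gateaux derivative with the kernel in \eqref{eq: Malliavin derivative of Y} by Fubini, and checks the $L^2(\Omega;L^2)$ bound from Assumptions \ref{assum: assumptions on kernels}(i), \ref{assum: assumption on sandwiched drift}(ii),(iv) and the negative moments of Remark \ref{rem: properties of Y}. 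The only cosmetic difference is the identification step, where you pass to the limit in the difference quotients through the linearized Volterra equation (integral-form mean value theorem plus variation of constants), whereas the paper represents the increment as a Young integral $\int_0^t e^{\int_s^t\Theta_{i,\varepsilon}}\,dG_{i,j}(s)$ and integrates by parts before letting $\varepsilon\to0$; both give the same formula.
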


\begin{proof} 
    We will split the proof into three steps. First, we verify that $D Y_i(t)$ given by \eqref{eq: Malliavin derivative of Y} indeed satisfies
    \[
        \sum_{j=1}^{2d} \mathbb E \int_0^T \left(D^j_u Y_i(t)\right)^2 du < \infty
    \]
    and then check the conditions of Theorem \ref{th: Sugita}.
    
    \noindent\textbf{Step 1: correctness of the form of derivative.} 
    By Remark \ref{rem: properties of Y}, for all $s\in[0,T]$: $(s, Y_i(s)) \in \mathcal D^i_{\frac{1}{\xi_i(0)},\frac{1}{\xi_i(0)}}$, where 
    \[
        \xi_i(0) := \frac{( L_{2,i} + \Lambda_{\lambda, i} )^{\alpha_i}}{L_{1,i}}
    \]
    with $\Lambda_{\lambda, i}$ being defined by \eqref{eq: definition of Lambda} and $L_{1,i}$, $L_{2,i}$, $\alpha_i$ being some deterministic positive constants that depend only on $\lambda \in \left(\frac{1}{\gamma_i +1 }, H_i\right)$ and the shape of the drift $b_i$. Whence, due to Assumption \ref{assum: assumption on sandwiched drift}(ii),
    \[
        \left|\frac{\partial b_i}{\partial y}(s, Y_i(s))\right| \le c\xi_i^p(0), \quad s\in[0,T].
    \]
    Moreover, by Assumption \ref{assum: assumption on sandwiched drift}(iv),
    \[
        e^{ \int_s^t \frac{\partial b_i}{\partial y}(v, Y_i(v)) dv } \le e^{cT}.
    \]
    Therefore
    \begin{align*}
        \mathbb E \int_0^T \left(D^j_u Y_i(t)\right)^2 du & \le 2\ell_{d+i, j}^2\int_0^t \mathcal K^2_i(t, u) du 
        \\
        &\quad + 2T c^2 e^{2cT}\ell_{d+i, j}^2 \mathbb E\left[\xi_i^{2p}(0)\right]\int_0^t  \int_u^t \mathcal K_i^2(s, u)dsdu 
        \\
        &<\infty,
    \end{align*}
    because $\int_0^t \mathcal K_i^2 (t,s)ds < \infty$ by Assumption \ref{assum: assumptions on kernels}(i) and the random variable $\xi_i(0)$ has moments of all orders.
    
    \noindent\textbf{Step 2: ray absolute continuity.} Let $F\in\mathcal H$ be fixed. Then, as explained above, the random process $\{Y^{\varepsilon, F}_i(t),~\varepsilon \ge 0\}$ defined for every $\varepsilon \ge 0$ by the equation
    \begin{equation}\label{proofeq: Malliavin differentiability, modification of Y}
        Y^{\varepsilon, F}_i(t) = Y_0 + \int_0^t b_i(s, Y^{\varepsilon, F}_i(s))ds + Z_i(t) + \varepsilon \sum_{j=1}^{2d} \ell_{d+i, j} \int_0^t \mathcal K_i (t,s) f_j(s) ds
    \end{equation}
    is a version of the process $\{Y_i(\omega + \varepsilon F, t),~\varepsilon \ge 0\}$. Let us prove that this version is Lipschitz w.r.t. $\varepsilon$ and hence absolutely continuous. First, fix $\lambda \in \left(\frac{1}{\gamma_i +1}, H_i\right)$, where $H_i$ is from Assumption \ref{assum: assumptions on kernels}(ii) and $\gamma_i$ is from Assumption \ref{assum: assumption on sandwiched drift}(iii), and take an arbitrary $\varepsilon^*>0$. Then, by Remark \ref{rem: choice of modification of Z} and Assumption \ref{assum: assumptions on kernels}(ii), for any $\varepsilon \in [0,\varepsilon^*]$ 
    \begin{align*}
        &|Z_i(t_1) - Z_i(t_2)| + \varepsilon\left|\sum_{j=1}^{2d} \ell_{d+i, j} \int_0^{t_1} \mathcal K_i (t_1,s) f_j(s) ds - \sum_{j=1}^{2d} \ell_{d+i, j} \int_0^{t_2} \mathcal K_i (t_2,s) f_j(s) ds\right|
        \\
        &\qquad \le \Lambda_{\lambda, i} |t_1 - t_2|^\lambda + \varepsilon \left( \int_0^{t_1 \vee t_2} (\mathcal K_i(t_1,s) - \mathcal K_i(t_2,s))^2ds \right)^{\frac{1}{2}} \Big\lVert \sum_{j=1}^{2d} \ell_{d+i,j} f_j \Big\rVert_{L^2([0,T])}
        \\
        &\qquad\le \left( \Lambda_{\lambda, i} + \varepsilon^* C_\lambda \Big\lVert \sum_{j=1}^{2d} \ell_{d+i,j} f_j \Big\rVert_{L^2([0,T])} \right) |t_1 - t_2|^\lambda,
    \end{align*}
    which implies, by Remark \ref{rem: properties of Y}, that $(t, Y^{\varepsilon, F}_i(t)) \in \mathcal D^i_{\frac{1}{\xi_i(\varepsilon^*)}, \frac{1}{\xi_i(\varepsilon^*)}}$ for any $\varepsilon \in [0,\varepsilon^*]$ and $t\in[0,T]$, where $\mathcal D^i_{\cdot,\cdot}$ is defined by \eqref{eq: definition of the set D} and
    \begin{equation}\label{proofeq: Malliavin differentiability, xi}
        \xi_i(\varepsilon^*) := \frac{\left(L_{2,i} + \Lambda_{\lambda, i} + \varepsilon^* C_\lambda \Big\lVert \sum_{j=1}^{2d} \ell_{d+i,j} f_j \Big\rVert_{L^2([0,T])} \right)^{\alpha_i}}{L_{1,i}}
    \end{equation}
    for some deterministic positive constants $L_{1,i}$, $L_{2,i}$ and $\alpha_i$ that depend only on $\lambda$ and the shape of the drift $b_i$. This implies, by Assumption \ref{assum: assumption on sandwiched drift}(ii), that, for all $s\in[0,T]$ and $0 \le \varepsilon_1 \le \varepsilon_2 \le \varepsilon^*$,
    \[
        \left|b_i(s,Y^{\varepsilon_2, F}_i(s) ) - b_i(s,Y^{\varepsilon_1, F}_i(s) )\right| \le c\xi_i^p(\varepsilon^*) \left| Y^{\varepsilon_2, F}_i(s) - Y^{\varepsilon_1, F}_i(s) \right|,
    \]
    whence 
    \begin{align*}
        |Y^{\varepsilon_2, F}_i(t) - Y^{\varepsilon_1, F}_i(t)| & \le \int_0^t \left|b_i(s,Y^{\varepsilon_2, F}_i(s) ) - b_i(s,Y^{\varepsilon_1, F}_i(s) )\right| ds
        \\
        &\quad + |\varepsilon_2-\varepsilon_1| \left|\sum_{j=1}^{2d} \ell_{d+i, j} \int_0^t \mathcal K_i (t,s) f_j(s) ds  \right|
        \\
        &\le c\xi_i^p(\varepsilon^*) \int_0^t \left| Y^{\varepsilon_2, F}_i(s) - Y^{\varepsilon_1, F}_i(s) \right|ds 
        \\
        &\quad + |\varepsilon_2-\varepsilon_1| \left( \int_0^T \mathcal K_i^2 (t,s)ds \right)^{\frac{1}{2}} \Big\lVert \sum_{j=1}^{2d} \ell_{d+i,j} f_j \Big\rVert_{L^2([0,T])},
    \end{align*}
    where $\int_0^T \mathcal K_i^2 (t,s)ds < \infty$ by Assumption \ref{assum: assumptions on kernels}(i). Now, it follows from Gronwall's inequality that, for all $0 \le \varepsilon_1 \le \varepsilon_2 \le \varepsilon^*$,
    \[
        |Y^{\varepsilon_2, F}_i(t) - Y^{\varepsilon_1, F}_i(t)| \le \left( \int_0^T \mathcal K_i^2 (t,s)ds \right)^{\frac{1}{2}}  \Big\lVert \sum_{j=1}^{2d} \ell_{d+i,j} f_j \Big\rVert_{L^2([0,T])} e^{c T\xi_i^p(\varepsilon^*)} |\varepsilon_2-\varepsilon_1|,
    \]
    which implies the desired absolute continuity w.r.t. $\varepsilon$.
    
    \noindent\textbf{Step 3: stochastic Gateaux differentiability.} Again, it is sufficient to compute the limit in probability of the form
    \[
        \mathbb P\text{-}\lim_{\varepsilon\to 0} \frac{Y_i^{\varepsilon, F}(t) - Y_i^{0, F}(t)}{\varepsilon},
    \]
    where $Y_i^{\varepsilon, F}(t)$ is defined by \eqref{proofeq: Malliavin differentiability, modification of Y}.
    
    Using the mean-value theorem, we obtain that
    \begin{equation}\label{proofeq: Malliavin differentiability, equation for Yeps-Y}
    \begin{aligned}
        Y_i^{\varepsilon, F}(t) - Y_i^{0, F}(t) &= \int_0^t (b_i(s,Y_i^{\varepsilon, F}(s)) - b_i(s, Y_i^{0, F}(s))ds 
        \\
        &\quad + \varepsilon \sum_{j=1}^{2d} \ell_{d+i, j} \int_0^t \mathcal K_i (t,s) f_j(s) ds
        \\
        &= \int_0^t \Theta_{i,\varepsilon}(s) (Y_i^{\varepsilon, F}(s) - Y_i^{0, F}(s))ds 
        \\
        &\quad + \varepsilon \sum_{j=1}^{2d} \ell_{d+i, j} \int_0^t \mathcal K_i (t,s) f_j(s) ds,
    \end{aligned}
    \end{equation}
    where $\Theta_{i,\varepsilon}(s) := \frac{\partial b_i}{\partial y} \left( s,  Y_i^{0, F}(s) + \theta_{i, \varepsilon}(s)(Y_i^{\varepsilon, F}(s) - Y_i^{0, F}(s))\right)$ with $\theta_{i, \varepsilon}(s)$ being some value between 0 and 1. It is easy to verify that \eqref{proofeq: Malliavin differentiability, equation for Yeps-Y} implies the representation of the form
    \[
        Y_i^{\varepsilon, F}(t) - Y_i^{0, F}(t) = \varepsilon \sum_{j=1}^{2d} \int_0^t \exp\left\{ \int_s^t \Theta_{i,\varepsilon}(v) dv \right\} d G_{i,j}(s),
    \]
    where
    \[
        G_{i,j}(t) := \ell_{d+i,j} \int_0^t \mathcal K_i (t,s) f_j(s) ds.
    \]
    Note that, just like in Step 1, it follows from Remark \ref{rem: properties of Y} that for all $\varepsilon \in [0, 1]$ and $s\in[0,T]$ $(s, Y^{\varepsilon, F}_i(s)) \in \mathcal D^i_{\frac{1}{\xi_i(1)}, \frac{1}{\xi_i(1)}}$, where $\xi_i(1)$ is defined via \eqref{proofeq: Malliavin differentiability, xi}. This, together with continuity of $\frac{\partial b_i}{\partial y}$ due to Assumption \ref{assum: assumption on sandwiched drift}(iv) and compactness of $\overline{\mathcal D^i_{\frac{1}{\xi_i(1)}, \frac{1}{\xi_i(1)}}}$, implies that there exists a finite positive random variable $\eta_i$ such that $|\Theta_{i,\varepsilon}(s)| \le \eta_i $ for all $\varepsilon \in[0,1]$ and $s\in[0,T]$. Moreover, the mapping $s \mapsto \exp\left\{ \int_s^t \Theta_{i,\varepsilon}(v) dv \right\}$, $s\in[0,t]$, is Lipschitz and whence the integrals $\int_0^t \exp\left\{ \int_s^t \Theta_{i,\varepsilon}(v) dv \right\} d G_{i,j}(s)$, $j=1,...,2d$, are well-defined as pathwise limits of Riemann-Stieltjes integral sums due to the classical results of Young (\cite[Section 10]{Young1936}, see also \cite[Chapter 6]{FV2010} for a more recent overview on the subject).
    
    Now, applying the integration by parts formula, we obtain
    \begin{align*}
        \frac{Y_i^{\varepsilon, F}(t) - Y_i^{0, F}(t)}{\varepsilon} &= \sum_{j=1}^{2d} \int_0^t \exp\left\{ \int_s^t \Theta_{i,\varepsilon}(v) dv \right\} d G_{i,j}(s)
        \\
        & =\sum_{j=1}^{2d} \left(G_{i,j}(t) - \int_0^t G_{i,j}(s) d\left(\exp\left\{ \int_s^t \Theta_{i,\varepsilon}(v) dv \right\}\right)\right)
        \\
        & =\sum_{j=1}^{2d} \left(G_{i,j}(t) + \int_0^t G_{i,j}(s) \Theta_{i,\varepsilon}(s)\exp\left\{ \int_s^t \Theta_{i,\varepsilon}(v) dv \right\} ds\right),
    \end{align*}
    and, taking into account the definition of $G_{i,j}$,
    \begin{align*}
        \frac{Y_i^{\varepsilon, F}(t) - Y_i^{0, F}(t)}{\varepsilon} &= \sum_{j=1}^{2d} \int_0^T D_{i,j}(\varepsilon, t, u) f_j(u) du
    \end{align*}
    with
    \[
        D_{i,j}(\varepsilon, t, u) := \ell_{d+i, j}\left( \mathcal K_i (t,u) + \int_u^t\mathcal K_i (s,u) \Theta_{i,\varepsilon}(s)\exp\left\{ \int_s^t \Theta_{i,\varepsilon}(v) dv \right\}   ds     \right) \mathbbm 1_{[0,t]}(u).
    \]
    Now, note that for all $s\in[0,T]$ and $\varepsilon \in [0,1]$
    \[
        |\Theta_{i,\varepsilon}(s)|\exp\left\{ \int_s^t \Theta_{i,\varepsilon}(v) dv \right\} < \eta_i \exp\{cT\},
    \]
    where $c$ is from Assumption \ref{assum: assumption on sandwiched drift}(iv), and therefore we can apply the dominated convergence theorem to obtain that
    \[
        \frac{Y_i^{\varepsilon, F}(t) - Y_i^{0, F}(t)}{\varepsilon} \to \langle DY_i(t), f \rangle_{L^2([0,T];\mathbb R^{2d})} \quad a.s., \quad \varepsilon \to 0,
    \]
    where $DY_i(t)$ has exactly the form \eqref{eq: Malliavin derivative of Y}, which finishes the proof.
\end{proof}

\begin{remark}\hspace{10cm}
    \begin{enumerate}
        \item It is possible to prove higher order Malliavin differentiability of $Y$ provided that the drift coefficient $b$ is regular enough. For more details, see \cite{DN_YT_power_law_2023}.

        \item Theorem \ref{th: Malliavin differentiability} can be used to prove that the SVV model can reproduce the power law of the implied volatility skew provided an appropriate choice of the Volterra kernels in volatility. We prove this result in \cite[Theorem 4.8]{DN_YT_power_law_2023}.
    \end{enumerate}    
\end{remark}

\paragraph{Malliavin differentiability of price.} An obvious way to obtain the Malliavin differentiability of the price processes $(S_1,...,S_d)^T$ would be to notice that each $S_i$ has the form
\[
    S_i(t) = e^{X_i(t)}, \quad t\in[0,T],
\]
where $X_i$ is the log-price process that, due to Theorem \ref{th: properties of S}, can be written as
\begin{equation}\label{eq: log-price}
    X_i(t) := \log S_i(t) = X_i(0) + \int_0^t \left( \mu_i(s) - \frac{ Y_i^2(s)}{2} \right)ds + \int_0^t Y_i(s) dB_i^S(s).
\end{equation}
Then, in order to prove that $S_i(t) \in {\mathbb D}^{1,2}$, one can first establish that $X_i(t) \in {\mathbb D}^{1,2}$ and then apply an appropriate chain rule. However, the classical chain rule results (see e.g. \cite[Proposition 1.2.3, Proposition 1.2.4]{Nualart2006}) require either all partial derivatives $\frac{\partial f}{\partial x_j}$ to be bounded (which is not the case for such functions as $f(x) = x^2$ or $f(x) = e^x$) or the law of $\eta$ to be absolutely continuous (which we have not established yet for the log-price $X_i(t)$). Therefore, we give a version of the Malliavin chain rule that fits our needs below\footnote{Note that our version of the chain rule is essentially based on \cite[Lemma A.1]{OcK1991}, but the proof of the latter contains a mistake in the argument.}. 

\begin{proposition}\label{prop: chain rule}
    Let $\eta = (\eta_1,...,\eta_k) \in ({\mathbb D}^{1,2})^k$ and $f\in C^{(1)}(\mathbb R^k;\mathbb R)$ be a real-valued function such that
    \begin{itemize}
        \item[1)] if $\frac{\partial f}{\partial x_j}$ is unbounded, then there exist sequences $\{r_n,~n\ge 1\}$, $\{s_{j,n},~n\ge 1\}$, $r_n, s_{j,n} \to \infty$ as $n\to\infty$, such that \[
            \left| \frac{\partial f}{\partial x_j} (x_1,...,x_k) \right| \ge s_{j,n}
        \]
        implies that $|f(x_1,...,x_k)| \ge r_n$;
        \item[2)] $\mathbb E\left[ |f(\eta)|^2 + \left\lVert \sum_{j=1}^k \frac{\partial f}{\partial x_j} (\eta) D \eta_j \right\rVert^2_{L^2([0,T]; \mathbb R^{2d})} \right] < \infty$.
    \end{itemize}
    Then $f(\eta) \in {\mathbb D}^{1,2}$ and $Df(\eta) = \sum_{j=1}^k \frac{\partial f}{\partial x_j} (\eta) D \eta_j $.
\end{proposition}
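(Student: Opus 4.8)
The plan is to approximate $f$ by a sequence of $C^1$ functions with \emph{bounded} partial derivatives, to apply the classical chain rule to those approximations, and then to pass to the limit using the closedness of the Malliavin derivative operator.

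First I would fix a smooth truncation of the \emph{value} of $f$: take $\rho_n \in C^\infty(\mathbb R)$ with $0 \le \rho_n \le 1$, $\rho_n \equiv 1$ on $[-n,n]$ and $\rho_n \equiv 0$ outside $[-(n+1), n+1]$, put $\phi_n(x) := \int_0^x \rho_n(y)\,dy$ and $f_n := \phi_n \circ f \in C^1(\mathbb R^k;\mathbb R)$. By construction $\phi_n$ is bounded, $|\phi_n(x)| \le |x|$, and $\phi_n(x) \to x$, $\phi_n'(x) = \rho_n(x) \to 1$ pointwise as $n \to \infty$. The crucial observation is that each $f_n$ has bounded gradient: $\frac{\partial f_n}{\partial x_j} = \rho_n(f)\,\frac{\partial f}{\partial x_j}$ vanishes where $|f| > n+1$, while on $\{|f| \le n+1\}$, using $r_m \to \infty$, one picks $m = m(n)$ with $r_{m(n)} > n+1$, so that the contrapositive of hypothesis 1), namely $|f| < r_{m(n)} \Rightarrow |\partial_j f| < s_{j,m(n)}$, forces $\bigl|\frac{\partial f}{\partial x_j}\bigr| < s_{j,m(n)}$ on that set; hence $\bigl|\frac{\partial f_n}{\partial x_j}\bigr| \le s_{j,m(n)}$ when $\frac{\partial f}{\partial x_j}$ is unbounded (and $\le \lVert \partial_j f\rVert_\infty$ otherwise), so $f_n$ has bounded partial derivatives for every fixed $n$. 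Therefore the classical chain rule \cite[Proposition 1.2.3]{Nualart2006} applies and gives $f_n(\eta) \in \mathbb D^{1,2}$ with
\[
    D f_n(\eta) = \sum_{j=1}^k \frac{\partial f_n}{\partial x_j}(\eta)\, D\eta_j = \rho_n(f(\eta)) \sum_{j=1}^k \frac{\partial f}{\partial x_j}(\eta)\, D\eta_j .
\]

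Next I would let $n \to \infty$. Since $f_n(\eta) \to f(\eta)$ a.s. with $|f_n(\eta)| \le |f(\eta)|$ and $f(\eta) \in L^2(\Omega)$ by assumption 2), dominated convergence yields $f_n(\eta) \to f(\eta)$ in $L^2(\Omega)$. Writing $G := \sum_{j=1}^k \frac{\partial f}{\partial x_j}(\eta)\, D\eta_j$, which lies in $L^2(\Omega; L^2([0,T];\mathbb R^{2d}))$ precisely by assumption 2), we likewise have $D f_n(\eta) = \rho_n(f(\eta))\, G \to G$ a.s. in $L^2([0,T];\mathbb R^{2d})$ with $\lVert D f_n(\eta)\rVert_{L^2([0,T];\mathbb R^{2d})} \le \lVert G\rVert_{L^2([0,T];\mathbb R^{2d})}$, so $D f_n(\eta) \to G$ in $L^2(\Omega; L^2([0,T];\mathbb R^{2d}))$ by dominated convergence again. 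Since $D$ is a closed operator from $L^2(\Omega)$ to $L^2(\Omega; L^2([0,T];\mathbb R^{2d}))$, it follows that $f(\eta) \in \mathbb D^{1,2}$ and $D f(\eta) = G$, which is exactly the assertion.

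The only genuinely delicate point — and the place where the reasoning in \cite[Lemma A.1]{OcK1991} needs to be corrected — is the design of the truncation: one must cut off through the \emph{value} of $f$ itself (not through $\eta$, and not through $\nabla f$ directly), because only then can hypothesis 1) be invoked to keep $\nabla f_n$ bounded on the region $\{\rho_n(f)\neq 0\}$; at the same time one must keep $|\phi_n(x)| \le |x|$ and $\phi_n' \to 1$ so that both dominated-convergence passages go through with the single dominating function furnished by assumption 2). Everything else in the argument is routine.
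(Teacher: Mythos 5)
Your proof is correct and follows essentially the same route as the paper's: truncate through the \emph{value} of $f$ (the paper uses $f_m := m\,\phi(f/m)$ with a fixed compactly supported $\phi$, you use $\phi_n\circ f$ with $\phi_n(x)=\int_0^x\rho_n$), invoke hypothesis 1) to bound the gradient of the truncation so the classical chain rule applies, and conclude via dominated convergence and closedness of $D$. The only differences are cosmetic choices of the cutoff.
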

\begin{proof}
    According to \cite[Proposition 1.2.3]{Nualart2006}, if all partial derivatives $\frac{\partial f}{\partial x_j}$ are bounded, then the claim of our Proposition is true and the chain rule holds. For the general case, let $\phi \in C^\infty (\mathbb R)$ be compactly supported function such that $\phi(x) = x$ whenever $|x|\le 1$ and $|\phi(x)| \le |x|$ for all $x\in\mathbb R$. 
    
    For any $m\ge 1$ define $f_m(x) := m \phi\left(\frac{f(x)}{m}\right)$ and observe that $f_m$ has bounded partial derivatives. Indeed, if $\frac{\partial f}{\partial x_j}$ is bounded, so is $\frac{\partial f_m}{\partial x_j}$. Otherwise, let $n^*$ be such that $\frac{r_{n^*}}{m} > \max\left\{ - \inf\supp\phi, \sup\supp\phi  \right\}$. Then, for all $(x_1,...,x_k)\in \mathbb R^k$ such that $|\frac{\partial f}{\partial x_j} (x_1,...,x_k)| \ge s_{j,n^*}$, it holds that $\frac{f(x_1,...,x_k)}{m} \notin \supp \phi$ and therefore 
    \begin{align*}
        \frac{\partial f_m}{\partial x_j} (x_1,...,x_k) =  \frac{\partial f}{\partial x_j} (x_1,...,x_k) \phi'\left( \frac{f(x_1,...,x_k)}{m} \right) = 0.
    \end{align*}
    Whence
    \begin{align*}
        \left|\frac{\partial f_m}{\partial x_j} (x_1,...,x_k)\right| =  \left|\frac{\partial f}{\partial x_j} (x_1,...,x_k)\right| \left|\phi'\left( \frac{f(x_1,...,x_k)}{m} \right)\right| \le s_{j,n^*} \max_{z\in\mathbb R} |\phi'(z)|.
    \end{align*}
    
    Note that $|f_m(\eta)| \le |f(\eta)|$ for all $m\ge 1$ and $\lim_{m\to\infty} f_m(\eta) = f(\eta)$ a.s. Moreover,
    \begin{align*}
        |Df_m(\eta)| &= \left| \sum_{j=1}^k \frac{\partial f_m}{\partial x_j} (\eta) D\eta_j \right| \le \max_{z\in\mathbb R} |\phi'(z)| \left|\sum_{j=1}^k \frac{\partial f}{\partial x_j} (\eta) D\eta_j \right|
    \end{align*}
    for all $m\ge 1$ and 
    \[
        \lim_{m\to\infty} Df_m(\eta) = \sum_{j=1}^k \frac{\partial f}{\partial x_j} (\eta) D\eta_j \quad a.s.
    \]
    Therefore, by the dominated convergence theorem,
    \[
        \lim_{m\to\infty} \E\left[ |f_m(\eta) - f(\eta)|^2 + \left\lVert Df_m(\eta) - \sum_{j=1}^k \frac{\partial f}{\partial x_j} (\eta) D \eta_j \right\rVert^2_{L^2([0,T]; \mathbb R^{2d})} \right] = 0,
    \]
    and the result follows by the closedness of the operator $D$.
\end{proof}

\noindent Now we can get the following simple corollary for the latter direct application.

\begin{corollary}\label{cor: Malliavin differentiability of log-price}
    For any $i=1,...,d$ and $t\in[0,T]$, $X_i(t) \in {\mathbb D}^{1,2}$ and its Malliavin derivative $D X_i(t) = (D^1 X_i(t),...,D^{2d} X_i(t))$ has the form
    \[
        D^j_u X_i(t) = \left(-  \int_0^t Y_i(s) D^j_u Y_i(s) ds + \ell_{i,j} Y_i(u) + \int_0^t D^j_u Y_i(s) dB_i^S(s)\right) \mathbbm 1_{[0,t]}(u).    
    \]\end{corollary}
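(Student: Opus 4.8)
The plan is to derive the expression for $DX_i(t)$ by combining the Malliavin differentiability of $Y_i(t)$ (Theorem \ref{th: Malliavin differentiability}) with the chain-rule-type Proposition \ref{prop: chain rule}, applied to the explicit representation \eqref{eq: log-price} of the log-price. First I would rewrite $X_i(t)$ entirely in terms of the initial Brownian motion $W$: since $B_i^S = \sum_{j=1}^{2d} \ell_{i,j} W_j$ and $\ell_{i,j} = 0$ for $j > d$, the stochastic integral term becomes $\sum_{j=1}^d \ell_{i,j} \int_0^t Y_i(s)\, dW_j(s)$. The three summands of $X_i(t)$ are then: the deterministic constant $X_i(0) + \int_0^t \mu_i(s)\,ds$, which has zero derivative; the Lebesgue integral $-\tfrac12\int_0^t Y_i^2(s)\,ds$; and the It\^o integral $\int_0^t Y_i(s)\,dB_i^S(s)$.

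For the middle term, I would apply Proposition \ref{prop: chain rule} with $f(x) = x^2$ to each $Y_i(s) \in \mathbb D^{1,2}$ (the growth condition in item 1) of the Proposition is satisfied by $f(x)=x^2$, and the $L^2$-integrability in item 2) follows from boundedness of $Y_i$ together with Step 1 of the proof of Theorem \ref{th: Malliavin differentiability}), obtaining $D^j_u Y_i^2(s) = 2 Y_i(s) D^j_u Y_i(s)$; then commuting the Malliavin derivative with the Lebesgue integral (justified by the $L^2$-bound on $\int_0^T\!\!\int_0^T (D^j_u Y_i(s))^2\,du\,ds$, itself a consequence of Step 1 and boundedness of $Y_i$) yields $-\int_0^t Y_i(s) D^j_u Y_i(s)\,ds$. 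For the It\^o integral term, I would use the standard commutation rule for the Malliavin derivative and the It\^o integral (see e.g.\ \cite[Lemma 1.3.4]{Nualart2006}): $D^j_u \int_0^t Y_i(s)\,dB_i^S(s) = \ell_{i,j} Y_i(u)\mathbbm 1_{[0,t]}(u) + \int_u^t D^j_u Y_i(s)\,dB_i^S(s)$, where the first term comes from differentiating the integrator and uses $d[B_i^S, W_j]_s = \ell_{i,j}\,ds$. To invoke this rule I must check that $s \mapsto Y_i(s)$ is an $L^2$-valued Malliavin-differentiable process with $\mathbb E\int_0^T\!\!\int_0^T (D^j_u Y_i(s))^2\,du\,ds < \infty$, which again follows from Step 1 of Theorem \ref{th: Malliavin differentiability}. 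Summing the three contributions and rewriting the $dW_j$ integrals back in terms of $dB_i^S$ gives the stated formula.

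The main obstacle is the rigorous justification of the two commutation steps — pulling $D$ inside the $ds$-integral and inside the $dB_i^S$-integral — since these require uniform $L^2$-bounds on $DY_i(s)$ that hold jointly in $s$, and the relevant estimates from the proof of Theorem \ref{th: Malliavin differentiability} were stated for fixed $t$; I would need to note that the bound there is in fact uniform over $t\in[0,T]$ because it depends only on $\sup_{t,u}\int_0^t \mathcal K_i^2(s,u)\,ds$-type quantities and the moments of $\xi_i(0)$, none of which blow up. A secondary but minor point is verifying the hypotheses of Proposition \ref{prop: chain rule} for $f(x)=x^2$ and confirming that the resulting $DX_i(t)$ indeed lies in $L^2(\Omega; L^2([0,T];\mathbb R^{2d}))$, which once more reduces to boundedness of $Y_i$ and the moment estimate for $\xi_i(0)$ established in Step 1 above. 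Everything else is routine bookkeeping with the triangular structure of $\Lc$.
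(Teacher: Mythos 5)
Your proposal is correct and follows essentially the same route as the paper: split $X_i(t)$ into the deterministic, Lebesgue-integral and It\^o-integral parts, handle $Y_i^2(s)$ via Proposition \ref{prop: chain rule} using boundedness of $Y_i$, commute $D$ with the $ds$-integral, and apply the standard derivative-of-an-It\^o-integral rule for adapted integrands (the paper cites \cite[Proposition 1.3.2]{Nualart2006} after expanding $B_i^S$ in the components of $W$). Your writing $\int_u^t D^j_u Y_i(s)\,dB_i^S(s)$ instead of $\int_0^t$ is immaterial, since $D^j_u Y_i(s)=0$ for $s<u$ by adaptedness.
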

\begin{proof}
    First, note that if $u>t$, then the claim is true due to \cite[Corollary 1.2.1]{Nualart2006}. Let now $u\le t$. By the definition of the Malliavin derivative,
    \begin{equation}\label{proofeq: derivative of deterministic part}
        D^j_u\left[ \int_0^t \mu_i(s) ds \right] = 0.
    \end{equation}
    Next, note that for any $s \in [0,T]$ $Y_i^2(s) \in {\mathbb D}^{1,2}$. Indeed, by Proposition \ref{prop: chain rule}, it is sufficient to check that 
    \[
        \E\left[ Y_i^2(s) + \lVert Y_i(s) DY_i(s) \rVert^2_{L^2([0,T];\mathbb R^{2d})} \right] < \infty
    \]
    which is true since $Y_i(s)$ is bounded. Whence $\frac{1}{2}\int_0^t D_u Y_i^2(s) ds = \int_0^t Y_i(s) D^j_u Y_i(s)ds$, $j=1,...,2d$, these integrals are well defined and so, using the chain rule and closedness of the operator $D$, we can write (see e.g. \cite[Theorem 1.2.2]{Vr2003} or \cite[Theorem 1.2.4]{Veraar_Weis_2016})
    \begin{equation}\label{proofeq: interchange of Malliavin derivative and integral}
        D^j_u\left[ \frac{1}{2} \int_0^t Y_i^2(s) ds \right] = \int_0^t Y_i(s) D^j_u Y_i(s)ds, \quad j=1,...,2d.
    \end{equation}
    Finally, by adaptedness and boundedness of $Y_i$ and \cite[Proposition 1.3.2]{Nualart2006},
    \begin{equation}\label{proofeq: derivative of Ito integral}
    \begin{aligned}
        D^j_u\left[ \int_0^t Y_i(s) dB_i^S(s) \right] & = D^j_u\left[ \sum_{k=1}^{2d} \int_0^t \ell_{i,k} Y_i(s) d W_k(s) \right]
        \\
        & = \ell_{i,j} Y_i(u) + \sum_{k=1}^{2d} \ell_{i,k} \int_0^t D^j_u Y_i(s) d W_k(s)
        \\
        & = \ell_{i,j} Y_i(u) + \int_0^t D^j_u Y_i(s) dB_i^S(s),
    \end{aligned}
    \end{equation}
    as required.
\end{proof}

\subsection{Malliavin differentiability w.r.t. the transformed Brownian motion. Absolute continuity of probability laws}\label{ssec: Malliavin w.r.t. changed BM}

In order to proceed, we will require absolute continuity of the law of $X_i(t)$ w.r.t. the Lebesgue measure. It is well known (see e.g. \cite[Chapter 2]{Nualart2006}) that a sufficient condition for that is
\begin{equation}\label{eq: desire to have non-zero derivative}
    \lVert D_\cdot X_i(t)\rVert_{L^2([0,T]; \mathbb R^{2d})} > 0 \quad a.s.,
\end{equation}
thus it would be sufficient to check \eqref{eq: desire to have non-zero derivative}. However, the form of $D X_i(t)$ obtained in Subsection \ref{ss: Malliavin w.r.t. original BM} is not very convenient for this matter. In order to get the desired properties, we will transform the original Brownian motion $W$ and consider the Malliavin derivative w.r.t. this transformation.

Namely, recall that $B = (B_1,...,B_{2d})^T$ is a $2d$-dimensional Gaussian process and
\[
    B(t) = \left(  \begin{matrix}
                B^S(t)
                \\
                B^Y(t)
            \end{matrix}
    \right)
    = \Lc W(t),
\]
where by $B^S = (B_1^S,...,B_d^S)^T$ and $B^Y = (B_1^Y,...,B^Y_d)^T$ we denote the first and the last $d$ components of $B$ correspondingly and $\Lc = (\ell_{i,j})_{i,j=1}^{2d}$ is a lower-triangular matrix such that $\Lc\Lc^T = \Sigma = (\sigma_{i,j})_{i,j=1}^{2d}$. Now, define $\mathcal U = (u_{i,j})_{i,j = 1}^{2d}$ an \textit{upper} $2d \times 2d$-triangular matrix with positive values on the diagonal such that
\[
    \widetilde W := \mathcal U^{-1} B = \mathcal U^{-1} \Lc W
\]
is a $2d$-dimensional Brownian motion. Let now $\widetilde D = (\widetilde D^{1}, ..., \widetilde D^{2d})$ be the Malliavin derivative with respect to the new Brownian motion $\widetilde W$ and $\widetilde {\mathbb D}^{1,2}$ be its domain. Clearly, the matrix $\mathcal U^{-1} \Lc$ is non-degenerate and hence $W$ and $\widetilde W$ generate the same filtration $\mathbb F$ and, moreover, it is easy to see that $\widetilde {\mathbb D}^{1,2}$  coincides with ${\mathbb D}^{1,2}$. Finally, by construction, $\widetilde W_1$ is independent of the processes $B^S_2$, ..., $B^S_d$, $B^Y_1$, ..., $B^Y_{d}$ and therefore the following result holds.

\begin{theorem}\label{th: new Malliavin differentiability} 
    Let Assumptions \ref{assum: assumptions on kernels} and \ref{assum: assumption on sandwiched drift} hold. Then, for any $t \in (0,T]$,
    \[
        \widetilde D^1_v X_i(t) = \begin{cases}
             u_{1,1} Y_1(v) \mathbbm 1_{[0,t]}(v), &\quad \text{if } i = 1,
             \\
             0, &\quad \text{if } i \ne 1,
        \end{cases}
    \]
    and
    \[
        \widetilde D^1_v S_i(t) = \begin{cases}
             u_{1,1} S_1(t) Y_1(v) \mathbbm 1_{[0,t]}(v), &\quad \text{if } i = 1,
             \\
             0, &\quad \text{if } i \ne 1,
        \end{cases}
    \]
    where $u_{1,1}$ is the upper left element of $\mathcal U$. In particular,  
    \[
        \lVert \widetilde D_\cdot X_1(t) \rVert_{L^2([0,T];\mathbb R^{2d})} > 0 \quad a.s.
    \]
    
    \begin{proof}
        The expression for $\widetilde D^1 X_i(t)$ is obtained straightforwardly by differentiating the right-hand side of \eqref{eq: log-price} and taking into account that
        \[
            B^S_1(t) = \sum_{k=1}^{2d} u_{1,k} \widetilde W_{k}(t)
        \]
        and $\widetilde D^1 Y_i(t) = 0$ due to independence of $\widetilde W_1$ and $B^Y_i$. The expression for $\widetilde D^1 S_i(t)$ can be easily obtained using the chain rule from Proposition \ref{prop: chain rule}.
    \end{proof}
\end{theorem}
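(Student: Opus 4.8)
The statement is essentially a computation: we already know from Corollary~\ref{cor: Malliavin differentiability of log-price} that $X_i(t)\in\mathbb D^{1,2}$ with respect to $W$, and since $\widetilde W=\mathcal U^{-1}\Lc W$ differs from $W$ by an invertible linear transformation, the spaces $\mathbb D^{1,2}$ and $\widetilde{\mathbb D}^{1,2}$ coincide, so $X_i(t)\in\widetilde{\mathbb D}^{1,2}$ and we only need to identify $\widetilde D^1 X_i(t)$. The key structural fact, built into the construction of $\mathcal U$, is that the first coordinate $\widetilde W_1$ of the new Brownian motion is independent of $B^S_2,\dots,B^S_d,B^Y_1,\dots,B^Y_d$; equivalently, in the representation $B^S_i=\sum_{k=1}^{2d}u_{i,k}\widetilde W_k$ and $B^Y_i=\sum_{k=1}^{2d}u_{d+i,k}\widetilde W_k$, the coefficient of $\widetilde W_1$ vanishes except in $B^S_1$, where it equals $u_{1,1}$. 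Since the volatility $Y_i$ is driven solely by $B^Y_i$ (through $Z_i$), it is a functional of $\widetilde W_2,\dots,\widetilde W_{2d}$ only, hence $\widetilde D^1 Y_i(t)=0$ for every $i$ and every $t$.

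**Carrying it out.** First I would record that $\widetilde D^1 Y_i(t)=0$: this follows from the independence noted above together with the fact that $\widetilde D^1$ annihilates any random variable measurable with respect to the sigma-algebra generated by $\widetilde W_2,\dots,\widetilde W_{2d}$ (e.g.\ by the locality of the Malliavin derivative, or directly from the Wiener-chaos representation). Consequently, applying $\widetilde D^1$ to the log-price representation~\eqref{eq: log-price},
\[
    X_i(t) = X_i(0) + \int_0^t\Bigl(\mu_i(s) - \tfrac{Y_i^2(s)}{2}\Bigr)ds + \int_0^t Y_i(s)\,dB^S_i(s),
\]
the drift term contributes nothing (it is a deterministic-plus-$Y_i^2$ term whose $\widetilde D^1$-derivative vanishes, using the chain rule of Proposition~\ref{prop: chain rule} and $\widetilde D^1 Y_i=0$), and for the stochastic integral I would write $\int_0^t Y_i(s)\,dB^S_i(s)=\sum_{k=1}^{2d}u_{i,k}\int_0^t Y_i(s)\,d\widetilde W_k(s)$ and differentiate via the analogue of \cite[Proposition 1.3.2]{Nualart2006}:
\[
    \widetilde D^1_v\!\left[\int_0^t Y_i(s)\,dB^S_i(s)\right] = u_{i,1}Y_i(v)\mathbbm 1_{[0,t]}(v) + \sum_{k=1}^{2d}u_{i,k}\int_0^t \widetilde D^1_v Y_i(s)\,d\widetilde W_k(s) = u_{i,1}Y_i(v)\mathbbm 1_{[0,t]}(v).
\]
Since $u_{i,1}=0$ for $i\ne 1$ and $u_{1,1}>0$, this gives exactly $\widetilde D^1_v X_i(t)=u_{1,1}Y_1(v)\mathbbm 1_{[0,t]}(v)$ when $i=1$ and $0$ otherwise. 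The formula for $\widetilde D^1 S_i(t)$ then follows by applying the chain rule of Proposition~\ref{prop: chain rule} to $S_i(t)=e^{X_i(t)}$ — the integrability hypothesis there is satisfied because $S_i(t)$ has moments of all orders by Theorem~\ref{th: properties of S} and $Y_1$ is bounded. Finally, for the non-degeneracy claim, since $Y_1(v)>\varphi_1(v)>0$ a.s.\ for all $v\in[0,T]$ (Remark~\ref{rem: properties of Y}), we get
\[
    \lVert \widetilde D_\cdot X_1(t)\rVert^2_{L^2([0,T];\mathbb R^{2d})} \ge \int_0^t \bigl(\widetilde D^1_v X_1(t)\bigr)^2 dv = u_{1,1}^2\int_0^t Y_1^2(v)\,dv > 0 \quad a.s.
\]

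**Main obstacle.** The computational steps are routine; the only real point requiring care is the rigorous justification that $\widetilde D^1 Y_i(t)=0$. One must argue cleanly that $Y_i(t)$, being the pathwise solution of the sandwiched SDE driven by $Z_i$, is a measurable functional of $B^Y_i$ alone and hence of $(\widetilde W_2,\dots,\widetilde W_{2d})$, and then invoke the appropriate independence/locality property of $\widetilde D^1$. A secondary bookkeeping issue is confirming that the construction of the upper-triangular $\mathcal U$ with positive diagonal actually delivers the stated independence structure — i.e.\ that $B^Y_i$ and $B^S_2,\dots,B^S_d$ carry no $\widetilde W_1$ component — which is a linear-algebra consequence of $\mathcal U$ being upper triangular, so that the $\widetilde W_1$-coefficient of $(\mathcal U\widetilde W)_j$ is $u_{j,1}=0$ for all $j\ge 2$. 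Neither of these is deep, so I expect the proof to remain short, essentially as the authors have it.
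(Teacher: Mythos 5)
Your proposal is correct and follows essentially the same route as the paper: differentiate the log-price representation \eqref{eq: log-price}, use that $B^S_i=\sum_k u_{i,k}\widetilde W_k$ with $u_{i,1}=0$ for $i\ge 2$ and that $\widetilde D^1 Y_i=0$ by independence of $\widetilde W_1$ from the $B^Y_i$, then pass to $S_i=e^{X_i}$ via the chain rule of Proposition \ref{prop: chain rule}. Your write-up merely spells out the bookkeeping (the stochastic-integral differentiation via the analogue of Nualart's Proposition 1.3.2 and the lower bound $u_{1,1}^2\int_0^t Y_1^2(v)\,dv>0$) that the paper leaves implicit.
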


\begin{corollary}
    For each $i=1,...,d$ and $t\in(0,T]$, the law of $X_i(t)$ (and therefore of $S_i(t) = e^{X_i(t)}$) has continuous and bounded density.
\end{corollary}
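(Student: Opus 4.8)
The plan is to realise $X_i(t)$ as a \emph{conditionally Gaussian} random variable. The transformed Brownian motion of Subsection~\ref{ssec: Malliavin w.r.t. changed BM} was built precisely so that one Brownian direction carries the price noise $B^S_i$ but is independent of the vol‑of‑vol noise $B^Y$; conditioning on everything \emph{except} that direction freezes the volatility path, after which $X_i(t)$ becomes a linear diffusion in a one‑dimensional Brownian motion, i.e.\ a Gaussian with a random, but strictly nondegenerate, variance. Continuity and boundedness of the density of $X_i(t)$ (and then of $S_i(t)=e^{X_i(t)}$) will follow by integrating the conditional (log)normal density in $\omega$.

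Concretely, I would fix $i\in\{1,\dots,d\}$ and repeat the construction preceding Theorem~\ref{th: new Malliavin differentiability} with the roles of $B^S_1$ and $B^S_i$ interchanged: this yields a standard $2d$‑dimensional Brownian motion $\widehat W=(\widehat W_1,\dots,\widehat W_{2d})$ generating $\mathbb F$ such that $\widehat W_1$ is independent of $B^S_k$ for $k\ne i$ and of $B^Y_1,\dots,B^Y_d$, and $B^S_i=\widehat u_1\widehat W_1+\sum_{k=2}^{2d}\widehat u_k\widehat W_k$ for constants $\widehat u_k$ with $\widehat u_1>0$. Then $\widehat W_1$ is independent of $\widehat{\mathcal G}:=\sigma(\widehat W_k(s),\ k\ge2,\ s\in[0,T])$, while $Y_i$, being a pathwise functional of $B^Y_i$, is $\widehat{\mathcal G}$‑measurable. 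Substituting into \eqref{eq: log-price},
\[
    X_i(t)=m_i(t)+\widehat u_1\int_0^t Y_i(s)\,d\widehat W_1(s),\qquad
    m_i(t):=X_i(0)+\int_0^t\!\Big(\mu_i(s)-\tfrac{Y_i^2(s)}{2}\Big)ds+\sum_{k=2}^{2d}\widehat u_k\int_0^t Y_i(s)\,d\widehat W_k(s),
\]
with $m_i(t)$ being $\widehat{\mathcal G}$‑measurable. A standard conditioning argument (approximate $Y_i$ by $\widehat{\mathcal G}$‑measurable step processes, pass to the limit in conditional characteristic functions) then shows that, conditionally on $\widehat{\mathcal G}$, $X_i(t)$ is Gaussian with mean $m_i(t)$ and variance $\sigma_i^2(t):=\widehat u_1^2\int_0^t Y_i^2(s)\,ds$. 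The crucial input is Remark~\ref{rem: properties of Y}: since $\min_{[0,T]}\varphi_i>0$ and $Y_i(s)<\psi_i(s)$, for every $t\in(0,T]$
\[
    0<\sigma_*^2:=\widehat u_1^2\,t\,\big(\textstyle\min_{[0,T]}\varphi_i\big)^2\le\sigma_i^2(t)\le\widehat u_1^2\,t\,\lVert\psi_i\rVert_\infty^2<\infty\quad\text{a.s.}
\]

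With this in hand the conclusion is routine. The conditional density $x\mapsto(2\pi\sigma_i^2(t))^{-1/2}\exp\{-(x-m_i(t))^2/(2\sigma_i^2(t))\}$ is continuous in $x$ and bounded by $(2\pi\sigma_*^2)^{-1/2}$ uniformly in $\omega$, so taking expectations and using dominated convergence, $X_i(t)$ has a density $p_{X_i(t)}$ that is continuous with $\lVert p_{X_i(t)}\rVert_\infty\le(2\pi\sigma_*^2)^{-1/2}$. (Plain absolute continuity would already follow from $\lVert\widetilde D_\cdot X_i(t)\rVert_{L^2([0,T];\mathbb R^{2d})}>0$ a.s.\ together with the criterion recalled around \eqref{eq: desire to have non-zero derivative}, but the conditioning argument delivers the stronger statement at once.) For $S_i(t)=e^{X_i(t)}$ I would argue identically: conditionally on $\widehat{\mathcal G}$, $S_i(t)$ is lognormal with parameters $m_i(t),\sigma_i^2(t)$, and its density is dominated by $(2\pi\sigma_*^2)^{-1/2}\exp\{\sigma_i^2(t)/2-m_i(t)\}$, whose expectation equals $\mathbb E\big[\mathbb E[S_i^{-1}(t)\mid\widehat{\mathcal G}]\big]=\mathbb E[S_i^{-1}(t)]<\infty$ by Theorem~\ref{th: properties of S}; dominated convergence again gives a continuous, bounded density of $S_i(t)$ on $(0,\infty)$ (extended by $0$ at the origin).

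The main obstacle is the measurability bookkeeping around $\widehat W$ — that $Y_i$ and $m_i(t)$ are $\widehat{\mathcal G}$‑measurable and that $\widehat u_1\ne0$, which I would inherit verbatim from the construction preceding Theorem~\ref{th: new Malliavin differentiability} — together with a careful justification that $\int_0^t Y_i\,d\widehat W_1$ is conditionally $N(0,\sigma_i^2(t)/\widehat u_1^2)$ given $\widehat{\mathcal G}$ (using that $Y_i$ is $\widehat{\mathcal G}$‑measurable and $\widehat W_1$ is a Brownian motion independent of $\widehat{\mathcal G}$). Everything past that point is a one‑line estimate for Gaussian and lognormal densities.
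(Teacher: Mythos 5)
Your proof is correct, but it takes a genuinely different route from the paper. The paper's own argument is a two-line application of Malliavin calculus: by Theorem \ref{th: new Malliavin differentiability}, $\widetilde D^1_v X_1(t)=u_{1,1}Y_1(v)\mathbbm 1_{[0,t]}(v)$ is bounded away from zero and adapted, so the criterion of \cite[Proposition 2.1.1]{Nualart2006} (which requires $\widetilde D X_1(t)/\lVert \widetilde D X_1(t)\rVert^2$ to lie in the domain of the divergence, automatic here by adaptedness and the lower bound $Y_1\ge\min\varphi_1>0$) yields a continuous bounded density, and the case $i\ge 2$ follows by reordering the components of $B$ -- exactly the index swap you also perform. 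You exploit the same structural ingredient (the triangular transformation $\widetilde W=\mathcal U^{-1}B$, or its reordered analogue $\widehat W$, which isolates one price-noise direction independent of all volatility noises and of the other price noises), but instead of invoking the Malliavin criterion you condition on $\widehat{\mathcal G}=\sigma(\widehat W_k,\ k\ge2)$, note that $Y_i$ and $m_i(t)$ are $\widehat{\mathcal G}$-measurable while $\widehat W_1$ is independent of $\widehat{\mathcal G}$, and conclude that $X_i(t)$ is conditionally Gaussian with variance $\sigma_i^2(t)\ge \widehat u_1^2 t(\min_{[0,T]}\varphi_i)^2>0$; integrating the conditional density gives the claim. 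The measurability bookkeeping you flag does go through (the upper-triangular structure puts every $B^Y_k$, and every $B^S_k$ with $k\ne i$, in the span of $\widehat W_2,\dots,\widehat W_{2d}$, and the Riemann-sum approximation of the conditional characteristic function is standard), so there is no gap. What your route buys is elementarity and sharper output: an explicit uniform bound $(2\pi\sigma_*^2)^{-1/2}$ on the density of $X_i(t)$, and a self-contained treatment of the density of $S_i(t)$ including boundedness near the origin via the integrable lognormal envelope dominated by $\mathbb E[S_i^{-1}(t)]<\infty$ (a point the paper's ``and therefore of $S_i(t)$'' leaves implicit, since boundedness of $p_{X_i(t)}$ alone does not control $p_{X_i(t)}(\log s)/s$ as $s\downarrow 0$). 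What the paper's route buys is brevity -- Theorem \ref{th: new Malliavin differentiability} is already in place and feeds directly into the Malliavin integration-by-parts machinery of Section \ref{sec: Malliavin duality approach} -- and a template that generalizes to the joint law in Theorem \ref{th: absolute continuity of price}, where one-dimensional conditional Gaussianity alone would not suffice.
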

\begin{proof}
    By Theorem \ref{th: new Malliavin differentiability},
    \[
        \lVert \widetilde D_\cdot X_1(t) \rVert_{L^2([0,T];\mathbb R^{2d})} > 0 \quad a.s.
    \]
    and, moreover, $\widetilde D X_1(t)$ is adapted to the filtration generated by $\widetilde B^i$. Thus the result for $X_1(t)$ follows from e.g. \cite[Proposition 2.1.1]{Nualart2006}. The existence of the density for $X_i(t)$, $i=2,...,d$, can be obtained in the same manner by interchanging the order of indices.    
\end{proof}

As we can see, each of $X_i(t)$, $i=1,...,d$, has a density with respect to the Lebesgue measure. However, in what follows we will require a stronger result, namely the existence of the joint density of $X (t) = (X_1(t),...,X_d(t))$. 

\begin{theorem}\label{th: absolute continuity of price}
    For all $t\in(0,T]$, the law of the random vector $X(t)$ is absolutely continuous with respect to the Lebesgue measure on $\mathbb R^d$. 
\end{theorem}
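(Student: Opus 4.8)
The plan is to apply the standard Malliavin criterion for absolute continuity of the law of a random vector (see \cite[Theorem~2.1.2]{Nualart2006}): it suffices to find a standard $2d$-dimensional Brownian motion $\widetilde W$ on $(\Omega,\mathcal F,\mathbb P)$ generating $\mathbb F$ such that, writing $\widetilde D$ and $\widetilde{\mathbb D}^{1,2}$ for the Malliavin derivative and Sobolev space associated with $\widetilde W$, each $X_j(t)\in\widetilde{\mathbb D}^{1,2}$ and the Malliavin matrix $\big(\langle\widetilde DX_i(t),\widetilde DX_j(t)\rangle_{L^2([0,T];\mathbb R^{2d})}\big)_{i,j=1}^{d}$ is a.s. invertible, equivalently $\widetilde DX_1(t),\dots,\widetilde DX_d(t)$ are a.s. linearly independent in $L^2([0,T];\mathbb R^{2d})$. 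As in Subsection~\ref{ssec: Malliavin w.r.t. changed BM}, I would take $\widetilde D$ to be the derivative with respect to a conveniently transformed Brownian motion, generalising the construction of $\mathcal U$ so that all $d$ price innovations become orthogonal to the volatility noise at once. Concretely: let $\Sigma_{22}:=\Cov(B^Y(1))$, let $\mathcal V_{22}$ be its Cholesky factor and set $\widehat W^Y:=\mathcal V_{22}^{-1}B^Y$, a standard $d$-dimensional Brownian motion; let $R:=B^S-A\widehat W^Y$ with $A:=\Cov(B^S(1),\widehat W^Y(1))$ be the residual of $B^S$ after projection onto $\widehat W^Y$, so that $R$ is a $d$-dimensional Gaussian process independent of $\widehat W^Y$ whose covariance equals the Schur complement $\Sigma_{11}-\Sigma_{12}\Sigma_{22}^{-1}\Sigma_{21}$ --- a positive definite matrix because $\Sigma$ is. Writing $R=\mathcal V_{11}\widehat W^S$ with $\mathcal V_{11}$ its (invertible) Cholesky factor and $\widehat W^S:=\mathcal V_{11}^{-1}R$, the process $\widetilde W:=(\widehat W^S,\widehat W^Y)$ is a standard $2d$-dimensional Brownian motion obtained from $W$ by the invertible linear map $\mathcal V^{-1}\Lc$, where $\mathcal V:=\left(\begin{smallmatrix}\mathcal V_{11}&A\\0&\mathcal V_{22}\end{smallmatrix}\right)$; hence $\widetilde W$ generates $\mathbb F$, the space $\widetilde{\mathbb D}^{1,2}$ coincides with $\mathbb D^{1,2}$, and $X_j(t)\in\widetilde{\mathbb D}^{1,2}$ by Corollary~\ref{cor: Malliavin differentiability of log-price}.

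Next I would compute $\widetilde D^iX_j(t)$ for $i,j\in\{1,\dots,d\}$, where the first $d$ components of $\widetilde D$ correspond to the coordinates $\widehat W^S_1,\dots,\widehat W^S_d$. Since $B^Y$, and hence every $Y_j$, is a functional of $\widehat W^Y$ alone and thus independent of $\widehat W^S$, one has $\widetilde D^iY_j\equiv 0$ for $i=1,\dots,d$ --- the same observation used in the proof of Theorem~\ref{th: new Malliavin differentiability}. Differentiating the representation \eqref{eq: log-price} of $X_j(t)$ and noting that $B^S_j$ depends on $\widehat W^S$ only through $\sum_{k=1}^{d}(\mathcal V_{11})_{j,k}\widehat W^S_k$, the computation of Corollary~\ref{cor: Malliavin differentiability of log-price} (the interchange of $\widetilde D$ with the Lebesgue and Wiener integrals being legitimate for the same reasons as there, namely boundedness and adaptedness of $Y_j$) gives
\[
    \widetilde D^i_v X_j(t)=(\mathcal V_{11})_{j,i}\,Y_j(v)\,\mathbbm 1_{[0,t]}(v),\qquad i,j=1,\dots,d.
\]

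It then remains to establish linear independence. Suppose $\sum_{j=1}^{d}c_j(\omega)\,\widetilde DX_j(t)=0$ in $L^2([0,T];\mathbb R^{2d})$ on some event. Reading off the first $d$ coordinates gives, for a.e. $v\in[0,t]$ and each $i=1,\dots,d$, the identity $\sum_{j=1}^{d}(\mathcal V_{11})_{j,i}\,c_jY_j(v)=0$; since every $Y_j$ is continuous this in fact holds for all $v\in(0,t)$, so fixing one such $v$ and inverting the matrix $\mathcal V_{11}^{T}$ we obtain $c_jY_j(v)=0$ for every $j$, and therefore $c_j=0$ because $Y_j(v)>\varphi_j(v)>0$ a.s. by Remark~\ref{rem: properties of Y}. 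Thus $\widetilde DX_1(t),\dots,\widetilde DX_d(t)$ are a.s. linearly independent, the Malliavin matrix of $X(t)$ is a.s. invertible, and \cite[Theorem~2.1.2]{Nualart2006} yields the asserted absolute continuity. I expect the only genuinely delicate points to be the construction of $\mathcal V$ --- in particular verifying that the Schur complement is positive definite so that $\mathcal V_{11}$ is invertible --- and the (routine but slightly technical) justification that $\widetilde D$ commutes with the integrals defining $X_j(t)$; everything downstream is elementary linear algebra together with the lower sandwich bound $Y_j>\varphi_j>0$.
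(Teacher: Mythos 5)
Your proposal is correct, but it proves the theorem by a genuinely different route than the paper. The paper follows \cite[Theorem 5.3]{AN2015}: for each coordinate $i$ it only uses the single derivative direction $\widetilde D^1$ (after swapping $B_i$ into the first position), applies the chain rule and the Malliavin--Skorokhod duality to get the bound $\left|\mathbb E\left[\tfrac{\partial f}{\partial x_i}(X(t))\right]\right|\le c_i\sup_x|f(x)|$ for smooth bounded $f$, and then invokes \cite[Lemma 2.1.1]{Nualart2006}; no Malliavin matrix is ever formed. You instead build one global change of Brownian basis $\widetilde W=(\widehat W^S,\widehat W^Y)$ via the Schur-complement/Cholesky block $\mathcal V$ (a legitimate $d$-dimensional generalisation of the paper's $2\times 2$ matrix $\mathcal U$ from Subsection \ref{ssec: Malliavin w.r.t. changed BM}; the orthogonality of $R$ and $\widehat W^Y$ and the positive definiteness of $\Sigma_{11}-\Sigma_{12}\Sigma_{22}^{-1}\Sigma_{21}$ are exactly as you say), compute the whole first $d\times d$ block $\widetilde D^i_vX_j(t)=(\mathcal V_{11})_{j,i}Y_j(v)\mathbbm 1_{[0,t]}(v)$, deduce a.s. invertibility of the Malliavin (Gram) matrix from the invertibility of $\mathcal V_{11}$ and the sandwich bound $Y_j>\varphi_j>0$, and apply the Bouleau--Hirsch criterion \cite[Theorem 2.1.2]{Nualart2006}. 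Both arguments share the key insight that after decorrelating the price noise from the volatility noise one has $\widetilde D^iY_j=0$ and hence a derivative proportional to $Y_j$ itself; your version requires the (slightly heavier) simultaneous orthogonalisation and the justification of the derivative formulas for all $i,j\le d$, but it is a one-shot argument with no coordinate reordering and it yields the explicit Malliavin matrix as a by-product, whereas the paper's integration-by-parts route needs only one smooth direction per coordinate and sidesteps the nondegeneracy of the full matrix.
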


\begin{proof}
    We will follow the approach developed in \cite[Theorem 5.3]{AN2015} for the standard Brownian Heston model. Namely, we will check that for any $f\in C^\infty (\mathbb R^d; \mathbb R)$ such that $f$ and all its partial derivatives are bounded the following relation holds:
    \[
        \left| \mathbb E\left[ \frac{\partial f}{\partial x_i} (X(t))\right] \right| \le c_i \sup_{x\in\mathbb R^d}|f(x)|, \quad i=1,...,d.
    \]
    Then the result will follow from \cite[Lemma 2.1.1]{Nualart2006}. 
    
    Note that, by applying sequentially the chain rule for the Malliavin derivative and the duality relationship between Malliavin derivative and the Skorokhod integral, we can obtain
    \begin{align*}
        \left| \mathbb E\left[ \frac{\partial f}{\partial x_1} (X(t))\right] \right| &= \frac{1}{t} \left|\E\left[ \int_0^t \frac{\partial f}{\partial x_1} (X(t)) \widetilde{D}_u^{1} X_1(t) \frac{1}{\widetilde{D}_u^{1} X_1(t)}du \right]\right|
        \\
        &= \frac{1}{t} \left|\E\left[ \int_0^t \widetilde{D}_u^{1}\left[f (X(t)) \right] \frac{1}{\widetilde{D}_u^{1} X_1(t)}du \right]\right|
        \\
        & = \frac{1}{t} \left|\E\left[  f (X(t)) \int_0^t \frac{1}{u_{1,1} Y_1(u)}d \widetilde W_1(u) \right]\right|
        \\
        & \le \frac{1}{t u_{1,1}} \E\left[ \left|\int_0^t \frac{1}{Y_i(u)}d \widetilde W_1(u) \right|\right] \sup_{x\in\mathbb R^d}|f(x)|
        \\
        & =: c_1 \sup_{x\in\mathbb R^d}|f(x)|,
    \end{align*}
    i.e. the required property holds for $\frac{\partial f}{\partial x_1}(X(t))$. As for $\frac{\partial f}{\partial x_i}(X(t))$ with $i=2,...,d$, one can reorder the elements of $B$ so that by swapping $B_i$ is the first one and then apply the same arguments.
\end{proof}

\section{Smoothing payoff functionals by Malliavin integration by parts}\label{sec: Malliavin duality approach}

In the remaining part of the paper, we tackle directly the computation of option prices with a focus on discontinuous payoffs. As explained earlier in the Introduction, discontinuities in the payoff function $f$ lead to deterioration of the convergence rate of $\mathbb E f(\widehat S(T))$ to $\mathbb Ef(S(T))$, where $\widehat S$ denotes some discretization of $S$, which causes an undesirable bias in Monte Carlo computations. In order to overcome this issue, we will apply the results of Section \ref{sec: Malliavin} and utilize the quadrature technique via Malliavin integration by parts proposed initially in \cite{AN2015} and adapted to fBm-driven stochastic volatility in \cite{BMdP2018, MYT2020}. The idea of the method is as follows: using the Malliavin calculus techniques, one can replace the discontinuous functional $f$ under the expectation with a locally Lipschitz one. The latter expression is far more convenient for numerical algorithms and guarantees a better convergence rate. In this Section, we explain this smoothing technique whereas the numerical algorithm and the corresponding convergence results will be provided in Section \ref{sec: numerics}.  Note that, in the papers \cite{BMdP2018, MYT2020} mentioned above, $\mathbb E f(S_1(T), ... , S_d(T))$ is computed w.r.t. the physical measure and not the martingale one, i.e. the numerical algorithms described there concern the \textit{expected payoff} rather than the \textit{option price}. However, thanks to the results of Subsection \ref{ssec: martingale measures}, we have a detailed description of equivalent local martingale measures on the market and hence are able to modify the method to account for a measure change. We explain this modification in Subsection \ref{subsec: Malliavin duality approach, measure change}.

\subsection{One-dimensional case}\label{subsec: Malliavin duality approach, 1-dim, no measure change}

We start from a simpler case when $d=1$, i.e. the SVV-model has the form
\begin{equation}\label{eq: 1-dim sandwich}
\begin{gathered}
    S(t) = S(0) + \int_0^t \mu(s) S(s)ds +  \int_0^t Y(s) S(s) dW_1(s),
    \\
    Y(t) = Y(0) + \int_0^t b(s, Y(s))ds +  \int_0^t \mathcal K(t, s)\left( {\ell_{2,1}}dW_1(s) + {\ell_{2,2}} dW_2(s)\right),
\end{gathered}
\end{equation}
where $\mathcal K = \mathcal K_1$ satisfies Assumption \ref{assum: assumptions on kernels} with $H_i = H$, $b = b_1$ satisfies Assumption \ref{assum: assumption on sandwiched drift}, $\mu = \mu_1$ is an arbitrarty $H$-H\"older continuous function and $\Lc = (\ell_{i,j})_{i,j=1}^2$ is such that $\Lc\Lc^T = \Sigma = (\sigma_{i,j})_{i,j=1}^2$. Using the transformation from Subsection \ref{ssec: Malliavin w.r.t. changed BM}, one can rewrite \eqref{eq: 1-dim sandwich} as
\begin{equation}\label{eq: 1-dim sandwich transformed}
\begin{gathered}
    S(t) = S(0) + \int_0^t \mu(s) S(s)ds +  \int_0^t Y(s) S(s) \left(u_{1,1} d \widetilde W_1(s) + u_{1,2} d\widetilde W_2(s) \right),
    \\
    Y(t) = Y(0) + \int_0^t b(s, Y(s))ds +   \int_0^t \mathcal K(t, s) d\widetilde W_2(s),
\end{gathered}
\end{equation}
where $\widetilde W = (\widetilde W_1, \widetilde W_2)$ is a 2-dimensional Brownian motion such that
\[
    \left(\begin{matrix}
        u_{1,1} & u_{1,2}
        \\
        0 & 1
    \end{matrix}\right) \left(\begin{matrix}
        \widetilde W_1
        \\
        \widetilde W_2
    \end{matrix}\right) = \left(\begin{matrix}
        B^S_1
        \\
        B^Y_1
    \end{matrix}\right) = \left(\begin{matrix}
        1 & 0
        \\
        \ell_{2,1} & \ell_{2,2}
    \end{matrix}\right) \left(\begin{matrix}
        W_1
        \\
        W_2
    \end{matrix}\right).
\]

Let now $f$ be a measurable payoff function that satisfies the following assumption.
\begin{assumption}\label{assum: 1-dim payoff} Function $f$: $\mathbb R \to \mathbb R$ is locally Riemann integrable function (not necessarily continuous)  of polynomial growth, i.e. there exist constants $q>0$ and $c_f > 0$ such that
\[
    |f(s)| \le c_f (1+|s|^q),\quad s > 0.
\]
\end{assumption}

Denote now $F(s) := \int_0^s f(u) du$, $g(x) := f(e^x)$, $G(x):= F(1) + \int_0^x g(v) dv$, i.e. $f(S(T)) = g(X(T))$, where $X(t) := \log S(t)$, $t\in[0,T]$. By Theorem \ref{th: new Malliavin differentiability}, $X(t) \in \widetilde {\mathbb D}^{1,2}$ with
\[
    \widetilde D^1_v X(t) = u_{1,1} Y(v)\mathbbm 1_{[0,t]}(v)
\]
and, by Theorem \ref{th: absolute continuity of price}, $X(t)$ has density. Thus one can use Proposition 1.2.4 from \cite{Nualart2006} and remark after it to claim that
\[
    \widetilde D^1_v G(X(t)) = g(X(t)) \widetilde D^1_v X(t), \quad t\in (0,T],
\]
if $G$ turns out to be globally Lipschitz continuous. However, we will require the Malliavin chain rule for $G(X(T))$ in the setting provided by Assumption \ref{assum: 1-dim payoff}, i.e. when $G$ is only \textit{locally} Lipschitz continuous. To do this, we will extend \cite[Proposition 1.2.4]{Nualart2006} in a manner similar to \cite[Remark 9]{BMdP2018}. 
\begin{proposition}\label{prop: chain rule for locally Lipschitz}
    For $G$ defined above and any $t\in(0,T]$, $G(X(t)) \in \widetilde{\mathbb D}^{1,2}$ and
    \begin{equation}
        \widetilde D G(X(t)) = g(X(t)) \widetilde D X(t), \quad t\in (0,T].
    \end{equation}
\end{proposition}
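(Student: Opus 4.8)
The strategy is to approximate the locally Lipschitz function $G$ by globally Lipschitz functions for which the classical chain rule applies, and then pass to the limit using the closedness of the Malliavin derivative operator $\widetilde D$. The key ingredients that make this work are: (a) the polynomial growth of $f$ from Assumption~\ref{assum: 1-dim payoff}, which translates into polynomial growth of $g$ and at most "polynomial times exponential" growth of $G$; (b) the moment bounds $\mathbb E[\sup_{t\in[0,T]} S_i^r(t)] < \infty$ for all $r\in\mathbb R$ from Theorem~\ref{th: properties of S}, which guarantees that $G(X(t))$, $g(X(t))$ and $g(X(t))\widetilde D X(t)$ are all in the relevant $L^2$ spaces; and (c) the explicit bounded form $\widetilde D^1_v X(t) = u_{1,1} Y(v)\mathbbm 1_{[0,t]}(v)$ from Theorem~\ref{th: new Malliavin differentiability}, together with boundedness of $Y$, which controls the $L^2([0,T];\mathbb R^{2d})$-norm of the derivative uniformly.

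\textbf{Step 1 (truncation).} For $N\ge 1$, let $\rho_N\colon\mathbb R\to\mathbb R$ be a smooth cutoff with $\rho_N(x)=x$ for $|x|\le N$, $\rho_N(x)=0$ for $|x|\ge N+1$, $|\rho_N|\le |x|$ and $|\rho_N'|\le C$ uniformly in $N$. Set $G_N := \rho_N\circ G$. Since $G$ is $C^1$ with $G' = g$ locally bounded, each $G_N$ is globally Lipschitz (its derivative $g\cdot(\rho_N'\circ G)$ is bounded, being supported on the bounded set $\{|G|\le N+1\}$ on which $g$ is bounded). Hence by \cite[Proposition 1.2.4]{Nualart2006} (the version allowing $X(t)$ to have an absolutely continuous law, which holds here by Theorem~\ref{th: absolute continuity of price}), $G_N(X(t))\in\widetilde{\mathbb D}^{1,2}$ with $\widetilde D G_N(X(t)) = g(X(t))\,(\rho_N'\circ G)(X(t))\,\widetilde D X(t)$.

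\textbf{Step 2 ($L^2$ convergence).} As $N\to\infty$, $G_N(X(t))\to G(X(t))$ and $\widetilde D G_N(X(t)) \to g(X(t))\widetilde D X(t)$ pointwise a.s. (since eventually $|G(X(t))|\le N$ and then $\rho_N' \equiv 1$ near $G(X(t))$). Using the polynomial-in-$S(T)$ (equivalently, exponential-in-$X(T)$) bounds on $G$ and the polynomial bound $|g(x)|=|f(e^x)|\le c_f(1+e^{qx})$, together with $|\widetilde D^1_v X(t)| = u_{1,1}|Y(v)|\mathbbm 1_{[0,t]}(v) \le u_{1,1}\sup_{s}\psi(s)$, one dominates $|G_N(X(t))|^2 \le |G(X(t))|^2$ and $\|\widetilde D G_N(X(t))\|^2_{L^2([0,T];\mathbb R^{2d})} \le C\|g(X(t))\widetilde D X(t)\|^2$; the dominating random variables are integrable by Theorem~\ref{th: properties of S}. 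Dominated convergence then gives
\[
    \lim_{N\to\infty}\mathbb E\Big[|G_N(X(t)) - G(X(t))|^2 + \big\|\widetilde D G_N(X(t)) - g(X(t))\widetilde D X(t)\big\|^2_{L^2([0,T];\mathbb R^{2d})}\Big] = 0.
\]
By closedness of $\widetilde D$ we conclude $G(X(t))\in\widetilde{\mathbb D}^{1,2}$ and $\widetilde D G(X(t)) = g(X(t))\widetilde D X(t)$.

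\textbf{Main obstacle.} The delicate point is verifying the integrability in Step~2 rigorously: one must control $\mathbb E[|G(X(t))|^2]$, which requires bounding $|G(x)| = |F(1) + \int_0^x f(e^v)\,dv|$ by something like $C(1 + e^{q|x|})$ and then invoking $\mathbb E[e^{2q|X(t)|}] = \mathbb E[S^{2q}(t) + S^{-2q}(t)] < \infty$ — here the two-sided moment bound of Theorem~\ref{th: properties of S} (valid for all $r\in\mathbb R$, crucially including negative $r$) is exactly what is needed and is where the sandwich structure pays off. One should also note that $G$ need only be locally Lipschitz, not $C^1$, at points where $f$ is discontinuous; but since $f$ is locally Riemann integrable, $G$ is still Lipschitz on compacts with $G' = g$ a.e., which is all the truncation argument uses. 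Everything else is routine.
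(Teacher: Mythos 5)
Your overall strategy --- approximate $G$ by globally Lipschitz functions, apply the Lipschitz chain rule (legitimized by the absolute continuity of the law of $X(t)$), and pass to the limit via closedness of $\widetilde D$ and dominated convergence based on the two-sided moment bounds of Theorem \ref{th: properties of S} --- is the same as the paper's. However, Step 1 contains a genuine gap: the range truncation $G_N=\rho_N\circ G$ need not be globally Lipschitz. You justify it by asserting that $\{|G|\le N+1\}$ is a bounded set on which $g$ is bounded, but neither claim follows from Assumption \ref{assum: 1-dim payoff}. Polynomial growth of $f$ controls $g(x)=f(e^x)$ pointwise, but it does not prevent $g$ from being unbounded on sub-level sets of $|G|$; indeed $G$ can be bounded while $g$ is not. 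For instance, $f(s)=s\cos s$ is locally Riemann integrable of polynomial growth, yet $G(x)=F(1)+\sin(e^x)-\sin(1)$ is bounded while $G'(x)=g(x)=e^x\cos(e^x)$ is unbounded; then for all large $N$ your $G_N$ coincides with $G$ itself, which is not globally Lipschitz, so the Lipschitz chain rule cannot be invoked and the approximation scheme collapses. This is precisely why the paper's general chain rule, Proposition \ref{prop: chain rule}, has to impose the structural condition that large values of the derivative occur only where the function itself is large --- a condition you are implicitly assuming and which is not granted by Assumption \ref{assum: 1-dim payoff}.

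The repair is to truncate in the domain rather than in the range, which is what the paper does: set $g_n:=g\,\mathbbm 1_{[-n,n]}$ and $G_n(x):=F(1)+\int_0^x g_n(v)\,dv$, so that $G_n$ agrees with $G$ on $[-n,n]$ and is constant outside. Since $f$ is locally Riemann integrable, $g$ is bounded on compacts, hence each $G_n$ is globally Lipschitz and $\widetilde D G_n(X(t))=g_n(X(t))\,\widetilde D X(t)$. With this replacement your Step 2 goes through essentially verbatim: $G_n(X(t))\to G(X(t))$ and $g_n(X(t))\to g(X(t))$ a.s., and your domination argument --- exponential-type growth of $G$, polynomial growth of $g$, the bounded explicit form $\widetilde D^1_v X(t)=u_{1,1}Y(v)\mathbbm 1_{[0,t]}(v)$, and the moments of $S(T)$ and $S(T)^{-1}$ from Theorem \ref{th: properties of S} --- is exactly the integrability input the paper uses before concluding by closedness of $\widetilde D$.
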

\begin{proof}
    For any $n\in\mathbb N$, denote 
    \[
        g_n(x) := g(x) \mathbbm 1_{[-n,n]}(x)
    \]
    and
    \begin{align*}
        G_n(x) &:= F(1) + \int_0^x g_n(v) dv = G(x)\mathbbm 1_{[-n,n]}(x) + G(n)\mathbbm 1_{[n,\infty)}(x) + G(-n)\mathbbm 1_{(-\infty, -n]}(x).
    \end{align*}
    It is clear that $G_n$ is globally Lipschitz and thus $G_n(X(t)) \in \widetilde{\mathbb D}^{1,2}$ and
    \[
        \widetilde D G_n(X(t)) = g_n(X(t)) \widetilde D X(t).
    \]
    Thus, taking into account the closedness of the operator $\widetilde D$, it is sufficient to check that 
    \begin{equation}\label{proofeq: Malliavin chain rule, convergence of r.v.}
        g_n(X(t)) \xrightarrow{L^2(\Omega)} g(X(t)), \quad n\to \infty,
    \end{equation}
    and
    \begin{equation}\label{proofeq: Malliavin chain rule, convergence of M.d.}
        g_n(X(t)) \widetilde D X(t) \xrightarrow{L^2\left(\Omega; L^2([0,T];\mathbb R^2)\right)} g (X(t)) \widetilde D X(t), \quad n\to \infty.
    \end{equation}
    Observe that $g_n(X(t)) \to g(X(t))$ a.s., $n\to\infty$. By Assumption \ref{assum: 1-dim payoff}, $f$ is of polynomial growth and thus, by Theorem \ref{th: properties of S}, $\mathbb E f^2(S(t)) = \mathbb Eg^2(X(t)) < \infty$ and \eqref{proofeq: Malliavin chain rule, convergence of r.v.} follows from the dominated convergence theorem.
    
    Finally, using polynomial growth of $f$, explicit shape of $\widetilde D X(t)$ given by Theorem \ref{th: new Malliavin differentiability} as well as estimates similar to the ones from Step 1 of the proof of Theorem \ref{th: Malliavin differentiability}, one can deduce that
    \[
        \mathbb E\left[g^2(X(t)) \int_0^T (\widetilde D^i_u X(t))^2 du \right] < \infty,
    \]
    and therefore \eqref{proofeq: Malliavin chain rule, convergence of M.d.} also holds due to dominated convergence.
\end{proof}

Now, having Proposition \ref{prop: chain rule for locally Lipschitz}, we are ready to proceed to the main result of this Subsection in the spirit of \cite[Theorem 4.2]{AN2015} or \cite[Lemma 11]{BMdP2018}.

\begin{theorem}\label{th: representation}
    Let $S(T)$ be given by \eqref{eq: 1-dim sandwich transformed}, $f$: $\mathbb R \to \mathbb R$ satisfy Assumption \ref{assum: 1-dim payoff}, $X(t) := \log S(t)$, $F(s) := \int_0^s f(u)du$, $g(x) := f(e^x)$, $G(x) := F(1) + \int_0^x g(v)dv$. Then
    \begin{equation}\label{eq: representation 1, 1-dim}
        \mathbb E f(S(T)) = \frac{1}{Tu_{1,1}} \mathbb E \left[ G(X(T)) \int_0^T \frac{1}{Y(u)} d\widetilde W_1(u) \right].
    \end{equation}
    Alternatively,
    \begin{equation}\label{eq: representation 2, 1-dim}
        \mathbb E f(S(T)) =  \mathbb E \left[ \frac{F(S(T))}{S(T)}  \left(1+\frac{1}{Tu_{1,1}}\int_0^T \frac{1}{Y(u)} d\widetilde W_1(u)\right)\right].
    \end{equation}
\end{theorem}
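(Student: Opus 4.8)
The plan is to use the Malliavin integration-by-parts machinery exactly as set up in the preceding results, treating the two representations separately but with a common core computation. For \eqref{eq: representation 1, 1-dim}, I would start from the identity $f(S(T)) = g(X(T))$ and write $g(X(T))$ as $\widetilde D^1$ of $G(X(T))$ divided by $\widetilde D^1 X(T)$, using the chain rule of Proposition \ref{prop: chain rule for locally Lipschitz}. Since $\widetilde D^1_v X(T) = u_{1,1} Y(v)\mathbbm 1_{[0,T]}(v)$ by Theorem \ref{th: new Malliavin differentiability}, averaging in $v$ over $[0,T]$ gives
\[
    f(S(T)) = \frac{1}{T u_{1,1}} \int_0^T \frac{\widetilde D^1_v\left[G(X(T))\right]}{Y(v)}\, dv.
\]
Taking expectations and applying the duality relationship between the Malliavin derivative $\widetilde D^1$ and the Skorokhod integral $\delta^1$ (with respect to the component $\widetilde W_1$) turns the right-hand side into $\frac{1}{T u_{1,1}}\mathbb E\left[G(X(T))\, \delta^1\!\left(\frac{1}{Y(\cdot)}\mathbbm 1_{[0,T]}\right)\right]$. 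The final step is to note that $1/Y$ is adapted (it is driven by $\widetilde W_2$ together with the drift, hence $\mathbb F$-predictable) and bounded above by $1/\min_t\varphi(t)$, so the Skorokhod integral reduces to the ordinary It\^o integral $\int_0^T \frac{1}{Y(u)}\, d\widetilde W_1(u)$, yielding \eqref{eq: representation 1, 1-dim}.

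For \eqref{eq: representation 2, 1-dim} I would either integrate by parts on the deterministic side or, more cleanly, rewrite $G(X(T))$ in terms of $S(T)$. Observe that $G(x) = F(1) + \int_0^x f(e^v)\, dv$; the substitution $s = e^v$ gives $G(x) = F(1) + \int_1^{e^x} \frac{f(s)}{s}\, ds = F(e^x) - \int_0^1 \frac{f(s)}{s} ds + F(1)$... more directly, one checks $\frac{d}{dx} F(e^x) = e^x f(e^x) = S f(S)$ is not quite what we want; instead the right identity is $G(x) = \frac{F(e^x)}{e^x} + \text{(something)}$ via integration by parts: $\int_0^x f(e^v) dv = \int_0^x e^{-v}\cdot e^v f(e^v)\, dv$, and integrating by parts with $u = F(e^v)$... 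Concretely, since $\frac{d}{dv}\left[e^{-v}F(e^v)\right] = -e^{-v}F(e^v) + f(e^v)$, we get $\int_0^x f(e^v)dv = e^{-x}F(e^x) - F(1) + \int_0^x e^{-v}F(e^v)dv$, so $G$ differs from $\frac{F(S(T))}{S(T)}$ by an extra integral term; this does not immediately collapse. The cleaner route, and the one I would actually pursue, is to apply the integration-by-parts representation to the function $f$ rewritten as $f(s) = \frac{d}{ds}\!\left[\frac{F(s)}{s}\cdot s\right] - \frac{F(s)}{s} + \frac{F(s)}{s}$... Rather than chase this, I would follow \cite[Lemma 11]{BMdP2018}: apply the chain rule to $h(X(T)) := \frac{F(e^{X(T)})}{e^{X(T)}} = \frac{F(S(T))}{S(T)}$, whose derivative in $x$ is $g(x) - \frac{F(e^x)}{e^x}$; then $\widetilde D^1 h(X(T)) = \left(g(X(T)) - \frac{F(S(T))}{S(T)}\right)\widetilde D^1 X(T)$, and combining this with the duality argument above reproduces \eqref{eq: representation 2, 1-dim} after moving the $\frac{F(S(T))}{S(T)}$ term across.

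The main obstacle I anticipate is the rigorous justification of the duality step: one must verify that $G(X(T))$ (respectively $F(S(T))/S(T)$) together with the integrand $1/Y(\cdot)$ lies in the domain of the Skorokhod integral and that the duality formula $\mathbb E\left[\langle \widetilde D \Phi, u\rangle\right] = \mathbb E\left[\Phi\, \delta(u)\right]$ applies. This requires the integrability $\mathbb E\left[|G(X(T))|^2\right] < \infty$ and $\mathbb E\left[F(S(T))^2/S(T)^2\right] < \infty$, which follow from Assumption \ref{assum: 1-dim payoff} (polynomial growth of $f$, hence of $F$) combined with the moment bound $\mathbb E\left[\sup_{t} S^r(t)\right]<\infty$ for all $r\in\mathbb R$ from Theorem \ref{th: properties of S}; one also needs $G(X(T)) \in \widetilde{\mathbb D}^{1,2}$, which is precisely Proposition \ref{prop: chain rule for locally Lipschitz}, and the boundedness and adaptedness of $1/Y$, which come from Remark \ref{rem: properties of Y}. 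The remaining care is to confirm that the Skorokhod integral of an adapted, square-integrable (here even bounded) process coincides with the It\^o integral — a standard fact — and that all expectations in the chain of equalities are finite so that Fubini-type interchanges of $\int_0^T dv$ and $\mathbb E$ are legitimate.
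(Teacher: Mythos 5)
Your argument for \eqref{eq: representation 1, 1-dim} is exactly the paper's proof: chain rule from Proposition \ref{prop: chain rule for locally Lipschitz}, the explicit derivative $\widetilde D^1_v X(T)=u_{1,1}Y(v)\mathbbm 1_{[0,T]}(v)$ from Theorem \ref{th: new Malliavin differentiability}, averaging in $v$, Malliavin--Skorokhod duality, and the observation that $1/Y$ is adapted and bounded so the Skorokhod integral is the It\^o integral; your extra care about $\mathbb E[|G(X(T))|^2]<\infty$ and the domain of $\delta$ is consistent with what the paper uses. For \eqref{eq: representation 2, 1-dim} you end up on a slightly different, but equally valid, route. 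Amusingly, the identity you derived and then abandoned, $G(x)=\frac{F(e^x)}{e^x}+\int_0^x \frac{F(e^v)}{e^v}\,dv$, \emph{is} the paper's route: it plugs this into \eqref{eq: representation 1, 1-dim} and shows by a second duality computation that $\frac{1}{Tu_{1,1}}\mathbb E\bigl[\int_0^{X(T)}\frac{F(e^v)}{e^v}dv\int_0^T\frac{1}{Y(u)}d\widetilde W_1(u)\bigr]=\mathbb E\bigl[\frac{F(S(T))}{S(T)}\bigr]$, so the extra term does collapse. Your chosen alternative --- applying the chain rule and duality directly to $h(X(T))=\frac{F(S(T))}{S(T)}$, whose derivative is $g-h$, and writing $\mathbb E[g(X(T))]=\mathbb E[g(X(T))-h(X(T))]+\mathbb E[h(X(T))]$ --- is precisely the paper's proof of the multidimensional Theorem \ref{th: representation d-dim} specialized to $d=1$, so it certainly works and arguably gives \eqref{eq: representation 2, 1-dim} more directly, bypassing \eqref{eq: representation 1, 1-dim} altogether. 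The only point you should make explicit is that $h$ is merely locally Lipschitz with a possibly discontinuous (a.e.) derivative $g-h$, so the chain rule for $h(X(T))$ is not the classical one: it needs the same truncation-plus-closedness argument as Proposition \ref{prop: chain rule for locally Lipschitz} (using polynomial growth, Theorem \ref{th: properties of S}, and the absolute continuity of the law of $X(T)$ from Theorem \ref{th: absolute continuity of price}), exactly as the paper does for the analogous function in Proposition \ref{prop: chain rule for locally Lipschitz, d-dim}.
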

\begin{proof}
    In order to obtain representation \eqref{eq: representation 1, 1-dim}, observe that
    \begin{align*}
        \mathbb E f(S(T)) &= \mathbb E g(X(T)) = \frac{1}{T} \mathbb E\left[ \int_0^T g(X(T)) \widetilde D^1_u X(T) \frac{1}{\widetilde D^1_u X(T)} du\right] 
        \\
        &= \frac{1}{T u_{1,1}} \mathbb E\left[ \int_0^T \widetilde D^1_u G(X(T)) \frac{1}{Y(u)} du\right] 
        \\
        &= \frac{1}{T u_{1,1}} \mathbb E \left[ G(X(T)) \int_0^T \frac{1}{Y(u)} d\widetilde W_1(u) \right].
    \end{align*}
    Here we used the chain rule from Proposition \ref{prop: chain rule for locally Lipschitz} and duality relation between the Malliavin derivative and Skorokhod integral (which coincides in our case with the It\^o integral). 
    
    As for the representation \eqref{eq: representation 2, 1-dim}, note that
    \[
        G(x) = F(1) + \int_0^x f(e^v) e^v \frac{1}{e^v}dv = \frac{F(e^x)}{e^x} + \int_0^x \frac{F(e^v)}{e^v} dv.
    \]
    Moreover, using the same argument as in Proposition \ref{prop: chain rule for locally Lipschitz}, it is easy to prove that
    \[
        \widetilde D^1 \int_0^{X(T)} \frac{F(e^v)}{e^v}dv = \frac{F(S(T))}{S(T)} \widetilde D^1 X(T), 
    \]
    therefore, again by the Malliavin duality, we obtain that
    \begin{align*}
        \mathbb E \left[ \int_0^{X(T)} \frac{F(e^v)}{e^v}dv \frac{1}{T u_{1,1}} \int_0^T \frac{1}{Y(u)} d\widetilde W_1(u) \right] &= \mathbb E \left[ \widetilde D^1_u \int_0^{X(T)} \frac{F(e^v)}{e^v}dv \frac{1}{T u_{1,1}} \frac{1}{Y(u)} du \right]
        \\
        & = \mathbb E \left[ \frac{1}{T}\int_0^T \frac{F(S(T))}{S(T)} \widetilde D^1_u X(T) \frac{1 }{\widetilde D^1_u X(T)}  du\right]
        \\
        & = \mathbb E \left[ \frac{F(S(T))}{S(T)} \right],
    \end{align*}
    which implies \eqref{eq: representation 2, 1-dim}.
\end{proof}

\subsection{Multidimensional case}

The approach described in Subsection \ref{subsec: Malliavin duality approach, 1-dim, no measure change} can also be utilized for basked options in $d$-dimensional SVV models. Namely, consider a $d$-dimensional volatility process $Y = (Y_1,...,Y_d)$, where each $Y_i$ is defined by \eqref{eq: volatility process equations}, a $d$-dimensional price process $S = (S_1,...,S_d)$ given by \eqref{eq: definition of price process} and the corresponding log-price process $X=(X_1,...,X_d)$, $X_i = \log S_i$.

Consider now a basket option with a payoff $f\left(\sum_{i=1}^d \alpha_i S_i(T)\right)$ with $f$: $\mathbb R \to \mathbb R$ satisfying Assumption \ref{assum: 1-dim payoff} and $\alpha_i > 0$, $i=1,...,d$. As before, define $F(s) := \int_0^{s} f(u) du$ and observe that the function 
\[
    G_1(x_1,...,x_d) := \frac{F\left( \sum_{i=1}^d \alpha_i e^{x_i} \right)}{\alpha_1 e^{x_1}}, \quad (x_1,...,x_d)  \in \mathbb R^d,
\]
is locally Lipschitz continuous: if $\sum_{i=1}^d \alpha_i e^{x_i} > \sum_{i=1}^d \alpha_i e^{x_i'}$, then
\begin{align*}
    |G_1(x_1,...,x_d) - G_1(x_1',...,x_d')| & \le \frac{e^{-x_1}}{\alpha_1} \int_{\sum_{i=1}^d \alpha_i e^{x_i'}}^{\sum_{i=1}^d \alpha_i e^{x_i}} |f(s)|ds
    \\
    &\quad + \frac{1}{\alpha_1}\int_0^{\sum_{i=1}^d \alpha_i e^{x_i'}} |f(s)|ds |e^{-x_1} - e^{-x_1'}|
    \\
    &\le c_f \frac{e^{-x_1}}{\alpha_1}\left( 1+ \left| \sum_{i=1}^d \alpha_i e^{x_i} \right|^q \right)\sum_{j=1}^d \alpha_j e^{x_j \vee x_j'}|{x_j} - {x_j'}|
    \\
    &\quad + c_f \frac{1}{\alpha_1} \left(\sum_{i=1}^d \alpha_i e^{x_i'} + \frac{1}{q+1} \left|\sum_{i=1}^d \alpha_i e^{x_i'}\right|^{q+1} \right) e^{-(x_1 \wedge x_1')}|x_1 - x_1'|.
\end{align*}
Note that polynomial growth of $f$ (and hence $F$) together with Theorem \ref{th: properties of S} imply that random variables $F\left(\sum_{i=1}^d \alpha_i S_i(T)\right)$ and $G_1(X_1(T),...,X_d(T))$ have moments of all orders. Moreover, Theorem \ref{th: absolute continuity of price} implies that the law of $X$ (and hence of $S$) is absolutely continuous w.r.t. the Lebesgue measure on $\mathbb R^d$. Therefore, by approximating $G_1(X(t))$ with Lipschitz functions just like in Proposition \ref{prop: chain rule for locally Lipschitz}, it is straightforward to get the following chain rule for $G_1(X(t))$.
\begin{proposition}\label{prop: chain rule for locally Lipschitz, d-dim}
     For $G_1$ defined above and any $t\in(0,T]$, $G_1(X(t)) \in \widetilde{\mathbb D}^{1,2}$ and
    \begin{equation}
        \widetilde D G_1(X(t)) = \sum_{i=1}^d \frac{\partial G_1(X(t))}{\partial x_i} \widetilde D X_i(t), \quad t\in (0,T].
    \end{equation}
\end{proposition}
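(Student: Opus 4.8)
The plan is to mimic the proof of Proposition~\ref{prop: chain rule for locally Lipschitz}: approximate the locally Lipschitz function $G_1$ by a sequence of \emph{globally} Lipschitz functions, apply the classical chain rule to each approximation, and pass to the limit using the closedness of $\widetilde D$. Concretely, I would fix a function $\chi\in C_c^\infty(\mathbb R^d)$ with $0\le\chi\le 1$ and $\chi\equiv 1$ on $[-1,1]^d$, set $\chi_n(x):=\chi(x/n)$ and $G_1^n:=G_1\chi_n$. Since $\chi_n$ is smooth, compactly supported and bounded (with Lipschitz constant $\le n^{-1}\lVert\nabla\chi\rVert_\infty$), while $G_1$ is bounded and Lipschitz on the support of $\chi_n$ by the local Lipschitz estimate established just before the Proposition, each $G_1^n$ is globally Lipschitz on $\mathbb R^d$. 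By Theorem~\ref{th: new Malliavin differentiability} we have $X_i(t)\in\widetilde{\mathbb D}^{1,2}$ for every $i$, and by Theorem~\ref{th: absolute continuity of price} the law of $X(t)$ is absolutely continuous with respect to the Lebesgue measure on $\mathbb R^d$ (so $\partial_{x_i}G_1(X(t))$ is a.s.\ well defined); hence the chain rule \cite[Proposition~1.2.4]{Nualart2006} applies and gives $G_1^n(X(t))\in\widetilde{\mathbb D}^{1,2}$ with
\[
    \widetilde D G_1^n(X(t)) = \sum_{i=1}^d \frac{\partial G_1^n}{\partial x_i}(X(t))\,\widetilde D X_i(t)
    = \sum_{i=1}^d\Bigl(\chi_n(X(t))\tfrac{\partial G_1}{\partial x_i}(X(t)) + G_1(X(t))\tfrac{\partial\chi_n}{\partial x_i}(X(t))\Bigr)\widetilde D X_i(t).
\]

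Next I would verify the two convergences required for closedness of $\widetilde D$. The almost sure convergences $\chi_n(X(t))\to 1$ and $\partial_{x_i}\chi_n(X(t)) = n^{-1}(\partial_{x_i}\chi)(X(t)/n)\to 0$ are immediate, so $G_1^n(X(t))\to G_1(X(t))$ and $\widetilde D G_1^n(X(t))\to\sum_i\partial_{x_i}G_1(X(t))\widetilde D X_i(t)$ a.s. For the $L^2$-domination, recall that $F$ — and hence $G_1$ and each $\partial_{x_i}G_1$ — has at most polynomial growth in the variables $e^{x_1},\dots,e^{x_d}$ and $e^{-x_1}$, which together with $\mathbb E[\sup_{t\in[0,T]}S_i^r(t)]<\infty$ for every $r\in\mathbb R$ (Theorem~\ref{th: properties of S}) shows that $G_1(X(t))$ and $\partial_{x_i}G_1(X(t))$ lie in every $L^p(\Omega)$. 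Combining this with the explicit form of $\widetilde D X_i(t)$ from Theorem~\ref{th: new Malliavin differentiability} and Corollary~\ref{cor: Malliavin differentiability of log-price}, the boundedness of each $Y_i$, the square-integrability of the kernels $\mathcal K_i$ and the fact that the sandwich random variables $\Lambda_{\lambda,i}$ have moments of all orders — i.e.\ estimates analogous to Step~1 of the proof of Theorem~\ref{th: Malliavin differentiability} and to the last display in the proof of Proposition~\ref{prop: chain rule for locally Lipschitz} — one obtains by the Cauchy–Schwarz inequality that
\[
    \mathbb E\Bigl[\bigl(|\partial_{x_i}G_1(X(t))| + C|G_1(X(t))|\bigr)^2 \int_0^T \lVert\widetilde D_u X_i(t)\rVert^2_{\mathbb R^{2d}}\,du\Bigr] < \infty, \qquad i=1,\dots,d,
\]
with $C:=\max_i\lVert\partial_{x_i}\chi\rVert_\infty$. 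This provides an $L^2(\Omega;L^2([0,T];\mathbb R^{2d}))$-integrable dominating function for $\widetilde D G_1^n(X(t))$ and the $L^2(\Omega)$-integrable bound $|G_1(X(t))|$ for $G_1^n(X(t))$; the dominated convergence theorem then yields $G_1^n(X(t))\to G_1(X(t))$ in $L^2(\Omega)$ and $\widetilde D G_1^n(X(t))\to\sum_i\partial_{x_i}G_1(X(t))\widetilde D X_i(t)$ in $L^2(\Omega;L^2([0,T];\mathbb R^{2d}))$. By closedness of $\widetilde D$ we conclude $G_1(X(t))\in\widetilde{\mathbb D}^{1,2}$ with the asserted derivative.

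The only genuinely technical point — and the step I expect to cost the most work — is the moment estimate displayed above: one must carefully separate, via Hölder's inequality, the exponential-in-$X$ growth of $G_1$ and $\partial_{x_i}G_1$ (controlled through the finiteness of all real-order moments of $S_i(t)$ and of $1/S_1(t)$) from the moment bounds on $\widetilde D X_i(t)$ (which reduce, through Corollary~\ref{cor: Malliavin differentiability of log-price} and Theorem~\ref{th: new Malliavin differentiability}, to bounds on $DY_i$, the boundedness of $Y_i$ and square-integrability of $\mathcal K_i$), exactly as in Step~1 of Theorem~\ref{th: Malliavin differentiability}. Everything else is a routine repetition of the one-dimensional argument of Proposition~\ref{prop: chain rule for locally Lipschitz}.
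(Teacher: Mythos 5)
Your proposal is correct and takes essentially the same route as the paper, which merely sketches the proof by saying one should approximate $G_1(X(t))$ with globally Lipschitz functions exactly as in Proposition \ref{prop: chain rule for locally Lipschitz} and then use the classical chain rule (valid thanks to Theorem \ref{th: absolute continuity of price}), closedness of $\widetilde D$, and the polynomial-growth/moment bounds from Theorems \ref{th: properties of S}, \ref{th: new Malliavin differentiability} and \ref{th: Malliavin differentiability}. The only difference is cosmetic: you use the multiplicative smooth cutoff $G_1\chi_n$ (whose extra term $G_1\nabla\chi_n$ you correctly dominate) instead of the paper's one-dimensional device of truncating the derivative, and your limiting and domination arguments coincide with the intended ones.
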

We are now ready to formulate an analogue of Theorem \ref{th: representation} for the basket option.
\begin{theorem}\label{th: representation d-dim}
    Let $S = (S_1,...,S_d)$ be given by \eqref{eq: definition of price process}, $X=(X_1,...,X_d)$ be the corresponding log-price process, $X_i = \log S_i$, $f$: $\mathbb R \to \mathbb R$ satisfy Assumption \ref{assum: 1-dim payoff}, $F(s) := \int_0^s f(u)du$ and $\alpha_i > 0$, $i=1,...,d$. Then
    \begin{equation}\label{eq: representation 2, d-dim}
        \mathbb E f\left(\sum_{i=1}^d \alpha_i S_i(T)\right) = \mathbb E \left[ \frac{F\left(\sum_{i=1}^d \alpha_i S_i(T)\right)}{\alpha_1 S_1(T)} \left( 1 + \frac{1}{T u_{1,1}} \int_0^T \frac{1}{Y_1(u)}d\widetilde W_1(u) \right) \right].
    \end{equation}
\end{theorem}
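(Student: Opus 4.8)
The plan is to follow the argument behind representation \eqref{eq: representation 2, 1-dim} of Theorem \ref{th: representation}, now with the function $G_1$ replacing $G$ and with the transformed Brownian motion $\widetilde W$ of Subsection \ref{ssec: Malliavin w.r.t. changed BM} serving as the underlying noise. The first step is to differentiate $G_1(x_1,\dots,x_d) = F\big( {\textstyle\sum_{i=1}^d} \alpha_i e^{x_i} \big) / \big( \alpha_1 e^{x_1} \big)$ in $x_1$ by the ordinary product and chain rules, obtaining
\[
    \frac{\partial G_1}{\partial x_1}(x_1,\dots,x_d) = f\Big( {\textstyle\sum_{i=1}^d} \alpha_i e^{x_i} \Big) - \frac{F\big( {\textstyle\sum_{i=1}^d} \alpha_i e^{x_i} \big)}{\alpha_1 e^{x_1}}.
\]
Evaluating at the log-price vector $X(T) = (X_1(T),\dots,X_d(T))$ and using $S_i = e^{X_i}$ this becomes
\[
    \frac{\partial G_1}{\partial x_1}(X(T)) = f\Big( {\textstyle\sum_{i=1}^d} \alpha_i S_i(T) \Big) - G_1(X(T)), \qquad G_1(X(T)) = \frac{F\big( {\textstyle\sum_{i=1}^d} \alpha_i S_i(T) \big)}{\alpha_1 S_1(T)}.
\]

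Next I would apply the chain rule of Proposition \ref{prop: chain rule for locally Lipschitz, d-dim} together with Theorem \ref{th: new Malliavin differentiability}, according to which $\widetilde D^1_u X_i(T) = 0$ for $i \ne 1$ while $\widetilde D^1_u X_1(T) = u_{1,1} Y_1(u)\, \mathbbm 1_{[0,T]}(u)$; hence, in the $\widetilde W_1$-coordinate only the $i = 1$ term of the chain rule survives and
\[
    \widetilde D^1_u G_1(X(T)) = \Big( f\big( {\textstyle\sum_{i=1}^d} \alpha_i S_i(T) \big) - G_1(X(T)) \Big)\, u_{1,1}\, Y_1(u)\, \mathbbm 1_{[0,T]}(u).
\]
Multiplying by $\tfrac{1}{T u_{1,1} Y_1(u)}$, integrating in $u$ over $[0,T]$, taking expectations and invoking the duality between the Malliavin derivative $\widetilde D$ and the Skorokhod integral — which coincides here with the It\^o integral since $Y_1$ is $\mathbb F$-adapted and, by Remark \ref{rem: properties of Y}, bounded away from zero — then yields, because $\int_0^T \tfrac{1}{T}\, du = 1$,
\[
    \mathbb E\Big[ f\big( {\textstyle\sum_{i=1}^d} \alpha_i S_i(T) \big) \Big] - \mathbb E\big[ G_1(X(T)) \big] = \frac{1}{T u_{1,1}}\, \mathbb E\left[ G_1(X(T)) \int_0^T \frac{1}{Y_1(u)}\, d\widetilde W_1(u) \right].
\]
Rearranging and substituting the expression for $G_1(X(T))$ obtained above gives precisely \eqref{eq: representation 2, d-dim}.

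The main obstacle is to make the duality step rigorous, i.e.\ to verify that $G_1(X(T))$ and the Skorokhod integrand $\tfrac{1}{T u_{1,1} Y_1(\cdot)}\mathbbm 1_{[0,T]}$ lie in the classes for which the integration-by-parts formula applies and that every expectation above is finite. For this I would combine: polynomial growth of $f$, and hence of $F$, from Assumption \ref{assum: 1-dim payoff}; the moment bounds $\mathbb E\big[ \sup_{t\in[0,T]} S_i^{r}(t) \big] < \infty$ for every real exponent $r$ from Theorem \ref{th: properties of S} (in particular for negative exponents of $S_1$); the bounds $\mathbb E\big[ \sup_{t\in[0,T]} Y_1^{-r}(t) \big] < \infty$ for every $r > 0$ from Remark \ref{rem: properties of Y}, which together with the Burkholder--Davis--Gundy inequality show that $\int_0^T Y_1^{-1}(u)\, d\widetilde W_1(u)$ has moments of all orders; and finally H\"older's inequality to conclude that $G_1(X(T)) \int_0^T Y_1^{-1}(u)\, d\widetilde W_1(u) \in L^1(\Omega)$. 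These estimates are of the same nature as those in Step~1 of the proof of Theorem \ref{th: Malliavin differentiability} and in the proof of Proposition \ref{prop: chain rule for locally Lipschitz, d-dim}, so once they are in place the remaining computation is routine.
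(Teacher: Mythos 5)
Your proposal is correct and follows essentially the same route as the paper's proof: you compute $\partial G_1/\partial x_1$, apply the chain rule of Proposition \ref{prop: chain rule for locally Lipschitz, d-dim} together with the explicit form of $\widetilde D^1 X_1(T)$ from Theorem \ref{th: new Malliavin differentiability}, and then use Malliavin--Skorokhod duality (the It\^o integral, since $1/Y_1$ is adapted and bounded) before rearranging. The only difference is that you spell out the integrability checks (polynomial growth, moment bounds from Theorem \ref{th: properties of S} and Remark \ref{rem: properties of Y}) slightly more explicitly than the paper does, which is fine.
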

\begin{proof}
    Note that 
    \[
        \frac{\partial G_1(x_1,...,x_d)}{\partial x_1} = f\left( \sum_{i=1}^d \alpha_i e^{x_i} \right)  - \frac{1}{\alpha_1 e^{x_1}} F \left(\sum_{i=1}^d \alpha_i e^{x_i}\right)
    \]
    hence, taking into account Proposition \ref{prop: chain rule for locally Lipschitz, d-dim} and Theorem \ref{th: new Malliavin differentiability},
    \begin{align*}
        \mathbb E \left[ f\left(\sum_{i=1}^d \alpha_i S_i(T)\right) \right] & = \mathbb E \left[ \frac{1}{T} \int_0^T \left(f\left(\sum_{i=1}^d \alpha_i e^{X_i(T)}\right) - \frac{1}{\alpha_1 e^{X_1(T)}} F\left(\sum_{i=1}^d \alpha_i e^{X_i(T)}\right)\right) dv\right]
        \\
        &\quad + \mathbb E \left[ \frac{1}{\alpha_1 e^{X_1(T)}} F\left(\sum_{i=1}^d \alpha_i e^{X_i(T)}\right)\right]
        \\
        & = \mathbb E \left[ \frac{1}{T} \int_0^T \frac{\partial G_1(X(T))}{\partial x_1} dv\right] + \mathbb E \left[ G_1(X(T)) \right]
        \\
        &= \mathbb E \left[ \frac{1}{T} \int_0^T \frac{\partial G_1(X(T))}{\partial x_1} \widetilde D^1_vX_1(T) \frac{1}{\widetilde D^1_vX_1(T)} dv\right] + \mathbb E \left[ G_1(X(T)) \right]
        \\
        & = \mathbb E \left[ \frac{1}{T} \int_0^T \widetilde D_v^1 G_1(X(T)) \frac{1}{u_{1,1} Y_1(v)} dv\right] + \mathbb E \left[ G_1(X(T)) \right]
        \\
        & = \mathbb E \left[ \frac{1}{Tu_{1,1}} G_1(X(T))\int_0^T \frac{1}{ Y_1(v)} d\widetilde W_1(v)\right] + \mathbb E \left[ G_1(X(T)) \right]
        \\
        & = \mathbb E \left[ \frac{F\left(\sum_{i=1}^d \alpha_i S_i(T)\right)}{\alpha_1 S_1(T)} \left(1+ \frac{1}{Tu_{1,1}} \int_0^T \frac{1}{ Y_1(v)} d\widetilde W_1(v) \right)  \right],
    \end{align*}
    as required.
\end{proof}

\subsection{Quadrature via Malliavin integration under a change of measure}\label{subsec: Malliavin duality approach, measure change}

Theorems \ref{th: representation} and \ref{th: representation d-dim} concern the expectation w.r.t. the physical measure, i.e. the obtained representation of the expected payoff is not the option price. In this Section, we will modify the quadrature technique to account for a measure change.

Consider a martingale measure with the density
\begin{equation}\label{eq: minimal martingale measure}
    \mathcal E_T := \mathcal E_T \left\{ -\sum_{j=1}^d \int_0^\cdot \left(\sum_{k=1}^d \ell^{(-1)}_{j,k} \frac{\widetilde \mu_k(s)}{Y_k(s)} \right)dW_j(s) \right\},
\end{equation}
where $\ell^{(-1)}_{j,k}$ and $\widetilde \mu_k(s)$ are given in Subsection \ref{ssec: martingale measures}. 
\begin{remark}
    In the literature, the measure with the density \eqref{eq: minimal martingale measure} is commonly referred to as the \textit{minimal martingale measure}, see e.g. \cite{Schweizer_1995}.
\end{remark}

Recall that
\[
    W = \Lc^{-1} \mathcal U \widetilde W,
\]
so denote $\rho_{j,i}$ the values such that
\[
    W_j(t) = \sum_{i=1}^{2d} \rho_{j,i} \widetilde W_i(t),
\]
i.e.
\begin{align*}
    \mathcal E_T &= \mathcal E_T \left\{ -\sum_{j=1}^d \sum_{i=1}^{2d} \int_0^\cdot \rho_{j,i} \left(\sum_{k=1}^d \ell^{(-1)}_{j,k} \frac{\widetilde \mu_k(s)}{Y_k(s)} \right)d \widetilde W_i(s) \right\}
    \\
    & = \mathcal E_T \left\{ -\sum_{i=1}^{2d}  \int_0^\cdot \left(\sum_{j=1}^d\rho_{j,i} \left(\sum_{k=1}^d \ell^{(-1)}_{j,k} \frac{\widetilde \mu_k(s)}{Y_k(s)}\right) \right)d \widetilde W_i(s) \right\}
    \\
    & = \exp\left\{ -\sum_{i=1}^{2d}  \int_0^T \left(\sum_{j=1}^d\rho_{j,i} \left(\sum_{k=1}^d \ell^{(-1)}_{j,k} \frac{\widetilde \mu_k(s)}{Y_k(s)}\right) \right)d \widetilde W_i(s)\right\} \times
    \\
    &\qquad \times \exp\left\{- \frac{1}{2} \sum_{i=1}^{2d}  \int_0^T \left(\sum_{j=1}^d\rho_{j,i} \left(\sum_{k=1}^d \ell^{(-1)}_{j,k} \frac{\widetilde \mu_k(s)}{Y_k(s)}\right) \right)^2 d s \right\}.
\end{align*}

Note that each $Y_i$ is bounded away from zero by $\varphi_i^* := \min_{t\in[0,T]} \varphi_i(t)$ and hence $\frac{1}{Y_i(s)}$ can be represented as follows:
\[
    \frac{1}{Y_i(s)} = h_i(Y_i(s)), \quad h_i(y) := \begin{cases}
        \frac{1}{y}, &\quad y > \varphi_i^*,
        \\
        -\frac{1}{(\varphi_i^*)^2} y + \frac{2}{\varphi_i^*}, &\quad y\le \varphi_i^*.
    \end{cases}
\]
The function $h_i$ is bounded, continuously differentiable and has bounded derivative, hence by the classical chain rule $\frac{1}{Y_i(s)} \in \widetilde{\mathbb D}^{1,2}$ and
\[
    \widetilde D^1 \left[\frac{1}{Y_i(s)}\right] = 0. 
\]
In particular, this implies that
\[
     \widetilde D^1_v \left[ - \frac{1}{2} \sum_{i=1}^{2d}  \int_0^T \left(\sum_{j=1}^d\rho_{j,i} \left(\sum_{k=1}^d \ell^{(-1)}_{j,k} \frac{\widetilde \mu_k(s)}{Y_k(s)}\right) \right)^2 d s \right] = 0
\]
and
\begin{align*}
    \widetilde D_v^1 &\left[ -\sum_{i=1}^{2d}  \int_0^T \left(\sum_{j=1}^d\rho_{j,i} \left(\sum_{k=1}^d \ell^{(-1)}_{j,k} \frac{\widetilde \mu_k(s)}{Y_k(s)}\right) \right)d \widetilde W_i(s) \right]
    \\
    & = - \sum_{j=1}^d  \sum_{k=1}^d \rho_{j,1} \ell^{(-1)}_{j,k} \frac{\widetilde \mu_k(v)}{Y_k(v)} 
    \\
    & = \sum_{k=1}^d \left(- \sum_{j=1}^d  \rho_{j,1} \ell^{(-1)}_{j,k}\right) \frac{\widetilde \mu_k(v)}{Y_k(v)}
    \\
    &=: \sum_{k=1}^d \beta_k \frac{\widetilde \mu_k(v)}{Y_k(v)},
\end{align*}
where $\beta_k := \left(- \sum_{j=1}^d  \rho_{j,1} \ell^{(-1)}_{j,k}\right)$. Furthermore, using the boundedness of $Y$ and the chain rule from Proposition \ref{prop: chain rule}, it is easy to establisth that $\mathcal E_T \in \widetilde {\mathbb D}^{1,2}$ and
\[
    \widetilde D^1_v \mathcal E_T = \mathcal E_T \sum_{k=1}^d \beta_k \frac{\widetilde \mu_k(v)}{Y_k(v)}.
\]
Let now $f$: $\mathbb R \to \mathbb R$ satisfy Assumption \ref{assum: 1-dim payoff}, $\alpha_i >0$, $i=1,...,d$. Then the following modification of Theorem \ref{th: representation d-dim} is true.
\begin{theorem}\label{th: representation d-dim measure change}
    Let $S = (S_1,...,S_d)$ be given by \eqref{eq: definition of price process}, $X=(X_1,...,X_d)$ be the corresponding log-price process, $X_i = \log S_i$, $f$: $\mathbb R \to \mathbb R$ satisfy Assumption \ref{assum: 1-dim payoff}, $F(s) := \int_0^s f(u)du$ and $\alpha_i > 0$, $i=1,...,d$. Then
    \begin{equation}\label{eq: representation 2, d-dim measure change}
    \begin{aligned}
        \mathbb E &\left[ \mathcal E_T f\left(\sum_{i=1}^d \alpha_i S_i(T)\right) \right]
        \\
        &= \mathbb E \left[ \mathcal E_T  \frac{F\left(\sum_{i=1}^d \alpha_i S_i(T)\right)}{\alpha_1 S_1(T)} \left(1 + \frac{1}{Tu_{1,1}}\left( \int_0^T\frac{1}{Y_1(v)} d\widetilde W_1(v) - \sum_{k=1}^d\int_0^T  \beta_k \frac{\widetilde \mu_k(v)}{Y_k(v)Y_1(v)}dv\right)\right)\right].
    \end{aligned}
    \end{equation}
\end{theorem}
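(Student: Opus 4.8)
The plan is to adapt the argument of Theorem~\ref{th: representation d-dim} to the measure with density $\mathcal E_T$, the key new ingredient being that $\mathcal E_T$ is itself Malliavin differentiable with an explicitly computable derivative. First I would recall the chain rule of Proposition~\ref{prop: chain rule for locally Lipschitz, d-dim}, which gives $G_1(X(T)) \in \widetilde{\mathbb D}_{1,2}$ together with the identity $\widetilde D^1_v G_1(X(T)) = \frac{\partial G_1(X(T))}{\partial x_1}\widetilde D^1_v X_1(T) = u_{1,1}Y_1(v)\frac{\partial G_1(X(T))}{\partial x_1}$ on $[0,T]$, since $\widetilde D^1_v X_i(T) = 0$ for $i\ne 1$ by Theorem~\ref{th: new Malliavin differentiability}. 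Next, using the product rule for the Malliavin derivative applied to the product $\mathcal E_T\, G_1(X(T))$ (both factors lie in $\widetilde{\mathbb D}_{1,2}$ and, by the boundedness of $\mathcal E_T$ in $L^p$ for all $p$ coming from Novikov together with the polynomial growth of $f$, the product and its derivative are square integrable), I would write
\[
    \widetilde D^1_v\bigl[\mathcal E_T G_1(X(T))\bigr] = \mathcal E_T\,\widetilde D^1_v G_1(X(T)) + G_1(X(T))\,\widetilde D^1_v \mathcal E_T,
\]
and substitute the two expressions computed just before the statement: $\widetilde D^1_v G_1(X(T)) = u_{1,1}Y_1(v)\bigl(f(\sum_i\alpha_i e^{X_i(T)}) - \frac{1}{\alpha_1 e^{X_1(T)}}F(\sum_i\alpha_i e^{X_i(T)})\bigr)$ and $\widetilde D^1_v \mathcal E_T = \mathcal E_T\sum_{k=1}^d\beta_k\frac{\widetilde\mu_k(v)}{Y_k(v)}$.

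Then I would run the duality (integration-by-parts) computation exactly as in Theorem~\ref{th: representation d-dim}. Dividing by $u_{1,1}Y_1(v)$, integrating over $[0,T]$, dividing by $T$, and taking expectations, the term $\widetilde D^1_v[\mathcal E_T G_1(X(T))]$ pairs against the Skorokhod integrand $\frac{1}{T u_{1,1} Y_1(v)}$ to produce, via the duality relation between $\widetilde D^1$ and the It\^o integral $\int_0^T\cdot\,d\widetilde W_1$,
\[
    \mathbb E\left[\mathcal E_T G_1(X(T))\,\frac{1}{Tu_{1,1}}\int_0^T\frac{1}{Y_1(v)}\,d\widetilde W_1(v)\right]
    = \mathbb E\left[\frac{1}{Tu_{1,1}}\int_0^T \widetilde D^1_v\bigl[\mathcal E_T G_1(X(T))\bigr]\frac{1}{Y_1(v)}\,dv\right].
\]
Expanding the right-hand side with the product rule and the two derivative formulas, the $\mathcal E_T\widetilde D^1_v G_1$ contribution collapses (the factors $Y_1(v)$ cancel, the $v$-integral becomes multiplication by $T$, and the $u_{1,1}$ cancels) into $\mathbb E[\mathcal E_T(f(\sum\alpha_i S_i(T)) - G_1(X(T)))]$, while the $G_1\widetilde D^1_v\mathcal E_T$ contribution yields the cross term $\mathbb E\bigl[\mathcal E_T G_1(X(T))\frac{1}{Tu_{1,1}}\sum_{k=1}^d\int_0^T\beta_k\frac{\widetilde\mu_k(v)}{Y_k(v)Y_1(v)}\,dv\bigr]$. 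Rearranging, using $G_1(X(T)) = \frac{F(\sum_i\alpha_i S_i(T))}{\alpha_1 S_1(T)}$ and moving that cross term to the right, gives precisely \eqref{eq: representation 2, d-dim measure change}.

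The main obstacle is the justification of the integrability conditions needed to invoke the product rule for $\mathcal E_T G_1(X(T))$ and the Malliavin--Skorokhod duality: one must check that $\mathcal E_T G_1(X(T)) \in \widetilde{\mathbb D}_{1,2}$ and that the integrand $\frac{1}{Tu_{1,1}Y_1(\cdot)}$ lies in the domain of the Skorokhod integral against $\widetilde W_1$, i.e. is a process whose It\^o integral is well-defined and square-integrable. For the first point one combines Novikov's criterion (giving $\mathbb E[\mathcal E_T^p] < \infty$ for every $p$, since each $Y_k$ is bounded away from zero) with the polynomial growth of $f$ and the moment bounds of Theorem~\ref{th: properties of S} and Remark~\ref{rem: properties of Y}, together with a H\"older inequality; for the second, adaptedness and the lower bound $Y_1 \ge \varphi_1^* > 0$ make $\int_0^T Y_1^{-2}(v)\,dv$ bounded, so the It\^o integral is a genuine $L^2$-martingale and the Skorokhod integral coincides with it. Once these checks are in place the computation is a routine bookkeeping of the product rule, entirely parallel to the proof of Theorem~\ref{th: representation d-dim}.
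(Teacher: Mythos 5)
Your proposal is correct and follows essentially the same route as the paper's proof: the product rule for $\widetilde D^1$ applied to $\mathcal E_T G_1(X(T))$ with the explicit derivative $\widetilde D^1_v\mathcal E_T = \mathcal E_T\sum_k\beta_k\frac{\widetilde\mu_k(v)}{Y_k(v)}$, duality against the adapted, bounded integrand $\frac{1}{Tu_{1,1}Y_1(\cdot)}$, and the identity $f\bigl(\sum_i\alpha_i e^{X_i(T)}\bigr) - G_1(X(T)) = \frac{\partial G_1}{\partial x_1}(X(T))$, with the same integrability justifications (Novikov, polynomial growth of $f$, moments of $S$, $Y$ bounded away from zero). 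The only cosmetic difference is that you run the duality computation from the Skorokhod-integral side and rearrange, whereas the paper starts from $\mathbb E[\mathcal E_T f]$ and decomposes; the content is identical.
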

\begin{proof}
    Denote 
    \[
        G_1(x_1,...,x_d) := \frac{F\left( \sum_{i=1}^d \alpha_i e^{x_i} \right)}{\alpha_1 e^{x_1}}, \quad (x_1,...,x_d)  \in \mathbb R^d.
    \]
    Using polynomial growth of $f$, boundedness of $Y_i$, existence of all moments of $S_i$ and Proposition \ref{prop: chain rule}, it is easy to prove that
    \[
        \mathcal E_T G_1(X(T)) \in \widetilde{\mathbb D}^{1,2}
    \]
    and
    \begin{align*}
        \widetilde D_v^1 \left[\mathcal E_T G_1(X(T))\right] & =  \mathcal E_T \widetilde D_v^1 G_1(X(T)) + G_1(X(T)) \widetilde D_v^1 \mathcal E_T 
        \\
        &= \mathcal E_T \widetilde D_v^1 G_1(X(T)) + \mathcal E_T G_1(X(T)) \sum_{k=1}^d \beta_k \frac{\widetilde \mu_k(v)}{Y_k(v)}.
    \end{align*}
    Next, observe that
    \begin{align*}
        \mathbb E &\left[ \mathcal E_T f\left(\sum_{i=1}^d \alpha_i S_i(T)\right) \right] = \mathbb E \left[ \mathcal E_T \left( f\left(\sum_{i=1}^d \alpha_i e^{X_i(T)}\right) - G_1(X(T)) \right) \right] + \mathbb E \left[ \mathcal E_T  G_1(X(T)) \right]
    \end{align*}
    and
    \begin{align*}
        \mathbb E &\left[ \mathcal E_T \left( f\left(\sum_{i=1}^d \alpha_i e^{X_i(T)}\right) - G_1(X(T)) \right) \right] = \mathbb E \left[ \mathcal E_T \frac{\partial G_1(X(T))}{\partial x_1} \right]
        \\
        & = \mathbb E \left[ \frac{1}{T} \int_0^T \mathcal E_T \frac{\partial G_1(X(T))}{\partial x_1} \widetilde D^1_v X_1(T) \frac{1}{\widetilde D^1_v X_1(T)} dv\right]
        \\
        &= \mathbb E \left[ \frac{1}{Tu_{1,1}} \int_0^T \mathcal E_T \widetilde D^1_v G_1(X(T)) \frac{1}{Y_1(v)} dv\right]
        \\
        & = \frac{1}{Tu_{1,1}}\left(\mathbb E \left[  \int_0^T \widetilde D^1_v\left[\mathcal E_T  G_1(X(T))\right] \frac{1}{Y_1(v)} dv\right]  - \mathbb E \left[ \int_0^T G_1(X(T))\widetilde D^1_v\mathcal E_T  \frac{1}{Y_1(v)} dv\right]\right)
        \\
        & = \frac{1}{Tu_{1,1}}\left(\mathbb E \left[ \mathcal E_T  G_1(X(T)) \int_0^T\frac{1}{Y_1(v)} d\widetilde W_1(v)\right]  - \mathbb E \left[  G_1(X(T))\mathcal E_T \int_0^T \sum_{k=1}^d \beta_k \frac{\widetilde \mu_k(v)}{Y_k(v)Y_1(v)}dv\right]\right)
        \\
        & = \mathbb E \left[ \mathcal E_T  \frac{F\left(\sum_{i=1}^d \alpha_i S_i(T)\right)}{\alpha_1 S_1(T)} \frac{1}{Tu_{1,1}}\left( \int_0^T\frac{1}{Y_1(v)} d\widetilde W_1(v) - \sum_{k=1}^d\int_0^T  \beta_k \frac{\widetilde \mu_k(v)}{Y_k(v)Y_1(v)}dv\right)\right].
    \end{align*}
    Taking into account all of the above, we can finally write:
    \begin{align*}
        \mathbb E &\left[ \mathcal E_T f\left(\sum_{i=1}^d \alpha_i S_i(T)\right) \right]
        \\
        &= \mathbb E \left[ \mathcal E_T  \frac{F\left(\sum_{i=1}^d \alpha_i S_i(T)\right)}{\alpha_1 S_1(T)} \left(1 + \frac{1}{Tu_{1,1}}\left( \int_0^T\frac{1}{Y_1(v)} d\widetilde W_1(v) - \sum_{k=1}^d\int_0^T  \beta_k \frac{\widetilde \mu_k(v)}{Y_k(v)Y_1(v)}dv\right)\right)\right]
    \end{align*}
    which ends the proof.
\end{proof}

\begin{remark}
    Note that the martingale measure should not necessarily be the minimal one: it is possible to obtain a representation of the type \eqref{eq: representation 2, d-dim measure change} for other martingale measures provided that the corresponding densities are Malliavin differentiable and one can use the chain rules as in the proof of Theorem \ref{th: representation d-dim measure change}.
\end{remark}

\section{Price approximation for options with discontinuous payoffs}\label{sec: numerics}

In this Section, we provide a numerical algorithm for computation of $\mathbb E\left[f\left(\sum_{i=1}^d \alpha_i S_i(T)\right)\right]$, both with respect to physical and minimal martingale measures. We will exploit the representations obtained in Theorems \ref{th: representation}, \ref{th: representation d-dim} and \ref{th: representation d-dim measure change}: absence of discontinuities under the expectation sign allows to avoid deterioration of the convergence speed. 

The Section is structured as follows. In Subsection \ref{subsec: numerical scheme}, we describe the numerical scheme used to simulate $Y$, $X$ and $S$ (the corresponding numerical approximations will be denoted by  $\widehat Y$, $\widehat X$ and $\widehat S$). In Subsection \ref{subsec: double approximation}, we consider a straightforward way of computing $\mathbb E\left[f\left(\sum_{i=1}^d \alpha_i S_i(T)\right)\right]$ by plugging in $\widehat Y$ and $\widehat S$ into \eqref{eq: representation 1, 1-dim}--\eqref{eq: representation 2, 1-dim}, \eqref{eq: representation 2, d-dim} or \eqref{eq: representation 2, d-dim measure change}. An estimate of the approximation error is provided. In Subsection \ref{subsec: Gaussian approximation}, we discuss another algorithm for computation of the expectation for a one-dimensional case: the vector $\left(X(T), \int_0^T \frac{1}{Y(v)}dv \right)^T$ turns out to be conditionally Gaussian w.r.t. an appropriate $\sigma$-field and one can use this fact to approximate \eqref{eq: representation 1, 1-dim} without simulating $\widehat X(T)$ at all.  

\subsection{Numerical schemes}\label{subsec: numerical scheme}

\paragraph{Numerical scheme for the volatility process.} For the stochastic volatility processes $Y_i$, $i=1,...,d$, we will use the \textit{drift-implicit} Euler scheme described in \cite{DNMYT2022}. For reader's convenience, we briefly describe it below as well as give the necessary convergence results.

Let $\{0=t_0 < t_1<...<t_N=T\}$ be a uniform partition of $[0,T]$, $t_k := \frac{Tk}{N}$, $k=0,1,..., N$, with the mesh $\Delta_N:=\frac{T}{N}$ such that
\begin{equation}\label{eq: condition on the mesh}
    \Delta_N \max_{i=1,...,d} \sup_{(t,y) \in \mathcal D^i_{0,0}} \frac{\partial b_i}{\partial y}(t,y)  < 1.
\end{equation}
\begin{remark}
    Note that such $\Delta_N$ exists due to Assumption \ref{assum: assumption on sandwiched drift}(iv).
\end{remark}
Define $\widehat Y_i(t)$ as follows:
\begin{equation}\label{eq: definition of backward Euler scheme}
\begin{aligned}
    \widehat Y_i(0) &= Y_i(0),
    \\
    \widehat Y_i(t_{k+1}) &= \widehat Y_i(t_{k}) + b_i(t_{k+1}, \widehat Y_i(t_{k+1}))\Delta_N + (Z_i(t_{k+1}) - Z_i(t_k)),
    \\
    \widehat Y_i(t) &= \widehat Y_i(t_k), \quad t\in[t_k, t_{k+1}),
\end{aligned}
\end{equation}
where $Z_i(t) := \int_0^t \mathcal K_i(t,s)dB^Y_i(s)$. Note that the second expression in \eqref{eq: definition of backward Euler scheme} is considered as an equation with respect to $\widehat Y_i(t_{k+1})$ and, under Assumption \ref{assum: assumption on sandwiched drift}, it has a unique solution such that $\widehat Y_i(t_{k+1}) \in (\varphi_i(t_{k+1}), \psi_i(t_{k+1}))$, see \cite[Section 2]{DNMYT2022} for more details. The convergence results from \cite{DNMYT2022} for this scheme are summarized in the following theorem.

\begin{theorem}\label{th: drift-implicit Euler scheme}
    Let Assumptions \ref{assum: assumptions on kernels} and \ref{assum: assumption on sandwiched drift} hold and the mesh of the partition $\Delta_N$ satisfy \eqref{eq: condition on the mesh}. Fix $i=1,...,d$, take an arbitrary $\lambda_i \in (0,H_i)$, where $H_i$ is from Assumption \ref{assum: assumptions on kernels}, and fix $r\ge 1$. Then there exists a constant $C>0$ that does not depend on the partition such that
    \[
        \mathbb E \left[ \sup_{t\in[0,T]} |Y_i(t) - \widehat Y_i(t)|^r \right] \le C \Delta_N^{\lambda_i r}    
    \]
    and
    \[
        \mathbb E \left[\sup_{t\in[0,T]}\left| \frac{1}{Y_i(t)} - \frac{1}{\widehat Y_i(t)} \right|^r\right] \le C \Delta_N^{\lambda_i r}.
    \]
\end{theorem}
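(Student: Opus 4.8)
The plan is to control the error only at the grid points $t_k$ through a discrete Gronwall-type recursion, and then to pass both to the continuous-time supremum and to $L^r$-moments using uniform-in-$N$ pathwise bounds. Since $\Delta_N\le T$, it is enough to prove each estimate for a single exponent $\lambda\in\left(\tfrac{1}{\gamma_i+1},H_i\right)$ (this interval is nonempty because $\gamma_i>\tfrac1{H_i}-1$): if $\lambda_i>\tfrac{1}{\gamma_i+1}$ take $\lambda=\lambda_i$, and if $\lambda_i\le\tfrac{1}{\gamma_i+1}$ deduce the claim with exponent $\lambda_i$ from the crude inequality $\Delta_N^{\lambda r}\le T^{(\lambda-\lambda_i)r}\Delta_N^{\lambda_i r}$. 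Fix such a $\lambda$.

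I would first record two preliminary facts, both obtained by adapting the pathwise arguments of \cite{DNMYT2020} to the drift-implicit scheme as in \cite{DNMYT2022}. \emph{(a)} The scheme is sandwiched with a uniform margin: there are deterministic $\widehat L_{1,i},\widehat L_{2,i},\widehat\alpha_i>0$, not depending on $N$, such that for all $k$
\[
\varphi_i(t_k)+\frac{\widehat L_{1,i}}{(\widehat L_{2,i}+\Lambda_{\lambda, i})^{\widehat\alpha_i}}\le\widehat Y_i(t_k)\le\psi_i(t_k)-\frac{\widehat L_{1,i}}{(\widehat L_{2,i}+\Lambda_{\lambda, i})^{\widehat\alpha_i}},
\]
in the same spirit as \eqref{eq: upper and lower bounds for sandwiched volatility}; combined with that bound for $Y_i$ and with Proposition~\ref{prop: Holder continuity of noise}, this shows that
\[
\varepsilon_*:=\min_k\min\big(Y_i(t_k)-\varphi_i(t_k),\ \psi_i(t_k)-Y_i(t_k),\ \widehat Y_i(t_k)-\varphi_i(t_k),\ \psi_i(t_k)-\widehat Y_i(t_k)\big)
\]
is a.s.\ positive and satisfies $\E[\varepsilon_*^{-q}]\le C$ for every $q>0$, with $C$ independent of $N$. \emph{(b)} By Remark~\ref{rem: properties of Y} and Assumption~\ref{assum: assumption on sandwiched drift}(ii), the drift $s\mapsto\int_0^s b_i(u,Y_i(u))du$ is Lipschitz in time with a random constant that is polynomial in $\Lambda_{\lambda, i}$ and $\varepsilon_*^{-1}$, hence $Y_i$ is $\lambda$-Hölder with such a random constant $C_\omega$; consequently $\sup_{t\in[t_k,t_{k+1})}|Y_i(t)-\widehat Y_i(t)|\le|Y_i(t_k)-\widehat Y_i(t_k)|+C_\omega\Delta_N^\lambda$, so it suffices to estimate $\E\big[\max_k|Y_i(t_k)-\widehat Y_i(t_k)|^r\big]$.

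For the recursion I would set $\Delta_k:=Y_i(t_k)-\widehat Y_i(t_k)$, subtract the two update rules, insert $\pm\,\Delta_N b_i(t_{k+1},Y_i(t_{k+1}))$ and expand $b_i$ in $y$ at $t_{k+1}$ by the mean-value theorem. Using Assumption~\ref{assum: assumption on sandwiched drift}(ii) (with a regularisation parameter comparable to $\varepsilon_*$) together with the $\lambda$-Hölder continuity of $Y_i$, the one-step residual $R_{k+1}:=\int_{t_k}^{t_{k+1}}\big(b_i(s,Y_i(s))-b_i(t_{k+1},Y_i(t_{k+1}))\big)ds$ satisfies $|R_{k+1}|\le\tfrac{C_\omega}{\varepsilon_*^p}\Delta_N^{1+\lambda}$, and one gets
\[
\Delta_{k+1}=\frac{\Delta_k+R_{k+1}}{1-\Delta_N\,\frac{\partial b_i}{\partial y}(t_{k+1},\xi_{k+1})}
\]
with $\xi_{k+1}$ between $Y_i(t_{k+1})$ and $\widehat Y_i(t_{k+1})$. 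By Assumption~\ref{assum: assumption on sandwiched drift}(iv) the denominator is $\ge 1-c\Delta_N$, which — restricting without loss of generality to meshes with $c\Delta_N\le\tfrac12$ (the finitely many coarser meshes are absorbed into $C$, using that $Y_i,\widehat Y_i$ have moments of all orders) — lies in $[\tfrac12,\infty)$ and obeys $(1-c\Delta_N)^{-N}\le e^{2cT}$. Iterating from $\Delta_0=0$ yields the pathwise bound $\max_k|\Delta_k|\le e^{2cT}N\max_j|R_j|\le\tfrac{C_\omega}{\varepsilon_*^p}\Delta_N^\lambda$; raising to the power $r$ and applying Hölder's inequality against the uniform moment bounds on $C_\omega$ and $\varepsilon_*^{-1}$ gives $\E\big[\max_k|\Delta_k|^r\big]\le C\Delta_N^{\lambda r}$ uniformly in $N$. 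With fact \emph{(b)} this proves the first estimate, and the second follows at once from $\frac{1}{Y_i(t)}-\frac{1}{\widehat Y_i(t)}=\frac{\widehat Y_i(t)-Y_i(t)}{Y_i(t)\widehat Y_i(t)}$ and $Y_i(t),\widehat Y_i(t)\ge\min_{s\in[0,T]}\varphi_i(s)>0$.

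The main obstacle is fact \emph{(a)}: showing that the implicit trajectory stays away from the barriers $\varphi_i$, $\psi_i$ with a margin whose inverse moments are bounded \emph{uniformly in the mesh}. One must argue pathwise that each time $\widehat Y_i$ comes within $a$ of a barrier, the implicit evaluation of $b_i$ at the new node contributes a restoring increment of order $c\Delta_N/a^{\gamma_i}$, and balance this — over runs of consecutive near-barrier steps — against noise increments of order $\Lambda_{\lambda, i}\Delta_N^\lambda$; the delicate point is to keep the resulting constants independent of $N$, and it is precisely here that the assumption $\gamma_i>\tfrac1{H_i}-1$ (equivalently $\lambda>\tfrac1{\gamma_i+1}$) is used, to ensure that the drift singularity dominates the noise at scale $\Delta_N^\lambda$. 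This quantitative sandwiching for the scheme is the content of \cite[Section 2]{DNMYT2022}.
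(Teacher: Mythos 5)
Your proposal is essentially sound, but note that the paper itself contains no proof of Theorem~\ref{th: drift-implicit Euler scheme}: it is imported wholesale from \cite{DNMYT2022} (``The convergence results from \cite{DNMYT2022} for this scheme are summarized in the following theorem''), so the honest comparison is with the argument of that reference rather than with anything in this text. Your reconstruction is the expected one and the core steps check out: the error recursion $\Delta_{k+1}=(\Delta_k+R_{k+1})\big/\bigl(1-\Delta_N\,\tfrac{\partial b_i}{\partial y}(t_{k+1},\xi_{k+1})\bigr)$ is exact (the noise increments cancel), the denominator is controlled by Assumption~\ref{assum: assumption on sandwiched drift}(iv) together with the mesh condition \eqref{eq: condition on the mesh}, the residual is $O(\Delta_N^{1+\lambda})$ with a random constant polynomial in $\Lambda_{\lambda,i}$ and in the inverse distance of $Y_i$ from the barriers (Remark~\ref{rem: properties of Y} supplies all moments), and the reductions ($\lambda_i\le\tfrac1{\gamma_i+1}$ handled by enlarging the exponent, finitely many coarse meshes absorbed into $C$, second estimate via $\tfrac1{Y_i}-\tfrac1{\widehat Y_i}=\tfrac{\widehat Y_i-Y_i}{Y_i\widehat Y_i}$ with both factors bounded below by $\min_t\varphi_i(t)>0$) are all legitimate. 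One substantive remark: the ingredient you single out as the main obstacle — your fact \emph{(a)}, the $N$-uniform quantitative margin of $\widehat Y_i$ from $\varphi_i,\psi_i$ with all inverse moments — is never actually used in your own argument. The residual $R_{k+1}$ involves only the true solution $Y_i$; the mean-value point $\xi_{k+1}$ only requires the qualitative fact $\widehat Y_i(t_{k+1})\in(\varphi_i(t_{k+1}),\psi_i(t_{k+1}))$, which is part of the well-posedness of the implicit step already stated in Subsection~\ref{subsec: numerical scheme} (again with reference to \cite{DNMYT2022}); and the second estimate needs only $\widehat Y_i\ge\min_t\varphi_i>0$. So your proof is simpler than you advertise: the only external input genuinely needed is the solvability of the implicit step inside the corridor, which is precisely what the paper also takes from \cite{DNMYT2022}, while the quantitative sandwiching of the scheme can be dropped from your argument altogether.
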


\paragraph{Numerical scheme for the price process.} Recall that the log-price $X_i$ has the form
\[
    X_i(t) = X_i(0) + \int_0^t \left( \mu_i(s) - \frac{ Y_i^2(s)}{2} \right)ds +  \int_0^t Y_i(s) dB^S_i(s),
\]
where $\mu_i$ is $H_i$-H\"older continuous with $H_i$ being from Assumption \ref{assum: assumptions on kernels}(ii). For a given uniform partition $\{0=t_0 < t_1 < ... < t_N = T\}$ satisfying \eqref{eq: condition on the mesh}, consider the random process $\widehat X_i = \{\widehat X_i(t),~t\in[0,T]\}$ such that for any point $t_n$ of the partition
\[
    \widehat X_i(t_n) = X_i(0) + \frac{T}{N}\sum_{k=0}^{n-1} \left( \mu_i(t_k) - \frac{ \widehat Y_i^2(t_k)}{2} \right) +  \sum_{k=0}^{n-1} \widehat Y_i(t_k) \left(B^S_i(t_{k+1}) - B^S_i(t_k)\right),
\]
where $\widehat Y_i$ is defined by \eqref{eq: definition of backward Euler scheme}, and for any $t\in (t_n, t_{n+1})$
\[
    \widehat X_i(t) = \widehat X_i(t_n).
\]

\begin{theorem}\label{th: approximation of log-price}
    Let Assumptions \ref{assum: assumptions on kernels} and \ref{assum: assumption on sandwiched drift} hold and the mesh of the partition $\Delta_N$ satisfy \eqref{eq: condition on the mesh}. Fix $i=1,...,d$, take an arbitrary $\lambda_i \in (0,H_i)$, where $H_i$ is from Assumption \ref{assum: assumptions on kernels}, and fix $r\ge 1$. Then there exists a constant $C>0$ that does not depend on the partition such that
    \[
        \mathbb E \left[ \sup_{n = 0,1,...,N}|X_i(t_n) - \widehat X_i(t_n)|^r \right] \le C\Delta^{r\lambda_i}_N.
    \]
    In particular,
    \[
        \mathbb E \left[ |X_i(T) - \widehat X_i(T)|^r \right] \le C\Delta^{r\lambda_i}_N.
    \]
\end{theorem}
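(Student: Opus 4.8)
The plan is to exploit the fact that $\widehat X_i$ is an \emph{explicit} Euler scheme, so that the error at the partition points telescopes without any Gronwall argument and reduces to the rate for $\widehat Y_i$ already established in Theorem~\ref{th: drift-implicit Euler scheme}. Since $\Delta_N^{\lambda r}\le\Delta_N^{\lambda' r}$ whenever $\lambda\ge\lambda'$ and $\Delta_N<1$, it suffices to prove the bound for $\lambda_i$ in some subinterval $\left(\tfrac{1}{\gamma_i+1},H_i\right)$, which moreover makes the quantitative sandwich bounds of Remark~\ref{rem: properties of Y} applicable. Writing $\underline s:=\max\{t_k\colon t_k\le s\}$, one has at each partition point
\[
    X_i(t_n)-\widehat X_i(t_n)=D_n+M_n,\qquad D_n:=\int_0^{t_n}\!\Big[\Big(\mu_i(s)-\tfrac{Y_i^2(s)}{2}\Big)-\Big(\mu_i(\underline s)-\tfrac{\widehat Y_i^2(\underline s)}{2}\Big)\Big]ds,
\]
\[
    M_n:=\int_0^{t_n}\bigl(Y_i(s)-\widehat Y_i(\underline s)\bigr)dB^S_i(s),
\]
where both integrands are bounded (both $Y_i$ and $\widehat Y_i$ stay between $\varphi_i$ and $\psi_i$), so that $M_n=\widetilde M(t_n)$ for a genuine continuous $L^r$-martingale $\widetilde M$.

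The first real step is to produce a random H\"older constant $\Xi_i$ of $Y_i$ on $[0,T]$ with $\mathbb E[\Xi_i^r]<\infty$ for all $r$. By Remark~\ref{rem: properties of Y} the path of $Y_i$ stays inside $\mathcal D^i_{\varepsilon_i,\varepsilon_i}$ with $\varepsilon_i:=L_{1,i}\bigl(L_{2,i}+\Lambda_{\lambda_i,i}\bigr)^{-\alpha_i}$; combining this with continuity of $b_i$ and Assumption~\ref{assum: assumption on sandwiched drift}(i)--(ii) gives $\sup_{s\in[0,T]}|b_i(s,Y_i(s))|\le C(1+\varepsilon_i^{-p})$, a random variable with moments of all orders because $\Lambda_{\lambda_i,i}$ has them (Proposition~\ref{prop: Holder continuity of noise}). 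Hence the drift part $\int_0^\cdot b_i(s,Y_i(s))ds$ of $Y_i$ is Lipschitz in time with a random constant possessing all moments, and since $Z_i$ is $\lambda_i$-H\"older with constant $\Lambda_{\lambda_i,i}$, the choice $\Xi_i:=C(1+\varepsilon_i^{-p})+\Lambda_{\lambda_i,i}$ works. The key consequence, used for both $D_n$ and $M_n$, is that for $s\in[t_k,t_{k+1})$
\[
    |Y_i(s)-\widehat Y_i(t_k)|\le|Y_i(s)-Y_i(t_k)|+|Y_i(t_k)-\widehat Y_i(t_k)|\le\Xi_i\Delta_N^{\lambda_i}+\max_{0\le m\le N}|Y_i(t_m)-\widehat Y_i(t_m)|.
\]

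For $D_n$ I would use $H_i$-H\"older continuity of $\mu_i$ (so $|\mu_i(s)-\mu_i(\underline s)|\le C\Delta_N^{\lambda_i}$) and the factorisation $Y_i^2(s)-\widehat Y_i^2(\underline s)=(Y_i(s)-\widehat Y_i(\underline s))(Y_i(s)+\widehat Y_i(\underline s))$ with boundedness of the last factor, obtaining $\sup_n|D_n|\le C\Delta_N^{\lambda_i}(1+\Xi_i)+C\max_m|Y_i(t_m)-\widehat Y_i(t_m)|$; taking $L^r$-norms and invoking $\mathbb E[\Xi_i^r]<\infty$ together with Theorem~\ref{th: drift-implicit Euler scheme} bounds $\mathbb E[\sup_n|D_n|^r]$ by $C\Delta_N^{r\lambda_i}$. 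For $M_n$, Doob's maximal inequality followed by Burkholder--Davis--Gundy gives
\[
    \mathbb E\Bigl[\max_n|M_n|^r\Bigr]\le C\,\mathbb E\Bigl[\Bigl(\int_0^T(Y_i(s)-\widehat Y_i(\underline s))^2ds\Bigr)^{r/2}\Bigr]\le C\,T^{r/2}\,\mathbb E\Bigl[\bigl(\Xi_i\Delta_N^{\lambda_i}+\max_m|Y_i(t_m)-\widehat Y_i(t_m)|\bigr)^r\Bigr],
\]
which is again $\le C\Delta_N^{r\lambda_i}$ after expanding with $(a+b)^r\le2^{r-1}(a^r+b^r)$ and using $\mathbb E[\Xi_i^r]<\infty$ and Theorem~\ref{th: drift-implicit Euler scheme}. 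Combining the two estimates via $|X_i(t_n)-\widehat X_i(t_n)|^r\le2^{r-1}(|D_n|^r+|M_n|^r)$ yields the first assertion, and the ``in particular'' claim is immediate since $T=t_N$ is a partition point.

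The main obstacle I anticipate is not the scheme analysis itself --- which is a direct telescoping estimate precisely because $\widehat X_i$ is explicit --- but the a priori integrability of $\Xi_i$, i.e. controlling the only locally Lipschitz, boundary-singular drift $b_i(s,Y_i(s))$ along the trajectory; this genuinely relies on the quantitative ``sandwichness'' inherited from \cite{DNMYT2020} via Remark~\ref{rem: properties of Y} and on Proposition~\ref{prop: Holder continuity of noise} guaranteeing all moments of $\Lambda_{\lambda_i,i}$.
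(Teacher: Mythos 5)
Your proof is correct and follows essentially the same route as the paper: split the error at the partition points into the drift part (Hölder continuity of $\mu_i$ plus the factorization $Y_i^2-\widehat Y_i^2=(Y_i-\widehat Y_i)(Y_i+\widehat Y_i)$ with boundedness of the second factor) and the stochastic integral, then apply Burkholder--Davis--Gundy and Theorem \ref{th: drift-implicit Euler scheme}. The only real difference is that your construction of the random Hölder constant $\Xi_i$ (and the reduction to $\lambda_i>\tfrac{1}{\gamma_i+1}$) is redundant, since Theorem \ref{th: drift-implicit Euler scheme} already bounds $\mathbb E\bigl[\sup_{t\in[0,T]}|Y_i(t)-\widehat Y_i(t)|^r\bigr]$ over \emph{continuous} time for the piecewise-constant interpolation $\widehat Y_i(s)=\widehat Y_i(\underline s)$ and for any $\lambda_i\in(0,H_i)$, which is exactly what the paper invokes; also note Doob's maximal inequality requires $r>1$, so for $r=1$ apply the maximal form of BDG directly to the supremum, as in the paper.
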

\begin{proof}
    Observe that for any $n=1,...,N$
    \begin{align*}
        |X(t_{n}) - \widehat X(t_n)|^r & \le C \left( \sum_{k=0}^{n-1} \int_{t_k}^{t_{k+1}} |\mu_i(s) - \mu_i(t_k)| ds  \right)^r + C\left( \int_{0}^{t_n} |Y_i^2(s) - \widehat Y_i^2(s)| ds \right)^r
        \\
        &\quad + C \left| \int_0^{t_n} (Y_i(s) - \widehat Y_i(s))dB^S_i(s) \right|^r
        \\
        & \le C \left( \sum_{k=0}^{N-1} \int_{t_k}^{t_{k+1}} |s - t_k|^{\lambda_i} ds  \right)^r + C\left( \int_{0}^{T} |Y_i(s) + \widehat Y_i(s)| |Y_i(s) - \widehat Y_i(s)| ds \right)^r
        \\
        &\quad + C \left( \sup_{t\in[0,T]}\left|\int_0^{t} (Y_i(s) - \widehat Y_i(s))dB^S_i(s)\right| \right)^r
        \\
        & \le C \Delta^{r\lambda_i}_N + C \sup_{t\in[0,T]}|Y_i(t) - \widehat Y_i(t)|^r +  C \left( \sup_{t\in[0,T]}\left|\int_0^{t} (Y_i(s) - \widehat Y_i(s))dB^S_i(s)\right| \right)^r.
    \end{align*}
    Hence, by the Burkholder-Davis-Gundy inequality and Theorem \ref{th: drift-implicit Euler scheme},
    \begin{align*}
        \mathbb E\left[\sup_n|X(t_{n}) - \widehat X(t_n)|^r\right] &  \le C \Delta^{r\lambda_i}_N + C \mathbb E\left[\sup_{t\in[0,T]}|Y_i(t) - \widehat Y_i(t)|^r\right] 
        \\
        &\quad +  C \mathbb E\left[\left( \sup_{t\in[0,T]}\left|\int_0^{t} (Y_i(s) - \widehat Y_i(s))dB^S_i(s)\right| \right)^r\right]
        \\
        &\le C \Delta^{r\lambda_i}_N + C\mathbb E\left[ \left(\int_0^{T} (Y_i(s) - \widehat Y_i(s))^2 ds\right)^{\frac{r}{2}} \right]
        \\
        &\le C \Delta^{r\lambda_i}_N.
    \end{align*}
\end{proof}

\noindent Next, denote $\widehat S_i(t) := e^{\widehat X_i(t)}$.

\begin{proposition}\label{prop: moments of approximated price}
    Let Assumptions \ref{assum: assumptions on kernels} and \ref{assum: assumption on sandwiched drift} hold and the mesh of the partition $\Delta_N$ satisfy \eqref{eq: condition on the mesh}. Then, for any $r\in \mathbb R$
    \[
        \sup_{N\ge 1} \sup_{n=0,1,...,N}\mathbb E \left[ \widehat S_i^r(t_n) \right] < \infty.
    \]
\end{proposition}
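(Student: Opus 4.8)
The plan is to mimic the proof of Theorem~\ref{th: properties of S}, the one new point being that \emph{every} constant must be uniform in the number of partition points $N$. This uniformity is available because, by \cite{DNMYT2022} (see the paragraph preceding Theorem~\ref{th: drift-implicit Euler scheme}), the drift-implicit scheme preserves the sandwich property: $\widehat Y_i(t_k) \in (\varphi_i(t_k), \psi_i(t_k))$ for \emph{all} $k$ and \emph{all} $N$. In particular, writing $\psi_i^* := \max_{t\in[0,T]}\psi_i(t) < \infty$, we have $0 < \widehat Y_i(t_k) \le \psi_i^*$ independently of $N$ and $k$.

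First I would write out, using the definition of $\widehat X_i(t_n)$ and $\widehat X_i(0)=\log S_i(0)$,
\[
    \widehat S_i^r(t_n) = S_i^r(0)\,\exp\left\{ r\Delta_N\sum_{k=0}^{n-1}\mu_i(t_k) - \frac{r}{2}\Delta_N\sum_{k=0}^{n-1}\widehat Y_i^2(t_k) + r\sum_{k=0}^{n-1}\widehat Y_i(t_k)\bigl(B^S_i(t_{k+1}) - B^S_i(t_k)\bigr) \right\}.
\]
Since $\mu_i$ is continuous (hence bounded) on $[0,T]$ and $0<\widehat Y_i(t_k)\le\psi_i^*$ uniformly, the first two sums in the exponent are bounded in absolute value by $|r|\lVert\mu_i\rVert_\infty T$ and $\tfrac{|r|}{2}(\psi_i^*)^2 T$ respectively, i.e. by constants that do not depend on $N$ or $n$. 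Hence there is $C=C(r,T,S_i(0),\lVert\mu_i\rVert_\infty,\psi_i^*)>0$, independent of $N$ and $n$, with
\[
    \widehat S_i^r(t_n) \le C\exp\left\{ r\sum_{k=0}^{n-1}\widehat Y_i(t_k)\bigl(B^S_i(t_{k+1}) - B^S_i(t_k)\bigr) \right\} = C\exp\left\{ r\int_0^{t_n}\widehat Y_i^\pi(s)\,dB^S_i(s) \right\},
\]
where $\widehat Y_i^\pi(s):=\widehat Y_i(t_k)$ for $s\in[t_k,t_{k+1})$ is a bounded, adapted, piecewise constant (hence predictable) process and the last equality is the identification of the It\^o integral of a simple integrand with a Riemann sum.

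Next, exactly as in \eqref{proofeq: bounds for moments of prices 1}, I would split
\[
    \exp\left\{ r\int_0^{t_n}\widehat Y_i^\pi(s)\,dB^S_i(s) \right\} = \exp\left\{ \frac{r^2}{2}\int_0^{t_n}\bigl(\widehat Y_i^\pi(s)\bigr)^2\,ds \right\}\,\mathcal E_{t_n}\bigl\{r\widehat Y_i^\pi\cdot B^S_i\bigr\},
\]
with $\mathcal E_{t_n}\{r\widehat Y_i^\pi\cdot B^S_i\} := \exp\bigl\{ r\int_0^{t_n}\widehat Y_i^\pi\,dB^S_i - \tfrac12\int_0^{t_n}(r\widehat Y_i^\pi)^2\,ds \bigr\}$. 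Because $|\widehat Y_i^\pi|\le\psi_i^*$, Novikov's criterion applies and yields $\mathbb E\bigl[\mathcal E_{t_n}\{r\widehat Y_i^\pi\cdot B^S_i\}\bigr]=1$, while $\exp\bigl\{ \tfrac{r^2}{2}\int_0^{t_n}(\widehat Y_i^\pi(s))^2\,ds \bigr\}\le \exp\bigl\{ \tfrac{r^2}{2}(\psi_i^*)^2 T \bigr\}=:C'$ is a deterministic constant independent of $N$ and $n$. Combining the three displays,
\[
    \mathbb E\bigl[\widehat S_i^r(t_n)\bigr] \le C\,C'\,\mathbb E\bigl[\mathcal E_{t_n}\{r\widehat Y_i^\pi\cdot B^S_i\}\bigr] = C\,C',
\]
and the right-hand side depends only on $r,T,S_i(0),\lVert\mu_i\rVert_\infty,\psi_i^*$, hence neither on $N$ nor on $n$; taking suprema gives the claim.

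There is no genuine obstacle here; the only thing that requires care is precisely that every constant entering the estimate --- in particular the bound $\psi_i^*$ on $\widehat Y_i$ and the constant coming from Novikov's criterion --- is independent of the partition, which is guaranteed by the partition-free sandwich estimates for the drift-implicit scheme from \cite{DNMYT2022}.
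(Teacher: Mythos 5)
Your proposal is correct and follows essentially the same route as the paper: bound the drift and quadratic-variation terms in the exponent using the partition-independent bound $0<\widehat Y_i\le\psi_i^*$ coming from the sandwich property of the drift-implicit scheme, then absorb the stochastic integral into a stochastic exponential whose expectation equals one by Novikov's criterion. The only difference is presentational (you spell out the Riemann-sum/It\^o-integral identification and the uniformity of constants more explicitly than the paper does).
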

\begin{proof}
    Since both $\mu_i$ and $\widehat Y_i$ are bounded and the bound does not depend on the partition, it is sufficient to prove that
    \[
        \sup_{N\ge 1}\sup_{n=0,1,...,N}\mathbb E\left[\exp\left\{ r \int_0^{t_n} \widehat Y_i(s) dB_i^S(s) \right\}\right] < \infty.
    \]
    Consider a stochastic process
    \[
        \exp\left\{ r \int_0^{t} \widehat Y_i(s) dB_i^S(s)\right\}, \quad t\in[0,T].
    \]
    Then, since $\widehat Y_i$ is bounded, Novikov's condition implies that
    \[
        \mathbb E \left[\exp\left\{  \int_0^{t} r \widehat Y_i(s) dB_i^S(s) - \frac{r^2}{2} \int_0^{t} \widehat Y_i^2(s) ds\right\}\right] = 1
    \]
    and hence there exist a constant $C$ that does not depend on the partition such that
    \begin{align*}
        \mathbb E\left[\exp\left\{ r \int_0^{t} \widehat Y_i(s) dB_i^S(s)\right\}\right] & = \mathbb E\left[\exp\left\{ r \int_0^{t} \widehat Y_i(s) dB_i^S(s) - \frac{r^2}{2} \int_0^{t} \widehat Y_i^2(s) ds\right\} \exp\left\{ \frac{r^2}{2} \int_0^{t} \widehat Y_i^2(s) ds \right\}\right]
        \\
        &\le C \mathbb E \left[\exp\left\{  \int_0^{t} r \widehat Y_i(s) dB_i^S(s) - \frac{r^2}{2} \int_0^{t} \widehat Y_i^2(s) ds\right\}\right] = C < \infty,
    \end{align*}
    which implies the required result.
\end{proof}

\begin{theorem}
    Let Assumptions \ref{assum: assumptions on kernels} and \ref{assum: assumption on sandwiched drift} hold and the mesh of the partition $\Delta_N$ satisfy \eqref{eq: condition on the mesh}. Fix $i=1,...,d$, take an arbitrary $\lambda_i \in (0,H_i)$, where $H_i$ is from Assumption \ref{assum: assumptions on kernels}, and fix $r\ge 1$. Then there exists a constant $C>0$ that does not depend on the partition such that
    \[
        \sup_{N\ge 1} \sup_{n = 0,1,...,N}\mathbb E \left[ |S_i(t_n) - \widehat S_i(t_n)|^r \right] \le C\Delta^{r\lambda_i}_N.
    \]
    In particular,
    \[
        \mathbb E \left[ |S_i(T) - \widehat S_i(T)|^r \right] \le C\Delta^{r\lambda_i}_N.
    \]    
\end{theorem}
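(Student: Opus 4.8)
The plan is to reduce everything to the already-established log-price estimate (Theorem~\ref{th: approximation of log-price}) together with the uniform moment bounds for $S_i$ and $\widehat S_i$ (Theorem~\ref{th: properties of S} and Proposition~\ref{prop: moments of approximated price}). The link between $S_i = e^{X_i}$ and $\widehat S_i = e^{\widehat X_i}$ is the elementary mean-value bound: for real numbers $a, b$ one has $|e^a - e^b| \le (e^a + e^b)|a-b|$, obtained by writing $e^a - e^b = e^\xi(a-b)$ with $\xi$ between $a$ and $b$ and estimating $e^\xi \le e^{a\vee b} \le e^a + e^b$. Applying this at each partition point $t_n$ and raising to the power $r \ge 1$ gives
\[
    |S_i(t_n) - \widehat S_i(t_n)|^r \le |X_i(t_n) - \widehat X_i(t_n)|^r \left( S_i(t_n) + \widehat S_i(t_n) \right)^r.
\]

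Next I would take expectations and split the right-hand side by the Cauchy--Schwarz inequality:
\[
    \mathbb E\left[ |S_i(t_n) - \widehat S_i(t_n)|^r \right] \le \left( \mathbb E\left[ |X_i(t_n) - \widehat X_i(t_n)|^{2r} \right] \right)^{\frac12} \left( \mathbb E\left[ \left( S_i(t_n) + \widehat S_i(t_n) \right)^{2r} \right] \right)^{\frac12}.
\]
For the first factor, Theorem~\ref{th: approximation of log-price} (applied with exponent $2r$) yields $\mathbb E[ |X_i(t_n) - \widehat X_i(t_n)|^{2r}] \le C\Delta_N^{2r\lambda_i}$ with $C$ independent of the partition, so the square root contributes exactly the factor $C\Delta_N^{r\lambda_i}$. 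For the second factor, the inequality $(x+y)^{2r} \le 2^{2r-1}(x^{2r} + y^{2r})$ for $x,y \ge 0$ reduces matters to bounding $\mathbb E[S_i^{2r}(t_n)]$ and $\mathbb E[\widehat S_i^{2r}(t_n)]$: the former is finite (and in fact bounded by $\mathbb E[\sup_{t\in[0,T]} S_i^{2r}(t)] < \infty$) by Theorem~\ref{th: properties of S}, and the latter is bounded uniformly in $N$ and $n$ by Proposition~\ref{prop: moments of approximated price}. Hence the second square root is bounded by a constant independent of $N$ and $n$, and combining the two factors gives $\sup_{N\ge1}\sup_{n}\mathbb E[ |S_i(t_n) - \widehat S_i(t_n)|^r] \le C\Delta_N^{r\lambda_i}$. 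The ``in particular'' claim follows since $T = t_N$ is itself a point of the partition.

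I do not expect any serious obstacle here: the only points requiring a little care are making sure the exponents line up (one needs the $2r$-versions of the earlier results, which are covered since those theorems hold for every $r\ge1$) and invoking the \emph{uniform-in-$N$} moment bound for $\widehat S_i$ rather than a pointwise one — but precisely this uniformity is the content of Proposition~\ref{prop: moments of approximated price}. If one prefers to avoid Cauchy--Schwarz, the same conclusion follows from Hölder's inequality with conjugate exponents $(1+\delta, (1+\delta)/\delta)$ for any $\delta > 0$, again using that the earlier estimates are available for all powers; the Cauchy--Schwarz route is simply the cleanest.
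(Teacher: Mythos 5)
Your proposal is correct and follows essentially the same route as the paper: the bound $|e^a-e^b|\le(e^a+e^b)|a-b|$, Cauchy--Schwarz, the $2r$-moment estimate for $X_i(t_n)-\widehat X_i(t_n)$ from Theorem~\ref{th: approximation of log-price}, and the uniform moment bounds from Theorem~\ref{th: properties of S} and Proposition~\ref{prop: moments of approximated price}. No gaps.
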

\begin{proof}
    For any $n=0,1,...,N$, we use Theorem \ref{th: properties of S}, Proposition \ref{prop: moments of approximated price} and Theorem \ref{th: approximation of log-price} to write:
    \begin{align*}
        \mathbb E \left[|S_i(t_n) - \widehat S_i(t_n)|^r\right] &= \mathbb E \left[|e^{X_i(t_n)} - e^{\widehat X_i(t_n)}|^r\right]
        \\
        &\le \mathbb E \left[\left( e^{X_i(t_n)} + e^{\widehat X_i(t_n)}\right)^r |X_i(t_n) - \widehat X_i(t_n)|^r\right]
        \\
        &\le \left(\mathbb E \left[\left( e^{X_i(t_n)} + e^{\widehat X_i(t_n)}\right)^{2r}\right]\right)^{\frac{1}{2}} \left(\mathbb E \left[|X_i(t_n) - \widehat X_i(t_n)|^{2r}\right]\right)^{\frac{1}{2}}
        \\
        &\le C\left( \mathbb E[ e^{2rX_i(t_n)}] + \mathbb E[ e^{2r\widehat X_i(t_n)}] \right)^{\frac{1}{2}} \left(\mathbb E \left[|X_i(t_n) - \widehat X_i(t_n)|^{2r}\right]\right)^{\frac{1}{2}}
        \\
        &\le C\Delta^{r\lambda_i}_N,
    \end{align*}
    where $C$ does not depend on the partition, which implies the required result.
\end{proof}

\paragraph{Approximation of the martingale density.} In order to get an approximation of option price (i.e. under the change of measure), we need a result concerning approximation of the corresponding density. Let $\mathcal E_T$ be the minimal martingale measure defined by \eqref{eq: minimal martingale measure}. Fix a uniform partition $\{0 = t_0 < t_1 <...< t_N = T\}$, $t_n = \frac{Tn}{N}$, with the mesh satisfying \eqref{eq: condition on the mesh} and put
\begin{equation}\label{eq: approximation of martingale measure}
    \widehat {\mathcal E}_T := \exp\left\{ -\sum_{j=1}^d \sum_{n=0}^{N-1} \left(\sum_{k=1}^d \ell^{(-1)}_{j,k} \frac{\widetilde \mu_k(t_n)}{{\widehat Y}_k(t_n)} \right)(W_j(t_{n+1}) - W_j(t_n)) - \frac{1}{2} \sum_{j=1}^d \frac{T}{N}\sum_{n=0}^{N-1} \left(\sum_{k=1}^d \ell^{(-1)}_{j,k} \frac{\widetilde \mu_k(t_n)}{{\widehat Y}_k(t_n)} \right)^2 \right\}.
\end{equation}

\begin{theorem}\label{th: approximation of martingale measure}
    Let Assumptions \ref{assum: assumptions on kernels} and \ref{assum: assumption on sandwiched drift} hold and the mesh of the partition $\Delta_N$ satisfy \eqref{eq: condition on the mesh}. Let also $\lambda < \min_{i=1,...,d} H_i$ with $H_i$ being from Assumption \ref{assum: assumptions on kernels}. Then, for any $r\ge 1$, there exists a constant $C>0$ that does not depend on the partition such that
    \[
        \mathbb E \left[|\mathcal E_T - \widehat {\mathcal E}_T|^r\right] \le C\Delta_N^{r\lambda}.
    \]
\end{theorem}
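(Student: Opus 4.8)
The plan is to write $\mathcal E_T = e^{A}$ and $\widehat{\mathcal E}_T = e^{\widehat A}$ with
\[
    A := -\sum_{j=1}^d \int_0^T \Theta_j(s)\,dW_j(s) - \frac12\sum_{j=1}^d \int_0^T \Theta_j^2(s)\,ds, \qquad \Theta_j(s) := \sum_{k=1}^d \ell^{(-1)}_{j,k}\, \frac{\widetilde\mu_k(s)}{Y_k(s)},
\]
and $\widehat A$ of the same form but with $\Theta_j$ replaced by its piecewise-constant counterpart $\bar\Theta_j(s) := \sum_{k=1}^d \ell^{(-1)}_{j,k}\, \widetilde\mu_k(t_n)/\widehat Y_k(t_n)$ for $s\in[t_n,t_{n+1})$; observe that then $\widehat{\mathcal E}_T = \mathcal E_T\{-\sum_{j=1}^d\int_0^\cdot \bar\Theta_j(s)\,dW_j(s)\}$ is again a continuous Dol\'eans-Dade exponential. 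Using the elementary bound $|e^x-e^y|\le|x-y|(e^x+e^y)$ together with the Cauchy--Schwarz inequality, the statement reduces to establishing two facts: (a) $\sup_{N\ge1}\mathbb E[(\mathcal E_T+\widehat{\mathcal E}_T)^{2r}]<\infty$; and (b) $\mathbb E[|A-\widehat A|^{2r}]\le C\Delta_N^{2r\lambda}$.

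For (a), I would note that every $Y_k$ is bounded below by $\varphi_k^*:=\min_{t\in[0,T]}\varphi_k(t)>0$ and, by construction of the drift-implicit scheme, $\widehat Y_k(t_n)>\varphi_k(t_n)\ge\varphi_k^*$ for all $n$ and all $N$; since each $\widetilde\mu_k$ is continuous, hence bounded, on $[0,T]$, both $\Theta_j$ and $\bar\Theta_j$ are bounded by a deterministic constant that does not depend on $N$. Consequently the quadratic variations of the martingales $-\sum_j\int_0^\cdot\Theta_j\,dW_j$ and $-\sum_j\int_0^\cdot\bar\Theta_j\,dW_j$ are deterministically bounded uniformly in $N$, and writing $(\mathcal E_T)^p$ and $(\widehat{\mathcal E}_T)^p$ as a Dol\'eans-Dade exponential times $\exp\{\tfrac{p^2-p}{2}[\,\cdot\,]_T\}$ and invoking Novikov's criterion (exactly as in the proof of Proposition \ref{prop: moments of approximated price}) shows that all moments of $\mathcal E_T$ and $\widehat{\mathcal E}_T$ are bounded uniformly in $N$.

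For (b), I would split $A-\widehat A$ into the stochastic part $-\sum_j\int_0^T(\Theta_j(s)-\bar\Theta_j(s))\,dW_j(s)$ and the Lebesgue part $-\tfrac12\sum_j\int_0^T(\Theta_j^2(s)-\bar\Theta_j^2(s))\,ds$. Applying the Burkholder--Davis--Gundy and Jensen inequalities to the former, and factoring $\Theta_j^2-\bar\Theta_j^2=(\Theta_j-\bar\Theta_j)(\Theta_j+\bar\Theta_j)$ with $\Theta_j+\bar\Theta_j$ deterministically bounded for the latter, reduces everything to bounding $\sum_j\mathbb E\int_0^T|\Theta_j(s)-\bar\Theta_j(s)|^{2r}\,ds$. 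For $s\in[t_n,t_{n+1})$, using $\widehat Y_k(t_n)=\widehat Y_k(s)$ one writes
\[
    \frac{\widetilde\mu_k(s)}{Y_k(s)}-\frac{\widetilde\mu_k(t_n)}{\widehat Y_k(t_n)} = \frac{\widetilde\mu_k(s)-\widetilde\mu_k(t_n)}{Y_k(s)} + \widetilde\mu_k(t_n)\left(\frac{1}{Y_k(s)}-\frac{1}{\widehat Y_k(s)}\right),
\]
where the first summand is bounded deterministically by $C\Delta_N^{\lambda}$ because $\widetilde\mu_k=\mu_k-\nu$ is H\"older of order $\min_{i}H_i>\lambda$ and $Y_k\ge\varphi_k^*$, while the $L^{2r}$-norm of the second is bounded by $C\Delta_N^{\lambda}$ by Theorem \ref{th: drift-implicit Euler scheme} (applied with exponent $2r$ and H\"older parameter $\lambda<H_k$) and boundedness of $\widetilde\mu_k$. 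Summing over $k$ and then combining with (a) through Cauchy--Schwarz yields the claimed rate.

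The only genuinely delicate point is the uniformity in $N$ in step (a): one must verify that the exponential moment bounds for $\widehat{\mathcal E}_T$ carry constants independent of $N$, which rests precisely on the $N$-independent lower bound $\widehat Y_k(t_n)>\varphi_k^*$ guaranteed by the drift-implicit Euler scheme. Everything else is a routine repetition of the Burkholder--Davis--Gundy / H\"older / Theorem \ref{th: drift-implicit Euler scheme} pattern already used throughout this section.
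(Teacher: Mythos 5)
Your proposal is correct and follows essentially the same route as the paper: bound the $L^{2r}$-difference of the log-densities by splitting the stochastic and Lebesgue parts and then into a H\"older-in-time term for $\widetilde\mu_k$ plus the $\frac{1}{Y_k}-\frac{1}{\widehat Y_k}$ term handled by Theorem \ref{th: drift-implicit Euler scheme}, establish $N$-uniform moment bounds for $\mathcal E_T$ and $\widehat{\mathcal E}_T$ via the Novikov argument of Proposition \ref{prop: moments of approximated price}, and conclude with $|e^x-e^y|\le(e^x+e^y)|x-y|$ and Cauchy--Schwarz. The only cosmetic difference is that you explicitly recognize $\widehat{\mathcal E}_T$ as the Dol\'eans-Dade exponential of the step-process integrand, which the paper uses implicitly when rewriting the Riemann sums as stochastic integrals.
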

\begin{proof}
    The proof is straightforward, so we will provide an outline omitting all the details. 
    
    First, observe that for every $j=1,...,d$, using boundedness of each $\frac{1}{\widehat Y_k}$, $\lambda$-H\"older continuity of each $\widetilde \mu_k$ as well as Theorem \ref{th: drift-implicit Euler scheme}, we can write 
    \begin{align*}
        \mathbb E &\left[ \left|\sum_{n=0}^{N-1} \left(\sum_{k=1}^d \ell^{(-1)}_{j,k} \frac{\widetilde \mu_k(t_n)}{{\widehat Y}_k(t_n)} \right)(W_j(t_{n+1}) - W_j(t_n)) - \int_0^T \left(\sum_{k=1}^d \ell^{(-1)}_{j,k} \frac{\widetilde \mu_k(s)}{Y_k(s)} \right)dW_j(s)\right|^r  \right]
        \\
        & = \mathbb E \left[ \left|\sum_{n=0}^{N-1} \int_{t_n}^{t_{n+1}} \left(\sum_{k=1}^d \ell^{(-1)}_{j,k} \left(\frac{\widetilde \mu_k(t_n)}{{\widehat Y}_k(s)} - \frac{\widetilde \mu_k(s)}{Y_k(s)}\right) \right)dW_j(s) \right|^r  \right]
        \\
        & \le C \mathbb E \left[ \left(\sum_{n=0}^{N-1} \int_{t_n}^{t_{n+1}} \left(\sum_{k=1}^d \ell^{(-1)}_{j,k} \left(\frac{\widetilde \mu_k(t_n)}{{\widehat Y}_k(s)} - \frac{\widetilde \mu_k(s)}{Y_k(s)}\right) \right)^2ds \right)^{\frac{r}{2}}  \right]
        \\
        &\le C \mathbb E \left[ \left(\sum_{n=0}^{N-1} \int_{t_n}^{t_{n+1}} \sum_{k=1}^d \left(\frac{\widetilde \mu_k(t_n)}{{\widehat Y}_k(s)} - \frac{\widetilde \mu_k(s)}{Y_k(s)} \right)^2ds \right)^{\frac{r}{2}}  \right]
        \\
        & \le C \mathbb E \left[ \left(\sum_{n=0}^{N-1} \int_{t_n}^{t_{n+1}} \sum_{k=1}^d \frac{1}{{\widehat Y}^2_k(s)} \left(\widetilde \mu_k(t_n) - \widetilde \mu_k(s) \right)^2ds \right)^{\frac{r}{2}}  \right] 
        \\
        &\quad + C \mathbb E \left[ \left(\sum_{n=0}^{N-1} \int_{t_n}^{t_{n+1}} \sum_{k=1}^d \left( \frac{1}{{\widehat Y}_k(s)} - \frac{1}{Y_k(s)} \right)^2ds \right)^{\frac{r}{2}}  \right]
        \\
        &\le C \mathbb E \left[ \left(\sum_{k=1}^d\sum_{n=0}^{N-1} \int_{t_n}^{t_{n+1}}  
        \left(\widetilde \mu_k(t_n) - \widetilde \mu_k(s) \right)^2ds \right)^{\frac{r}{2}}  \right]  + C \mathbb E \left[ \left(\sum_{k=1}^d \sup_{s\in[0,T]}\left( \frac{1}{{\widehat Y}_k(s)} - \frac{1}{Y_k(s)} \right)^2 \right)^{\frac{r}{2}}  \right]
        \\
        &\le C \mathbb E \left[ \left(\sum_{k=1}^d\sum_{n=0}^{N-1} \int_{t_n}^{t_{n+1}}  
        \left(t_n - s \right)^{2\lambda}ds \right)^{\frac{r}{2}}  \right]  + C \Delta_N^{r\lambda}
        \\
        & \le C \Delta_N^{r\lambda},
    \end{align*}
    where $C$ is, as always, a constant that does not depend on the partition and which may vary from line to line. Similarly, for each $j=1,...,d$,
    \begin{align*}
        \mathbb E &\left[ \left| \sum_{j=1}^d \frac{T}{N}\sum_{n=0}^{N-1} \left(\sum_{k=1}^d \ell^{(-1)}_{j,k} \frac{\widetilde \mu_k(t_n)}{{\widehat Y}_k(t_n)} \right)^2 - \sum_{j=1}^d \int_0^T \left(\sum_{k=1}^d \ell^{(-1)}_{j,k} \frac{\widetilde \mu_k(s)}{{\widehat Y}_k(s)} \right)^2 \right|^r  \right] \le C \Delta_N^{r\lambda},
    \end{align*}
    therefore we conclude that
    \[
        \mathbb E \left[\left| \log\mathcal E_T - \log\widehat{\mathcal E}_T \right|^r\right] \le C \Delta_N^{r\lambda}.
    \]
    Using the arguments similar to Proposition \ref{prop: moments of approximated price}, one can see that for all $r\ge 1$ there exists a constant $C$ that does not depend on the partition such that
    \begin{equation}\label{eq: moments of martingale measure}
        \mathbb E\left[ \mathcal E_T^r \right] + \max_{N\ge 1} \mathbb E\left[ \widehat{\mathcal E}_T^r \right] < C
    \end{equation}
    and hence
    \begin{align*}
        \mathbb E \left[|\mathcal E_T - \widehat {\mathcal E}_T|^r\right] & \le C \mathbb E \left[\left( \mathcal E_T^r + \widehat{\mathcal E}_T^r \right)\left| \log\mathcal E_T - \log\widehat{\mathcal E}_T \right|^r\right] \le C \Delta_N^{r\lambda}.
    \end{align*}
\end{proof}

\subsection{Approximation by discretizing both $Y$ and $S$}\label{subsec: double approximation}

After introducing the numerical schemes for $Y$, $X$ and $S$, we are ready to proceed to main results of this Section. We begin with a theorem that is in the spirit of \cite[Theorem 15]{BMdP2018}.

\begin{theorem}\label{th: approximation using representation 2}
    Let $S = (S_1,...,S_d)$ be given by \eqref{eq: definition of price process}, $X=(X_1,...,X_d)$ be the corresponding log-price process, $X_i = \log S_i$, $f$: $\mathbb R \to \mathbb R$ satisfy Assumption \ref{assum: 1-dim payoff}, $F(s) := \int_0^s f(u)du$ and $\alpha_i > 0$, $i=1,...,d$. Fix a uniform partition $\{0 = t_0 < t_1 <...< t_N = T\}$, $t_n = \frac{Tn}{N}$, with the mesh satisfying \eqref{eq: condition on the mesh} and take $\widehat Y = (\widehat Y_1, ..., \widehat Y_d)$, $\widehat X = (\widehat X_1, ..., \widehat X_d)$ and $\widehat S = (\widehat S_1,...,\widehat S_d)$ as described in Subsection \ref{subsec: numerical scheme}. Finally, let $\lambda < \min_{i=1,...,d} H_i$ with $H_i$ being from Assumption \ref{assum: assumptions on kernels}. Then 
    \begin{itemize}
        \item[1)] for any $r\ge 1$, there exists a constant $C>0$ such that
        \begin{align*}
            \mathbb E & \Bigg[ \Bigg|\frac{F\left(\sum_{i=1}^d \alpha_i S_i(T)\right)}{\alpha_1 S_1(T)} \left( 1 + \frac{ \int_0^T \frac{1}{Y_1(u)}d\widetilde W_1(u) }{T u_{1,1}}  \right) 
            \\
            &\qquad - \frac{F\left(\sum_{i=1}^d \alpha_i \widehat S_i(T)\right)}{\alpha_1 \widehat S_1(T)} \left( 1 + \frac{ \int_0^T \frac{1}{\widehat Y_1(u)}d\widetilde W_1(u) }{T u_{1,1}}  \right) \Bigg|^r\Bigg] \le C\Delta_N^{r\lambda},
        \end{align*}
        
        \item[2)] there exists a constant $C>0$ such that
        \[
            \left|\mathbb E \left[f\left(\sum_{i=1}^d \alpha_i S_i(T)\right)\right] - \mathbb E \left[\frac{F\left(\sum_{i=1}^d \alpha_i \widehat S_i(T)\right)}{\alpha_1 \widehat S_1(T)} \left( 1 + \frac{ \int_0^T \frac{1}{\widehat Y_1(u)}d\widetilde W_1(u) }{T u_{1,1}} \right)\right]\right|  \le C\Delta_N^{\lambda}.
        \]
    \end{itemize}
\end{theorem}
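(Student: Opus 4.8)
The plan is to establish part~1) first and then obtain part~2) as an immediate consequence. Throughout, abbreviate
\[
    A := \frac{F\left(\sum_{i=1}^d \alpha_i S_i(T)\right)}{\alpha_1 S_1(T)}, \qquad \widehat A := \frac{F\left(\sum_{i=1}^d \alpha_i \widehat S_i(T)\right)}{\alpha_1 \widehat S_1(T)},
\]
\[
    B := 1 + \frac{1}{Tu_{1,1}}\int_0^T \frac{1}{Y_1(u)}d\widetilde W_1(u), \qquad \widehat B := 1 + \frac{1}{Tu_{1,1}}\int_0^T \frac{1}{\widehat Y_1(u)}d\widetilde W_1(u),
\]
so that the random variable in part~1) is $AB - \widehat A\widehat B$. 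Once part~1) is known for $r=1$, part~2) follows at once: by Theorem~\ref{th: representation d-dim} one has $\mathbb E f\left(\sum_{i=1}^d \alpha_i S_i(T)\right) = \mathbb E[AB]$, whence $\left|\mathbb E[AB] - \mathbb E[\widehat A\widehat B]\right| \le \mathbb E\left[|AB - \widehat A\widehat B|\right] \le C\Delta_N^\lambda$.

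For part~1) I would split $AB - \widehat A\widehat B = (A - \widehat A)B + \widehat A(B - \widehat B)$, apply the triangle inequality in $L^r(\Omega)$ and then the Cauchy--Schwarz inequality, reducing the task to
\[
    \left\| AB - \widehat A\widehat B \right\|_{L^r} \le \left\| A - \widehat A \right\|_{L^{2r}} \left\| B \right\|_{L^{2r}} + \left\| \widehat A \right\|_{L^{2r}} \left\| B - \widehat B \right\|_{L^{2r}} .
\]
The factors $\| B \|_{L^{2r}}$ and $\| \widehat A \|_{L^{2r}}$ are finite and, crucially, bounded uniformly in $N$: for $B$ this follows from the Burkholder--Davis--Gundy (BDG) inequality together with the bound $\frac{1}{Y_1}\le\frac1{\varphi_1^*}$, $\varphi_1^*:=\min_{t\in[0,T]}\varphi_1(t)>0$; for $\widehat A$ it follows from the polynomial growth of $F$ (inherited from Assumption~\ref{assum: 1-dim payoff}), Hölder's inequality, and the moment bounds $\sup_{N}\mathbb E[\widehat S_i^{\pm m}(T)]<\infty$ of Proposition~\ref{prop: moments of approximated price}. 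For the third factor, note $B - \widehat B = \frac{1}{Tu_{1,1}}\int_0^T \bigl(\frac{1}{Y_1(u)} - \frac{1}{\widehat Y_1(u)}\bigr) d\widetilde W_1(u)$, so BDG together with Theorem~\ref{th: drift-implicit Euler scheme} (with $\lambda_1=\lambda<H_1$) gives $\| B - \widehat B \|_{L^{2r}}^{2r} \le C\, \mathbb E\bigl[ \sup_{u\in[0,T]} | \tfrac{1}{Y_1(u)} - \tfrac{1}{\widehat Y_1(u)} |^{2r} \bigr] \le C\Delta_N^{2r\lambda}$, i.e. $\| B - \widehat B \|_{L^{2r}} \le C\Delta_N^\lambda$.

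The remaining and hardest step is to bound $\| A - \widehat A \|_{L^{2r}} = \| G_1(X(T)) - G_1(\widehat X(T)) \|_{L^{2r}}$, where $G_1(x_1,\dots,x_d)=F\left(\sum_{i=1}^d \alpha_i e^{x_i}\right)/(\alpha_1 e^{x_1})$. Here I would invoke the explicit local-Lipschitz estimate for $G_1$ derived in the text preceding Proposition~\ref{prop: chain rule for locally Lipschitz, d-dim}, which bounds $|G_1(x)-G_1(x')|$ by a finite sum of terms, each a polynomial expression in $e^{x_i},e^{x_i'},e^{-x_1},e^{-x_1'}$ times some $|x_j-x_j'|$. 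Substituting $x=X(T)$, $x'=\widehat X(T)$ and distributing exponents by repeated Hölder inequalities, $\| A - \widehat A \|_{L^{2r}}$ is estimated by a finite product of $L^p(\Omega)$-norms of $S_i(T),\widehat S_i(T),1/S_1(T),1/\widehat S_1(T)$ — all finite and uniformly bounded in $N$ by Theorem~\ref{th: properties of S} and Proposition~\ref{prop: moments of approximated price} — times $\| X_j(T) - \widehat X_j(T) \|_{L^p}$ for a suitably large $p$; by Theorem~\ref{th: approximation of log-price} the latter is $\le C\Delta_N^{\lambda_j}$ for any $\lambda_j<H_j$ and any $p\ge1$, and choosing $\lambda_j\ge\lambda$ (possible since $\lambda<\min_i H_i$) yields $\| A - \widehat A \|_{L^{2r}} \le C\Delta_N^\lambda$. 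Combining the three bounds proves part~1), and part~2) follows as explained. The only genuinely delicate point is the bookkeeping of Hölder exponents in this last step — one must split the exponents so that every polynomial growth factor appearing in the local-Lipschitz bound for $G_1$ is absorbed into a moment that remains finite and bounded as $N\to\infty$ — but no idea beyond Theorems~\ref{th: properties of S}, \ref{th: drift-implicit Euler scheme}, \ref{th: approximation of log-price} and Proposition~\ref{prop: moments of approximated price} is required.
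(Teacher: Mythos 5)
Your argument is correct and follows essentially the same route as the paper's proof: split the difference into an $(A-\widehat A)$-part and a $(B-\widehat B)$-part, bound the latter via Burkholder--Davis--Gundy and Theorem \ref{th: drift-implicit Euler scheme}, bound the former via the polynomial-growth/local-Lipschitz estimate for $F$ (equivalently $G_1$) combined with the moment bounds of Theorem \ref{th: properties of S}, Proposition \ref{prop: moments of approximated price} and Theorem \ref{th: approximation of log-price}, and glue everything with H\"older/Cauchy--Schwarz, deducing item 2) from item 1) through Theorem \ref{th: representation d-dim}. The only (cosmetic) difference is that you run the estimate directly in general $d$ via the $G_1$ bound, whereas the paper writes out the case $d=1$ and remarks that the general case is analogous.
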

\begin{proof}
    By Theorem \ref{th: representation d-dim}, it is enough to prove item 1. We provide the proof only for $d=1$ (in this case we can put $\alpha_1 = 1$ without any loss in generality). For the general case of $d>1$, the arguments are the same but require a more notation-heavy presentation. For reader's convenience, we split the proof into three steps. 
    
    \noindent\textbf{Step 1.} We begin with noting that, by the Burkholder-Davis-Gundy inequality and Theorem \ref{th: drift-implicit Euler scheme}, 
     \begin{align*}
        \mathbb E\left[\left|\int_0^T \left(\frac{1}{ Y(u)} - \frac{1}{\widehat Y(u)} \right) d\widetilde W_1(u)\right|^r\right] &\le C\mathbb E \left[ \left(\int_0^T \left(\frac{1}{ Y(u)} - \frac{1}{\widehat Y(u)} \right)^2 du \right)^{\frac{r}{2}} \right]
        \\
        &\le C \Delta_N^{\lambda r}.
    \end{align*}
    
    \noindent\textbf{Step 2.} Next, let us show that for any $r\ge 1$
    \begin{equation}\label{proofeq: Step 2 of approximation theorem}
        \mathbb E \left[\left|\frac{F(S(T))}{S(T)} - \frac{F(\widehat S(T))}{\widehat S(T)}\right|^r\right] \le C\Delta_N^{r\lambda}.
    \end{equation}
    It is clear that
    \begin{equation}\label{proofeq: Step 2 main idea}
    \begin{aligned}
        \mathbb E \left[\left|\frac{F(S(T))}{S(T)} - \frac{F(\widehat S(T))}{\widehat S(T)}\right|^r\right] & \le C \mathbb E \left[ \left|\frac{F(S(T))}{S(T)} - \frac{F(S(T))}{\widehat S(T)}\right|^r \right] 
        \\
        &\quad + C \mathbb E \left[ \left|\frac{F(S(T))}{\widehat S(T)} - \frac{F(\widehat S(T))}{ \widehat S(T)}\right|^r \right].
    \end{aligned}    
    \end{equation}
    Now we estimate both terms in the right-hand side separately. Observe that $F$ has polynomial growth and thus, by Theorem \ref{th: properties of S}, $\mathbb E[|F(S(T))|^{2r}] < \infty$. Therefore
    \begin{align*}
        \mathbb E \left[ |F(S(T))|^r \left|\frac{1}{S(T)} - \frac{1}{\widehat S(T)}\right|^r \right] &\le \left( \mathbb E\left[ |F(S(T))|^{2r} \right] \mathbb E \left[ \left|\frac{1}{S(T)} - \frac{1}{\widehat S(T)}\right|^{2r} \right]\right)^{\frac{1}{2}}
        \\
        &\le C \left( \mathbb E \left[ \left|\frac{1}{S(T)} - \frac{1}{\widehat S(T)}\right|^{2r} \right]\right)^{\frac{1}{2}}. 
    \end{align*}
    Moreover, observe that, by Theorem \ref{th: properties of S} and Proposition \ref{prop: moments of approximated price}, there exists constant $C>0$ that does not depend on the partition such that
    \[
        \mathbb E \left[ e^{-4r(X(T) + \widehat X(T))} (e^{X(T)} + e^{\widehat X(T)})^{4r} \right] < C,
    \]
    thus, using Theorem \ref{th: approximation of log-price} the inequality $|e^x - e^y| \le (e^x + e^y)|x-y|$, $x$, $y\in\mathbb R$, we can write
    \begin{align*}
        \mathbb E &\left[ \left|\frac{1}{S(T)} - \frac{1}{\widehat S(T)}\right|^{2r} \right] = \mathbb E \left[ e^{-2r(X(T) + \widehat X(T))} \left| e^{X(T)} - e^{\widehat X(T)} \right|^{2r} \right]
        \\
        &\le \mathbb E \left[ e^{-2r(X(T) + \widehat X(T))} (e^{X(T)} + e^{\widehat X(T)})^{2r}\left| X(T) - \widehat X(T) \right|^{2r} \right]
        \\
        &\le \left(\mathbb E \left[ e^{-4r(X(T) + \widehat X(T))} (e^{X(T)} + e^{\widehat X(T)})^{4r} \right]\right)^{\frac{1}{2}} \left( \mathbb E \left[ \left| X(T) - \widehat X(T) \right|^{4r} \right] \right)^{\frac{1}{2}}
        \\
        &\le C \Delta_N^{2r\lambda},
    \end{align*}
    where $C>0$ is again a constant that does not depend on the partition. Therefore
    \begin{equation}\label{proofeq: estimation of the first term}
        \mathbb E \left[ |F(S(T))|^r \left|\frac{1}{S(T)} - \frac{1}{\widehat S(T)}\right|^r \right] \le C\Delta^{r\lambda}.
    \end{equation}
    Next, observe that for any $s_1, s_2 > 0$
    \begin{align*}
        |F(s_1) - F(s_2)| &= \left|\int_{s_1\wedge s_2}^{s_1\vee s_2} f(u) du\right| \le C(1+ s_1^q + s_2^q)|s_1 - s_2|,
    \end{align*}
    so
    \begin{align*}
        \mathbb E &\left[ \frac{1}{\widehat S^r(T)}\left| F(S(T))- F(\widehat S(T))\right|^r \right] 
        \\
        &\le \left(\mathbb E \left[ \frac{1}{\widehat S^{2r}(T)} \right] \right)^{\frac{1}{2}}\left( \mathbb E \left[ \left| F(S(T))- F(\widehat S(T))\right|^{2r} \right] \right)^{\frac{1}{2}}
        \\
        &\le C \left( \mathbb E \left[ \left| F(S(T))- F(\widehat S(T))\right|^{2r} \right] \right)^{\frac{1}{2}}
        \\
        &\le C \left( \mathbb E \left[ |(1+ S^q(T) + \widehat S^q(T))|^{2r} \left| S(T)- \widehat S(T)\right|^{2r} \right] \right)^{\frac{1}{2}}
        \\
        &\le C \left( \mathbb E \left[ |(1+ S^q(T) + \widehat S^q(T))|^{4r}\right] \right)^{\frac{1}{4}} \left(\mathbb E\left[\left| S(T)- \widehat S(T)\right|^{4r}\right] \right)^{\frac{1}{4}}
        \\
        &\le C\left(\mathbb E\left[\left| S(T)- \widehat S(T)\right|^{4r}\right] \right)^{\frac{1}{4}},
    \end{align*}
    where we used Theorem \ref{th: properties of S} together with Proposition \ref{prop: moments of approximated price} to estimate $\mathbb E \left[ \widehat S^{-2r}(T) \right]$ together with $\mathbb E \left[ |(1+ S^q(T) + \widehat S^q(T))|^{4r}\right]$.
    
    Finally, using the same argument as in the proof of \eqref{proofeq: estimation of the first term}, we can obtain that
    \[
        \left(\mathbb E\left[\left| S(T)- \widehat S(T)\right|^{4r}\right] \right)^{\frac{1}{4}} \le C\Delta_N^{r\lambda},
    \]
    and thus
    \begin{align}\label{proofeq: estimation of the second term}
        \mathbb E &\left[ \frac{1}{\widehat S^r(T)}\left| F(S(T))- F(\widehat S(T))\right|^r \right] \le C\Delta_N^{r\lambda}.
    \end{align}
    
    Now \eqref{proofeq: Step 2 of approximation theorem} follows from \eqref{proofeq: Step 2 main idea}, \eqref{proofeq: estimation of the first term} and \eqref{proofeq: estimation of the second term}.
    
    \noindent\textbf{Step 3.} Now we proceed to the claim of the theorem. Since
    \[
        \mathbb E\left[\left| \frac{F( S(T))}{ S(T)} \right|^{2r}\right] < \infty
    \]
    due to the polynomial growth of the mapping $x\mapsto \frac{F(x)}{x}$, we can deduce by  Step 1, Step 2 and Theorem \ref{th: representation} that
    \begin{align*}
        &\mathbb E \left[ \left|\frac{F( S(T))}{ S(T)}\left( 1 +\frac{\int_0^T \frac{1}{ Y(u)} d\widetilde W_1(u)}{Tu_{1,1}} \right) -     \frac{F(\widehat S(T))}{\widehat S(T)}\left( 1 +\frac{\int_0^T \frac{1}{\widehat Y(u)} d\widetilde W_1(u)}{Tu_{1,1}} \right)\right|^r\right] 
        \\
        &\quad\le C\mathbb E\left[ \left| \frac{F( S(T))}{ S(T)} - \frac{F( \widehat S(T))}{ \widehat S(T)} \right|^r\right]
        \\
        &\qquad + C \mathbb E\left[\left| \frac{F( S(T))}{ S(T)} \int_0^T \left(\frac{1}{ Y(u)} - \frac{1}{ \widehat Y(u)}\right) d\widetilde W_1(u)   \right|^r\right]
        \\
        &\qquad + C\mathbb E\left[\left| \int_0^T \frac{1}{ \widehat Y(u)} d\widetilde W_1(u) \left( \frac{F( S(T))}{ S(T)} - \frac{F( \widehat S(T))}{ \widehat S(T)} \right) \right|^r\right]
        \\
        &\quad \le C\Delta^{r\lambda}_N + C \left( \mathbb E\left[\left| \frac{F( S(T))}{ S(T)} \right|^{2r}\right]    \right)^{\frac{1}{2}} \left( \mathbb E \left[\left|\int_0^T \left(\frac{1}{ Y(u)} - \frac{1}{ \widehat Y(u)}\right) d\widetilde W_1(u)\right|^{2r}\right]\right)^{\frac{1}{2}}
        \\
        &\qquad + C \left( \mathbb E\left[\left| \int_0^T \frac{1}{ \widehat Y(u)} d\widetilde W_1(u) \right|^{2r}\right]    \right)^{\frac{1}{2}} \left( \mathbb E \left[\left|\frac{F( S(T))}{ S(T)} - \frac{F( \widehat S(T))}{ \widehat S(T)}\right|^{2r}\right]\right)^{\frac{1}{2}}
        \\
        &\quad \le C\Delta_N^{r\lambda},
    \end{align*}
    which ends the proof.
\end{proof}

Theorem \ref{th: approximation using representation 2} concerns representations \eqref{eq: representation 2, 1-dim} and \eqref{eq: representation 2, d-dim}. However, for the one-dimensional case, we also have the representation \eqref{eq: representation 1, 1-dim} which can also be used to approximate the expected payoff. We give the corresponding result below.

\begin{theorem}\label{th: approximation using representation 1}
    Let $S$ be given by \eqref{eq: 1-dim sandwich transformed}, $X$ be the corresponding log-price process, $X = \log S$, $f$: $\mathbb R \to \mathbb R$ satisfy Assumption \ref{assum: 1-dim payoff}, $F(s) := \int_0^s f(u)du$, $g(x) := f(e^x)$, $G(x) := F(1) + \int_0^x g(v) dv$. Fix a uniform partition $\{0 = t_0 < t_1 <...< t_N = T\}$, $t_n = \frac{Tn}{N}$, with the mesh satisfying \eqref{eq: condition on the mesh} and take $\widehat Y$, $\widehat X$ and $\widehat S$ as described in Subsection \ref{subsec: numerical scheme}. Finally, let $\lambda < H$ with $H:=H_1$ being from Assumption \ref{assum: assumptions on kernels}. Then
    \begin{itemize}
        \item[1)] for any $r\ge 1$, there exists a constant $C>0$ such that
        \[
            \mathbb E \left[\left|\frac{1}{Tu_{1,1}}  G( X(T)) \int_0^T \frac{1}{ Y(u)} d\widetilde W_1(u) - \frac{1}{Tu_{1,1}}  G(\widehat X(T)) \int_0^T \frac{1}{\widehat Y(u)} d\widetilde W_1(u) \right|^r\right]  \le C\Delta_N^{r\lambda},
        \]
        \item[2)] there exists a constant $C>0$ such that
        \[
            \left|\mathbb E \left[f(S(T))\right] - \mathbb E\left[ \frac{1}{Tu_{1,1}}  G(\widehat X(T)) \int_0^T \frac{1}{\widehat Y(u)} d\widetilde W_1(u) \right]\right|  \le C\Delta_N^{\lambda}.
        \]
    \end{itemize}
\end{theorem}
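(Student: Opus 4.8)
The plan is to reduce item 2) to item 1) with $r=1$ and then concentrate all the work on item 1). Indeed, by the representation \eqref{eq: representation 1, 1-dim} we have $\mathbb E f(S(T)) = \frac{1}{Tu_{1,1}}\mathbb E\big[G(X(T))\int_0^T \frac{1}{Y(u)}d\widetilde W_1(u)\big]$, so the left-hand side of the estimate in item 2) is precisely the $L^1(\Omega)$-norm of the quantity whose $L^r(\Omega)$-norm is controlled in item 1); taking $r=1$ in item 1) gives item 2) immediately. To prove item 1), write $A := \int_0^T \frac{1}{Y(u)}d\widetilde W_1(u)$ and $\widehat A := \int_0^T \frac{1}{\widehat Y(u)}d\widetilde W_1(u)$ and decompose
\[
    G(X(T))A - G(\widehat X(T))\widehat A = \big(G(X(T)) - G(\widehat X(T))\big)A + G(\widehat X(T))\big(A - \widehat A\big),
\]
estimating the two summands separately by the triangle inequality in $L^r(\Omega)$ and the Cauchy--Schwarz inequality.

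For the first summand, I would use that $g(v) = f(e^v)$ has growth $|g(v)| \le c_f(1 + e^{q|v|})$ by Assumption \ref{assum: 1-dim payoff}, hence $G$ is locally Lipschitz with $|G(x) - G(y)| \le c_f\big(1 + e^{q(|x|\vee|y|)}\big)|x-y|$. Cauchy--Schwarz then gives
\[
    \mathbb E\big[|G(X(T)) - G(\widehat X(T))|^{2r}\big] \le C\Big(\mathbb E\big[(1 + e^{q(|X(T)|\vee|\widehat X(T)|)})^{4r}\big]\Big)^{\frac12}\Big(\mathbb E\big[|X(T) - \widehat X(T)|^{4r}\big]\Big)^{\frac12}.
\]
The first factor is finite and bounded uniformly in $N$ by Theorem \ref{th: properties of S} (which gives $\mathbb E[e^{\rho X(T)}] = \mathbb E[S^\rho(T)] < \infty$ for every $\rho\in\mathbb R$) together with Proposition \ref{prop: moments of approximated price}; the second factor is $O(\Delta_N^{2r\lambda})$ by Theorem \ref{th: approximation of log-price}. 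So $\mathbb E\big[|G(X(T)) - G(\widehat X(T))|^{2r}\big] \le C\Delta_N^{2r\lambda}$, and since $\mathbb E[|A|^{2r}] < \infty$ (because $1/Y$ is bounded by $1/\min_{t}\varphi(t)$, so the Burkholder--Davis--Gundy inequality applies), a further Cauchy--Schwarz yields $\mathbb E\big[|(G(X(T)) - G(\widehat X(T)))A|^r\big] \le C\Delta_N^{r\lambda}$.

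For the second summand, I would first note that $|G(x)| \le C(1 + e^{q|x|})$, so $G(\widehat X(T))$ has moments of all orders, uniformly in $N$, again by Proposition \ref{prop: moments of approximated price}. On the other hand, exactly as in Step 1 of the proof of Theorem \ref{th: approximation using representation 2}, the Burkholder--Davis--Gundy inequality together with Theorem \ref{th: drift-implicit Euler scheme} give
\[
    \mathbb E\big[|A - \widehat A|^{2r}\big] \le C\,\mathbb E\Big[\Big(\int_0^T\Big(\frac{1}{Y(u)} - \frac{1}{\widehat Y(u)}\Big)^2 du\Big)^{r}\Big] \le C\Delta_N^{2r\lambda},
\]
so Cauchy--Schwarz yields $\mathbb E\big[|G(\widehat X(T))(A - \widehat A)|^r\big] \le C\Delta_N^{r\lambda}$. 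Combining the two bounds with the triangle inequality in $L^r(\Omega)$ and multiplying by the constant $(Tu_{1,1})^{-1}$ proves item 1), and item 2) follows as explained.

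I expect the only genuinely delicate point to be the uniform-in-$N$ control of the exponential moments $\mathbb E[e^{\rho\widehat X(T)}]$ and of the cross moment $\mathbb E\big[(1 + e^{q(|X(T)|\vee|\widehat X(T)|)})^{4r}\big]$, i.e. ensuring that the exponential growth of $G$ (equivalently, the polynomial growth of $f$ seen through the log-price) does not spoil the convergence rate; this is precisely what Theorem \ref{th: properties of S} and Proposition \ref{prop: moments of approximated price} are there to guarantee. The remainder is a routine adaptation of the estimates already carried out in the proof of Theorem \ref{th: approximation using representation 2}.
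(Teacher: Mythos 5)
Your proposal is correct and follows essentially the same route as the paper: item 2) is reduced to item 1) via the representation \eqref{eq: representation 1, 1-dim}, item 1) is handled by the local Lipschitz bound $|G(x)-G(y)|\le c_f(1+e^{qx}+e^{qy})|x-y|$, the Burkholder--Davis--Gundy estimate $\mathbb E[|A-\widehat A|^{2r}]\le C\Delta_N^{2r\lambda}$ from Theorem \ref{th: drift-implicit Euler scheme}, and Cauchy--Schwarz combined with the uniform-in-$N$ moment bounds of Theorem \ref{th: properties of S} and Proposition \ref{prop: moments of approximated price}. The only (immaterial) difference is the pairing in the telescoping decomposition — you attach $A$ to the $G$-difference and $G(\widehat X(T))$ to $A-\widehat A$, whereas the paper does the opposite — which changes nothing in the argument.
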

\begin{proof}
    Once again, item 1 implies item 2 by Theorem \ref{th: representation} and we sketch the proof of the former omitting the details. First, observe that for any $x_1$, $x_2 \in\mathbb R$
    \begin{align*}
        |G(x_1) - G(x_2)| &= \left|\int_{x_1\wedge x_2}^{x_1\vee x_2} g(v)dv\right| \le \sup_{v\in [x_1\wedge x_2, x_1\vee x_2]} |g(v)| |x_1-x_2|
        \\
        &= \sup_{v\in [x_1\wedge x_2, x_1\vee x_2]} |f(e^v)| |x_1-x_2| \le c_f(1+e^{q (x_1\vee x_2)})|x_1-x_2|
        \\
        &\le c_f(1+e^{q x_1} + e^{q x_2})|x_1-x_2|
    \end{align*}
    and thus, using the same machinery as in the proof of Theorem \ref{th: approximation using representation 2}, we can write
    \begin{align*}
        \mathbb E\left[\left|G( X(T)) -   G(\widehat X(T)) \right|^{r} \right] &\le C\mathbb E\left[\left|(1+S^q(T) + \widehat S^q(T) )|X(T)-\widehat X(T)| \right|^{r} \right] 
        \\
        &\le C\Delta_N^{r\lambda}.
    \end{align*}
    Due to existence of exponential moments of $X(T)$ as well as polynomial growth of $f$, it is also clear that for any $r\ge 1$
    \[
        \mathbb E \left[\left|G( X(T))\right|^{r}\right] < \infty.
    \]
    Finally, as noted in Step 1 of the proof of Theorem \ref{th: approximation using representation 2},
    \[
        \mathbb E\left[\left|\int_0^T \left(\frac{1}{ Y(u)} - \frac{1}{\widehat Y(u)}\right) d\widetilde W_1(u)\right|^{r}\right] \le C\Delta_N^{r\lambda}.
    \]
    Therefore
    \begin{align*}
        &E \left[\left|G( X(T)) \int_0^T \frac{1}{ Y(u)} d\widetilde W_1(u) -   G(\widehat X(T)) \int_0^T \frac{1}{\widehat Y(u)} d\widetilde W_1(u) \right|^r\right]
        \\
        &\quad \le C E \left[\left|G( X(T)) \left(\int_0^T \frac{1}{ Y(u)} d\widetilde W_1(u) - \int_0^T \frac{1}{\widehat Y(u)} d\widetilde W_1(u)\right) \right|^r\right]
        \\
        &\qquad + C E \left[\left|\left(G( X(T)) -   G(\widehat X(T))\right) \int_0^T \frac{1}{\widehat Y(u)} d\widetilde W_1(u) \right|^r\right]
        \\
        &\quad \le C\left( \mathbb E \left[\left|G( X(T))\right|^{2r}\right]\right)^{\frac{1}{2}} \left(\mathbb E\left[\left|\int_0^T \left(\frac{1}{ Y(u)} - \frac{1}{\widehat Y(u)}\right) d\widetilde W_1(u)\right|^{2r}\right]\right)^{\frac{1}{2}}
        \\
        &\qquad + C\left( E \left[\left|\int_0^T \frac{1}{\widehat Y(u)} d\widetilde W_1(u)\right|^{2r}\right]\right)^{\frac{1}{2}} \left(\mathbb E\left[\left|G( X(T)) -   G(\widehat X(T))\right|^{2r}\right]\right)^{\frac{1}{2}}
        \\
        &\quad \le C\Delta_N^{r\lambda}.
    \end{align*}
\end{proof}

We conclude the Subsection with an approximation result in line with Theorem \ref{th: approximation using representation 2} involving a change of measure. We omit the proof since it is straightforward and utilizes the same machinery as the one of Theorem \ref{th: approximation using representation 2} (one just has to additionally use Theorem \ref{th: approximation of martingale measure} in order to estimate expectations of the type $\mathbb E \left[|\mathcal E_T - \widehat{\mathcal E}_T|^r\right]$).

\begin{theorem}\label{th: approximation under change of measure}
    Let $S = (S_1,...,S_d)$ be given by \eqref{eq: definition of price process}, $X=(X_1,...,X_d)$ be the corresponding log-price process, $X_i = \log S_i$, $f$: $\mathbb R \to \mathbb R$ satisfy Assumption \ref{assum: 1-dim payoff}, $F(s) := \int_0^s f(u)du$ and $\alpha_i > 0$, $i=1,...,d$, and $\mathcal E_T$ be the minimal martingale measure defined by \eqref{eq: minimal martingale measure}. Fix a uniform partition $\{0 = t_0 < t_1 <...< t_N = T\}$, $t_n = \frac{Tn}{N}$, with the mesh satisfying \eqref{eq: condition on the mesh} and take $\widehat Y = (\widehat Y_1, ..., \widehat Y_d)$, $\widehat X = (\widehat X_1, ..., \widehat X_d)$, $\widehat S = (\widehat S_1,...,\widehat S_d)$ and $\widehat{\mathcal E}_T$ as described in Subsection \ref{subsec: numerical scheme}. Finally, let $\lambda < \min_{i=1,...,d} H_i$ with $H_i$ being from Assumption \ref{assum: assumptions on kernels}. Then 
    \begin{itemize}
        \item[1)] for any $r\ge 1$, there exists a constant $C>0$ such that
        \begin{align*}
            \mathbb E &\Bigg[ \Bigg|\mathcal E_T  \frac{F\left(\sum_{i=1}^d \alpha_i S_i(T)\right)}{\alpha_1 S_1(T)} \left(1 + \frac{ \int_0^T\frac{1}{Y_1(v)} d\widetilde W_1(v) - \sum_{k=1}^d\int_0^T  \beta_k \frac{\widetilde \mu_k(v)}{Y_k(v)Y_1(v)}dv}{Tu_{1,1}}\right)
            \\
            & \quad - \widehat{\mathcal E}_T  \frac{F\left(\sum_{i=1}^d \alpha_i \widehat S_i(T)\right)}{\alpha_1 \widehat S_1(T)} \left(1 + \frac{  \int_0^T\frac{1}{\widehat Y_1(v)} d\widetilde W_1(v) - \sum_{k=1}^d\int_0^T  \beta_k \frac{\widetilde \mu_k(v)}{\widehat Y_k(v)\widehat Y_1(v)}dv   }{Tu_{1,1}}\right)\Bigg|^r\Bigg] \le C\Delta_N^{r\lambda},
        \end{align*}
        
        \item[2)] there exists a constant $C>0$ such that
        \begin{align*}
            \Bigg|\mathbb E &\left[\mathcal E_Tf\left( \sum_{i=1}^d \alpha_i S_i(T)\right)\right] 
            \\
            &\quad - \mathbb E \left[\widehat{\mathcal E}_T  \frac{F\left(\sum_{i=1}^d \alpha_i \widehat S_i(T)\right)}{\alpha_1 \widehat S_1(T)} \left(1 + \frac{  \int_0^T\frac{1}{\widehat Y_1(v)} d\widetilde W_1(v) - \sum_{k=1}^d\int_0^T  \beta_k \frac{\widetilde \mu_k(v)}{\widehat Y_k(v)\widehat Y_1(v)}dv      }{Tu_{1,1}}\right)\right]\Bigg| \le C\Delta_N^{\lambda}.
        \end{align*}
    \end{itemize}
\end{theorem}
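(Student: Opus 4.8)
The plan is to mirror the structure of the proofs of Theorems~\ref{th: approximation using representation 2} and~\ref{th: approximation using representation 1}, using the extra input of Theorem~\ref{th: approximation of martingale measure}. First, item~2) reduces to item~1): by Theorem~\ref{th: representation d-dim measure change} the exact price $\mathbb E[\mathcal E_T f(\sum_i \alpha_i S_i(T))]$ equals the expectation of the functional appearing in item~1) evaluated at the true processes $\mathcal E_T, S, Y, \widetilde W$, so the error in item~2) is dominated by the $L^1(\Omega)$-norm of the difference estimated in item~1), i.e. by the $r=1$ case of item~1). As in the cited proofs, I would carry out item~1) only for $d=1$ with $\alpha_1=1$ (so $\beta_1 = -\rho_{1,1}\ell^{(-1)}_{1,1}$ is a single constant); the general multidimensional case is notationally heavier but requires no new idea.

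For item~1) with $d=1$, I would write the difference of the two products as a telescoping sum over four ``blocks'': replacing $\mathcal E_T$ by $\widehat{\mathcal E}_T$; replacing $F(S(T))/S(T)$ by $F(\widehat S(T))/\widehat S(T)$; replacing $\int_0^T \tfrac{1}{Y_1(v)}d\widetilde W_1(v)$ by $\int_0^T \tfrac{1}{\widehat Y_1(v)}d\widetilde W_1(v)$; and replacing the drift--correction term $\int_0^T \beta_1 \tfrac{\widetilde\mu_1(v)}{Y_1^2(v)}dv$ by $\int_0^T \beta_1 \tfrac{\widetilde\mu_1(v)}{\widehat Y_1^2(v)}dv$. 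Peeling the blocks off one at a time and applying iterated Hölder's inequality, each summand becomes a product of one ``perturbed'' factor with several unperturbed (or already perturbed) factors, and every factor can be pushed into some $L^p(\Omega)$ with $p$ as large as needed. The quantitative ingredients are all available: (i) $\mathbb E[|\mathcal E_T - \widehat{\mathcal E}_T|^r] \le C\Delta_N^{r\lambda}$ and $\mathbb E[\mathcal E_T^r] + \sup_N \mathbb E[\widehat{\mathcal E}_T^r] < \infty$ from Theorem~\ref{th: approximation of martingale measure} and \eqref{eq: moments of martingale measure}; (ii) $\mathbb E[|F(S(T))/S(T) - F(\widehat S(T))/\widehat S(T)|^r] \le C\Delta_N^{r\lambda}$ together with $L^r$-boundedness of $F(S(T))/S(T)$, exactly as in Step~2 of the proof of Theorem~\ref{th: approximation using representation 2} (via polynomial growth, Theorem~\ref{th: properties of S} and Proposition~\ref{prop: moments of approximated price}); (iii) $\mathbb E[|\int_0^T (\tfrac{1}{Y_1(u)} - \tfrac{1}{\widehat Y_1(u)})d\widetilde W_1(u)|^r] \le C\Delta_N^{r\lambda}$ by Burkholder--Davis--Gundy and Theorem~\ref{th: drift-implicit Euler scheme}, plus $\sup_N \mathbb E[|\int_0^T \tfrac{1}{\widehat Y_1(u)}d\widetilde W_1(u)|^r] < \infty$; and (iv) since $1/Y_1$ and $1/\widehat Y_1$ are bounded by $1/\varphi_1^*$ uniformly in $N$ and $\widetilde\mu_1$ is bounded, Theorem~\ref{th: drift-implicit Euler scheme} gives $\mathbb E[\sup_v |\widehat Y_1^{-2}(v) - Y_1^{-2}(v)|^r] \le C\Delta_N^{r\lambda}$, so the block-(iv) difference is $O(\Delta_N^{\lambda})$ in every $L^r$ while the block itself is $L^r$-bounded uniformly in $N$. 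Combining (i)--(iv) via the triangle inequality and Hölder yields the rate $C\Delta_N^{r\lambda}$.

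The main obstacle is purely bookkeeping rather than a genuine difficulty: the functional in item~1) is a product of four factors, \emph{none} of which is bounded --- the density $\mathcal E_T$, the ratio $F(S(T))/S(T)$, the stochastic integral, and the drift correction all lie only in $\bigcap_{p\ge 1}L^p(\Omega)$, and likewise their approximations uniformly in $N$ --- so one cannot bound any factor by a deterministic constant and must instead control the remaining product in a sufficiently high $L^p$ every time a factor is replaced, choosing the Hölder exponents with care and invoking the uniform-in-$N$ moment bounds repeatedly. Beyond this, the only genuinely new analytic input relative to Theorem~\ref{th: approximation using representation 2} is the pair of density estimates supplied by Theorem~\ref{th: approximation of martingale measure} and the handling of the extra drift--correction term in (iv), both of which are routine given the lower bound $Y_k \ge \varphi_k^* > 0$ and Theorem~\ref{th: drift-implicit Euler scheme}. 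For this reason I would, as the authors do, state the theorem and omit the detailed computation.
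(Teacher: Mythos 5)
Your proposal is correct and follows exactly the route the paper intends: the authors omit this proof precisely because it reuses the machinery of Theorem~\ref{th: approximation using representation 2} (factor-by-factor replacement with H\"older and the uniform-in-$N$ moment bounds), supplemented by Theorem~\ref{th: approximation of martingale measure} and \eqref{eq: moments of martingale measure} for the density terms, and with item~2) following from item~1) via Theorem~\ref{th: representation d-dim measure change} --- which is exactly your plan. One minor remark: the drift-correction factor $\sum_k\int_0^T \beta_k \widetilde\mu_k(v)/(Y_k(v)Y_1(v))\,dv$ is in fact deterministically bounded (since $Y_k \ge \varphi_k^* > 0$ and $\widetilde\mu_k$ is continuous), so your treatment of it as merely $L^p$-bounded is only over-cautious, not wrong.
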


\subsection{Approximation via conditional Gaussianity for one-di\-mensional model}\label{subsec: Gaussian approximation}

Assume now that $d=1$ and the price $S$ is given by \eqref{eq: 1-dim sandwich transformed}. The core idea of Subsection \ref{subsec: double approximation} was to plug in the simulated paths $\widehat Y$, $\widehat S$ and $\widehat X$ into representations \eqref{eq: representation 1, 1-dim} or \eqref{eq: representation 2, 1-dim}. However, this requires simulation of \textit{both} volatility and the price which inevitably leads to higher variance in Monte Carlo estimations. However, it is possible to avoid simulation of the price and in this Subsection we describe the corresponding approach.

First note that the random vector $\left(\widehat X(T), \int_0^T \frac{1}{\widehat Y(u)} d\widetilde W_1(u)\right)^T$ is conditionally Gaussian with respect to the $\sigma$-field $\widetilde {\mathcal F}_T^2 := \sigma\{\widetilde W_2(t),~t\in[0,T]\}$. The next lemma presents the parameters of the joint conditional distribution of the random vector $\left(\widehat X(T), \int_0^T \frac{1}{\widehat Y(u)} d\widetilde W_1(u)\right)$.
\begin{lemma}\label{lemma: cond dist 1 dim}
    The random vector $\left(\widehat X(T), \int_0^T \frac{1}{\widehat Y(u)} d\widetilde W_1(u)\right)^T$ is conditionally Gaussian with respect to the $\sigma$-field $\widetilde {\mathcal F}_T^2 := \sigma\{\widetilde W_2(t),~t\in[0,T]\}$ with conditional mean
    \begin{equation}
    \begin{gathered}
        \left(\begin{matrix}
            X(0) + \frac{T}{N} \sum_{k=0}^{N-1} \left( \mu(t_k) - \frac{1}{2} \widehat Y^2(t_k) \right) + u_{1,2} \sum_{k=0}^{N-1} \widehat Y(t_k) (\widetilde W_2(t_{k+1}) - \widetilde W_2(t_{k}))
            \\
            0
        \end{matrix}\right) =: \left(\begin{matrix}
            m(\widehat Y)
            \\
            0
        \end{matrix}\right)
    \end{gathered}
    \end{equation}
    and conditional covariance matrix
    \begin{align*}
        \widehat{\mathcal C} &=\left(\begin{matrix}
            u_{1,1}^2\frac{T}{N} \sum_{k=0}^{N-1} \widehat Y^2(t_k) &  u_{1,1}T
            \\
            u_{1,1}T & \frac{T}{N} \sum_{k=0}^{N-1} \widehat Y^{-2}(t_k)
        \end{matrix}\right)
        =: \left(\begin{matrix}
            u_{1,1}^2 V^2_1 & u_{1,1}T
            \\
            u_{1,1}T & V^2_2
        \end{matrix}\right),
    \end{align*}
    where $V^2_1 := \frac{T}{N} \sum_{k=0}^{N-1} \widehat Y^2(t_k)$ and $V^2_2 := \frac{T}{N} \sum_{k=0}^{N-1} \widehat Y^{-2}(t_k)$. Moreover, if additionally
    \[
        \det\widehat{\mathcal C}  = u_{1,1}^2 \left(V^2_1 V^2_2 - T^2\right) > 0 
    \]
    with probability 1, then the random vector $\left(\widehat X(T), \int_0^T \frac{1}{\widehat Y(u)} d\widetilde W_1(u)\right)^T$ has a conditional density of the form
    \begin{align*}
        \phi(x,y) = \frac{1}{2\pi \sqrt{\det\widehat{\mathcal C} }}\exp\left\{ -\frac{V^2_2 (x - m(\widehat Y))^2 + u_{1,1}^2 V^2_1y^2 - 2Tu_{1,1}(x - m(\widehat Y))y }{2\det\widehat{\mathcal C}} \right\}.
    \end{align*}
\end{lemma}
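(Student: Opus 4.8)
The plan is to compute the conditional distribution explicitly by freezing the randomness coming from $\widetilde W_2$. Observe that, conditionally on $\widetilde{\mathcal F}_T^2 = \sigma\{\widetilde W_2(t),~t\in[0,T]\}$, the approximated volatility path $\widehat Y(t_0),\dots,\widehat Y(t_{N-1})$ is entirely determined, since the drift-implicit scheme \eqref{eq: definition of backward Euler scheme} for $Y$ uses only increments of $Z(t) = \int_0^t \mathcal K(t,s)\,d\widetilde W_2(s)$, which is $\widetilde{\mathcal F}_T^2$-measurable. Thus, conditionally, the only source of randomness left is the Brownian motion $\widetilde W_1$, which is independent of $\widetilde W_2$, and both $\widehat X(T)$ and $\int_0^T \frac{1}{\widehat Y(u)}\,d\widetilde W_1(u)$ are affine functionals of $\widetilde W_1$ with $\widetilde{\mathcal F}_T^2$-measurable (hence conditionally deterministic) coefficients. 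A linear functional of a Brownian motion with deterministic integrand is Gaussian, so the pair is conditionally jointly Gaussian.

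First I would write out the two components explicitly. From the numerical scheme, $\widehat X(T) = X(0) + \frac{T}{N}\sum_{k=0}^{N-1}\left(\mu(t_k) - \tfrac12 \widehat Y^2(t_k)\right) + u_{1,1}\sum_{k=0}^{N-1}\widehat Y(t_k)(\widetilde W_1(t_{k+1}) - \widetilde W_1(t_k)) + u_{1,2}\sum_{k=0}^{N-1}\widehat Y(t_k)(\widetilde W_2(t_{k+1}) - \widetilde W_2(t_k))$, where I use the transformed dynamics \eqref{eq: 1-dim sandwich transformed}. Conditionally on $\widetilde{\mathcal F}_T^2$, everything except the $\widetilde W_1$-sum is deterministic, and that sum has conditional mean zero; this gives the conditional mean $m(\widehat Y)$ of the first component. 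The second component $\int_0^T \frac{1}{\widehat Y(u)}\,d\widetilde W_1(u)$ is a genuine Itô integral against $\widetilde W_1$ with $\widetilde{\mathcal F}_T^2$-measurable integrand $\widehat Y(u)^{-1}$ (piecewise constant in $u$), so its conditional mean is $0$. For the covariance matrix I would use the conditional Itô isometry: the $(1,1)$-entry is $u_{1,1}^2\,\mathbb E\big[(\sum_k \widehat Y(t_k)\Delta\widetilde W_1(t_k))^2 \mid \widetilde{\mathcal F}_T^2\big] = u_{1,1}^2 \frac{T}{N}\sum_{k=0}^{N-1}\widehat Y^2(t_k) = u_{1,1}^2 V_1^2$; the $(2,2)$-entry is $\frac{T}{N}\sum_{k=0}^{N-1}\widehat Y^{-2}(t_k) = V_2^2$; and the cross term is $u_{1,1}\,\mathbb E\big[\sum_j \widehat Y(t_j)\Delta\widetilde W_1(t_j)\cdot\int_0^T \widehat Y(u)^{-1} d\widetilde W_1(u)\mid\widetilde{\mathcal F}_T^2\big] = u_{1,1}\sum_{j}\int_{t_j}^{t_{j+1}} \widehat Y(t_j)\widehat Y(u)^{-1}\,du = u_{1,1}\sum_j \Delta_N = u_{1,1} T$, using that $\widehat Y(u) = \widehat Y(t_j)$ on $[t_j,t_{j+1})$.

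Once the conditional mean vector and covariance matrix $\widehat{\mathcal C}$ are identified, the density formula is the standard bivariate Gaussian density: under the assumption $\det\widehat{\mathcal C} = u_{1,1}^2(V_1^2 V_2^2 - T^2) > 0$ a.s., the matrix $\widehat{\mathcal C}$ is a.s. invertible and positive definite (its diagonal entries are positive), so the conditional law has a density $\phi(x,y) = \frac{1}{2\pi\sqrt{\det\widehat{\mathcal C}}}\exp\{-\tfrac12 (z - \mu)^T \widehat{\mathcal C}^{-1}(z-\mu)\}$ with $z = (x,y)^T$, $\mu = (m(\widehat Y),0)^T$. Expanding $\widehat{\mathcal C}^{-1} = \frac{1}{\det\widehat{\mathcal C}}\begin{pmatrix} V_2^2 & -u_{1,1}T \\ -u_{1,1}T & u_{1,1}^2 V_1^2\end{pmatrix}$ and carrying out the quadratic form yields exactly the stated expression. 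The only mild subtlety — and the place I would be slightly careful — is making the conditioning rigorous: one should note that $\widehat Y(t_k)$ is bounded (uniformly, via Remark \ref{rem: properties of Y}), hence all the stochastic integrals are genuine $L^2$ Itô integrals, the conditional Itô isometry applies with $\widetilde{\mathcal F}_T^2$-measurable integrands, and the conditional characteristic function of the pair factors as that of a Gaussian vector with the computed parameters; this last point is really the substance of "conditionally Gaussian" and the rest is bookkeeping. I do not expect any genuine obstacle here — it is a computation — the main care is just in justifying that the conditional mean of the $\widetilde W_1$-integrals is zero and that the cross-covariance telescopes to $u_{1,1}T$.
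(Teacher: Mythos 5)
Your proposal is correct and follows essentially the same route as the paper: write $\widehat X(T)$ explicitly in terms of increments of $\widetilde W_1$ and $\widetilde W_2$, note that conditionally on $\widetilde{\mathcal F}_T^2$ the coefficients (i.e.\ the $\widehat Y(t_k)$) are frozen and the remaining randomness is the independent Brownian motion $\widetilde W_1$, and then read off the conditional mean, covariance (via the It\^o isometry and the piecewise-constant form of $\widehat Y$, giving the cross term $u_{1,1}T$) and the bivariate Gaussian density. The paper's proof states exactly this representation and leaves the computation implicit, so your write-up is simply a more detailed version of the same argument.
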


\begin{proof}
    Recall that the random variable $\widehat X(T)$ can be represented in the form
    \begin{align*}
        \widehat X(T) & = X(0) + \frac{T}{N}\sum_{k=0}^{N-1} \left( \mu(t_k) - \frac{ \widehat Y^2(t_k)}{2} \right) +  u_{1,1} \sum_{k=0}^{N-1} \widehat Y(t_k) \left(\widetilde W_1(t_{k+1}) - \widetilde W_1(t_k)\right) 
        \\
        &\quad+  u_{1,2} \sum_{k=0}^{N-1} \widehat Y(t_k) \left(\widetilde W_2(t_{k+1}) - \widetilde W_2(t_k)\right) .
    \end{align*}
    Now the explicit form of conditional mean, conditional covariance matrix and conditional density (provided that the conditional covariance matrix is non-degenerate) can be computed directly. 
\end{proof}

\begin{theorem}\label{th: expectation in terms of conditional density}
    Let $\det\widehat{\mathcal C} > 0$ with probability 1, where $\widehat {\mathcal C}$ is the conditional covariance matrix from Lemma \ref{lemma: cond dist 1 dim}. Then
    \begin{equation}\label{eq: expectation in terms of conditional density}
    \begin{aligned}
        &\frac{1}{Tu_{1,1}}\mathbb E \left[  G(\widehat X(T)) \int_0^T \frac{1}{\widehat Y(u)} d\widetilde W_1(u)\right]
        \\
        &\quad = \frac{1}{\sqrt{2\pi}u_{1,1}^2} \int_{\mathbb R} G(x) \mathbb E\left[\frac{x - m(\widehat Y)}{ V_1^3 } \exp\left\{ -\frac{ (x - m(\widehat Y))^2}{2 u_{1,1}^2 V_1^2} \right\} \right]dx.
    \end{aligned}
    \end{equation}
\end{theorem}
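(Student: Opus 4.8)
The plan is to exploit the conditional Gaussianity established in Lemma~\ref{lemma: cond dist 1 dim}: conditionally on $\widetilde{\mathcal F}_T^2$, the pair $\left(\widehat X(T), \int_0^T \frac{1}{\widehat Y(u)}d\widetilde W_1(u)\right)$ is a bivariate Gaussian vector with mean $(m(\widehat Y),0)$ and covariance $\widehat{\mathcal C}$, and everything that depends only on $\widehat Y(t_k)$ (hence on $\widetilde W_2$) may be treated as a constant inside the conditional expectation. So first I would condition: writing $\mathbb E[\,\cdot\,] = \mathbb E\big[\mathbb E[\,\cdot\mid \widetilde{\mathcal F}_T^2]\big]$, it suffices to compute, for fixed values of $\widehat Y(t_k)$,
\[
    \mathbb E\!\left[ G(\widehat X(T)) \int_0^T \frac{1}{\widehat Y(u)}d\widetilde W_1(u) \,\Big|\, \widetilde{\mathcal F}_T^2\right]
    = \int_{\mathbb R^2} G(x)\, y\, \phi(x,y)\,dy\,dx,
\]
with $\phi$ the explicit conditional density from Lemma~\ref{lemma: cond dist 1 dim}.

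Next I would carry out the inner $y$-integral. The standard identity for a jointly Gaussian pair $(U,V)$ with $\mathbb E U = \mu_U$, $\mathbb E V = 0$ is $\mathbb E[V\mid U] = \frac{\Cov(U,V)}{\Var U}(U-\mu_U)$, and hence $\mathbb E[V\,\mathbbm 1\{U\in dx\}] = \frac{\Cov(U,V)}{\Var U}(x-\mu_U)\,p_U(x)\,dx$, where $p_U$ is the marginal density of $U$. Here $U = \widehat X(T)$, $V = \int_0^T \frac{1}{\widehat Y(u)}d\widetilde W_1(u)$, $\mu_U = m(\widehat Y)$, $\Var U = u_{1,1}^2 V_1^2$, $\Cov(U,V) = u_{1,1}T$, so $\frac{\Cov(U,V)}{\Var U} = \frac{T}{u_{1,1}V_1^2}$ and $p_U(x) = \frac{1}{\sqrt{2\pi}\,u_{1,1}V_1}\exp\!\big\{-\frac{(x-m(\widehat Y))^2}{2u_{1,1}^2V_1^2}\big\}$. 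Therefore
\[
    \int_{\mathbb R} y\,\phi(x,y)\,dy
    = \frac{T}{u_{1,1}V_1^2}\,(x-m(\widehat Y))\,\frac{1}{\sqrt{2\pi}\,u_{1,1}V_1}\exp\!\left\{-\frac{(x-m(\widehat Y))^2}{2u_{1,1}^2V_1^2}\right\}
    = \frac{T}{\sqrt{2\pi}\,u_{1,1}^2 V_1^3}\,(x-m(\widehat Y))\exp\!\left\{-\frac{(x-m(\widehat Y))^2}{2u_{1,1}^2V_1^2}\right\}.
\]
Multiplying by $G(x)$, integrating in $x$, then taking the outer expectation over $\widetilde{\mathcal F}_T^2$ and finally dividing by $Tu_{1,1}$ produces exactly the right-hand side of \eqref{eq: expectation in terms of conditional density}; the factor $T$ cancels, leaving $\frac{1}{\sqrt{2\pi}\,u_{1,1}^2}$ in front.

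The only genuine point requiring care is the interchange of the outer expectation with the $dx$-integral (Fubini) and the justification that all the integrals are finite. For this I would note that $G$ has polynomial growth (it is the primitive of $g(x)=f(e^x)$, and $f$ has polynomial growth by Assumption~\ref{assum: 1-dim payoff}, so $G(x) \le C(1+e^{(q+1)|x|})$), while $V_1^2 = \frac{T}{N}\sum_{k}\widehat Y^2(t_k)$ is bounded above and below by deterministic constants because each $\widehat Y(t_k)\in(\varphi(t_k),\psi(t_k))$ uniformly; hence the Gaussian factor $\exp\{-(x-m(\widehat Y))^2/(2u_{1,1}^2V_1^2)\}$ provides uniform-in-$\omega$ super-polynomial decay in $x$ once $|x|$ is large relative to $|m(\widehat Y)|$, and $m(\widehat Y)$ has moments of all orders (it is a bounded drift term plus a Gaussian-type stochastic sum with bounded integrand). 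Combining these bounds gives an integrable dominating function, legitimizing Fubini and the use of the conditional density. I do not expect any serious obstacle beyond this bookkeeping; the computation is essentially the elementary Gaussian regression identity applied conditionally.
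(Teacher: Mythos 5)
Your proof is correct and follows essentially the same route as the paper: condition on $\widetilde{\mathcal F}_T^2$, integrate against the conditional density from Lemma \ref{lemma: cond dist 1 dim}, and evaluate $\int_{\mathbb R} y\,\phi(x,y)\,dy$ --- the paper completes the square explicitly whereas you invoke the Gaussian regression identity, arriving at the same kernel $\frac{T(x-m(\widehat Y))}{\sqrt{2\pi}\,u_{1,1}^2V_1^3}\exp\{-(x-m(\widehat Y))^2/(2u_{1,1}^2V_1^2)\}$, and you additionally sketch the Fubini/integrability justification that the paper leaves implicit. One bookkeeping remark, inherited from the statement itself: dividing that kernel by $Tu_{1,1}$ actually yields the prefactor $\frac{1}{\sqrt{2\pi}\,u_{1,1}^3}$ rather than $\frac{1}{\sqrt{2\pi}\,u_{1,1}^2}$, so the power of $u_{1,1}$ in \eqref{eq: expectation in terms of conditional density} appears to be a typo which your closing sentence reproduces rather than corrects (immaterial in the paper's simulations, where $u_{1,1}=1$).
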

\begin{proof}
    Using Lemma \ref{lemma: cond dist 1 dim}, we can write
    \begin{equation*}
    \begin{aligned}
        \mathbb E \left[  G(\widehat X(T)) \int_0^T \frac{1}{\widehat Y(u)} d\widetilde W_1(u)\right] &=\mathbb E \left[  \mathbb E\left[G(\widehat X(T)) \int_0^T \frac{1}{\widehat Y(u)} d\widetilde W_1(u)~\bigg|~\widetilde{\mathcal F}^2_T\right]\right]
        \\
        &=\mathbb E\left[ \int_{\mathbb R^2} G(x)y \phi(x,y)dxdy \right]
        \\
        &= \int_{\mathbb R} G(x) \mathbb E\left[\int_{\mathbb R}y \phi(x,y)dy \right]dx. 
    \end{aligned}
    \end{equation*}
    The integral $\int_{\mathbb R}y \phi(x,y)dy$ can be simplified. Namely, denote for simplicity $\widetilde x := x - m(\widehat Y)$. Then
    \begin{align*}
        \int_{\mathbb R}y \phi(x,y)dy &= \frac{1}{2\pi \sqrt{\det\widehat{\mathcal C} }}\int_{\mathbb R} y \exp\left\{ -\frac{V^2_2 \widetilde x^2 + u_{1,1}^2 V^2_1y^2 - 2u_{1,1} T\widetilde x y }{2\det\widehat{\mathcal C}} \right\}dy
        \\
        & = \frac{1}{2\pi \sqrt{\det\widehat{\mathcal C} }}\int_{\mathbb R} y \exp\left\{ -\frac{1 }{2\det\widehat{\mathcal C}} \left( \left(u_{1,1} V_1 y - \frac{T\widetilde x}{V_1}  \right)^2 + V^2_2 \widetilde x^2 - \frac{T^2\widetilde x^2}{V^2_1} \right) \right\}dy
        \\
        & = \frac{1}{2\pi \sqrt{\det\widehat{\mathcal C} }}\exp\left\{ -\frac{ (V_1^2 V_2^2 - T^2) \widetilde x^2}{2  V_1^2\det{\widehat {\mathcal C}}} \right\} \int_{\mathbb R} y \exp\left\{ -\frac{\left(y - \frac{T\widetilde x}{u_{1,1} V_1^2}\right)^2}{2 \frac{\det\widehat C}{u_{1,1}^2 V_1^2}} \right\} dy
        \\
        & = \frac{1}{2\pi \sqrt{\det\widehat{\mathcal C} }}\exp\left\{ -\frac{ \widetilde x^2}{2 u_{1,1}^2 V_1^2} \right\} \sqrt{2\pi} \frac{T\widetilde x\sqrt{\det\widehat {\mathcal C}}}{u_{1,1}^2 V_1^3}
        \\
        & = \frac{T\widetilde x}{\sqrt{2\pi}u_{1,1}^2 V_1^3} \exp\left\{ -\frac{ \widetilde x^2}{2 u_{1,1}^2 V_1^2} \right\}.
    \end{align*}
    This yields the result.
\end{proof}

Equation \eqref{eq: expectation in terms of conditional density} from Theorem \ref{th: expectation in terms of conditional density} essentially relies on the condition $\det\widehat {\mathcal C} > 0$ with probability 1. A simple sufficient condition is provided by the following proposition.

\begin{proposition}
    Let $Z(t) = \int_0^t \mathcal K(t,s)d\widetilde W_2(s)$, $t\in[0,T]$, and $\{0=t_0 < t_1 <...< t_N = T\}$ be a uniform partition of $[0,T]$ with the mesh $\Delta_N := \frac{T}{N}$ having the following property: $c\Delta_N < 1$, where $c$ is an upper bound for $\frac{\partial b}{\partial y}$ from Assumption \ref{assum: assumption on sandwiched drift}(iv).
    
    If the random vector
    \[
        (Z(t_1) - Z(t_0), Z(t_2) - Z(t_1),..., Z(t_N) - Z(t_{N-1}))^T 
    \]
    has absolutely continuous distribution w.r.t. the Lebesgue measure on $\mathbb R^{N}$, then $\det \widehat{\mathcal C} > 0$ with probability 1 and \eqref{eq: expectation in terms of conditional density} holds.
\end{proposition}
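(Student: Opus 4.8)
The plan is to reduce the statement to an elementary inequality together with a null-set argument, since once $\det\widehat{\mathcal C}>0$ holds almost surely the identity \eqref{eq: expectation in terms of conditional density} is exactly the conclusion of Theorem \ref{th: expectation in terms of conditional density}. First I would recall from Lemma \ref{lemma: cond dist 1 dim} that $\det\widehat{\mathcal C}=u_{1,1}^2(V_1^2V_2^2-T^2)$ with $u_{1,1}>0$ and
\[
    V_1^2=\Delta_N\sum_{k=0}^{N-1}\widehat Y^2(t_k),\qquad V_2^2=\Delta_N\sum_{k=0}^{N-1}\widehat Y^{-2}(t_k),\qquad \Delta_N=\tfrac{T}{N},
\]
where all $\widehat Y(t_k)$ are strictly positive because the scheme stays in $(\varphi(t_k),\psi(t_k))$. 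Applying the discrete Cauchy--Schwarz inequality to the vectors $(\widehat Y(t_k))_{k=0}^{N-1}$ and $(\widehat Y^{-1}(t_k))_{k=0}^{N-1}$, whose componentwise product is identically $1$, gives $V_1^2V_2^2\ge(\Delta_N N)^2=T^2$, so $\det\widehat{\mathcal C}\ge 0$ in all cases; moreover equality holds if and only if the two vectors are proportional, which, as their entries are of the form $y$ and $1/y$ with $y>0$, happens precisely on the event
\[
    E:=\{\widehat Y(t_0)=\widehat Y(t_1)=\dots=\widehat Y(t_{N-1})\}.
\]
Hence it suffices to show $\mathbb P(E)=0$.

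Next I would show that $E$ is contained in a Lebesgue-null event for the increment vector. On $E$, since $\widehat Y(t_0)=Y(0)$ is deterministic, all of $\widehat Y(t_0),\dots,\widehat Y(t_{N-1})$ equal $Y(0)$; substituting $\widehat Y(t_{k+1})=\widehat Y(t_k)=Y(0)$ into the defining recursion \eqref{eq: definition of backward Euler scheme} for $k=0,\dots,N-2$ forces
\[
    Z(t_{k+1})-Z(t_k)=-\Delta_N\, b(t_{k+1},Y(0)),\qquad k=0,\dots,N-2,
\]
with each $b(t_{k+1},Y(0))$ a well-defined real number since $\widehat Y(t_{k+1})\in(\varphi(t_{k+1}),\psi(t_{k+1}))$ by construction of the scheme under Assumption \ref{assum: assumption on sandwiched drift} and the mesh condition $c\Delta_N<1$ (see \cite{DNMYT2022}). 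Thus $E$ is contained in the event that the first $N-1$ coordinates of $(Z(t_1)-Z(t_0),\dots,Z(t_N)-Z(t_{N-1}))^T$ coincide with a fixed point of $\mathbb R^{N-1}$. Since by hypothesis this $\mathbb R^N$-valued random vector has a density with respect to the Lebesgue measure, so does its projection onto the first $N-1$ coordinates, and a single point is Lebesgue-null; therefore $\mathbb P(E)=0$, whence $\det\widehat{\mathcal C}=u_{1,1}^2(V_1^2V_2^2-T^2)>0$ with probability one.

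Finally, with $\det\widehat{\mathcal C}>0$ almost surely in hand, Theorem \ref{th: expectation in terms of conditional density} applies and yields \eqref{eq: expectation in terms of conditional density}, which finishes the proof. The only genuinely delicate point is the bookkeeping in the second step: one must make sure that on $E$ the implicit Euler equation really does pin down the noise increment $Z(t_{k+1})-Z(t_k)$ as a deterministic quantity, which rests on the unique solvability of each implicit step guaranteed by $c\Delta_N<1$; the rest is just the equality case of Cauchy--Schwarz and a routine marginalisation of a density.
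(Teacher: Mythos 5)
Your proposal is correct, and it reaches the conclusion by a genuinely different route in the key step. Both you and the paper reduce the problem, via the Cauchy--Schwarz inequality applied to $(\widehat Y(t_k))_k$ and $(\widehat Y^{-1}(t_k))_k$, to showing that the degeneracy event $E=\{\widehat Y(t_0)=\dots=\widehat Y(t_{N-1})\}$ (equivalently $\frac{1}{N^2}\sum_{k,l}\widehat Y^2(t_k)\widehat Y^{-2}(t_l)=1$) has probability zero; you make the equality case explicit, while the paper leaves it implicit. Where you diverge is in how absolute continuity is used: the paper pushes the density of the noise increments \emph{forward} through the scheme, writing $(\widehat Y(t_1),\dots,\widehat Y(t_N))=I(Z(t_1)-Z(t_0),\dots,Z(t_N)-Z(t_{N-1}))$ with $I$ built from the inverse maps $b_n^{-1}(y)=\bigl(y-b(t_n,y)\Delta_N\bigr)^{-1}$, checks that $I$ is a bijection with strictly positive Jacobian, and concludes by a change of variables that the vector of scheme values itself has a Lebesgue density, after which $E$ is a null set. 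You instead argue \emph{backwards}: on $E$ the implicit Euler recursion pins each increment to the deterministic value $Z(t_{k+1})-Z(t_k)=-\Delta_N b(t_{k+1},Y(0))$, so $E$ is contained in the preimage of a single point of $\mathbb R^{N-1}$ under the (marginal of the) increment vector, which is null by the hypothesis. Your argument is more elementary -- it needs no differentiability of $b_n^{-1}$, no Jacobian computation, and no change-of-variables formula, only unique solvability of each implicit step under $c\Delta_N<1$ and the fact that marginals of absolutely continuous laws are absolutely continuous -- whereas the paper's argument yields the stronger byproduct that $(\widehat Y(t_1),\dots,\widehat Y(t_N))$ is itself absolutely continuous, which is of independent interest. (Both arguments, like the proposition itself, implicitly assume $N\ge 2$: for $N=1$ one has $V_1^2V_2^2=T^2$ identically.)
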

\begin{proof}
    Note that $\det\widehat{\mathcal C} \ge 0$ a.s., because $\widehat {\mathcal C}$ is a conditional covariance matrix, i.e. we need to check that $\det\widehat{\mathcal C} \ne 0$ with probability 1. Taking into account that
    \[
        \det\widehat{\mathcal C} \ne 0 \quad \Longleftrightarrow \quad \frac{1}{N^2} \sum_{k=0}^{N-1}\sum_{l=0}^{N-1} \widehat Y^2(t_k) \widehat Y^{-2}(t_l) \ne 1,
    \]
    it is sufficient to check that the random vector $(\widehat Y(t_1),...,\widehat Y(t_N))^T$ has density with respect to the Lebesgue measure. Denote
    \[
        b_n(y) := y - b(t_n, y)\Delta_N, \quad n=1,...,N,
    \]
    and observe that for any $n = 1,...,N$ is a 1-to-1 mapping of the interval $(\varphi(t_n), \psi(t_n))$ onto $\mathbb R$ with $b_n' > 0$. By \eqref{eq: definition of backward Euler scheme}, for any $n = 1,...,N$
    \[
        \widehat Y(t_n) = b_n^{-1}\left(\widehat Y(t_{n-1}) + Z(t_n) - Z(t_{n-1})\right)
    \]
    and thus
    \begin{equation}\label{proofeq: BE scheme as transformation of the noise}
    \begin{aligned}
        \left(\begin{matrix}
            \widehat Y(t_1)
            \\
            \widehat Y(t_2)
            \\
            \vdots
            \\
            \widehat Y(t_N)
        \end{matrix}\right) &=  \left(\begin{matrix}
            b_1^{-1}\left(Y(0) + Z(t_1) - Z(t_{0})\right)
            \\
            b_2^{-1}\left(b_1^{-1}\left(Y(0) + Z(t_1) - Z(t_{0})\right) + Z(t_2) - Z(t_{1})\right)
            \\
            \vdots
            \\
            b_N^{-1}\left(b_{N-1}^{-1}\left(\cdots)\right) + Z(t_{N}) - Z(t_{N-1})\right)
        \end{matrix}\right) 
        \\
        & =:  I(Z(t_1) - Z(t_0), Z(t_2) - Z(t_1),..., Z(t_N) - Z(t_{N-1})).
        \end{aligned}
    \end{equation}
    It is straightforward to check that function $I$: $\mathbb R^N \to \prod_{n=1}^N\left(\varphi(t_n), \psi(t_n)\right)$ from \eqref{proofeq: BE scheme as transformation of the noise} is a bijection and its Jacobian $\det J_{I}(\vec z)$ is strictly positive for any $\vec z \in \mathbb R^N$. This implies that the random vector $(\widehat Y(t_1),...,\widehat Y(t_N))^T$ has density, i.e. is absolutely continuous w.r.t. the Lebesgue measure, which ends the proof.
\end{proof}

\section{Simulations}\label{sec: simulations}

Let us illustrate the algorithms described in Section \ref{sec: numerics} with numerical simulations. For simplicity, we assume that 
\begin{itemize}
    \item $T=1$, $d=1$;
    
    \item an explicit shape of the SVV model is
    \begin{equation}\label{eq: SVV for Malliavin simulations}
    \begin{aligned}
        S(t) &= 1 + \int_0^t Y(s) S(s) \left(\sqrt{0.75}dB_1(s)-0.5 dB_2(s)\right),
        \\
        Y(t) &= 0.5 + \int_0^t \left(\frac{0.005}{(Y(s) - 0.05)^5} - \frac{0.005}{(1 - Y(s))^5} + 0.05(0.5 - Y(s))\right)ds + 0.3\int_0^t \mathcal K(t,s) dB_2(s),
    \end{aligned}
    \end{equation}
    where $B_1$, $B_2$ are independent Brownian motions and $\mathcal K(t,s) := \frac{1}{\Gamma(H+1/2)}(t-s)^{H - \frac{1}{2}} \mathbbm 1_{s<t}$ with $H=0.2$;
    
    \item the instantaneous interest rate $\nu(t) \equiv 0$;
    
    \item the option payoff function is a discontinuous function given by
    \[
        f(s) = \mathbbm 1_{(0.5, \infty)}(s) + \mathbbm 1_{(1, \infty)}(s) + \mathbbm 1_{(1.5, \infty)}(s).
    \]
\end{itemize}

\begin{remark}
    Comparing \eqref{eq: SVV for Malliavin simulations} with the representation \eqref{eq: 1-dim sandwich transformed} from Section \ref{sec: Malliavin duality approach}, we have that $B_1 = \widetilde W_1$, $B_2 = \widetilde W_2$, $u_{1,1} = \sqrt{0.75}$ and $u_{1,2} = -0.5$.
\end{remark}

We want to compute the value $\mathbb E [f(S(T))]$ and compare the following three numerical methods:
\begin{itemize}
    \item[1)] \textbf{Standard Monte Carlo (Std. MC)}: we simply simulate 10000 independent realizations of 
    \begin{equation}\label{eq: sample of std method}
        f(\widehat S(T))
    \end{equation}
    and average over them;

    \item[2)] \textbf{Malliavin Monte Carlo with double discretization (MMCDD)}: we simulate 10000 independent realizations of
    \begin{equation}\label{eq: sample of method 1}
        \frac{F\left(\widehat S(T)\right)}{\widehat S(T)} \left( 1 + \int_0^T \frac{1}{\widehat Y(u)}d W_1(u) \right)
    \end{equation}
    and then average over them;

    \item[3)] \textbf{Malliavin Monte Carlo via conditional Gaussianity (MMCCG)}: we exploit the method described in Subsection \ref{subsec: Gaussian approximation} and average over 10000 independent realizations of
    \begin{equation}\label{eq: sample of method 2}
        \frac{1}{\sqrt{2\pi}} \int_{\mathbb R} G(x) \left(\frac{x - m(\widehat Y)}{ V_1^3 }\right) \exp\left\{ -\frac{ (x - m(\widehat Y))^2}{2 V_1^2} \right\}dx.
    \end{equation}
\end{itemize}

For all three methods, we consider the uniform partitions of sizes $N=10$, $100$, $1000$, and $10000$. Table \ref{Tab1} contains information on means (being also the estimators of $\mathbb E [f(S(T))]$) and standard deviations of \eqref{eq: sample of std method}, \eqref{eq: sample of method 1} and \eqref{eq: sample of method 2} for different partition sizes and Hurst indices.

\begin{remark}
    In this type of simulation, there are two sources of error:
    \begin{itemize}
        \item[(I)] the error between $\mathbb E[f(S(T))]$ and
        \begin{itemize}
            \item[1)] $\mathbb E[f(\widehat S(T))]$ for Std. MC,
            \item[2)] $\mathbb E\left[\frac{F\left(\widehat S(T)\right)}{\widehat S(T)} \left( 1 + \int_0^T \frac{1}{\widehat Y_1(u)}d W_1(u) \right)\right]$ for MMCDD and
            \item[3)] $\mathbb E\left[\frac{1}{\sqrt{2\pi}} \int_{\mathbb R} G(x) \left(\frac{x - m(\widehat Y)}{ V_1^3 }\right) \exp\left\{ -\frac{ (x - m(\widehat Y))^2}{2 V_1^2} \right\}dx\right]$ for MMCCG;
        \end{itemize}

        \item[(II)] the error coming from the Monte Carlo estimation of the corresponding expectations.
    \end{itemize}
    The error of type (I) is generally controlled by the size of the partition $N$, and the Malliavin integration-by-parts technique is used specifically to reduce it. The error of type (II) is reduced by increasing the number of Monte Carlo samples. In this context, standard deviations (Sd) of random variables \eqref{eq: sample of std method}, \eqref{eq: sample of method 1} and \eqref{eq: sample of method 2} evaluate how many Monte Carlo simulations are required to achieve a certain level of accuracy of the mean estimation. 
\end{remark}

\begin{table}[h!]
\centering
\caption{Computation of $\mathbb E [f(S(T))]$: standard Monte Carlo vs Malliavin MC with double discretization vs Malliavin MC via conditional Gaussianity. Columns entitled ``Mean'' contain the corresponding estimates of $\mathbb E [f(S(T))]$ whereas the ``Sd'' columns present the standard deviations of \eqref{eq: sample of std method} for Std. MC, \eqref{eq: sample of method 1} for MMCDD and \eqref{eq: sample of method 2} for MMCCG}
\label{Tab1}
\begin{tabular}{|c|cccccc|}
\hline
\multirow{2}{*}{\textbf{Partition size, $N$}} & \multicolumn{2}{c|}{\textbf{Std. MC}}               & \multicolumn{2}{c|}{\textbf{MMCDD}}                 & \multicolumn{2}{c|}{\textbf{MMCCG}} \\ \cline{2-7} 
                                              & \multicolumn{1}{c|}{Mean} & \multicolumn{1}{c|}{Sd} & \multicolumn{1}{c|}{Mean} & \multicolumn{1}{c|}{Sd} & \multicolumn{1}{c|}{Mean}  & Sd     \\ \hline
10                                            & 1.5194                    & 0.9212                  & 1.3961                    & 1.3863                  & 1.4111                     & 0.3205 \\ \cline{1-1}
100                                           & 1.4559                    & 0.9162                  & 1.4152                    & 1.4150                  & 1.4075                     & 0.3168 \\ \cline{1-1}
1000                                          & 1.3963                    & 0.9405                  & 1.4101                    & 1.3846                  & 1.4082                     & 0.3178 \\ \cline{1-1}
10000                                         & 1.4036                    & 0.9315                  & 1.4061                    & 1.4002                  & 1.4084                     & 0.3197 \\ \hline
\end{tabular}
\end{table}

As we can see, the estimators of $\mathbb E [f(S(T))]$ by the standard Monte Carlo for small $N$ differ substantially from the ones for larger $N$ whereas both MMCDD and MMCCG show better stability with respect to the size of the partitions (thanks to the Malliavin integration-by-parts regularization technique). In addition, the MMCCG method showed by far the lowest standard deviations out of all three algorithms: this happens because there is no additional error from simulating $\widehat S(T)$ which results in a far more stable behavior of \eqref{eq: sample of method 2} (c.f. \cite{BMdP2018}).


\bibliographystyle{acm}
\bibliography{biblio.bib}

\end{document}